\def\b1{{1\!\!1}}
\def\cA{{\ca A}}
\def\cB{{\ca B}}
\def\cI{{\ca I}}
\def\cJ{{\ca J}}
\def\cL{{\ca L}}
\def\cP{{\ca P}}
\def\cZ{{\ca Z}}
\def\sK{{\mathsf K}}
\def\sM{{\mathsf M}}
\def\sN{{\mathsf N}}
\def\sH{{\mathsf H}}
\def\sT{{\mathsf T}}
\def\bC{{\mathbb C}}           
\def\bK{{\mathbb K}}
\def\bH{{\mathbb H}}
\def\bD{{\mathbb D}}
\def\bN{{\mathbb N}}
\def\bR{{\mathbb R}}
\def\bZ{{\mathbb Z}}
\def\gA{{\mathfrak A}}       
\def\gB{{\mathfrak B}}
\def\gg{{\mathfrak g}}
\def\gM{{\mathfrak M}}
\def\gO{{\mathfrak O}}
\def\gR{{\mathfrak R}}
\def\gU{{\mathfrak U}}
\def\gZ{{\mathfrak Z}}
\def\gp{{\mathfrak p}}
\def\beq{\begin{eqnarray}}
\def\eeq{\end{eqnarray}}
\newcommand{\ca}[1]{{\cal #1}}         
\newcommand{\V}[1]{{\bf{#1}}}
\def\emptyline{\\[12pt]}
\newtheoremstyle{plain}
{5pt}
{9pt}
{\itshape}
{}
{\itshape\bfseries}
{}
{1em}
{}
\theoremstyle{thm}
\newtheorem{theorem}{\em Theorem}[section]
\newtheorem{lemma}[theorem]{\em Lemma}
\newtheorem{corollary}[theorem]{\em Corollary}
\newtheorem{proposition}[theorem]{\em Proposition}
\newtheorem{definition}[theorem]{\em Definition}
\newtheorem{remark}[theorem]{\em Remark}
\begin{document}


\par
\bigskip
\large
\noindent
{\bf  Quantum theory in real Hilbert space: How the complex Hilbert space structure emerges from Poincar\'e symmetry}
\bigskip
\par
\rm
\normalsize


\noindent  {\bf Valter Moretti$^{a}$}, {\bf Marco Oppio$^{b}$}\\
\par

\noindent 
 Department of  Mathematics University of Trento, and INFN-TIFPA \\
 via Sommarive 15, I-38123  Povo (Trento), Italy.\smallskip

\noindent $^a$valter.moretti@unitn.i, $^b$marco.oppio@unitn.itt\\
 \normalsize

\par

\rm\normalsize

\noindent {\small June, 1st  2017 - corrected version}

\rm\normalsize


\par
\bigskip

\noindent
\small
{\bf Abstract}.
As earlier conjectured by several authors  and much later established by  Sol\`er (relying on partial results by Piron, Maeda-Maeda and other authors), from the lattice-theory point of view,   Quantum Mechanics  may be formulated in real, complex or quaternionic  Hilbert spaces only. 
St\"uckelberg provided some physical, but not mathematically rigorous, reasons for ruling out  the real Hilbert space formulation, assuming that any formulation should encompass a statement of Heisenberg principle.    
 Focusing on this issue from another --in our opinion deeper-- viewpoint, we argue that there is a general fundamental  reason  why elementary  quantum systems are not described in real Hilbert spaces. It is their basic symmetry group. 
In the first part of the paper, we  consider  an  elementary relativistic system within  Wigner's approach  defined as a 
locally-faithful
 irreducible strongly-continuous unitary representation of the Poincar\'e group in a real Hilbert space. We prove that, if the squared-mass operator is non-negative, the system admits a natural, Poincar\'e invariant and unique  up to sign, complex structure which commutes with the whole algebra of observables generated by the representation itself.  This complex structure leads to a physically equivalent reformulation of the theory in a complex Hilbert space. Within this complex  formulation, differently from what happens in the real one,  all selfadjoint operators represent observables in accordance with  Sol\`er's thesis, and the standard quantum version of Noether theorem may be formulated.
In the second part of this work we focus on the physical hypotheses adopted to define a quantum elementary relativistic system relaxing them on the one hand, and making our model physically more general on the other hand.  We use a physically more accurate notion of irreducibility regarding the algebra of observables only, we describe the symmetries in terms of automorphisms of the restricted lattice of elementary propositions  of the  quantum system and we adopt a notion of continuity referred to the states viewed as probability measures on the elementary propositions.
 Also in this case, the final result proves that there exist a unique (up to sign) Poincar\'e invariant complex structure making the theory complex and completely fitting into Sol\`er's picture.  This complex structure reveals a nice interplay of Poincar\'e symmetry and the classification of the commutant of irreducible real von Neumann algebras.
\normalsize

\begin{flushright}
{\em In  memory of Rudolf Haag}
\end{flushright}
\newpage
\tableofcontents

\section{Introduction}\label{sec1} 
\subsection{The three Hilbert space formulations  permitted by Sol\`er's theorem }\label{SecI1}
Quantum theory can basically be formulated  in terms of  a non-Boolean probability theory over the partially ordered set of {\bf  elementary propositions} $\cL$ about the given physical quantum system \cite{BeCa,V2,Redei}.  {\bf Elementary propositions}, also called {\bf elementary observables}, are the experimentally testable propositions admitting the only possible outcomes $0$ and $1$. The partial order relation  $\leq$  in $\cL$ is the logical implication (many slightly different interpretations are possible actually \cite{Mackey,BeCa,V2,librone}). With some noticeable exceptions \cite{Mackey}, many authors assume that the partially ordered set $\cL$ is more strongly a {\bf lattice}. In other words, a  pair of elements $a,b \in \cL$ always admits  $\inf\{a,b\} \in \cL$ indicated by $a \wedge b$ and called 
  {\bf meet},  and   always admits  $\sup\{a,b\}\in \cL$  indicated by $a \vee b$ and called 
 {\bf join}. It is immediate to see that $a\le b$ if and only if $a=a\wedge b$. It turns out that   $\vee$ and $\wedge$ are separately {\em symmetric} and {\em associative} in every lattice.
The  $\cL$  is also requested to  be a {\bf bounded} lattice: A {\em minimal element} ${\bf 0}$, the always false proposition,  and a {\em maximal element} ${\bf 1}$, the always true proposition,  of $\cL$ are also assumed  to exist  in $\cL$.  
$\cL$  is also supposed  to be {\bf orthocomplemented}:  For every element $a\in \cL$, an {\bf orthogonal complement}  $a^\perp\in \cL$ is defined and interpreted as the logical negation of $a$.
The orthocomplement is defined by requiring  $a \vee a^\perp = {\bf 1}$,  $a \wedge a^\perp = {\bf 0}$,   $(a^\perp)^\perp = a$,
and  $a\leq b$ implies $b^\perp  \leq a^\perp$ for any $a,b \in \cL$. With these definitions, $a,b \in \cL$ are {\bf orthogonal}, written $a \perp b$, if $a \leq b^\perp$ (equivalently $b \leq a^\perp$. )\\
If $\cL_1$, $\cL_2$ are  orthocomplemented  lattices,  a map $h : \cL_1 \to \cL_2$ is a {\bf lattice homomorphism} if $f(a\vee_1 b) = h(a) \vee_2 h(b)$,  $f(a\wedge_1 b) = h(a) \wedge_2 h(b)$, $h(a)^{\perp_2} = h(a^{\perp_1})$ if $a,b \in \cL_1$, $h({\bf 0}_1)= {\bf 0}_2$, $h({\bf 1}_1)= {\bf 1}_2$.
When  the lattices are complete, resp. $\sigma$-complete, (see Appendix A (ii)) the first pair of conditions are made stronger to
 $h(\sup_{a\in A} a) = \sup_{a\in A} h(a)$ and $h(\inf_{a\in A} a) = \inf_{a\in A}  h(a)$ for every 
 infinite, resp. countably infinite, subset $A \subset \cL_1$. 
A straightforward calculation shows that $a\le_1 b$ implies $h(a)\le_2 h(b)$.
A bijective lattice homomorphism is a {\bf lattice isomorphism}. The inverse map of a lattice isomorphism is a lattice isomorphism as well. {\bf Lattice automorphisms} are isomorphisms with $\cL_1=\cL_2$; they give rise to a group, denoted by $\mbox{Aut}(\cL_1)$.\\
A pair of mutually {\em compatible} elementary propositions (those which are simultaneously testable by means of experiment)  is assumed to be represented by {\bf commuting} elements $p,q \in \cL$  in the sense of abstract orthocomplemented lattices. By definition \cite{BeCa} it means that the sublattice {\bf generated} by $\{p, q\}$, namely the intersection of all orthocomplemented sublattices of  $\cL$ which include $\{p,q\}$  is  {\bf  Boolean}: $\vee$ and $\wedge$ are mutually {\em distributive}.
If restricting ourselves to a maximal set of pairwise compatible propositions, we have a  complete Boolean sublattice and an interpretation in terms of {\em classical logic} turns out to be appropriate. Since compatibility of pair of propositions is not a transitive relation, the structure of maximal boolean subsets of $\cL$ is very complex.
The whole lattice ${\cal L}$ of elementary propositions of a quantum system  is however  {\em non-Boolean}, in particular $\wedge$ and $\vee$  are  not mutually distributive. This obstruction to distributivity is physically  due  to  the existence of 
pairwise {\em incompatible} elementary propositions (e.g., see \cite{BeCa,M}).
The  non-Boolean nature of $\cL$  was and still is nowadays the starting point for  interpretations of the formalism in terms of  {\em quantum logics} instead of classical logics \cite{librone}.
Generally speaking, the quantum lattice $\cL$ seems to enjoy a list of specific  features 
which one may try to justify from the known quantum phenomenology (e.g., see \cite{BeCa}) even if some deep interpretative problems remain \cite{librone}.
We merely  list these properties  without entering into the details  \cite{BeCa} (see Appendix \ref{Alattices} for a brief illustration of these definitions): 
(i) {\bf orthomodularity},  (ii) {\bf $\sigma$-completeness},  (iii) {\bf  atomicity}, (iii)' {\bf atomisticity},  (iv) {\bf covering property}, (v) {\bf separability}, (vi) {\bf irreducibility}.\\
A  long standing problem, the so-called {\em coordination problem} \cite{BeCa},  was to prove that an abstract bounded orthocomplemented lattice ${\cal L}$ fulfilling the  properties (i)-(vi) and possibly further technical requirements,
is necessarily isomorphic to the lattice ${\cal L}(\sH)$ of the orthogonal projectors/closed subspaces of a {\em complex} Hilbert space $\sH$. This was done in order to recover the standard Hilbert-space formulation of Quantum Theory.
Some intermediate, but fundamental,  results  due to Piron \cite{Piron} and next to  Maeda-Maeda \cite{MM} 
demonstrated that such ${\cal L}$, if contains at least four orthogonal atoms, must be  isomorphic to the lattice of  the {\em orthoclosed} subspaces ($K=K^{\perp\perp}$)  of a structure generalizing a vector space over a division ring $\bD$ equipped with a suitable involution operation, and admitting a generalized non-singular $\bD$-valued Hermitian scalar product (giving rise to the above mentioned notion of orthogonal $^\perp$).
The order relation of  this concrete lattice is the standard inclusion of orthoclosed subspaces.  In 1995 Sol\`er \cite{Soler} achieved the perhaps conclusive result. Consider an orthocomplemented bounded lattice $\cL$ satisfying (i)-(vi), such that (vii) it
contains at least four orthogonal atoms (so that the above generalized Hermitian scalar product exists)
and (viii) $\cL$ includes an infinite orthogonal sets of atoms with unit (generalized) norm. With these hypotheses (for alternative equivalent requirements see \cite{Holland} and \cite{XY}), the thesis of Soler's theorem reads: \\

\noindent [{\bf Sth}]\:\:  {\em  The lattice  $\cL$ of quantum elementary propositions is  isomorphic to the lattice ${\cal L}(\sH)$ of (topologically) closed subspaces  of a separable Hilbert space $\sH$ with set of scalars given by either the fields $\bR$, $\bC$ or the real division algebra of quaternions $\bH$}.  \\

\noindent  The quaternionic Hilbert space structure is defined in Appendix \ref{QHS}. In all three cases,  the partial order relation of the lattice is again the standard inclusion of closed subspaces and 
$\sM\vee\sN$ corresponds to the closed span of the union of the closed subspaces $\sM$ and $\sN$, whereas  $\sM\wedge\sN:=\sM \cap \sN$. The minimal element  is the trivial subspace $\{0\}$ and the maximal element is  $\sH$ itself. Finally, the  orthocomplement of $\sM \in \cL(\sH)$ is described by the standard orthogonal $\sM^\perp$ in $\sH$. All the structure can equivalently be rephrased in terms of orthogonal projectors $P$ in $\sH$, since they are one-to-one associated with 
the closed subspaces   of $\sH$ identified with their images $P(\sH)$. In particular $P\leq Q$ (namely $P(\sH)\subset Q(\sH)$) corresponds to the logical implication $P \Rightarrow Q$, for $P,Q\in \cL(\sH)$.
Relaxing the irreducibility requirement on $\cL$, requirement physically corresponding to the absence of {\em superselection rules}, an orthogonal  direct sum of many such Hilbert spaces (even over different set of scalars)  replaces the single Hilbert space $\sH$. \\
Sol\`er's theorem relies upon a list of  rigid postulates on the lattice $\cL$ and the arising  picture stated in  {\em Sth} turns out to be  equally rigid. Regarding the hypotheses, in particular, no reference to physically fundamental symmetries, like Galileo or Poincar\'e ones are included.
Looking at the thesis {\em Sth} in complex Hilbert spaces, we see that  only type-$I$ factors are admitted to represent the algebra of observables $\gR$  and no gauge group may enter the game excluding, for instance, systems of {\em quarks}  where internal
 symmetries (color $SU(3)$) play a crucal r\^ole.  Sol\`er's   picture is evidently not appropriate also  to describe   non-elementary quantum systems like  {\em pure phases} of extended quantum thermodynamical systems. There, always referring to 
complex Hilbert space description, the algebra of observables is still a factor, but the 
 type-$I$ is not admitted in general due to the presence of a non-trivial commutant $\gR'$.  Also {\em localized} algebras of observables in QFT are not encompassed by Sol\`er's framework. 
 As a matter of fact, {\em elementary relativistic systems} like elementary particles in Wigner's view are however in agreement with {\em Sth}  when we confine ourselves to deal with a {\em complex} Hilbert space $\sH$. Since these elementary relativistic systems are characterized by {\em irreducible} unitary  representations of Poincar\'e group and assuming that the von Neumann algebra of observables is that generated by the representation,  Schur's lemma  implies that  the algebra of observables coincides with the whole  $\gB(\sH)$. Therefore the lattice of  
elementary propositions is the entire $\cL(\sH)$, just  as stated in {\em Sth}. What happens when changing the set of scalars of the Hilbert space, passing from $\bC$ to $\bR$ or $\bH$ is not obvious.

\subsection{Quantum notions common to the three formulations}\label{seclist}
The following theoretical notions used to axiomatize quantum mechanics  are defined in the afore-mentioned  separable Hilbert space $\sH$, with scalar product $( \cdot|\cdot)$, over $\bR$, $\bC$ or $\bH$ respectively and referring to the quantum lattice $\cL(\sH)$.
However these notions are defined also replacing $\cL(\sH)$ for a smaller  lattice $\cL_1(\sH) \subset \cL(\sH)$, provided it is still  orthocomplemented and $\sigma$-complete (and therefore also orthomodular and separable). For future convenience, we shall list these notion below in this generalized case.

(1) {\bf Elementary  observables} are represented by the orthogonal projectors in $\cL_1(H)$. Two such projectors are said to be {\bf compatible} if they commute as operators.
Indeed the abstract commutativity notion  of  elementary observables turns out to be equivalent to the standard  commutativity of associated orthogonal projectors. 

(2) {\bf Observables} are the  Projector-Valued Measures  (PVMs) over the real Borel sets  (see Def.\ref{defPVM}) taking values in ${\cal L}_1(\sH)$ $$ {\cal B}(\mathbb R) \ni E \mapsto P^{(A)}(E)\in {\cal L}_1(\sH)\:.$$ 
Equivalently, \cite{V2} an observable  is a selfadjoint operator $A : D(A) \to \sH$ with $D(A)\subset \sH$ a dense subspace
such that the associated projector-valued measure is made by elements of ${\cal L}_1(\sH)$. The link with the previous notion is the  statement of the spectral theorem for selfadjoint operators
 $A = \int_{\sigma(A)} \lambda dP^{(A)}(\lambda)$ (Theorem \ref{st} in appendix for the real and complex case, for the quaternionic case see \cite{V2}). Obviously the meaning of each  elementary proposition $P^{(A)}(E)$ is {\em the outcome of the measurement of $A$ belongs  to the real Borel set $E$}.
Evidently, $\cL_1(\sH)= \cL(\sH)$ if and only if  {\em every} selfadjoint operator in $\sH$ represents an observable.
A selfadjoint operator, in particular an observables, $A$ is said to be {\bf compatible} with another  selfadjoint operator, in particular an observables, $B$ when the respective PVMs are made of pairwise commuting projectors.

(3) {\bf Quantum states} are defined  as {\bf $\sigma$-additive  probability measures} over ${\cal L}_1(\sH)$, that is maps
$\mu : {\cal L}_1(\sH) \to [0,1]$
such that $\mu(I)=1$ and $$\mu\left(s\mbox{-}\sum_k P_k\right) = \sum_k \mu(P_k)\quad \mbox{if $\{P_k\}_{k \in \bN}\subset \cL_1(\sH)$ with $P_kP_h=0$ for $h \neq k$,}$$ 
$s\mbox{-}\sum_k$ denoting the series in the strong operator topology.
$\mu(P)$ has the meaning of {\em the probability that the outcome of $P$ is $1$ if the proposition is tested when the state is $\mu$}. \\
If $\cL_1(\sH)= \cL(\sH)$ for $\sH$ separable with  $+\infty \geq \dim (\sH) \neq  2$ (always assumed henceforth),
these measures are in one-to-one correspondence with all of the selfadjoint positive, unit-trace, trace class operators $T_\mu: \sH\to \sH$ according to
$$\mu(P)= tr(T_\mu P)\quad \forall P \in  {\cal L}(H)\:.$$
This correspondence exists  for  the three cases as demonstrated by the celebrated {\em Gleason's theorem} valid for $\bR$ and $\bC$ \cite{G}, and finally extended by Varadarajan to the $\bH$ case \cite{V2}. The result holds true (but the correspondence ceases to be one-to-one) for separable complex Hilbert spaces when $\cL_1(\sH)\subsetneq \cL(\sH)$ and $\cL_1(\sH)$ is the projector lattice of a von Neumann algebra whose canonical decomposition into definite-type von Neumann algebras does not contains type-$I_2$ algebras \cite{libroGleason}.

 (4) {\bf Pure states} are  extremal elements of the convex body of the afore-mentioned probability measures. If $\cL_1(\sH)= \cL(\sH)$ pure states  are one-to-one with unit vectors of $\sH$ up to {\bf (generalized) phases} $\eta$, i.e., numbers of $\bR, \bC, \bH$  respectively,  with $|\eta|=1$. In this case, the notion of {\bf  probability transition} $|( \psi|\phi)|^2$ of a pair of pure states defined by unit vectors $\psi,\phi$
can be introduced.
$|(\psi|\phi)|^2 = \mu_\psi(P_\phi)$ is the probability that $P_\phi$ is true when the state is $\mu_\psi$, where  $P_\phi = (\phi| \cdot )\phi$ and  $\mu_\psi := (\psi| \cdot \psi)$.

(5) {\bf L\"uders-von Neumann's post measurement axiom} can be formulated in the standard way in the three cases:
{\em If the outcome of the ideal measurement of $P \in \cL_1(\sH)$ in the state $\mu$ is $1$, the post measurement state is} 
$$\mu_P(\cdot) := \frac{\mu(P \cdot P)}{\mu(P)}\:.$$
If $\cL_1(\sH)= \cL(\sH)$, we may define states in terms of trace class operators and, with obvious notation,
$T_P = \frac{1}{tr(PT)}PTP$. In terms of probability measures over $\cL(\sH)$, this  is equivalent to say that the post measurement measure $\mu_P$, when the state before the measurement of $P$ is $\mu$,  is the {\em unique} probability-measure over $\cL(\sH)$  satisfying  the natural requirement of conditional probability
$\mu_P(Q) = \frac{\mu(Q)}{{\mu(P)}}$, for every $Q\in \cL(\sH)$ with $Q\leq P$.

(6) {\bf Symmetries} are naturally defined  as {\em automorphisms} $h : \cL_1(\sH) \to \cL_1(\sH)$ of the lattice of elementary propositions. A {\em subclass} of symmetries $h_U$ are those induced by 
unitary (or also anti unitary in the complex case) operators $U\in \gB(\sH)$ by means of 
 $h_U(P) := UPU^{-1}$ for every $P \in \cL_i(\sH)$. 
Alternatively, another definition of symmetry is as {\em automorphism} of the {\em Jordan algebra of observables} constructed out of $\cL_1(\sH)$. 
If $\cL_1(\sH)=\cL(\sH)$, following Wigner, symmetries can be defined as {\em bijective}   {\em probability-transition preserving} transformations of pure states to pure states. \\
With the maximality hypothesis on the lattice, the three  notions of symmetry coincide. In this case, {\em all} symmetries turn out to be  described  by  unitary  (or anti unitary in the complex case) operators, up to constant phases of $\bR$, $\bC$, $\bH$, respectively  due to well known theorems by Kadison, 
Wigner and Varadarajan \cite{Simon,V2}. 

(7) {\bf Continuous symmetries} are one-parameter groups of lattice automorphisms $\bR \ni s \mapsto h_s$, such that 
$\bR \ni s \mapsto \mu(h_s(P))$ is continuous for every $P\in \cL_1(\sH)$ and every quantum state $\mu$ ($\bR$ may be replaced for a topological group but we  stick here to the simplest
case). The {\bf time evolution} of the system $\bR \ni s \mapsto \tau_s$ is a preferred  continuous symmetry parametrized over $\bR$. 

(8) A {\bf dynamical symmetry} is a continuous symmetry $h$ which  commutes with the time evolution, $h_s \circ \tau_t = \tau_t \circ h_s$ for $s,t \in \bR$.\\
If $\cL_1(\sH)= \cL(\sH)$, every  continuous symmetry $\bR \ni s \mapsto h_s$ is represented by a strongly continuous
one-parameter group of unitary operators $\bR \ni s \mapsto U_s$ such that $h_s(P)= U_sPU_s^{-1}$ for all $P\in \cL(\sH)$ \cite{V2}.
Versions of {\em Stone theorem}  hold in the three considered cases $\bR$, $\bC$ and $\bH$ (the  validity in the quaternionic case easily arises 
form the theory developed in \cite{GMP2}), proving that   $U_s = e^{sA}$ for some {\em anti}-selfadjoint operator $A$, uniquely determined by $U$.
 In the complex case, if $\bR \ni s \mapsto e^{sA}$ is also a  dynamical symmetry, the {\em selfadjoint} operator $-iA$, which is an observable the lattice being maximal, is invariant  under the natural adjoint action of time evolution $\tau$ unitarily represented by $\bR \ni t \mapsto V_t$,
and thus $-iA$ it is a {\bf constant of motion}, $V_t^{-1}(-iA) V_t =-iA$ for every $t\in \bR$. This is the celebrated quantum version of {\em Noether theorem}. In the real Hilbert space case, no such simple result exists, since
we have no general way to construct a selfadjoint operator out of an anti selfadjoint operator $A$ in  absence of $i$. There is no unitary operator  $J$ corresponding to the imaginary unit $iI$ which commutes with  the  anti selfadjoint generator $A$ of every possible continuous symmetry (the time evolution in particular), thus producing an associated observable $JA$ which is a constant of motion.  Such an operator  however may exist for  one or groups of observables.
 In the quaternionic case, contrarily, there are many, pairwise  non-commuting, imaginary unities as recently established 
 \cite{GMP2}.
An interesting physical discussion on these partially open  issues for the quaternionic formulation appears in \cite{Adler}.

\subsection{Fake real Hilbert space formulation and St\"uckelberg's analysis}\label{SecI3} Focusing on the description of quantum theories in real Hilbert space and complex Hilbert space, a crucial fact which makes a sharp distinction between these two descriptions,  regards the correspondence of (pure) states and unit  vectors of $\sH$. 
Assuming that the quantum lattice is the whole $\cL(\sH)$, while in a complex theory pure states are one-to-one with unit vectors up to {\em phases}, in a real theory pure states are one-to-one with unit vectors up to {\em signs}.
Therefore a quantum theory formulated in a real Hilbert space is {\em not} a theory formulated in a complex Hilbert space 
where states are decomposed in real and imaginary parts and where $i$ is simply hidden in the (fake) real formalism.
Suppose that $\sH= L^2(\bR, dx)$ viewed as space of {\em complex} wavefunctions. A complex  wavefunction $\psi$  can always be 
decomposed into a pair of real wavefunctions $\psi_1 = Re \psi$ and $\psi_2 = Im \psi$ and all the theory can be recast into the real Hilbert space $L^2_\bR(\bR, dx)\oplus L^2_\bR(\bR, dx)$, where $L^2_\bR(\bR, dx)$ indicates the real Hilbert space of real-valued square-integrable functions. A $\bC$-linear operator in $L^2(\bR, dx)$ admits a decomplexificated corresponding $\bR$-linear operator in $L^2_\bR(\bR, dx)\oplus L^2_\bR(\bR, dx)$ accordingly, and a selfadjoint operator in $L^2(\bR, dx)$ induces a selfadjoint operator in $L^2_\bR(\bR, dx)\oplus L^2_\bR(\bR, dx)$ this way.
 This real representation has nothing to do with the thesis of Sol\`er's  theorem in the real Hilbert space case, since

(a) pure states turn out to be one-to-one with unit  vectors  of  the real space $L^2_\bR(\bR, dx)\oplus L^2_\bR(\bR, dx)$ {\em up to a rotation of $SO(2)$} (arising from the decomposition in real and imaginary part of $e^{i\theta}\psi$) and not {\em up to a sign};

(b) not all selfadjoint operators of the real Hilbert space represent observables here, since not all operators in $L^2_\bR(\bR, dx)\oplus L^2_\bR(\bR, dx)$ descend from operators in $L^2(\bR, dx)$.\\

\noindent Let us stick a while to the analysis of this fake real model for further observations. The operator $iI$ of the complex Hilbert space induces  a non-diagonal decomplexified operator $J$ in the real Hilbert space, with the properties $JJ=-I$ and $J^*=-J$. These types of operators in real Hilbert spaces are called {\em complex structures}. 
$J$ permits to reconstruct back an isomorphic version of  the initial complex Hilbert space using the vectors of $L^2_\bR(\bR, dx)\oplus L^2_\bR(\bR, dx)$. This happens just defining the product of complex numbers and vectors like this
$$(a+ib) \Psi := (a+ bJ) \Psi\quad \mbox{where $a,b \in \bR$ and $\Psi \in L^2_\bR(\bR, dx)\oplus L^2_\bR(\bR, dx)$,}$$
also complexifying the natural scalar product of $L^2_\bR(\bR, dx)\oplus L^2_\bR(\bR, dx)$ as we shall discuss  into a very general fashion later.
The crucial property of $J$, in relation with (b) above, is that it permits us to distinguish between selfadjoint operators in  $L^2_\bR(\bR, dx)\oplus L^2_\bR(\bR, dx)$ constructed out of selfadjoint operators in $L^2(\bR, dx)$  through the decomplexification procedure and the remaining unphysical selfadjoint operators in  $L^2_\bR(\bR, dx)\oplus L^2_\bR(\bR, dx)$ not representing observables. In fact, also dropping the selfadjointness requirement, a $\bR$-linear operator  $A$ in $L^2_\bR(\bR, dx)\oplus L^2_\bR(\bR, dx)$  arises form a corresponding $\bC$-linear operator in  $L^2(\bR, dx)$ if and only if $AJ=JA$.\\
 {\em From this remark we conclude that a quantum theory apparently formulated in a real Hilbert space $\sH$ may actually be a standard theory, formulated in a corresponding  complex Hilbert space $\sH_J$. It happens  if there is a complex structure $J$ which commutes with every observable of the theory}.

\noindent If such a $J$ exists, also the correspondence of states and vectors {\em up to signs} (as in (a) above) fails in $\sH$. Indeed the algebra of observables, just due to the presence of $J$ cannot coincide with the whole class of (real) selfadjoint operators and vectors $\Psi$ and $e^{\theta J}\Psi$ for every $\theta \in \bR$ cannot be distinguished by means of physical measurements. For instance, if $A$ is an observable and $(\cdot|\cdot)$ is the real scalar product in $\sH$,
$$(e^{\theta J}\Psi|Ae^{\theta J}\Psi) = (\Psi|e^{-\theta J}Ae^{\theta J}\Psi) = 
 (\Psi|Ae^{-\theta J}e^{\theta J}\Psi) = (\Psi|A\Psi)\:.$$
Even passing from a fake real formulation to a corresponding  complex formulation by 
means of a complex structure commuting with all the observables, it is still possible that the found class of observables in
 the final complex Hilbert space is however smaller that the whole set of $\bC$-linear selfadjoint operators.
Nevertheless the  final complex Hilbert space formulation has less redundancy than the initial real formulation, since the 
selfadjoint-operators/observables ratio has increased.\\
Independently form the result by Sol\`er, the theoretical possibility of formulating quantum theories in 
Hilbert spaces over either $\bR$ or $\bH$  (or other division rings of scalars) \cite{BeCa}  was matter of investigation since 
the early mathematical formulations of Quantum Mechanics. 
However, differently from quaternionic quantum mechanics \cite{foudationofquaternionicmechanics,Adler} which 
still deserves some theoretical interest, real quantum mechanics was  not considered as physically interesting almost
 immediately especially in view of well-known St\"uckelberg's  analysis \cite{S1,S2} in the early seventies.
As a matter of fact, St\"uckelberg \cite{S1,S2} provided some  physical reasons for getting rid of the real Hilbert space
 formulation relying on the demand  that every conceivable formulation of Quantum Mechanics  should include the 
statement of Heisenberg principle.  
He argued that  the statement of Heisenberg principle requires the existence of a  natural complex structure $J$ commuting with all physical observables and thus
 making the theory complex as observed above. His
 analysis  is definitely  physically interesting,  but very poor from a mathematical viewpoint as it assumes that all observables 
have pure point spectrum and some of them are bounded,  in contradiction with the nature of position of momentum observables necessary to state 
Heisenberg principle. No discussion about domains  appears. Many inferences are just heuristically  justified (including the universality of $J$)  even if they all are physically plausible.
Moreover, in  St\"uckelberg's analysis,  the existence of $J$ seems to be more a {\em sufficient} condition
 to guarantee the validity of Heisenberg inequalities rather than a {\em necessary} requirement,
 since everything  is based on an {\em a priori} and arbitrary (though physically very plausible)  model of any version of uncertainty principle as described  
in Sect.2 of \cite{S1}.  
 Finally,  the validity of
 Heisenberg principle cannot be viewed as a fundamental {\em a priori}  condition  nowadays:  it needs  the existence of 
the {\em position observable} which is a very delicate issue, both theoretically and mathematically (it is based on Mackey's imprimitivity
machinery) in case of relativistic
 elementary systems \cite{V2}.  For massless particles like photons, the position observable simply does not exist \cite{V2}. The analysis 
of this work covers  also that case instead.

\subsection{Main results and structure of this work}\label{SecI4}

The overall goal of this work is to rigorously  investigate if there are cogent physical reasons to abandon any  real Hilbert space  
formulation. Reasons deeper  than, and independent from,  the request of validity of Heisenberg principle. Obviously 
we are thinking here of  {\em elementary} quantum systems  different from  the ones which already admit descriptions 
in complex Hilbert spaces. Simultaneously we want to check how solid the final picture arising from Sol\`er's analysis stated in {\em Sth} is.
 We therefore assume that quantum theories can be formulated in a real, complex or quaternionic Hilbert space, focusing  
on the first case. The core of our analysis and the corresponding results are contained in the sections \ref{secMAIN1}
and \ref{secMAIN2}. 
We initially suppose in Section \ref{secMAIN1} that, in accordance with Wigner's view,  an elementary 
relativistic  physical system is described in a {\em  real} Hilbert $\sH$ space admitting a strongly-continuous unitary irreducible 
representation $U$ of Poincar\'e group and that the algebra of observables $\gR$ coincides with the von Neumann
 algebra $\gR_U$ generated by the said representation. This idea is encapsulated in Definition \ref{defERS}.
In this sense the group representation completely fixes the physical system.  We therefore confine ourselves to deal with elementary  systems, described in {\em real} Hilbert spaces, whose maximal group of symmetry is {\em Poincar\'e group}  (so that more complicated systems like quarks are not encompassed  by  our study). However, we do not assume that the lattice of orthogonal projectors in $\gR$ coincides with the whole $\cL(\sH)$ or is isomorphic to some $\cL(\sH')$ for some other Hilbert space (also with a different set of scalars) as in the thesis of Sol\`er's theorem {\em Sth}. We would like to either find it as a consequence of our hypotheses or to disprove it.\\
With our  hypotheses, we shall find in Theorem\ref{poinccomplexstructure}  that, remarkably, 
there must exist a unique (up to  the sign) complex structure $J$ commuting with both the group representation $U$
 and algebra of observables $\gR$. As a consequence the theory can be reformulated in a complex Hilbert space $\sH_J$
 where both the representation (which remains strongly continuous and irreducible) and the von Neumann algebra of
 observables are well-defined  and the theory admits the standard formulation. In particular $\gR$ coincides with the whole $\gB(\sH_J)$ and consequently the lattice of orthogonal  projectors coincides (and thus is isomorphic) to  $\cL(\sH_J)$ {\em in agreement  with Sol\`er's thesis, even if different hypotheses are assumed}.
This way, also the standard 
formulation of the quantum Noether theorem takes place, because we can  associate anti selfadjoint generators $A$ of Poincar\'e continuous symmetries to observables $JA$ and $J$ commutes with the time evolution. \\
 In Section \ref{secMAIN2} we will  deal with a more sophisticated theoretical  idea of an elementary relativistic system, since some issues remain open in our first formulation when dealing with real Hilbert spaces. 
In particular, the irreducibility assumption is not well motivated and should be formulated into a more physical framework regarding only observables.  
As a consequence, there is no deep reason to assume
that symmetries are represented by unitary operators and also proving it for each
element of the representation separately, there is no {\em a priori} cogent reason to suppose that 
  the representation is unitary instead of (real) projective unitary. These issues will be fixed taking advantage of 
a result (Theorem \ref{threecommutant}) about the commutant of irreducible von Neumann algebras in real Hilbert spaces. 
With the improved version of elementary relativistic system stated in Definition \ref{RRES}, we will however find  the same result already established  with the previous simpler definition. In fact, Theorem \ref{main2} proves again that
the theory can be reformulated into a complex setting in agreement with Sol\`er's thesis, exploiting a complex structure $J$ which, again, is unique up to a sign and Poincar\'e invariant.  Actually we also prove that the improved definition of relativistic elementary system though physically finer is actually mathematically equivalent to Wigner's one also in the real case. \\
\noindent The rest of this paper is organized as follows. The next two sections, Section \ref{seccomplex} and Section \ref{represent},  are devoted to collect, and in some cases autonomously prove, several results on real spectral theory and the theory of Lie group representation
in real Hilbert spaces. Section \ref{secMAIN1} and Section \ref{secMAIN2} 
discuss the notion of relativistic elementary system and present the two versions of the afore-mentioned  main result of this paper  (Theorem\ref{poinccomplexstructure}  and Theorem \ref{main2}).
Conclusions are discussed in the last section. A final appendix includes several results and proofs of intermediate propositions.
%
%
%
%
%

\section{Complexification procedures and technical results for real Hilbert spaces}\label{seccomplex}
We hereafter assume  that the reader is familiar with some standard definitions and results of  the theory of operators in either real and complex Hilbert spaces. A summary of these notions appears in Appendix \ref{secstatic}.  \\
It is possible to extend back to {\em real} Hilbert spaces some technical results valid  for {\em complex} Hilbert 
spaces like {\em Stone's theorem} or {\em Schur's lemma} and the {\em polar decomposition theorem}. These extensions 
 take advantage of a certain complexification procedure which produces  a complex Hilbert space when a real Hilbert space is given.  
Another procedure to build up  a complex Hilbert space from a real one exploits  the existence of a so called {\em complex structure}. This section is devoted to introduces these procedures and to  prove some technical results about real Hilbert spaces, in comparison with corresponding well known results in complex Hilbert space theory presumably more familiar to the reader.

\subsection{External complexified structures}
Let $\sH$ be a {\em real} Hilbert space (Definition \ref{defHRC}) whose {\em real} scalar product will be henceforth denoted by $(\cdot|\cdot)$. It is possible to define an  associate  complex Hilbert space  \cite{MV} by means of an elementary {\em external complexification procedure}. The elements of  this associate complex vector space are couples $x+iy := (x,y) \in \sH \times \sH$ and the complex linear space structure is defined by assuming that  \beq (\alpha + i\beta) (x+iy) := \alpha x - \beta y + i (\beta x + \alpha y)\quad \mbox{ $\forall x,y \in \sH$ and $\forall \alpha,\beta \in \mathbb R$}\:.\label{ls}\eeq The  scalar product  of $\sH\times \sH$ is, by definition, 
\beq ( x+iy| u+iv )_\bC := ( x|u ) +( y|v ) +i [( x|v )- (y|u)] \:,\quad \forall x,y,u,v \in \sH\label{sp}\:.\eeq
This scalar product is Hermitian in agreement with Definition \ref{defSP} with associated norm
\beq || x+iy||_\bC^2 :=  ( x+iy| x+iy )_\bC  =
||x||^2 + ||y||^2 \:,\quad \forall x,y \in \sH \label{cs}\:.\eeq

\begin{proposition}\label{propcompleX}
Let $\sH$ be a real Hilbert space with real scalar product $(\cdot|\cdot)$, the following facts hold.\\
{\bf (a)} The complex vector space over $\sH \times \sH$,
with the complex linear structure (\ref{ls}) and the Hermitian scalar product $(\cdot|\cdot)_\bC$ defined in  (\ref{sp})
is a complex Hilbert space, henceforth  denoted by $\sH_\bC$ and called {\bf external complexification} of $\sH$.\\
{\bf (b)} $N\subset \sH$ is a Hilbert basis (Def.\ref{defHB}) of the real Hilbert space $\sH$  if and only if 
$N$ is a Hilbert basis of $\sH_\bC$. Thus $\sH_\bC$ is separable if and only if $\sH$ is.\\
{\bf (c)}  If $\sK \subset \sH$ is a subspace,
$\sK_\bC := \sK \times \sK \subset \sH_\bC\:,$
turns out a to be (complex)  subspace of  $\sH_\bC$ and $\overline{\sK_\bC}=\overline{\sK}_\bC$.
\end{proposition}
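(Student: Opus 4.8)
The plan is to dispatch all three parts by direct verification, with the single analytic input being the norm identity (\ref{cs}), from which both completeness in (a) and the topological statements in (b), (c) follow cheaply. For (a) I would first confirm that $\sH\times\sH$ with the componentwise addition and the scalar multiplication (\ref{ls}) is a complex vector space. Addition is the obvious abelian group structure, so only the scalar-multiplication axioms require attention, and these reduce to the identities $\lambda(\mu z)=(\lambda\mu)z$ and $(\lambda+\mu)z=\lambda z+\mu z$ for $\lambda,\mu\in\bC$, $z=x+iy$; each is a routine computation in the real components using only the $\bR$-bilinearity of the operations of $\sH$, the key special case being $i(x+iy)=-y+ix$. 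Next I would check that (\ref{sp}) is a Hermitian scalar product in the sense of Definition \ref{defSP}: additivity in each slot is clear, conjugate symmetry $(w|z)_\bC=\overline{(z|w)_\bC}$ follows from the symmetry of $(\cdot|\cdot)$, and linearity in the second argument reduces, after peeling off the $\bR$-bilinear terms, to the single factor $i$, i.e. to verifying $(z|iw)_\bC=i(z|w)_\bC$ via $i(u+iv)=-v+iu$. Positive-definiteness and the identity (\ref{cs}) then drop out of $(x+iy|x+iy)_\bC=\|x\|^2+\|y\|^2$, the imaginary cross terms cancelling by symmetry of the real product. Completeness of $\sH_\bC$ is the only genuinely analytic point: by (\ref{cs}) a sequence is Cauchy, respectively convergent, in $\sH_\bC$ if and only if both its real and imaginary components are Cauchy, respectively convergent, in $\sH$, so completeness is inherited directly from that of $\sH$.

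For (b), identify $N\subset\sH$ with $\{n+i0:n\in N\}\subset\sH_\bC$. Orthonormality transfers in both directions because $(n+i0|m+i0)_\bC=(n|m)$. If $N$ is a Hilbert basis of $\sH$, then given $x+iy\in\sH_\bC$ the real expansions $x=\sum_n (n|x)n$ and $y=\sum_n (n|y)n$ combine, using $((n|x)+i(n|y))(n+i0)=(n|x)n+i(n|y)n$, into $x+iy=\sum_n ((n|x)+i(n|y))(n+i0)$, with convergence in $\sH_\bC$ guaranteed by (\ref{cs}); hence the closed span of $N$ is all of $\sH_\bC$. Conversely, if $N$ is a Hilbert basis of $\sH_\bC$ and $x\in\sH$ is real-orthogonal to every $n\in N$, then $x+i0$ is $\bC$-orthogonal to every $n+i0$, so $x+i0=0$ and $x=0$; thus $N$ is maximal orthonormal in $\sH$. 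The separability claim follows since a Hilbert space is separable exactly when it admits a countable Hilbert basis and the above correspondence preserves cardinality.

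For (c), closure of $\sK_\bC=\sK\times\sK$ under the complex structure is immediate: for $x,y\in\sK$ and a real subspace $\sK$, the vector $(\alpha+i\beta)(x+iy)=(\alpha x-\beta y)+i(\beta x+\alpha y)$ keeps both components in $\sK$, so $\sK_\bC$ is a $\bC$-subspace. For the closure identity I would again invoke (\ref{cs}): convergence $x_n+iy_n\to x+iy$ in $\sH_\bC$ is equivalent to $x_n\to x$ and $y_n\to y$ in $\sH$. Therefore $(x,y)\in\overline{\sK_\bC}$ if and only if $x$ and $y$ are each limits of sequences in $\sK$, that is if and only if $x,y\in\overline{\sK}$, which is precisely $(x,y)\in\overline{\sK}\times\overline{\sK}=\overline{\sK}_\bC$. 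None of these steps hides a real difficulty; the only point demanding care is consistent bookkeeping of the sesquilinearity convention of (\ref{sp})—conjugate-linear in the first slot, linear in the second—so that conjugate symmetry and linearity are checked against the same convention, and the repeated appeal to (\ref{cs}) to pass between $\sH_\bC$-convergence and componentwise $\sH$-convergence.
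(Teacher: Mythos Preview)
Your proof is correct and follows essentially the same approach as the paper's: both rely on the norm identity (\ref{cs}) to reduce completeness and closure questions in $\sH_\bC$ to componentwise statements in $\sH$, and both treat (b) via the equivalence of maximal orthonormality in the two spaces. The paper's proof is extremely terse (three sentences), so your version simply fills in the routine verifications that the paper leaves to the reader.
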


\begin{proof}
From (\ref{cs}), Cauchy sequences in $\sH_\bC$ define pairs of Cauchy sequences  in $\sH$.
 For this reason $\sH_\bC$ is complete in view of the completeness of   $\sH$. The second statement is true because $N$ is maximal orthonormal  in $\sH$ iff it is maximal orthonormal  in $\sH_\bC$. The proof of (c) is immediate.
\end{proof}
\begin{remark}{\em
Notice that, thanks to (c), a subspace $\sK\subset \sH$ turns out to be closed or dense in $\sH$ if and only if $\sK_\bC$ is, respectively, closed or dense in $\sH_\bC$.}
\end{remark}
\begin{definition}\label{defconj2}
{\em If $\sH$ is a {\em complex} Hilbert space, a {\bf conjugation} is an anti linear  (Def.\ref{defopant}) norm-preserving operator $C: \sH \to \sH$ such that $CC=I$.}
\end{definition}
 \noindent A conjugation $C: \sH \to \sH$ is bijective (since  $C=C^{-1}$) and satisfies  $(Cx|Cy)= \overline{(x|y)}$ for $x,y \in \sH$
due to  the polarization identity of a complex scalar product.
\noindent Conjugations always exist.  If $N\subset \sK$ is a Hilbert basis of the complex Hilbert space $\sK$,  an associated   conjugation is 
$\sK \ni x =  \sum_{z\in N} (z|x)z \mapsto  \sum_{z\in N} \overline{(z|x)}z\in \sK\:.$
\begin{proposition} \label{remcomp}   Let $\sK$ be a {\em complex} Hilbert space and  $C_\sK : \sK \to \sK$ a conjugation. $\sK$   is isomorphic (Definition \ref{defHRC}) to $\sH_\bC$ for a certain  {\em real} Hilbert space $\sH$ associated to $C_\sK$.
\end{proposition}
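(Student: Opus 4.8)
The plan is to realize $\sH$ as the real subspace of vectors fixed by the conjugation $C_\sK$ and then to check that the external complexification of this subspace reproduces $\sK$. Concretely, I would set
$$\sH := \{x \in \sK : C_\sK x = x\}\:.$$
First I would verify that $\sH$ is a real Hilbert space. Since $C_\sK$ is anti linear and real scalars are fixed by complex conjugation, $\sH$ is closed under real linear combinations; and because $C_\sK$ is norm preserving, hence continuous, the set $\sH$ is closed in $\sK$ and therefore complete. The decisive point is that the complex scalar product of $\sK$ restricts to a real-valued form on $\sH$: using the identity $(C_\sK x | C_\sK y) = \overline{(x|y)}$ recalled right after Definition \ref{defconj2}, for $x,y \in \sH$ one gets $(x|y) = (C_\sK x | C_\sK y) = \overline{(x|y)}$, so $(x|y) \in \bR$. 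Thus $(\cdot|\cdot)$ makes $\sH$ a real Hilbert space.

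Next I would establish the decomposition of an arbitrary $z \in \sK$ into a \emph{real} and an \emph{imaginary} part relative to $C_\sK$. Setting
$$x := \tfrac{1}{2}(z + C_\sK z)\:, \qquad y := \tfrac{1}{2i}(z - C_\sK z)\:,$$
a direct computation, using $C_\sK C_\sK = I$ and the anti linearity of $C_\sK$ (which turns the scalar $1/2i$ into $-1/2i$), shows that $C_\sK x = x$ and $C_\sK y = y$, i.e. $x,y \in \sH$, and that $z = x + iy$. Uniqueness of this representation follows by applying $C_\sK$ to a hypothetical relation $x + iy = x' + iy'$ with $x,x',y,y' \in \sH$: since $x-x' = i(y'-y)$, applying $C_\sK$ yields $x-x' = -i(y'-y) = -(x-x')$, whence $x = x'$ and then $y = y'$.

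With these two facts in hand I would define the candidate isomorphism $\Phi : \sH_\bC \to \sK$ by sending the element denoted $x + iy = (x,y) \in \sH \times \sH$ of the external complexification to the genuine vector $x + iy \in \sK$. The decomposition above makes $\Phi$ surjective and the uniqueness makes it injective. Complex linearity of $\Phi$ is an immediate comparison of the defining rule (\ref{ls}) with the corresponding operations carried out inside $\sK$. The final step, and the only one requiring care, is to check that $\Phi$ is isometric, i.e. that it intertwines the abstractly defined Hermitian product (\ref{sp}) on $\sH_\bC$ with the genuine product of $\sK$. Expanding $(x+iy\,|\,u+iv)$ in $\sK$ by (anti)linearity and collecting terms reproduces exactly the right-hand side of (\ref{sp}), because $(x|u),(y|v),(x|v),(y|u)$ are all real for vectors of $\sH$. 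Hence $\Phi$ is a bijective complex-linear isometry, that is a Hilbert space isomorphism, and $\sK$ is isomorphic to $\sH_\bC$ with $\sH$ the fixed-point space of $C_\sK$. I expect the sole subtlety to be the bookkeeping of the anti linearity of $C_\sK$ in the definition of $y$ and the matching of scalar-product conventions in the isometry check; everything else is routine.
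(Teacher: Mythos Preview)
Your proof is correct and follows essentially the same route as the paper: the paper defines $\sH = \tfrac{1}{2}(I+C_\sK)(\sK)$, which coincides with your fixed-point set $\{x : C_\sK x = x\}$, and its isomorphism $\sK \ni z \mapsto \tfrac{1}{2}(I+C_\sK)z + i\,\tfrac{1}{2}(I+C_\sK)\tfrac{1}{i}z \in \sH_\bC$ is precisely the inverse of your $\Phi$. Your write-up simply spells out more of the verifications (reality of the restricted scalar product, uniqueness of the decomposition, the isometry check) that the paper leaves implicit.
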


\begin{proof} Define the closed real vector subspace
 $\sH = \frac{1}{2}(I+C_\sK)(\sK)$ equipped with the real scalar product given by the restriction to $\sH$ of the Hermitian scalar product  of $\sK$. $\sH$ is a real Hilbert space because is closed. The identity map $\sK \ni x \mapsto x \in \sK$ rearranged as follows:
$$\sK \ni x \mapsto \frac{1}{2}(I+C_\sK) x + i \frac{1}{2}(I+C_\sK) \frac{1}{i} x \in \sH + i \sH = \sH_\bC\:,$$
turns out to be  a complex Hilbert space isomorphism from $\sK$ to $\sH_\bC$.
\end{proof}
\noindent If $\sH_\bC$ is constructed out of the real Hilbert space $\sH$, the real Hilbert space associated to $\sK = \sH_\bC$ through Prop.\ref{remcomp}  is just $\sH$
if employing the natural conjugation $C_\sK=C$ with 
\begin{equation}\label{conj}
C : \sH_\bC \ni x+iy \mapsto x-iy \in \sH_\bC\:.
\end{equation}

\noindent Let us pass to operators  extending Remark 20.18 in  \cite{MV}.
\begin{definition}\label{defcomplx0}
{\em If $A: D(A) \to \sH$  is an $\bR$-linear operator
in the real Hilbert space $\sH$ 
 we define the ($\bC$-linear) {\bf associated complexificated operator} 
\begin{equation}\label{defcomplx}
A_\bC := A+iA : D(A)+iD(A)\ni x+iy\mapsto Ax+iAy\in \sH_\bC\:.
\end{equation}
It follows immediately that $KerA_\bC=(KerA)_\bC$ and $RanA_\bC=(RanA)_\bC$.
}
\end{definition}

\begin{remark} {\em From now on, unless differently explicitly stated, a {\em subspace} of a {\em complex} Hilbert space is a {\em complex} subspace. Similarly, an {\em operator} in a complex Hilbert space is  a {\em complex}-linear operator.}
\end{remark}
\noindent The notion of {\em adjoint} operator and its elementary properties are given in Def.\ref{defagg} and Remark \ref{sumselfadj}.
The definitions of the various types of operators we use below are listed in Def.\ref{defclosop}  and  \ref{defop} including Remark
\ref{essentclos} for their basic properties. The notion {\em spectrum}, {\em PVM} and {\em spectral integral} appear in 
Def.\ref{defspec}, Def.\ref{defPVM}, Prop.\ref{propint} and Thm \ref{st}.  Finally we henceforth adopt 
Def.\ref{defdomain} concerning the domain of composed operators.

\begin{proposition}\label{prop2}
The following facts are valid  referring to Def.\ref{defcomplx0} for  a real Hilbert space $\sH$ and the associated complexified Hilbert space $\sH_\bC$.\\
{\bf (1)} An  operator   $B : D(B) \to \sH_\bC$ with $D(B) \subset \sH_\bC$ satisfies $B=A_\bC$ for some operator  $A: D(A) \to \sH$ and  $D(A) \subset \sH$ if and only if \beq CB\subset BC \:,\label{commC}\eeq
where $C$ is the conjugation in $\sH_\bC$ defined  in (\ref{conj}).  If (\ref{commC}) holds, then $CB=BC$ and $A$ is uniquely defined by
$$Ax+i0 = B(x+i0) \ \mbox{ on }  \ D(A)=\{x\in \sH,\ x+i0\in D(B)\}$$
In the following $A : D(A) \to \sH$ is an operator in the real Hilbert space $\sH$.\\
{\bf (2)}  If $D(A)$ is dense, then $(A_\bC)^*= (A^*)_\bC$, in  particular $D((A_\bC)^*) = D(A^*)+iD(A^*)$.\\
{\bf (3)}  $A$ is either closed or closable if and only if $A_\bC$ is, respectively, closed or closable. In the second case, $\overline{A_\bC}= (\overline{A})_\bC$.\\
{\bf (4)} A  subspace $S \subset D(A)$ is a core for $A$ if and only if $S_\bC$ is a core for $A_\bC$.\\
{\bf (5)}  $A_\bC$ is symmetric, selfadjoint, anti symmetric, anti selfadjoint, essentially selfadjoint, unitary, normal, an orthogonal projector,  if and only if $A$, respectively,  is symmetric, selfadjoint, anti symmetric, anti selfadjoint, essentially selfadjoint, unitary, normal, an orthogonal projector.\\
{\bf (6)}  If $A$ is self-adjoint and $P^{(A)}$ is the associated PVM, the PVM $P^{(A_\bC)}$ of $A_\bC$ satisfies 
 $$P^{(A_\bC)}= (P^{(A)})_\bC\mbox{  so that, in particular  } P^{(A)}= P^{(A_\bC)}|_{\sH}$$ and, regarding the spectrum, 
$$\sigma(A_\bC)= \sigma(A), \:\: \mbox{more precisely}\:\: \sigma_p(A_\bC)= \sigma_p(A)\:,\: \:
\sigma_c(A_\bC)= \sigma_c(A)\:.$$
If $f:\bR\rightarrow\bR$ is measurable, referring to Prop.\ref{propint}, we have
$$
f(A_\bC)=\int_{\sigma(A_\bC)}f(\lambda)dP^{(A_\bC)}(\lambda)=\left(\int_{\sigma(A)}f(\lambda)dP^{(A)}(\lambda)\right)_\bC=(f(A))_\bC
$$
{\bf (7)} If  $A': D(A') \to \sH$  is another  operator in $\sH$ then $$A\subset A' \mbox{ iff } A_\bC\subset A'_\bC\quad \mbox{ and }\quad  (AA')_{\bC} = A_\bC A'_\bC\:.$$
{\bf (8)} If $p=p(x)$ is a real polynomial of finite degree, it holds
$$p(A_\bC) = (p(A))_\bC\:.$$
{\bf (9)} Let $D(A)$ be dense. $A$ is symmetric and positive iff $A_\bC$ is positive ((8) Def.\ref{defop})
\end{proposition}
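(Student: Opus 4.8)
The plan is to reduce everything to a single computation of the Hermitian quadratic form of $A_\bC$ on a generic vector $\psi = x+iy$ of its domain, which by definition is $D(A_\bC)=D(A)+iD(A)$, and to read off both implications from the real and imaginary parts of that form. First I would fix $x,y\in D(A)$, so that $A_\bC\psi = Ax+iAy$, and apply the defining formula (\ref{sp}) of $(\cdot|\cdot)_\bC$ with $u=Ax$ and $v=Ay$. This gives
$$(\psi|A_\bC\psi)_\bC = (x|Ax)+(y|Ay) + i\big[(x|Ay)-(y|Ax)\big]\,.$$
Everything then hinges on understanding the two bracketed quantities separately, using only the symmetry of the real scalar product on $\sH$.

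For the forward implication, assume $A$ is symmetric and positive. Symmetry of $A$ together with symmetry of $(\cdot|\cdot)$ yields $(x|Ay)=(Ax|y)=(y|Ax)$, so the imaginary part above vanishes identically; positivity of $A$ makes the two surviving terms nonnegative. Hence $(\psi|A_\bC\psi)_\bC=(x|Ax)+(y|Ay)\ge 0$ for every $\psi\in D(A_\bC)$, which is exactly positivity of $A_\bC$. For the converse, assume $A_\bC$ is positive, so that $(\psi|A_\bC\psi)_\bC$ is real and nonnegative for all $\psi$. Reality forces the imaginary part to vanish, i.e.\ $(x|Ay)-(y|Ax)=0$ for all $x,y\in D(A)$; rewriting $(y|Ax)=(Ax|y)$ this reads $(x|Ay)=(Ax|y)$, which is precisely the symmetry of $A$. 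Positivity of $A$ then follows by restricting the quadratic form to vectors of the form $x+i0$, for which the displayed identity collapses to $(x+i0|A_\bC(x+i0))_\bC=(x|Ax)\ge 0$.

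I do not expect any serious obstacle here, since the argument is pure bookkeeping of real and imaginary parts. The only point genuinely requiring care is the reverse direction: in the real setting positivity of the quadratic form alone does \emph{not} encode symmetry (every anti-symmetric operator has vanishing quadratic form), so symmetry of $A$ must be extracted from the complex positivity of $A_\bC$. The clean way to do this is the one above, exploiting that $(\psi|A_\bC\psi)_\bC$ is forced to be real and that its imaginary part is exactly the symmetry defect of $A$. Alternatively one could note that a densely defined operator on a complex Hilbert space with real-valued quadratic form is symmetric by polarization (legitimate because $D(A_\bC)$ is a genuine complex subspace) and then invoke Proposition \ref{prop2}(5) to transfer symmetry from $A_\bC$ back to $A$; this confirms that the explicit route and the structural route agree.
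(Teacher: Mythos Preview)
Your argument is correct and is exactly the direct computation the paper has in mind (the paper's proof of item (9) merely says it ``has a direct proof using the definition of $A_\bC$ and $\sH_\bC$'' without spelling out details). You have also correctly isolated the only nontrivial point, namely that in the converse direction symmetry of $A$ must be read off from the vanishing imaginary part of $(\psi|A_\bC\psi)_\bC$, since real positivity alone does not force symmetry.
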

\noindent The proof of this proposition is given in Appendix \ref{appProof}.

\subsection{Stone's theorem for real (and complex) Hilbert spaces}
We are in a position to state and prove a version of famous Stone's theorem valid for real (and complex)  Hilbert spaces, exploiting the constructed formalism.
The difficulty with the real Hilbert space case relies on the fact that the spectral decomposition cannot be directly exploited because the generator of the group is {\em anti selfadjoint} and these operators do not admit a spectral decomposition in {\em real} Hilbert spaces.
\begin{definition} {\em A {\bf one-parameter group of bounded operators} over  an either real or complex Hilbert space   $\sH$ is a map
 $U : \bR \to \gB(\sH)$, such that $U_0=I$ and $U_tU_s = U_{t+s}$ for $t,s \in \bR$. }
\end{definition}
\noindent We are now interested in the case where this map is strongly continuous (Def.\ref{defcontinuity}) with respect to the standard topology of $\bR$ and every $U_t$ is unitary
(Def.\ref{defop}(6)  and  Remark \ref{essentclos}(b)).
 To introduce the problem, we observe that  if $A$ is an anti selfdjoint operator in the complex Hilbert space, then  $\bR \ni t \mapsto e^{tA} = e^{-i t (iA)}$ (adopting the notation (\ref{fA}) for a 
function of a {\em selfadjoint} operator $iA$) is a  one-parameter group of {\em unitary operators} which is also {\em strongly continuous}. The proof is easy (e.g., see \cite{M}). If $\sH$ is real and $A$ is anti-selfadjoint, 
we can consider the strongly-continuous  one-parameter group of unitary operators $\bR \ni t \mapsto e^{tA_\bC}$ in $\sH_\bC$. 
Now, let $C$ the natural conjugation defined in (\ref{conj}),  then an easy application of complex Stone's theorem proves that $Ce^{tA_\bC}C=e^{tCA_\bC C}=e^{tA_\bC}$. Hence, thanks to (1) and (5) of Prop.\ref{prop2}, the map 
$\bR \ni t \mapsto \left. e^{tA_\bC}\right|_\sH$ turns out to be a strongly-continuous  one-parameter group of unitary operators in the {\em real} Hilbert space $\sH$.  The theorem we go to state, reversing the argument, focuses 
on the existence of an anti-selfadjoint generator $A$
for a given strongly-continuous  one-parameter group of unitary operators $\bR \ni t \mapsto U_t$ in $\sH$.  For the sake of completeness we will state the theorem into a way which is 
valid  for both real and complex Hilbert spaces.

\begin{theorem}[Stone's theorem]\label{ST}
Let $\sH$ be an either real or complex Hilbert space  and consider a strongly-continuous one-parameter group of unitary operators $U : \bR \to \gB(\sH)$. Define the subspace \beq D(A) := \left\{ x \in \sH \:\left|\: \exists y_x \in \sH\:, \:\: y_x := \lim_{h\to 0} h^{-1} (U_hx -x)\right. \right\} \label{DA}\eeq
and the operator
\beq A : D(A) \ni x \mapsto y_x \in \sH\:. \label{Gen}\eeq
It turns out that

 (i) $D(A)$ is dense in $\sH$,

 (ii) $AU_t=U_tA$, so that $U_t(D(A))= D(A)$, for all $t \in \bR$,

(iii) $A$ is anti-selfadjoint:  $A= -A^*$,

 (iv)  If $\sH$ is complex,  $A$ is the unique anti-selfadjoint operator satisfying (adopting the notation (\ref{fA})) 
$$U_t =  e^{tA}\:.$$

 (v) If $\sH$ is real, $A$ is the unique anti-selfadjoint operator satisfying
$$U_t =  e^{tA_\bC}\:|_{\sH}\:.$$

\end{theorem}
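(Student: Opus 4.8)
The plan is to prove the complex statement (iv) by the classical route and then obtain the real statement (v) almost for free by complexification, transporting every assertion back and forth through Proposition \ref{prop2}. The observation that makes the reduction painless is that the complexified group $\bR \ni t \mapsto (U_t)_\bC$ acts on real and imaginary parts independently, so its infinitesimal generator is exactly the complexification of the generator $A$ defined in (\ref{DA})--(\ref{Gen}).

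For the complex case I would argue in the standard four moves. First, density of $D(A)$: for $x \in \sH$ and $f \in C_c^\infty(\bR)$ form the mollified vector $x_f := \int_\bR f(t)\,U_t x\,dt$ (a strong integral, well defined by strong continuity and compact support); a direct computation of $h^{-1}(U_h - I)x_f$ shows $x_f \in D(A)$, and choosing $f$ in an approximate identity gives $x_f \to x$, proving (i). Second, (ii) is immediate from the definition of $A$ together with the continuity of each $U_t$. Third, anti-symmetry $A \subset -A^*$ follows from $U_h^* = U_{-h}$ by passing to the limit in $(h^{-1}(U_h-I)x\,|\,y) = (x\,|\,h^{-1}(U_{-h}-I)y)$. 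The crucial step is upgrading anti-symmetry to genuine anti-selfadjointness $A = -A^*$: I would show $\mathrm{Ran}(A \pm I) = \sH$ by the deficiency-index argument, i.e. if $y \perp \mathrm{Ran}(A-I)$ then $A^* y = y$, so the scalar function $g(t) := (U_t x\,|\,y)$ satisfies $g'(t)=g(t)$ while staying bounded by $\|x\|\,\|y\|$, forcing $g \equiv 0$ and hence $y = 0$ (and symmetrically for $A+I$). Finally, since $A$ is anti-selfadjoint, $\bR \ni t \mapsto e^{tA}$ (with $e^{tA}=e^{-it(iA)}$ through the functional calculus of the selfadjoint operator $iA$) is a strongly continuous unitary group, and differentiating $t \mapsto U_{-t}e^{tA}x$ on $D(A)$ shows it is constant, giving $U_t = e^{tA}$; uniqueness holds because any anti-selfadjoint generator must coincide with the strong derivative of $U$ at $0$, which is $A$. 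This settles (iv).

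For the real case I would set $V_t := (U_t)_\bC$ on $\sH_\bC$. By Proposition \ref{prop2}(7) together with $(I)_\bC = I$ the family $\{V_t\}$ is a one-parameter group, each $V_t$ is unitary by Proposition \ref{prop2}(5), and strong continuity follows from the norm identity (\ref{cs}), since $\|V_t(x+iy) - (x+iy)\|_\bC^2 = \|U_t x - x\|^2 + \|U_t y - y\|^2 \to 0$. Applying the already-established complex case to $V$ yields its generator $B$, which is densely defined, anti-selfadjoint, and satisfies $V_t = e^{tB}$ and $BV_t = V_t B$. Reading (\ref{cs}) once more, $x+iy$ lies in $D(B)$ iff both $x$ and $y$ lie in $D(A)$, with $B(x+iy) = Ax + iAy$; that is, $B = A_\bC$. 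Then $A$ is anti-selfadjoint by Proposition \ref{prop2}(5); $D(A)$ is dense because $D(A_\bC)=D(A)_\bC$ is dense in $\sH_\bC$ (remark following Proposition \ref{propcompleX}), giving (i) and (iii); restricting $V_t = e^{tA_\bC}$ to the real part $x=x+i0$ gives $U_t = e^{tA_\bC}|_{\sH}$, and restricting $BV_t=V_tB$ gives $AU_t = U_t A$, i.e. (ii) and the exponential formula of (v). For uniqueness, if $A'$ is real and anti-selfadjoint with $U_t = e^{tA'_\bC}|_{\sH}$, then $e^{tA'_\bC}$ commutes with the conjugation $C$ of (\ref{conj}) (its generator $A'_\bC$ does, by Proposition \ref{prop2}(1)) and restricts to $U_t$ on $\sH$, hence equals $(U_t)_\bC = V_t$; the uniqueness clause of the complex case then forces $A'_\bC = B = A_\bC$, so $A' = A$.

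The main obstacle is the anti-selfadjointness step in the complex case: passing from the easy anti-symmetry $A \subset -A^*$ to the full equality $A=-A^*$ is exactly where the group structure, rather than mere strong differentiability, is used, via the boundedness of $g(t)=(U_t x\,|\,y)$ that kills the deficiency spaces. Once this is secured the real case is purely mechanical, the only genuinely real-Hilbert-space input being the harmless verification, through the norm split (\ref{cs}), that the complexified generator descends to $A_\bC$ and therefore to a bona fide anti-selfadjoint operator on $\sH$.
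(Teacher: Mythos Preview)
Your proposal is correct and follows essentially the same route as the paper: reduce the real case to the complex one by passing to $V_t:=(U_t)_\bC$, apply complex Stone's theorem there, and then descend via Proposition~\ref{prop2}. The only cosmetic difference is in how the generator $B$ of $V$ is identified with $A_\bC$: the paper observes $V_tC=CV_t$, deduces $CB\subset BC$, and invokes Proposition~\ref{prop2}(1), whereas you read off $D(B)=D(A)+iD(A)$ and $B(x+iy)=Ax+iAy$ directly from the norm split~(\ref{cs}); the two arguments are interchangeable.
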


\begin{proof}
The statement for the complex case is a trivial re-adaptation of the standard statement of celebrated Stone's theorem 
(e.g., see \cite{M}), let us therefore pass to focus on the real Hilbert space case.
Consider the class of operators $V_t := (U_t)_\bC$. $\bR \ni t \mapsto V_t$ is  a  one-parameter group of unitary operators on $\sH_\bC$ as one immediately prove from (5) and (7) of Prop.\ref{prop2}. Strong continuity of  $V$ immediately arises from 
 $V_t := (U_t)_\bC$
and strong continuity of  $U$.
 Notice also that  $V_tC=CV_t$ holds from (1) of Prop.\ref{prop2} where $C$ is the natural conjugation of $\sH_\bC$. 
In view of the complex version of Stone's theorem,  we have  $V_t =e^{tB}$ for a unique anti-selfadjoint operator $B: D(B) \to \sH_\bC$, $D(B) \subset \sH_\bC$. $D(B)$ is dense,  $V_t(D(B)) = D(B)$, $BV_t = V_tB$, and  it holds
$$D(B) := \left\{ x+iy \in \sH_\bC \:\left|\: \exists  x'+iy' \in \sH_\bC\:, \:\:  x'+iy' := \lim_{h\to 0} h^{-1} (V_h(x+iy) -(x+iy))\right. \right\}$$
and 
$B : D(B) \ni x+iy \mapsto x'+iy'$.
From the definition of $D(B)$ and the fact that $V_tC=CV_t$ it immediately arises  that 
$C(D(B)) \subset D(B)$ and $CB\subset BC$. (1) of Prop.\ref{prop2} entails that $B= A_\bC$ for some anti-selfadjoint operator on $\sH$ whose domain is $D(A) = D(B) \cap \sH$ which, by construction coincides with
$$D(A)= D(B) \cap \sH = \left\{ x \in \sH \:\left|\: \exists y_x \in \sH\:, \:\: y_x := \lim_{h\to 0} h^{-1} (U_hx -x)\right. \right\}\:.$$
$D(A)$ is dense and $U_t(D(A))=D(A)$, from the analogous properties of $D(A_\bC)= D(B)$, and $U_tA=AU_t$ from the complex-case analogue making use of (7) in Prop.\ref{prop2}.\\
Let us finally come to the uniqueness issue. Suppose that there is an anti-selfadjoint operator $A': D(A') \to \sH$, in principle different from $A$, such that $U_t =  e^{tA'_\bC}\:|_{\sH}$. Consequently, $V_t = (e^{tA'_\bC}\:|_{\sH})_\bC = e^{tA'_\bC}$. The uniqueness part of Stone's theorem for complex Hilbert space  implies $A'_\bC = A_\bC$ so that  $A' =A'_\bC|_\sH =  A_\bC|_\sH = A$.
\end{proof}

\begin{definition}\label{defgen}
{\em Consider a  strongly-continuous one-parameter group of unitary operators $U : \bR \to \gB(\sH)$ with $\sH$ either complex or real Hilbert space. The {\em anti selfadjoint operator} $A : D(A) \to \sH$ associated to $U$ and  defined  by (\ref{DA})-(\ref{Gen}) is called the {\bf generator} of $U$. In both the real and complex Hilbert space case,  we  write $$U_t=e^{tA}\quad t \in \bR\:,$$} 
\end{definition}

\subsection{Schur's lemma for real (and complex) Hilbert spaces}
Another important issue is the formulation of the so-called {\em Schur's lemma} which has different statements for real and complex Hilbert spaces.

\begin{definition}
{\em Let $\sH$ be  an either real or complex Hilbert space.
A family  of operators  $\gU\subset \gB(\sH)$ is said to be {\bf  irreducible} if $U(\sK)\subset \sK$ for all $U\in \gU$ and  
 a closed subspace $\sK \subset \sH$ implies $\sK=\{0\}$ or $\sK =\sH$. 
$\gU$ is said to be {\bf reducible} if it is not irreducible.}
\end{definition}
\noindent Since the definition refers to {\em closed} subspaces,  our notion of irreducibility is sometimes called {\em topological} irreducibility.
\begin{remark} \label{remirred}$\null$\\
{\em
{\bf (a)} if $\gU$ is irreducible, then it is easy to see that $\{P\in\cL(\sH) \:|\:  [P,U]=0\ \forall U\in\gU\}=\{0,I\}$, while the opposite implication holds true if $\gU$ is closed under Hermitian conjugation.\\
{\bf (b)} If $\sH$ is a real Hilbert space, every family $\gU \subset \gB(\sH)$ induces an associated  family $\gU_\bC := \{U_\bC \:|\: U \in \gU\}  \subset \gB(\sH_\bC)$. If $\gU_\bC$ is irreducible, $\gU$ must be irreducible as well because $U(\sK)\subset \sK$ implies $U_\bC(\sK_\bC) \subset \sK_\bC$. The opposite implication is not true in general.
}
\end{remark}
\noindent  We have a first result which is valid for both the real and the complex Hilbert space case.

\begin{proposition} [Schur's lemma  for essentially selfadjoint operators] \label{SL}
Let $\sH$  be an, either real or complex, Hilbert space and let $\gU \subset \gB(\sH)$ be irreducible.\\ If the operator $A : D(A) \to \sH$, with $D(A) \subset \sH$ dense, is essentially selfadjoint and 
\beq UA \subset  AU\quad \mbox{for all}\quad U \in\gU\label{commAU}\:,\eeq
then  $\overline{A} \in \gB(\sH)$ (the bar denoting the closure of $A$) and 
 $$\overline{A}= a I\:,\quad \mbox{for some $a \in \bR$.}$$
If $A$ satisfying (\ref{commAU}) is selfadjoint, we   have $A\in \gB(\sH)$ with  $A = a I$
for some $a \in \bR$.
\end{proposition}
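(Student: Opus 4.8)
The plan is to set $B := \overline{A}$, which is selfadjoint in both the real and the complex case, with projector-valued measure $P^{(B)}$ furnished by the spectral theorem (Theorem \ref{st}). The strategy is to upgrade the purely algebraic relation $UA \subset AU$ into genuine commutation of each $U \in \gU$ with the \emph{bounded} spectral projectors $P^{(B)}(E)$, then to invoke irreducibility to force $P^{(B)}$ to be two-valued, whence $B$ must be a real multiple of the identity. First I would verify $UB \subset BU$ for every $U \in \gU$: given $x \in D(B)$, choose $x_n \in D(A)$ with $x_n \to x$ and $Ax_n \to Bx$; since $UA \subset AU$ forces $Ux_n \in D(A)$ and $AUx_n = UAx_n$, one has $Ux_n \to Ux$ and $A(Ux_n) \to UBx$, so closedness of $B$ gives $Ux \in D(B)$ and $BUx = UBx$.

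Next I would pass from $B$ to its resolvent. From $UB \subset BU$ one computes $(B-z)Uy = U(B-z)y$ for $y \in D(B)$; setting $y = (B-z)^{-1}x$ yields $U(B-z)^{-1} = (B-z)^{-1}U$ for every $z$ in the resolvent set. Here the real and complex cases diverge: in the complex case one simply takes $z = \pm i$, but in a \emph{real} Hilbert space no nonreal resolvent point is available, so I would complexify. By Prop.\ref{prop2}(5),(6),(7), $B_\bC$ is selfadjoint on $\sH_\bC$ with $P^{(B_\bC)} = (P^{(B)})_\bC$, and $U_\bC B_\bC \subset B_\bC U_\bC$. Running the resolvent argument for $B_\bC$ at $z = \pm i$ and using the standard reconstruction of $P^{(B_\bC)}(E)$ from the resolvent (bounded Borel functional calculus) gives $[U_\bC, P^{(B_\bC)}(E)] = 0$; restricting to $\sH$ through $P^{(B)}(E) = P^{(B_\bC)}(E)|_\sH$ then yields $[U, P^{(B)}(E)] = 0$ for every Borel set $E$ and every $U \in \gU$.

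Now I would apply irreducibility. Each $P^{(B)}(E)$ is an orthogonal projector commuting with the whole irreducible family $\gU$, so the forward implication of Remark \ref{remirred}(a) (which requires no closure under Hermitian conjugation) forces $P^{(B)}(E) \in \{0, I\}$. Thus $P^{(B)}$ is two-valued. If $\sigma(B)$ contained distinct points $a \neq b$, disjoint open intervals around them would carry projectors both equal to $I$ yet mutually orthogonal, giving $I = I\cdot I = 0$, a contradiction once $\sH \neq \{0\}$. Hence $\sigma(B) = \{a\}$ is a single point, so $B = \int_{\sigma(B)} \lambda\, dP^{(B)}(\lambda) = aI \in \gB(\sH)$ with $a \in \bR$ by selfadjointness, proving $\overline{A} = aI$. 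The final assertion is then immediate: if $A$ is already selfadjoint, $A = \overline{A} = aI \in \gB(\sH)$.

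The main obstacle is precisely the middle step, namely transferring commutation from the unbounded, merely densely defined $A$ to the bounded spectral data of its closure. One cannot shortcut it by writing $UBU^{-1} = B$, since the elements of $\gU$ need be neither unitary nor invertible; and in the real case the selfadjoint operator $B$ admits no nonreal resolvent inside $\sH$ itself. The resolvent identity combined with the complexification dictionary of Prop.\ref{prop2} is exactly what makes this passage rigorous, while the two-valued-PVM endgame is routine.
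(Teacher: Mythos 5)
Your proposal is correct and follows the same overall skeleton as the paper's proof: pass to the closure $\overline{A}$, show its PVM commutes with every $U\in\gU$, use irreducibility to force the PVM to be $\{0,I\}$-valued, and conclude $\overline{A}=aI$. Two execution details differ. First, for the passage from $U\overline{A}\subset \overline{A}U$ to commutation with the spectral projectors, you take a resolvent-plus-complexification detour; the paper instead invokes Theorem \ref{st}(c)(ii) directly, which is already stated for \emph{both} the real and complex cases, so no complexification is needed at this step (your route is valid but re-proves a packaged fact). Second, your endgame is genuinely different and arguably cleaner: you use $\mathrm{supp}(P^{(\overline{A})})=\sigma(\overline{A})$ to show that two distinct spectral points would yield two mutually orthogonal projectors both equal to $I$, whereas the paper runs an explicit bisection argument producing a nested sequence of half-open intervals shrinking to a point $\lambda_0$ with $P^{(\overline{A})}(\{\lambda_0\})=I$. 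Your version does tacitly use that $\sigma(\overline{A})\neq\emptyset$ (which follows from $P^{(\overline{A})}(\bR)=I$ and the support characterization), and correctly notes that only the forward implication of Remark \ref{remirred}(a) is needed, so no closure of $\gU$ under adjoints is required.
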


\begin{proof} We prove the thesis for the real Hilbert space case, the complex Hilbert space case has an analogous proof with obvious changes.
Since the operators $U \in \gU$ are bounded, from Remark \ref{remarkclosure}, one  has $U \overline{A}\subset  \overline{A}U$. Theorem \ref{st} (b) (ii) now implies that  the spectral measure of $P^{(\overline{A})}$, of $\overline{A}$,  commutes with every $U\in \gU$.
Since $\gU$ is irreducible,  if $E \in \cB(\bR)$, then either  $P^{(\overline{A})}(E)=0$ (i.e. $P^{(\overline{A})}$ projects onto $\{0\}$) or $P^{(\overline{A})}(E)=I$
 (i.e. $P^{(\overline{A})}$ projects onto the whole $\sH$). 
If  $P^{(\overline{A})}( (a_0,b_0])=0$ for all  $a_0\leq b_0$ in $\bR$, we would have $P^{(\overline{A})}(\bR)=0$,
due to the $\sigma$-additivity, which is not possible.  Thus  $P^{(\overline{A})}( (a_0,b_0])=I$ for some  $a_0\leq b_0$ in $\bR$. Notice that
$P^{(\overline{A})}( \bR \setminus (a_0,b_0])=0$ as trivial consequence of the properties of a PVM. 
Now, define $\delta_0:=b_0-a_0$ and divide $(a_0,b_0]$ into the disjoint union of two equal-length contiguous half-open intervals. Reasoning as above we see that one and only one of them has vanishing measure, while the other satisfies $P^{(\overline{A})}((a_1,b_1])=I$. Clearly $\delta_1:=b_1-a_1=\frac{1}{2}\delta_0$. Iterating this procedure we find a couple  of sequences $a_0\le a_1\le \cdots \le a_n, b_n\le \cdots \le b_1\le b_0$ within $[a_0,b_0]$ such that $\delta_n:=b_n-a_n=2^{-n}\delta_0$ and $P^{(\overline{A})}((a_n,b_n])=I$.
Being $[a_0,b_0]$ compact and $\delta_n\rightarrow 0$ it easily follows that there must exist $\lambda_0\in [a_0,b_0]$ such that
 $a_n \to \lambda_0$ and 
$b_n \to \lambda_0$. From outer continuity  of the positive measure $( x| 
P^{(\overline{A})}(E)| x)$ we have
$( x |P^{\overline{A}}(\{\lambda_0\})| x)  = ( x |P^{(\overline{A})}(\cap_n (a_n,b_n]) x) = ( x| I x)$ for every $x\in \sH$.
Since $P^{(\overline{A})}(\{\lambda_0\})-I$ is selfadjoint,
 $2( x |(P^{(\overline{A})}(\{\lambda_0\}) -I) y) =( x+y |(P^{(\overline{A})}(\{\lambda_0\}) -I) (x+y))
-( x |(P^{(\overline{A})}(\{\lambda_0\}) -I) x) -( y |(P^{(\overline{A})}(\{\lambda_0\}) -I) y) =0$ so that 
$( x |(P^{(\overline{A})}(\{\lambda_0\})-I) y) = 0$.
Since $x,y \in \sH$ are arbitrary, we have obtained that 
$P^{(\overline{A})}(\{\lambda_0\})=I$ and therefore $P^{(\overline{A})}(\bR \setminus \{\lambda_0\})=0$.
Computing the spectral integral of $\overline{A}$, defining $a:= \lambda_0$,  this result immediately implies that
$\overline{A}= \int_{\bR} \lambda P^{(\overline{A})}(\lambda) = a I$.
If the initial $A$ is already selfadjoint, it is essentially selfadjoint, too and the proof applies to $\overline{A}$. However as $A^*$ is closed and $A=A^*$, we have $A= \overline{A}$ proving the last statement. 
\end{proof}

\noindent A better result  can be obtained when the class $\gU$ consists of  {\em an irreducible unitary  representation}

\begin{definition}\label{urep}
{\em Let $\sH$ be an, either real or complex, Hilbert space and $G$ a group with unit element $e$ and group multiplication 
$G\times G \ni (g,g') \mapsto gg' \in G$.  \\ A {\bf  unitary  representation} of $G$ over $\sH$
is a map $G \ni g \mapsto U_g \in \gB(\sH)$ where $U_g$ is unitary, $U_e=I$ and $U_gU_{g'}= U_{gg'}$ for every $g,g'\in G$.\\
The unitary representation is said to be {\bf irreducible} if $\gU := \{U_g\:|\: g \in G\}$ is irreducible. }
\end{definition}
 \noindent In this juncture, the difference from the real and complex Hilbert space cases is evident and concerns  a mathematical notion which will play a fundamental r\^ole in our work.
\begin{definition}\label{defCSJ} {\em If $\sH$ is an either  real or complex  Hilbert space, an operator $J \in \gB(\sH)$ such that $J^2=-I$ and $J^*=-J$ is called {\bf complex structure} on $\sH$. }
\end{definition}
\noindent  In complex Hilbert spaces, obviously, $\pm iI$ are  the most natural complex structures.

\begin{proposition}\label{SL2}
Let $\sH$ be an, either real or complex, Hilbert space and $G \ni g\mapsto U_g$ a  unitary representation on $\sH$ of the group $G$ and consider a densely-defined operator in $\sH$ $A:D(A)\rightarrow \sH$ such that
\begin{equation}\label{SL2'}
U_gA\subset AU_g\:, \quad \forall g\in G\:.
\end{equation}
Then (\ref{SL2'}) holds in the stronger form $$U_gA= AU_g\quad \mbox{and}\quad U_gA^*= A^*U_g\:, \quad \forall g\in G\:.$$ 
If $G \ni g\mapsto U_g$ is irreducible and $A$ is closed,  the following further facts hold

(i) if $\sH$ is real, then $A=aI+bJ$, with $a,b\in\bR$ and $J$ is a complex structure,

(ii) if $\sH$ is complex, then $A=cI$, where $c\in\bC$.

\noindent In particular  $D(A)=\sH$ and $A\in\gB(\sH)$ in both the cases.
\end{proposition}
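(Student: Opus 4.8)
The plan is to proceed in three stages: first upgrade \eqref{SL2'} from an inclusion to genuine commutation (this needs neither irreducibility nor closedness), then reduce the structural statement to an analysis of the positive operators $A^*A$ and $AA^*$ via the already-established Proposition \ref{SL}, and finally extract the complex structure from a unitary factor of $A$.

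First I would prove the self-improvement of \eqref{SL2'}. The only input is that each $U_g$ is invertible with $U_g^{-1}=U_{g^{-1}}=U_g^*\in\gU$. The inclusion $U_gA\subset AU_g$ says precisely that $U_g(D(A))\subset D(A)$ and $U_gAx=AU_gx$ on $D(A)$; applying the same inclusion to $g^{-1}$ gives $U_{g^{-1}}(D(A))\subset D(A)$, and since $U_g(U_{g^{-1}}(D(A)))=D(A)$ this forces $U_g(D(A))=D(A)$, hence $U_gA=AU_g$ as operators (same domain, same action). Taking adjoints, legitimate because each $U_g$ is bounded and invertible, turns $U_gA=AU_g$ into $A^*U_{g^{-1}}=U_{g^{-1}}A^*$, which as $g$ ranges over $G$ is exactly $U_gA^*=A^*U_g$ for all $g$.

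For the remaining statements I assume $A$ closed and $\gU$ irreducible. Since $A$ is closed and densely defined, $A^*A$ is self-adjoint and positive (von Neumann's theorem, which transfers to the real case, if one wishes through the complexification of Proposition \ref{prop2}), and the relations just proved give $U_g(A^*A)=(A^*A)U_g$. Hence Proposition \ref{SL} applies and yields $A^*A=c^2I$ with $c\ge 0$. If $c=0$ then $\|Ax\|^2=(x|A^*Ax)=0$ on the dense domain, so $A=0$ and the thesis holds with $a=b=0$. If $c>0$, then $D(A^*A)=\sH\subset D(A)$ forces $D(A)=\sH$, so the closed everywhere-defined $A$ is bounded, $A\in\gB(\sH)$, and $W:=c^{-1}A$ is an isometry, $W^*W=I$. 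Now $AA^*$ is bounded self-adjoint and commutes with $\gU$, so Proposition \ref{SL} gives $AA^*=d^2I$; thus $WW^*=(d/c)^2I$, but $WW^*$ is an orthogonal projection, so $(d/c)^2\in\{0,1\}$, and since $d=0$ would give $A=0$ we get $WW^*=I$. Hence $W$ is unitary, commutes with $\gU$ together with $W^*$, and $A=cW$. Finally $W+W^*$ is self-adjoint and commutes with $\gU$, so Proposition \ref{SL} gives $W+W^*=2sI$ with $s\in\bR$, and $\|W\|=1$ forces $|s|\le 1$, while $W^2-2sW+I=0$. In the complex case I would also apply Proposition \ref{SL} to the self-adjoint $\tfrac{1}{2i}(W-W^*)=tI$, obtaining $W=(s+it)I=\lambda I$ with $|\lambda|=1$, whence $A=c\lambda I$ with scalar in $\bC$, proving (ii). In the real case, when $s^2=1$ the identity $(W\mp I)^*(W\mp I)=2I\mp(W+W^*)=0$ forces $W=\pm I$, so $A=aI$; when $s^2<1$ I set $b_0:=\sqrt{1-s^2}>0$ and $J:=b_0^{-1}(W-sI)$, and check from $W^*=W^{-1}=2sI-W$ that $J^*=-J$ and, using $W^2=2sW-I$, that $J^2=-I$, so $J$ is a complex structure in the sense of Definition \ref{defCSJ} commuting with $\gU$; then $W=sI+b_0J$ gives $A=(cs)I+(cb_0)J$, i.e. (i) with $a=cs$, $b=cb_0$.

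I expect the genuinely delicate step to be the passage from isometry to unitary for $W$, that is, ruling out a proper (non-surjective) isometry. This is precisely where irreducibility must be used a second time, via Proposition \ref{SL} applied to $AA^*$ together with the remark that a projection proportional to $I$ is trivial. The other points needing care are purely technical: the domain and adjoint bookkeeping in the first stage, and the self-adjointness of $A^*A$, which is what legitimizes invoking Proposition \ref{SL} at all.
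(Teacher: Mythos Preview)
Your argument is correct, including the first stage (which matches the paper's) and the passage from isometry to unitary for $W$, which you handle cleanly via $AA^*$. The route you take in the second stage, however, differs genuinely from the paper's. After establishing $A\in\gB(\sH)$ from $A^*A=c^2I$, the paper simply splits $A=A_S+A_A$ into its selfadjoint and anti-selfadjoint parts and applies Proposition~\ref{SL} directly to $A_S$ (giving $A_S=aI$) and, in the real case, to the selfadjoint operator $A_A^2$ (giving $A_A^2=cI$ with $c\le 0$, whence $J:=(-c)^{-1/2}A_A$ when $c\neq 0$); the complex case is dispatched by applying Proposition~\ref{SL} to $iA_A$. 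Your approach instead normalises $A$ to a unitary $W=c^{-1}A$, needing a second invocation of Proposition~\ref{SL} on $AA^*$ to rule out a proper isometry, and then extracts the complex structure from $W$ via $W+W^*=2sI$ and the quadratic relation $W^2-2sW+I=0$. The paper's decomposition is more economical (one fewer application of Schur, no isometry-versus-unitary subtlety), while yours has the minor conceptual benefit of making explicit that the resulting $J$, when it appears, is built from a \emph{unitary} commuting with the representation; in substance the two arguments are equivalent, and your complex structure $J=b_0^{-1}(W-sI)$ coincides (up to sign) with the paper's $A_A/\sqrt{-c}$.
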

\begin{proof}
We use conventions and properties of standard domains (Def.\ref{defdomain} and Remark \ref{remassociativity}) also in relation with the adjoint conjugation (Remark \ref{sumselfadj}).
From (\ref{SL2'}), applying $U_{g^{-1}}$ to both sides, we also have $AU_{g^{-1}}\subset U_{g^{-1}}A$. Since the inclusion holds for every $g\in G$, it must also be $AU_{g}\subset U_{g}A$ which together with (\ref{SL2'}), yields $AU_{g} = U_{g}A$ for every $g\in G$.  Taking the adjoint of both sides of  $AU_{g} = U_{g}A$ and observing that $U_g\in \gB(\sH)$, (c) and (d) of Remark \ref{sumselfadj} gives $U_g^*A^*\subset A^*U_g^*$. Now, since $U^*_g= U_{g^{-1}}$ and $g\in G$ is generic, this is equivalent to $U_gA^*\subset A^*U_g$ for every $g$. Reasoning as above we get also the opposite inclusion and the proof of the first statement is over. Let us now assume that $A$ is closed and the representation is irreducible.
From $AU_g= U_gA$, $A^*U_g= U_gA^*$ and taking eventually  Remark \ref{remassociativity} into account, we have
$A^*AU_g = A^*U_gA= U_gA^*A$.
Since  $A$ is closed, the operator $A^*A$ turns out to be densely defined and selfadjoint. This is a well known result if $\sH$ is complex \cite{R,S,M}. The validity of the same statement for the real case will be given in the proof of Thm \ref{Polar} (a) below.
 Using Proposition \ref{SL} for the selfadjoint operator 
$A^*A$ we find $A^*A= aI$ for some real $a$. In particular $D(A^*A)=D(aI)=\sH$ so that $D(A)=\sH$  and thus, since $A$ is closed, the closed graph theorem (Thm \ref{cgT}) gives $A\in \gB(\sH)$. \\
We can decompose the operator $A$ into $A=\frac{A+A^*}{2}+\frac{A^*-A}{2}$, where the two addends are, respectively, selfadjoint and anti selfadjoint. Let us denote them, respectively, by $A_S$ and $A_A$. Both of them commute with the representation $U$, in particular $U_g A_S=A_S U_g$ for any $g\in G$ gives $A_S=aI$ for some $a\in\bR$, thanks to Prop. \ref{SL}. 
Now, suppose that $\sH$ is complex, then the operator $iA_A$ is selfadjoint and commutes with the representation $U$. So, thanks again to Prop \ref{SL} we find $iA_A=cI$ for some $c\in \bR$, i.e. $A_A=-ci$ and the proof is complete. Now, suppose that $\sH$ is real. The operator $A_A^2$ is selfadjoint and commutes with the group representation, hence $A_A^2=cI$ for some $c\in\bR$, thanks again to Proposition \ref{SL}. Notice that $c\le 0$, indeed if we take a unit vector $v\in\sH$, it holds $c=(v|cv)=(v|A_AA_Av)=-(A_Av|A_Av)=-\|A_Av\|^2\le 0$. We also see that $c=0$ if and only if $A_A=0$, that is if $A$ is selfadjoint: in this case the theorem is proved. Suppose that $c\neq 0$ and define $J:=\frac{A_A}{\sqrt{-c}}$. With this definition we find  $J\in \gB(\sH)$, $J^*=-J$ and $J^*J=-I$, i.e., $J$ is a complex structure as wanted, and
$A= aI + bJ$ for $a,b \in \bR$.
\end{proof} 
\begin{remark} {\em The result in (i) can be made stronger with the help of Theorem \ref{threecommutant} we shall prove later, observing that if $A \in \gB(\sH)$ commutes with the unitary representation $U$ it also commute with the von Neumann algebra generated by $U$.}
\end{remark}

\subsection{Square root and polar decomposition in real (and complex) Hilbert spaces}
Another technical tool, which will be very useful in this work, is the {\em polar decomposition theorem} demonstrated in a version which is valid for a real Hilbert 
space, too.

\begin{theorem}\label{Polar}
Let $\sH$ be an, either real or complex, Hilbert space and  $A:D(A)\rightarrow \sH$ a densely-defined closed operator in $\sH$. Then  the following facts hold.\\
{\bf (a)} $A^*A$ is densely defined, positive and selfadjoint.\\
{\bf (b)} There exists a unique pair  of operators $U,P$ in $\sH$ such that,

(i) $A=UP$ where in particular $D(P)=D(A)$ 

(ii) $P$ is selfadjoint and $P\geq 0$ ((8) Def.\ref{defop}),

(iii) $U\in\gB(\sH)$ is isometric on $Ran(P)$  (and thus on $\overline{Ran(P)}$ by continuity),

(iv) $Ker(U)\supset Ker(P)$.\\
The right-hand side of (i)  is called the {\bf polar decomposition} of $A$.  It turns out that, in particular,

(v) $P=|A|:=\sqrt{A^*A}$,

(vi) $Ker(U)= Ker(A) = Ker(P)$,

(vii)  $Ran(U) = \overline{Ran(U)}$,\\
are also valid where $\sqrt{A^*A}$ is interpreted as in Prop.\ref{sqrt} \\
{\bf (c)} If $\sH$ is real, the polar decomposition of $A_\bC$ is $A_\bC=U_\bC P_\bC$ where $A=UP$ is the polar decomposition of $A$. In particular, $|A_\bC|=|A|_\bC$. 
\end{theorem}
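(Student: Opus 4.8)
The plan is to reduce all three parts to the corresponding \emph{complex} statements via the external complexification of Section \ref{seccomplex}, transporting operators between $\sH$ and $\sH_\bC$ with Proposition \ref{prop2} and descending the results back to $\sH$. Throughout I would let $C$ denote the natural conjugation (\ref{conj}) and use repeatedly that $A$ densely defined and closed implies $A_\bC$ densely defined and closed (parts (2),(3) of Prop.\ref{prop2}), together with the identities $(A_\bC)^*=(A^*)_\bC$ and $(A^*A)_\bC=(A^*)_\bC(A)_\bC=(A_\bC)^*A_\bC$ (parts (2),(7)).

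For (a), I would invoke the known complex-case theorem of von Neumann: $(A_\bC)^*A_\bC$ is densely defined, positive and selfadjoint on $\sH_\bC$. Since $(A_\bC)^*A_\bC=(A^*A)_\bC$, part (5) of Prop.\ref{prop2} yields selfadjointness of $A^*A$, part (9) gives its symmetry and positivity (from positivity of the complexification), and density of $D(A^*A)$ follows because $D((A^*A)_\bC)=D(A^*A)+iD(A^*A)$ is dense in $\sH_\bC$ iff $D(A^*A)$ is dense in $\sH$ (Remark after Prop.\ref{propcompleX}). For (b), existence, I would start from the complex polar decomposition $A_\bC=WQ$, with $Q=|A_\bC|=\sqrt{(A_\bC)^*A_\bC}\ge 0$ selfadjoint and $W\in\gB(\sH_\bC)$ the associated partial isometry. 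First, $Q=\sqrt{(A^*A)_\bC}=(\sqrt{A^*A})_\bC=(|A|)_\bC$ by part (6) of Prop.\ref{prop2} applied to $f(\lambda)=\sqrt\lambda$; thus $Q=P_\bC$ with $P:=|A|=\sqrt{A^*A}$ a positive selfadjoint operator on $\sH$ (interpreted via Prop.\ref{sqrt}), and in particular $CQ=QC$. The crucial step is to show that $W$ too commutes with $C$: since $A$ is real, $A_\bC$ commutes with $C$, so applying $C$ to $A_\bC=WQ$ gives $A_\bC=(CWC)(CQC)=(CWC)Q$, and, as conjugation by the antiunitary $C$ preserves being the positive selfadjoint square root and preserves the partial-isometry and kernel conditions, the uniqueness of the complex polar decomposition forces $CWC=W$. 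Then part (1) of Prop.\ref{prop2} gives $W=U_\bC$ for a unique $U\in\gB(\sH)$, and $A_\bC=U_\bC P_\bC=(UP)_\bC$ (part (7)) yields $A=UP$ by injectivity of complexification. Properties (ii)--(vii) are read off from their complex counterparts: $D(P)=D(A)$, $P=|A|$, isometry of $U$ on $Ran(P)$, $Ker(U)=Ker(A)=Ker(P)$ and closedness of $Ran(U)$ all follow from the matching facts for $W,Q$ using $Ker(X_\bC)=(Ker X)_\bC$, $Ran(X_\bC)=(Ran X)_\bC$ (Def.\ref{defcomplx0}) and the fact that a real subspace is closed iff its complexification is. Uniqueness in $\sH$ then follows by complexifying any competing pair $(U',P')$ and appealing to uniqueness in $\sH_\bC$.

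Part (c) is essentially already obtained along the way: the computation $|A_\bC|=\sqrt{(A^*A)_\bC}=(|A|)_\bC=P_\bC$ shows $|A_\bC|=|A|_\bC$, and $A_\bC=U_\bC P_\bC$ with $U_\bC,P_\bC$ satisfying the characterizing conditions is, by complex uniqueness, the polar decomposition of $A_\bC$.

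The main obstacle I anticipate is precisely the descent of the partial isometry $W$ to a real operator, i.e. establishing $CWC=W$. This rests on two points requiring care: that the selfadjoint factor $Q$ commutes with $C$ (clear, since $Q=P_\bC$), and that conjugation by the antiunitary $C$ sends the characterizing data of the complex polar decomposition to data of the same type, so that the \emph{uniqueness} clause may be applied to $CWC$. Concretely one must verify that $C$ maps $\overline{Ran(Q)}$ onto itself (which follows from $CQC=Q$) and that it preserves the isometry of $W$ on $\overline{Ran(Q)}$ and the inclusion $Ker(W)\supset Ker(Q)$ (both following from $C$ being norm-preserving together with $C(Ker Q)=Ker Q$). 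These are the only genuinely delicate verifications; everything else is a mechanical transport through Prop.\ref{prop2}.
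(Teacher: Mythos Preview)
Your proposal is correct and follows essentially the same route as the paper: reduce everything to the known complex polar decomposition via the external complexification, identify $|A_\bC|=(|A|)_\bC$ through Prop.~\ref{prop2}(6), and then descend the partial isometry to $\sH$ by showing it commutes with the conjugation $C$. The only minor difference is in that last step: you argue $CWC=W$ by invoking \emph{uniqueness} of the complex polar decomposition (after checking that conjugation by $C$ preserves the characterizing data), whereas the paper verifies $U'C=CU'$ directly on the two complementary pieces $\overline{Ran(P')}$ and $Ker(P')$; both arguments are straightforward and of comparable length.
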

\noindent The proof of the theorem is given in Appendix \ref{appProof}.\\

\noindent
$U$  is a  {\em  partial isometry} (Def.\ref{defop}) because $U \in \gB(\sH)$ and  is isometric
on $Ker(U)^\perp$ (it is indeed isometric on $\overline{Ran(P)} = Ker(P^*)^\perp = Ker(P)^\perp =  Ker(U)^\perp$).\ 
 We conclude this section with a pair of  technical proposition, the second concerning the interplay of commutativity of  one-parameter unitary groups and commutativity of elements of the corresponding polar decomposition of the generators. That result will turn out very useful later.
\begin{proposition}\label{LemmaCOMM}
Let $\sH$ be an, either real or complex, Hilbert space. Consider an, either  selfadjoint or anti selfadjoint,  operator $A:D(A)\rightarrow \sH$ with polar decomposition  $A=UP$. The following facts hold.\\ 
{\bf (a)} If $A^*=-A$ and $B\in\gB(\sH)$, $Be^{tA}=e^{tA}B$ is valid if and only if $BA\subset AB$ holds.\\
{\bf (b)} If $B\in\gB(\sH)$ satisfies  $BA\subset AB$, then $BU=UB$ and $BP \subset PB$.\\
{\bf (c)} The  commutation relations are true
 $$UA \subset  AU \quad \mbox{and}\quad U^*A\subset AU^*\:.$$
Moreover, for every measurable function $f: [0,+\infty) \to \bR$:
 $$Uf(P)\subset f(P)U\quad \mbox{and}\quad U^*f(P)\subset f(P)U^*\:.$$
{\bf (d)} $U$ is respectively selfadjoint or anti selfadjoint.\\
{\bf (e)} If $A$ is injective (equivalently if either $P$ or $U$ is injective), then $U$ and $U^*$ are unitary. In this case all the inclusions in (c) are identities.
\end{proposition}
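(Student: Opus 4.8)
The plan is to treat the four items in the order \textbf{(a)}, \textbf{(b)}, and then \textbf{(c)}--\textbf{(e)} together, exploiting throughout that an (anti)selfadjoint $A$ is normal, that $P=|A|=\sqrt{A^*A}$ is positive and selfadjoint, and that $U$ vanishes exactly on $Ker(P)=Ker(A)$ while being isometric on $\overline{Ran(P)}=Ker(P)^\perp$ (Theorem \ref{Polar}). The recurring device for the real case will be complexification: I pass to $\sH_\bC$, where $A_\bC$ is anti selfadjoint (resp. selfadjoint) and $A_\bC=U_\bC P_\bC$ by Theorem \ref{Polar}(c), use the complex spectral calculus there, and descend to $\sH$ via Proposition \ref{prop2}(1),(5),(6). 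I expect the main obstacle to be the bookkeeping of operator domains and inclusions, together with the systematic use of complexification to supply the real spectral theory missing for anti selfadjoint operators.

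For \textbf{(a)}, the forward implication is elementary and self-contained: writing $e^{tA}$ for the group generated by $A$ (Definition \ref{defgen}), if $Be^{tA}=e^{tA}B$ for all $t$ and $x\in D(A)$, then $h^{-1}(e^{hA}Bx-Bx)=B\,h^{-1}(e^{hA}x-x)\to BAx$ by continuity of $B$, so the defining limit (\ref{DA}) exists for $Bx$; hence $Bx\in D(A)$ and $ABx=BAx$, i.e. $BA\subset AB$. For the converse I reduce to the complex case: if $BA\subset AB$ then $B(iA)\subset(iA)B$ with $iA$ selfadjoint (after complexifying via Proposition \ref{prop2}(7) in the real case), so $B$ commutes with the PVM of $iA$ by Theorem \ref{st}(b)(ii) and hence with every bounded Borel function of it, in particular with $e^{tA}=e^{-it(iA)}$; restricting to $\sH$ closes the real case.

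For \textbf{(b)} the key observation is that $BA\subset AB$ already forces $BA^2\subset A^2B$ by one iteration of the inclusion (with care on $D(A^2)=D(A^*A)$), and since $A^*A=\pm A^2$ this says $B$ commutes with the positive selfadjoint operator $A^*A$. By Theorem \ref{st}(b)(ii), $B$ then commutes with the PVM of $A^*A$ and with every measurable function of it, in particular $BP\subset PB$ for $P=\sqrt{A^*A}$. To transfer this to $U$, I evaluate $BU$ and $UB$ on $Ran(P)$: for $x\in D(P)$ one has $Bx\in D(P)$, $PBx=BPx$, and $ABx=BAx$, which rearrange (using $A=UP$) to $(BU)(Px)=(UB)(Px)$; by continuity this gives $BU=UB$ on $\overline{Ran(P)}=Ker(P)^\perp$, while on $Ker(P)=Ker(U)$ both sides vanish since there $Bx\in Ker(P)=Ker(U)$. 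Hence $BU=UB$ on all of $\sH$.

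Items \textbf{(c)}, \textbf{(d)}, \textbf{(e)} I would handle by identifying $U$ and $P$ through the functional calculus and feeding the spectral projections of $P$ back into \textbf{(b)}. First, each $P^{(P)}(E)$ commutes with $A$: in the selfadjoint case $P^{(P)}(E)=\chi_E(|A|)$ is a bounded Borel function of $A$; in the anti selfadjoint case the same holds in $\sH_\bC$ for the selfadjoint $S:=-iA_\bC$ (note $S^2=A_\bC^*A_\bC=P_\bC^2$, so $P_\bC=|S|$ and $A_\bC=iS$), and the commutation descends by Proposition \ref{prop2}. Applying \textbf{(b)} with $B=P^{(P)}(E)$ gives $UP^{(P)}(E)=P^{(P)}(E)U$, and taking adjoints the same for $U^*$; thus $U,U^*$ commute with the whole PVM of $P$, which upgrades (approximating an unbounded $f$ by $f_n=f\chi_{[0,n]}$) to $Uf(P)\subset f(P)U$ and $U^*f(P)\subset f(P)U^*$. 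The choice $f=\mathrm{id}$ gives $UP\subset PU$, so for $x\in D(A)=D(P)$ one has $Ux\in D(P)$ and $AUx=U(UPx)=UAx$, i.e. $UA\subset AU$. For \textbf{(d)} I use the explicit spectral forms $U=\int\operatorname{sgn}(\lambda)\,dP^{(A)}(\lambda)$ (selfadjoint case, real integrand, so $U^*=U$) and $U_\bC=\int i\operatorname{sgn}(s)\,dP^{(S)}(s)$ (anti selfadjoint case, purely imaginary integrand, so $U_\bC^*=-U_\bC$), which by uniqueness of the polar decomposition and Theorem \ref{Polar}(b),(c) are exactly the operators at hand; hence $U^*=\pm U$, and $U^*A\subset AU^*$ follows from $UA\subset AU$. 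Finally for \textbf{(e)}: if $A$ is injective then $Ker(U)=Ker(P)=\{0\}$ and $\overline{Ran(P)}=\sH$, so $U$ is an everywhere-defined isometry whose closed range equals $\overline{Ran(A)}=Ker(A^*)^\perp=\sH$ because $A^*=\pm A$; hence $U$ and $U^*=U^{-1}$ are unitary, and since both preserve $D(A)$ the inclusions of \textbf{(c)} become equalities.
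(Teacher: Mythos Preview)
Your proof is correct. Parts \textbf{(a)}, \textbf{(b)}, and \textbf{(e)} follow essentially the same path as the paper's proof (including the complexification device for the real case and the splitting $\sH=\overline{Ran(P)}\oplus Ker(P)$ to extend $UB=BU$). There is a minor labeling slip: the commutation criterion you invoke is Theorem~\ref{st}(c)(ii), not ``(b)(ii)''.

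For \textbf{(c)} and \textbf{(d)} you take a genuinely different, more purely spectral-theoretic route. The paper obtains $UA\subset AU$ by first observing (via Stone's theorem) that $e^{-tA}$ commutes with $A$, applying \textbf{(b)} with $B=e^{-tA}$ to get $Ue^{-tA}=e^{-tA}U$, and then differentiating at $t=0$; you instead feed the spectral projections $P^{(P)}(E)$ of $P$ into \textbf{(b)} directly (after identifying them as functions of $A$, or of $-iA_\bC$ in the anti selfadjoint case), so that $U$ commutes with the whole PVM of $P$ and hence with every $f(P)$. For \textbf{(d)} the paper verifies $U^*=\pm U$ by hand on $\overline{Ran(|A|)}$ and on $Ker(|A|)$ separately; you read it off from the explicit formulas $U=\int\operatorname{sgn}(\lambda)\,dP^{(A)}(\lambda)$ (selfadjoint case) and $U_\bC=\int i\operatorname{sgn}(s)\,dP^{(S)}(s)$ (anti selfadjoint case). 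Your approach is arguably cleaner and avoids the differentiation step, at the cost of having to justify the identification $U=\operatorname{sgn}(A)$ through the uniqueness clause of Theorem~\ref{Polar}; the paper's approach is more self-contained in that it never names $U$ as a concrete function. Both are perfectly valid.
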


\noindent The proof of this proposition is given in Appendix \ref{appProof}.

\begin{proposition}\label{polarCOMM}
Let $\sH$ be an, either real or complex, Hilbert space and  $A$ and $B$ anti-selfadjoint
operators in $\sH$ with  polar decompositions $A=U|A|$ and $B=V|B|$.\\ 
If the strongly-continuous one-parameter groups generated by $A$ and $B$ commute, i.e.,
$$e^{tA}e^{sB}=e^{sB}e^{tA} \quad \mbox{for every $s,t\in\bR$,}$$ then the following facts hold\\

(i) $UB\subset BU$ and $U^*B\subset BU^*$;\\

(ii) $Uf(|B|)\subset f(|B|)U$ and $U^*f(|B|)\subset f(|B|)U^*$ for every measurable function 

\quad $f: [0,+\infty) \to \bR$;\\

(iii) $UV=VU$ and $U^*V=VU^*$.\\
\noindent  If  any of $A$, $|A|$,  $U$ is injective, then  the inclusions in (i) and (ii) can be replaced by identities. 
\end{proposition}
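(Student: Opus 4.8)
The plan is to convert the commutation of the two one-parameter groups into a chain of operator relations between the polar factors, using Proposition \ref{LemmaCOMM}(a)--(b) as the transfer mechanism between ``commuting with a strongly continuous unitary group'' and ``commuting with its anti-selfadjoint generator and with the factors of its polar decomposition''. The essential observation is that, although $B$ is anti-selfadjoint and therefore carries no real spectral measure, its modulus $|B|$ is selfadjoint, so the commutation can be routed through the PVM $P^{(|B|)}$.

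First I would fix $t$ and view $e^{tA}\in\gB(\sH)$ as a bounded operator commuting with the group $s\mapsto e^{sB}$ generated by the anti-selfadjoint operator $B$. Applying Proposition \ref{LemmaCOMM}(a) with $B$ playing the role of the generator and $e^{tA}$ that of the bounded operator, the hypothesis $e^{tA}e^{sB}=e^{sB}e^{tA}$ for all $s$ is equivalent to $e^{tA}B\subset Be^{tA}$. Feeding this into Proposition \ref{LemmaCOMM}(b) for the polar decomposition $B=V|B|$ yields $e^{tA}V=Ve^{tA}$ and $e^{tA}|B|\subset|B|e^{tA}$ for every $t$. Since $|B|$ is selfadjoint and $e^{tA}$ is bounded, Theorem \ref{st}(b)(ii) then gives that $e^{tA}$ commutes with each spectral projector $P^{(|B|)}(E)$.

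The second half is to push this commutation from the group $e^{tA}$ down to its own polar factor $U$. Each $P^{(|B|)}(E)$ is a bounded operator commuting with the whole group $e^{tA}$, so Proposition \ref{LemmaCOMM}(a), now with $A$ as generator, gives $P^{(|B|)}(E)A\subset AP^{(|B|)}(E)$, and Proposition \ref{LemmaCOMM}(b) for $A=U|A|$ gives $P^{(|B|)}(E)U=UP^{(|B|)}(E)$. Hence $U$ commutes with every spectral projector of $|B|$, so $Uf(|B|)\subset f(|B|)U$ for every measurable $f$ by the spectral functional calculus; taking adjoints of $P^{(|B|)}(E)U=UP^{(|B|)}(E)$ (projectors being selfadjoint) gives the same for $U^*$. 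This is (ii), and the choice $f=\mathrm{id}$ specializes it to $U|B|\subset|B|U$ and $U^*|B|\subset|B|U^*$. For (iii) I would reuse $e^{tA}V=Ve^{tA}$: as $V\in\gB(\sH)$, Proposition \ref{LemmaCOMM}(a) gives $VA\subset AV$ and then Proposition \ref{LemmaCOMM}(b) gives $VU=UV$; running the same argument on $V^*$, whose commutation with $e^{tA}$ follows by taking adjoints and using $(e^{tA})^*=e^{-tA}$, and dualizing, yields $U^*V=VU^*$. Then (i) is immediate: for $x\in D(B)=D(|B|)$ one computes $UBx=UV|B|x=VU|B|x=V|B|Ux=BUx$ using $UV=VU$ and $U|B|\subset|B|U$, and likewise for $U^*$.

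For the final clause, if one of $A,|A|,U$ is injective then Proposition \ref{LemmaCOMM}(e) makes $U$ and $U^*$ unitary; for a unitary $U$ the two inclusions $Uf(|B|)\subset f(|B|)U$ and $U^*f(|B|)\subset f(|B|)U^*$ together force $UD(f(|B|))=D(f(|B|))$, which upgrades both inclusions to equalities, and the same reasoning applies to (i). I expect the only real obstacle to be domain bookkeeping: at each step one must keep straight which relations are genuine equalities of bounded operators (such as $e^{tA}V=Ve^{tA}$ and $P^{(|B|)}(E)U=UP^{(|B|)}(E)$) and which are mere inclusions imposed by unbounded domains (such as $e^{tA}|B|\subset|B|e^{tA}$ and $Uf(|B|)\subset f(|B|)U$), and to check in (i) that $x\in D(|B|)$ does imply $Ux\in D(|B|)$ so the composition is well defined; the conceptual content is entirely carried by the transfer lemmas (a)--(b).
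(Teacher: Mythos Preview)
Your proof is correct, and the domain bookkeeping is handled properly. The route, however, differs from the paper's and is a little longer. The paper reverses the order in which the two polar decompositions are ``peeled'': it first applies Stone's theorem to get $e^{sB}A\subset Ae^{sB}$, then Proposition~\ref{LemmaCOMM}(b) on the polar decomposition of $A$ to obtain $Ue^{sB}=e^{sB}U$ directly; a second application of Stone then gives $UB\subset BU$ immediately (that is (i)), and one further use of Proposition~\ref{LemmaCOMM}(b) on $B=V|B|$ yields (ii) and (iii) in one stroke. The starred versions are obtained simply from $U^*=-U$ (Proposition~\ref{LemmaCOMM}(d)), which spares your detour through $V^*$. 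By contrast, you first pass to $B$'s polar decomposition, then route explicitly through the spectral projectors $P^{(|B|)}(E)$ to reach $U$, and finally assemble (i) from (ii) and (iii). This is perfectly sound and makes the spectral mechanism more visible, but it costs two extra steps (the PVM passage and the synthesis of (i)). One small slip: the spectral‐commutation reference you need is Theorem~\ref{st}(c)(ii), not (b)(ii).
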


\noindent The proof of this proposition is given in Appendix \ref{appProof}.

\subsection{Internal  complexificated  structures}\label{InternalComp} 
If $\sH$ is a real Hilbert space, it is possible to define an  associate  complex Hilbert space  by means of an {\em internal complexification procedure} depending on a {\em  complex structure} $J$ as defined in
Def.\ref{defCSJ}.
The   vectors $x\in \sH $ can be viewed as 
elements of a {\em complex} vector space equipped with a {\em Hermitian} 
scalar product both constructed out of $J$.  The complex linear space structure is  \beq (a + ib) x:= a x + b Jx \quad \mbox{ $\forall x \in \sH$ and $\forall a,b  \in \mathbb R$}\:.\label{lsi}\eeq The Hermitian scalar product is, by definition, 
\beq (x| y)_J := ( x|y ) -i (x|Jy) \:,\quad \forall x,y \in \sH\label{spi}\eeq
so that, in particular, \beq \mbox{$x \perp_{\sH_J} y\:\:$  if and only if\:\:  both $x \perp_{\sH} y$ and $x \perp_{\sH} Jy$.}\eeq
and
\beq (x|y) = Re (x|y)_J \:,\quad \forall x,y \in \sH\label{spi2}\eeq
 which immediately implies that the norm generated by that Hermitian scalar product  satisfies 
\beq || x||_J = ||x|| \:,\quad \forall x \in \sH \label{csi}\eeq
The elementary but crucial result comes now.
\begin{proposition}\label{Pintcomp}
Let $\sH$ be a real Hilbert space whose scalar product is denoted by $(\cdot|\cdot)$ and equipped with a complex structure $J$. The complex vectors space over $\sH$
with the complex linear structure (\ref{lsi})
and the Hermitian scalar product (\ref{spi}) has the following properties.\\
{\bf (a)} It is a complex Hilbert space denoted by $\sH_J$.\\
{\bf (b)} $N \subset \sH_J$ is a Hilbert basis of $\sH_J$ if and only if 
$\{u, Ju\:|\: u \in N\}$ is a Hilbert basis of $\sH$. Thus $\sH_J$ is separable if and only if $\sH$ is.
\end{proposition}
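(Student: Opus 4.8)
The plan is to verify the two Hilbert-space axioms for part (a) — completeness and the Hermitian property of the scalar product — and then to exploit the complex structure $J$ to set up the basis correspondence in part (b). For part (a), the Hermitian and sesquilinear properties of $(\cdot|\cdot)_J$ defined in (\ref{spi}) follow from a direct computation using $J^*=-J$ and $J^2=-I$: the conjugate-symmetry $(x|y)_J = \overline{(y|x)_J}$ reduces to checking $(x|Jy) = -(Jx|y) = (y|Jx)$, which is immediate from anti-selfadjointness of $J$, and complex-linearity in the second argument reduces to verifying $(x|(a+ib)y)_J = (a+ib)(x|y)_J$, which unpacks via (\ref{lsi}) and $J^2=-I$. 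Positive-definiteness is already recorded in (\ref{csi}): $\|x\|_J = \|x\|$, so $(x|x)_J>0$ for $x\neq 0$. The key observation for completeness is that, by (\ref{csi}), the norm of $\sH_J$ coincides with the original real norm of $\sH$; hence a sequence is Cauchy (resp. convergent) in $\sH_J$ if and only if it is Cauchy (resp. convergent) in $\sH$, and completeness of $\sH_J$ is inherited directly from completeness of the real Hilbert space $\sH$.

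For part (b), the main idea is that one complex dimension of $\sH_J$ corresponds to two real dimensions of $\sH$, with the second real direction supplied by applying $J$. First I would establish the orthogonality bookkeeping: for $u \in N$, the real vectors $u$ and $Ju$ satisfy $(u|Ju) = -(Ju|u)$ by anti-selfadjointness, but also $(u|Ju)=(Ju|u)$ by symmetry of the real product, so $(u|Ju)=0$; moreover $\|Ju\|=\|u\|$ since $J$ is norm-preserving (being anti-selfadjoint with $J^2=-I$, it is orthogonal). Thus each complex-orthonormal $u$ splits into a real-orthonormal pair $\{u,Ju\}$. Next I would check cross-orthogonality between distinct basis elements $u,v \in N$ with $u\perp_{\sH_J} v$: the stated equivalence $x\perp_{\sH_J}y$ iff ($x\perp_\sH y$ and $x\perp_\sH Jy$) shows that complex-orthogonality of $u,v$ yields the four real-orthogonality relations among $u,Ju,v,Jv$, confirming that $\{u,Ju \mid u\in N\}$ is a real-orthonormal system.

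The substantive step is completeness of the candidate real basis, i.e. that $\{u,Ju \mid u\in N\}$ is a \emph{maximal} orthonormal set in $\sH$ whenever $N$ is one in $\sH_J$ (and conversely). For the forward direction I would argue that any $x\in\sH$ expands in $\sH_J$ as $x = \sum_{u\in N}(u|x)_J\, u$; writing each complex coefficient $(u|x)_J = a_u + i b_u$ with $a_u,b_u\in\bR$ and using the complex scalar action (\ref{lsi}), the term $(u|x)_J\, u$ becomes $a_u u + b_u Ju$, exhibiting $x$ as a real-linear combination (a convergent real series, by (\ref{csi})) of the vectors $\{u,Ju\}$. Hence the real span is dense and the real system is complete. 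The converse, that a real Hilbert basis of the stated form yields a complex Hilbert basis $N$ of $\sH_J$, runs symmetrically by reading the same expansion backwards. I expect the main obstacle to be purely bookkeeping rather than conceptual: one must be careful that $\{u, Ju\}$ genuinely doubles the index set without collision — that is, that $u$ and $Ju$ are $\bR$-linearly independent (which follows from $(u|Ju)=0$ together with $\|u\|=\|Ju\|=1$) and that applying $J$ to the whole family produces no spurious coincidences across different $u,v\in N$. The separability claim is then immediate, since a countable $N$ produces a countable doubled family and vice versa.
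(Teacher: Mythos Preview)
Your proposal is correct and follows essentially the same approach as the paper: the paper's own proof is extremely terse, invoking (\ref{csi}) to transfer completeness and then asserting without further detail that $N$ is maximal orthonormal in $\sH_J$ iff $\{u,Ju\mid u\in N\}$ is maximal orthonormal in $\sH$. You have simply filled in the bookkeeping that the paper leaves to the reader.
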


\begin{proof}
Due to (\ref{csi}), Cauchy sequences in $\sH$ define pairs of Cauchy sequences  in $\sH_J$ and {\em vice versa}.
 For this reason $\sH_J$ is complete in view of the completeness of   $\sH$.  The second statement is true because $N$ is orthonormal maximal in $\sH_J$ iff $\{u,Ju\:|\: u \in N\}$ is orthonormal maximal in $\sH$ as it can be proved immediately.
\end{proof}

\begin{remark}\label{remJ}$\null$\\
{\em {\bf (a)} Due to (\ref{csi}), the identity map $I: \sH \ni x \mapsto x \in \sH_J$ is evidently an isometry of metric spaces. In particular $\sH$ and $\sH_J$ are homeomorphic.\\
{\bf (b)} From the definition of $\sH_J$ it easily arises that a $\bR$-linear subspace $K\subset \sH$ is  a $\bC$-linear subspace of $\sH_J$ if and only if $J(K)\subset K$ (more precisely, $J(K)=K$ as $JJ=-I$). Moreover $\overline{K}^{\sH}=\overline{K}^{\sH_J}$, hence $K$ is closed or dense in $\sH$ iff it is, respectively, closed or dense in $\sH_J$.
Conversely every $\bC$-linear subspace of $\sH_J$ is trivially a $\bR$-linear subspace of $\sH$.}
\end{remark}
\noindent A $\bC$-linear operator $A: D(A) \to \sH_J$, where $D(A)\subset \sH_J$ is a complex linear subspace, is also a $\bR$-linear operator on $\sH$. The converse is generally false. The following proposition concerns that issue.

\begin{proposition}\label{prop2i}
Let $\sH$ be a real Hilbert space with complex structure $J \in \gB(\sH)$.\\
 A $\bR$-linear operator
$A: D(A) \to \sH$ is a $\bC$-linear operator in $\sH_J$ if and only if $AJ=JA$. In that case $D(A)$ is a complex subspace of $\sH_J$ as well.
\end{proposition}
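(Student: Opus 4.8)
The plan is to unwind the definition of a $\bC$-linear operator in $\sH_J$ and to notice that, since $A$ is already $\bR$-linear, the only additional requirement is compatibility with multiplication by the imaginary unit. By the complex linear structure (\ref{lsi}), the scalar $i$ acts on a vector as $ix = Jx$; hence an $\bR$-linear $A$ is $\bC$-linear in $\sH_J$ exactly when (1) its domain $D(A)$ is a complex subspace of $\sH_J$ and (2) $A(Jx)=J(Ax)$ for all $x\in D(A)$. The first condition is controlled by Remark \ref{remJ}(b), which states that a real subspace $K\subset\sH$ is a complex subspace of $\sH_J$ if and only if $J(K)=K$, while the second is a pointwise commutation relation. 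The whole content of the proposition is that conditions (1) and (2) together are equivalent to the single operator identity $AJ=JA$.

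First I would fix the domain conventions. Since $J\in\gB(\sH)$ is everywhere defined and bijective with $J^{-1}=-J$ (because $J^2=-I$), the operator $JA$ has domain $D(A)$, whereas $AJ$ has domain $J^{-1}(D(A))=J(D(A))$. Reading $AJ=JA$ as an equality of (possibly unbounded) operators --- equal domains together with equal action --- it encodes precisely the two facts $J(D(A))=D(A)$ and $A(Jx)=J(Ax)$ on the common domain. This is the key translation step.

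For the forward implication I would assume $A$ is $\bC$-linear in $\sH_J$. Then $D(A)$ is a complex subspace, so $J(D(A))=D(A)$ by Remark \ref{remJ}(b), and $\bC$-homogeneity applied to the scalar $i$ yields $A(Jx)=J(Ax)$ for every $x\in D(A)$; together these give $AJ=JA$. For the converse I would assume $AJ=JA$: equality of domains forces $J(D(A))=D(A)$, so $D(A)$ is a complex subspace of $\sH_J$ by Remark \ref{remJ}(b) (this is exactly the closing assertion of the proposition), and equality of action gives $A(Jx)=J(Ax)$. Combining the latter with the $\bR$-linearity already available, for $a,b\in\bR$ and $x\in D(A)$ one finds $A((a+ib)x)=A(ax+bJx)=aA(x)+bJ(Ax)=(a+ib)A(x)$, where the middle step legitimately uses $Jx\in D(A)$. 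This is full $\bC$-homogeneity, so $A$ is $\bC$-linear in $\sH_J$.

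I expect the only delicate point to be the domain bookkeeping: one must be careful to read $AJ=JA$ as a genuine identity of unbounded operators (equal domains and equal action), since it is this reading that makes the equivalence exact and simultaneously delivers the complex-subspace property of $D(A)$. Because $J$ is bounded and invertible, this bookkeeping is routine, and no spectral-theoretic or completeness input is needed.
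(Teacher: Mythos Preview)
Your proof is correct and follows essentially the same route as the paper's own argument: both reduce the question to the observation that $i$ acts as $J$ in $\sH_J$, invoke Remark~\ref{remJ}(b) to handle the domain, and verify $\bC$-homogeneity via $A(Jx)=J(Ax)$. Your version is in fact more explicit about the domain bookkeeping (reading $AJ=JA$ as equality of domains $J(D(A))=D(A)$ plus equality of action), whereas the paper's proof records only the inclusion $J(D(A))\subset D(A)$ and then appeals to Remark~\ref{remJ}(b); but this is a difference in presentation, not in substance.
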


\begin{proof} If $A$ is $\bC$-linear,  its domain must be $\bC$-linear  and, in view of the complex linear 
structure of $\sH_J$, $AJ=JA$. Conversely, if $A$ is $\bR$-linear and 
$AJ=JA$, it must be $J(D(A))\subset D(A)$ and thus $D(A)$ is complex linear subspace ((b) in Remark \ref{remJ}). Moreover $A i x = AJx = JAx = iAx$ for every $x\in D(A)$ so that $A$
is $\bC$-linear.
\end{proof}

\begin{remark}$\null$\\
{\em {\bf (a)} A complex Hilbert space $\sH$ (with scalar product denoted by $\langle \:\:|\:\:\rangle$) can always be written as $\sK_J$ for a real Hilbert space $\sK$. As a set $\sK = \sH$ equipped with
 the $\bR$-linear structure restriction of the $\bC$-linear one of $\sH$. The real scalar product on $\sK$
is  $(x|y) := Re\langle x| y\rangle$, and  the complex structure over $\sK$ is the  $\bR$-linear  operator $J: \sK \ni x \mapsto i x \in \sK$.\\
{\bf (b)} If a real Hilbert space is finite-dimensional and its dimension is odd, there is no complex structure in $\gB(\sH)$, 
otherwise 
we would obtain a contradiction from (b) in Prop.\ref{Pintcomp}, and  no internal complexification procedure is possible. The reader may easily prove that this is the only obstruction: if the 
dimension of $\sH$ is infinite or finite and even,  a complex structure always exists associated with every  given Hilbert  basis of $\sH$.\\
{\bf (c)} If $\gU \subset \gB(\sH)$, where  $\sH$ is a real Hilbert space, and the elements of $\gU$
commute with a complex structure $J\in \gB(\sH)$, then $\gU_J := \gU$ is also a family of $\bC$-linear operators in $\sH_J$. 
$\gU_J$ is  irreducible if $\gU$ is irreducible since  complex closed subspaces are real closed subspaces.}
\end{remark}

\begin{proposition}\label{prop3i}
Let $\sH$ be a real Hilbert space with complex structure $J \in \gB(\sH)$ and  suppose that the
the  operator $A: D(A) \to\sH$ satisfies $AJ=JA$. The following facts hold.\\
{\bf (a)}  If $D(A)$ is dense, the adjoint $A^*$ of $A$ with respect to $\sH$ coincides with the adjoint operator referred to $\sH_J$.\\
{\bf (b)}   $A$ is closable referring to $\sH$ if and only if it is closable 
referring to $\sH_J$. In this case the two closures coincide.\\
{\bf (c)} Let $A$ be closable and $S \subset D(A)$ s.t. $J(S)\subset S$. Then $S$ is a core for $A$ referring to $\sH_J$, iff it is a core for $A$ referring to $\sH$.\\
{\bf (d)}  $A$ is symmetric, selfadjoint, anti symmetric, anti selfadjoint, essentially selfadjoint, unitary, normal, an orthogonal projector   referring to $\sH$  if and only if $A$   is respectively symmetric, selfadjoint, anti symmetric, anti selfadjoint, essentially 
selfadjoint, unitary, normal, an orthogonal projector
referring to $\sH_J$.\\
{\bf (e)}  If  $A$ is selfadjoint and $P^{(A)}$ is the associated PVM in $\sH$, $P^{(A)}$ is also the PVM of $A$ in $\sH_J$.
Moreover the (point and continuous) spectrum of $A$ referring to $\sH$ coincides to the
 (resp. point and continuous) spectrum of $A$ referring to $\sH_J$.
\end{proposition}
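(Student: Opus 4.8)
Let $\sH$ be a real Hilbert space with complex structure $J \in \gB(\sH)$ and suppose that the operator $A: D(A) \to\sH$ satisfies $AJ=JA$. The following facts hold.
(a)–(e) [as stated].

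=== Let me sketch a proof ===

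The key tool is the internal complexification $\sH_J$, and the central observation is that the *real* inner product and the *complex* inner product are tied together by equation (\ref{spi2}): $(x|y) = \mathrm{Re}\,(x|y)_J$. This lets me transfer statements about real adjoints into statements about complex adjoints. The overall strategy is that (a) is the workhorse — once the two notions of adjoint coincide, nearly everything else follows from the *definitions* of the operator classes, since they are all phrased in terms of the adjoint.

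Let me think about the proof structure and which parts are genuinely substantive versus routine.
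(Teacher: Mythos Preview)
Your strategy is correct and matches the paper's approach: part (a) is indeed the workhorse, proved directly from the definition of adjoint together with $(x|y)=\mathrm{Re}\,(x|y)_J$ and the $J$-invariance of $D(A)$; parts (b)--(d) then follow essentially from definitions (the paper handles (b) directly from the norm identity $\|\cdot\|_J=\|\cdot\|$, which makes the graph topologies coincide, but your route via denseness of $D(A^*)$ works equally well).

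The one place where your sketch is genuinely incomplete is (e). The PVM statement does \emph{not} follow just from the adjoints coinciding. You need two further ingredients: first, that each spectral projection $P^{(A)}_E$ commutes with $J$ (this follows from $JA\subset AJ$ and Theorem~\ref{st}(c)(ii), so that $P^{(A)}$ is actually a PVM in $\sH_J$); second, that the spectral integral built from $P^{(A)}$ in $\sH_J$ reproduces $A$. The paper carries this out by computing the complex spectral measure $\mu^{(P),\sH_J}_{x,y}(E)=\mu^{(P)}_{x,y}(E)-i\,\mu^{(P)}_{x,Jy}(E)$ and then verifying $(x|Ay)_J=\int_{\bR}\lambda\,d\mu^{(P),\sH_J}_{x,y}$, after which uniqueness of the PVM in the spectral theorem gives the conclusion. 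The spectrum statements then follow from $\mathrm{supp}(P^{(A)})=\sigma(A)$ in both settings, and the point spectra coincide since eigenvectors are the same. Without this argument your outline leaves (e) unproved.
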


\noindent The proof of this proposition appears in Appendix \ref{appProof}.

\subsection{Elementary fact on  von Neumann algebras in real (and complex) Hilbert spaces}
If $\gM\subset \gB(\sH)$ is a subset in the algebra of bounded operators on the, either real or complex, Hilbert space $\gB(\sH)$, the {\bf commutant} of $\gM$ is:
$$\gM' := \{T \in \gB(\sH)\:\: | \: \: TA-AT =0\quad \mbox{for any $A
\in \gM$}\}\:.$$
If $\gM$\index{$\gM'$} is closed under the adjoint conjugation, then  the 
commutant $\gM'$ is  a  $^*$-algebra with unit. In general: $\gM_1' \subset \gM_2'$
if $\gM_2 \subset \gM_1$ and $\gM\subset (\gM')'$, which imply $\gM' = ((\gM')')'$. 
Hence we cannot reach beyond the second commutant by iteration.
The continuity of the product of operators says that the  commutant $\gM'$ is closed in the 
uniform topology, so if $\gM$ is closed under the adjoint conjugation, its commutant 
$\gM'$ is a $C^*$-algebra ($C^*$-subalgebra) in $\gB(\sH)$.  It is easy to prove $\gM'$ is both strongly and weakly closed. 
The next crucial result due to von Neumann is valid both for the real and complex Hilbert space case.

\begin{theorem} \label{teoDC}
If  $\sH$  is an, either real or complex, Hilbert space and $\gA$ a unital  $^*$-subalgebra of $\gB(\sH)$, the following statements are equivalent.\\
{\bf (a)}  $\gA = \gA''$.\\
{\bf (b)}  $\gA$ is weakly closed.\\
{\bf (c)}  $\gA$ is strongly closed.\\
More precisely, if  $\gB$ is a unital  $^*$-subalgebra of $\gB(\sH)$, then
 $\gB'' = \overline{\gB}^w =  \overline{\gB}^s$\:, the bar denoting the closure with respect to either the weak ($\overline{\cdot}^w$) or strong ($\overline{\cdot}^s$) topology.
\end{theorem}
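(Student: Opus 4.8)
The plan is to establish the sharper statement $\gB'' = \overline{\gB}^w = \overline{\gB}^s$ for every unital $^*$-subalgebra $\gB$, and then read off the equivalence of (a)--(c). Two of the needed inclusions are immediate. Since the strong operator topology is finer than the weak one, every weakly closed set is strongly closed, whence $\overline{\gB}^s \subseteq \overline{\gB}^w$; and since the commutant $\gB''$ is weakly closed (as recalled just before the statement) and contains $\gB$, we get $\overline{\gB}^w \subseteq \gB''$. Thus $\gB \subseteq \overline{\gB}^s \subseteq \overline{\gB}^w \subseteq \gB''$, and the whole chain collapses to a chain of equalities once I prove the single remaining inclusion $\gB'' \subseteq \overline{\gB}^s$. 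Granting the equalities, the three equivalences follow painlessly: (a)$\Rightarrow$(b) because $\gA''$ is weakly closed; (b)$\Rightarrow$(c) because weakly closed sets are strongly closed; and (c)$\Rightarrow$(a) because a strongly closed $\gA$ satisfies $\gA = \overline{\gA}^s = \gA''$.

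The heart of the matter is the inclusion $\gB'' \subseteq \overline{\gB}^s$, which I would prove by the classical projection argument. Fix $T \in \gB''$; since $\gB'' \subseteq \gB(\sH)$, $T$ is bounded, and I must approximate it strongly by elements of $\gB$, i.e. given $x_1,\dots,x_n \in \sH$ and $\ep>0$ produce $A \in \gB$ with $\|(T-A)x_i\| < \ep$ for every $i$. First I treat the case $n=1$. Let $\sK := \overline{\gB x}$ be the closure of $\{Bx : B \in \gB\}$ and let $P$ be the orthogonal projector onto $\sK$. Because $\gB$ is an algebra, $\sK$ is invariant under every $B \in \gB$; because $\gB$ is $^*$-closed, each adjoint $B^*$ also lies in $\gB$, so $\sK$ is invariant under all the $B^*$ as well. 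Hence $\sK$ reduces every $B \in \gB$ and $P$ commutes with all of $\gB$, i.e. $P \in \gB'$. Since $T \in \gB''$ commutes with $P$, and since $\gB$ is unital so that $x = Ix \in \gB x \subseteq \sK$ gives $Px = x$, we obtain $Tx = TPx = PTx \in \sK = \overline{\gB x}$. Therefore $Tx$ is approximated in norm by some $Bx$ with $B \in \gB$, which is precisely the claim for $n=1$.

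To pass to arbitrary $n$ I would use the amplification trick. Work in $\sH^{\oplus n}$, view its bounded operators as $n\times n$ matrices over $\gB(\sH)$, and embed $\gB$ diagonally via $\pi(B) := \mathrm{diag}(B,\dots,B)$. A direct computation shows that $\pi(\gB)'$ consists of all matrices with entries in $\gB'$, and consequently $\pi(\gB)''$ consists of the diagonal matrices $\mathrm{diag}(S,\dots,S)$ with $S \in \gB''$; in particular $\pi(T) \in \pi(\gB)''$. Applying the already-proved single-vector case to $\pi(\gB)$ and the single vector $\xi := (x_1,\dots,x_n) \in \sH^{\oplus n}$ yields $A \in \gB$ with $\sum_i \|(T-A)x_i\|^2 = \|(\pi(T)-\pi(A))\xi\|^2 < \ep^2$, giving the desired simultaneous approximation and hence $T \in \overline{\gB}^s$.

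The main obstacle, and the only place where care about the scalar field is warranted, is checking that nothing in the argument uses complex structure. Every step above — the invariant-subspace/commuting-projector construction, the identity $Px=x$ coming from unitality, and the matrix bookkeeping for the amplification — is purely Hilbert-space geometry and relies only on $\gB$ being a unital $^*$-algebra, so it transfers verbatim to real $\sH$; in particular no spectral theorem or complex-analytic input is invoked. I would merely double-check, as was already noted in the text preceding the statement, that over $\bR$ the commutant of a $^*$-closed subset is again a $^*$-algebra and is both strongly and weakly closed, which is exactly what the two easy inclusions and the final collapse of the chain require.
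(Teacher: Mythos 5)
Your proposal is correct, and it is in fact more self-contained than the paper's own proof. The paper derives the equivalence of (a)--(c) from the density statement $\gB'' = \overline{\gB}^w = \overline{\gB}^s$ in exactly the way you do (your chain $\gB \subseteq \overline{\gB}^s \subseteq \overline{\gB}^w \subseteq \gB''$ plus the three one-line implications matches the paper's logic), but for the density statement itself the paper simply cites Theorem 5.3.1 of Kadison--Ringrose, remarking only that the proof does not depend on the field. You instead prove it: the cyclic-subspace projection argument (using unitality to get $Px=x$ and $^*$-closedness to get $P\in\gB'$) together with the $n$-fold amplification is the standard von Neumann density proof, and each step you carry out is indeed purely Hilbert-space-geometric, so your explicit verification that no complex-analytic input enters substantiates the field-independence the paper merely asserts. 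One small point worth making explicit in the amplification step: the identification of $\pi(\gB)''$ with $\{\mathrm{diag}(S,\dots,S) : S\in\gB''\}$ uses that the matrix units $E_{kl}$ (with entries $I$ or $0$) lie in $\pi(\gB)'$ because $I\in\gB'$ always; this is immediate but is the hinge on which that computation turns. With that noted, the argument is complete over both $\bR$ and $\bC$.
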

\begin{proof} 
The  proof of the last statement can be found in every  book on operator algebras, e.g., Theorem 5.3.1 in  \cite{KR} Vol I (see also   \cite {Li, M}) and it does not depend on the field of $\sH$,  either $\bR$ or $\bC$.
(a) implies (b) because $\gA=\gA''$ is the commutant  of $\gA'$ and the commutant of a set is evidently weakly closed.
(b) implies (c) because the strong convergence implies the weak convergence.
(c) implies (a) because  $\overline{\gA}^s=\gA$ for (c),
 $\gA \subset \gA''$ by definition of commutant, and  $\gA''= \overline{\gA}^s$ in view of  the last statement of the theorem. 
\end{proof}

\begin{definition}\label{defAvN} {\em A {\bf von Neumann algebra}  in $\gB(\sH)$ is a unital $^*$-subalgebra of $\gB(\sH)$ that satisfies the three   equivalent properties (a),(b),(c) appearing in Theorem \ref{teoDC}.\\
The {\bf center} of $\gR$ is the Abelian von Neumann algebra  $\gZ_\gR := \gR \cap \gR'$.\\
A von Neumann algebra $\gR$ is a {\bf factor} when  $\gZ_\gR =\{c I\}_{c\in \bK}$ with $\bK=\bR,\bC$.} 
\end{definition}

\noindent A von Neumann algebra $\gA$ in $\gB(\sH)$ is evidently   a $C^*$-(sub)algebra with unit of $\gB(\sH)$. 
Moreover  $\gM'$ is a von Neumann algebra provided $\gM$ is a  $^*$-closed subset of $\gB(\sH)$, 
because $(\gM')'' = \gM'$ as we saw above.  If $\gM \subset \gB(\sH)$ is closed under the Hermitian conjugation,  $\gM''$ turns out to be the smallest (set-theoretically) von Neumann algebra containing $\gM$ as a 
subset. Indeed, if $\gA$ is a von Neumann algebra and $\gM \subset \gA$, then $\gM' \supset \gA'$ and $\gM'' \subset \gA''=\gA$. This fact leads to the following definition.

\begin{definition}\label{defvnagen}
 {\em Let $\sH$ be an, either real or  complex Hilbert, space and  $\gM \subset \gB(\sH)$ a set  closed under 
the Hermitian conjugation. The von Neumann algebra   $\gM''$ is called the {\bf von Neumann algebra generated} by $\gM$. }
\end{definition}

\noindent Differences between the real and complex case arise when one study the interplay of a von Neumann algebra and its lattice 
of orthogonal projectors as is already evident form (d) and (e) of the following elementary result whose proof appears in Appendix \ref{appProof}.

\begin{theorem}\label{teopropvnA} Let  $\gR$ be a von Neumann algebra over the  either real or  complex Hilbert space $\sH$, define $\cJ_\gR := \{J \in \gR \:|\: J^*=-J \:,\: -J^2\in   \cL_\gR(\sH)\}$ and let $\cL_\gR(\sH)$ denote  the set of orthogonal projectors in $\gR$. The following facts hold.\\
{\bf (a)} $A^* = A \in \gR$ if and only if the orthogonal projectors of the PVM of $A$
belong to $\gR$.\\
{\bf (b)} $\cL_\gR(\sH)$ is a complete (in particular $\sigma$-complete) orthomodular lattice which is sublattice of $\cL(\sH)$.\\
{\bf (c)} $\gR$ is irreducible if and only if $\cL_{\gR'}(\sH) = \{0,I\}$.\\
{\bf (d)} If $\sH$ is a complex Hilbert space, then $\cL_\gR(\sH)'' = \gR$.\\
{\bf (e)} If $\sH$ is a real Hilbert space,

(i) $\cL_\gR(\sH)''$  contains all selfadjoint operators in $\gR$,

(ii) $(\cL_\gR(\sH) \cup  \cJ_\gR)'' = \gR$,

(iii) $\cL_\gR(\sH)'' \subsetneq \gR$ if and only if  there is  
$J\in \cJ_\gR\setminus \cL_\gR(\sH)''$.
\end{theorem}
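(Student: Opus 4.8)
The plan is to handle (a), (b), (c) with the standard von Neumann machinery, which is now available over $\bR$ thanks to the real spectral theorem (Theorem \ref{st}) and the double commutant theorem (Theorem \ref{teoDC}). For (a), note that $A=A^*\in\gR$ is bounded; every $T\in\gR'$ commutes with $A$ and hence, by Theorem \ref{st}(b)(ii), with each spectral projector $P^{(A)}(E)$, so $P^{(A)}(E)\in\gR''=\gR$. Conversely, if all $P^{(A)}(E)\in\gR$, then $A=\int\lambda\,dP^{(A)}(\lambda)$ is a uniform limit of simple functions $\sum_i\lambda_iP^{(A)}(E_i)\in\gR$, and $\gR$ is norm closed. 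For (b), the meet, join and orthocomplement computed in $\cL(\sH)$ (projection onto the intersection of ranges, onto the closed span, and $I-P$) again lie in $\gR$, using strong closedness of $\gR$ to pass from finite to arbitrary families; hence $\cL_\gR(\sH)$ is a complete sublattice of $\cL(\sH)$, and orthomodularity is inherited from $\cL(\sH)$. For (c), since $\gR$ is $^*$-closed, Remark \ref{remirred}(a) gives that $\gR$ is irreducible iff the projectors commuting with all of $\gR$ reduce to $\{0,I\}$; but these are exactly the elements of $\cL_{\gR'}(\sH)$.

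For (d) and (e)(i) the plan is to pull selfadjoint elements into $\cL_\gR(\sH)''$ by the spectral integral. If $A=A^*\in\gR$, then by (a) its spectral projectors lie in $\cL_\gR(\sH)$, and the norm-convergent spectral integral exhibits $A$ as a limit of real linear combinations of them, so $A\in\cL_\gR(\sH)''$; this is exactly (e)(i). In the complex case one then writes an arbitrary $A\in\gR$ as $A=\frac{A+A^*}{2}+i\,\frac{A-A^*}{2i}$, with both summands selfadjoint and in $\gR$, hence in $\cL_\gR(\sH)''$; because over $\bC$ the algebra $\cL_\gR(\sH)''$ is stable under multiplication by $i$, the second summand survives and $A\in\cL_\gR(\sH)''$, giving $\gR\subseteq\cL_\gR(\sH)''\subseteq\gR$ and thus (d). The real case genuinely departs precisely here: without $i$ available, the anti-selfadjoint part of $A$ cannot be reabsorbed into $\cL_\gR(\sH)''$, which is why $\cJ_\gR$ must enter.

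The heart of the matter, and the step I expect to be the main obstacle, is (e)(ii). Writing $A\in\gR$ as $A_S+A_A$ with $A_S$ selfadjoint and $A_A:=\frac{A-A^*}{2}$ anti-selfadjoint (both in $\gR$), it suffices to place $A_A$ in $(\cL_\gR(\sH)\cup\cJ_\gR)''$. The idea is to feed $A_A$ into the polar decomposition of Theorem \ref{Polar}: $A_A=UP$ with $P=|A_A|=\sqrt{-A_A^2}$. Three things must be checked. First, $U,P\in\gR$: since every $B\in\gR'$ commutes with the bounded operator $A_A$, Proposition \ref{LemmaCOMM}(b) yields $BU=UB$ and $BP=PB$, so $U,P\in\gR''=\gR$. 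Second, $P$ is selfadjoint and positive, hence already in $\cL_\gR(\sH)''$ by (e)(i). Third, and this is the crux, $U\in\cJ_\gR$: by Proposition \ref{LemmaCOMM}(d) the polar factor of an anti-selfadjoint operator is itself anti-selfadjoint, $U^*=-U$, and since $U$ is a partial isometry $-U^2=U^*U$ is the orthogonal projector onto $(\ker U)^\perp$, which belongs to $\gR$; thus $U$ satisfies both defining requirements of $\cJ_\gR$. Then $A_A=UP$ is a product of elements of $(\cL_\gR(\sH)\cup\cJ_\gR)''$, so $A\in(\cL_\gR(\sH)\cup\cJ_\gR)''$, giving $\gR\subseteq(\cL_\gR(\sH)\cup\cJ_\gR)''$; the reverse inclusion is immediate from $\cL_\gR(\sH)\cup\cJ_\gR\subseteq\gR$.

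Finally, (e)(iii) should follow formally from (ii) together with idempotency of the bicommutant. One direction is trivial: if some $J\in\cJ_\gR\setminus\cL_\gR(\sH)''$ exists, then $J\in\gR\setminus\cL_\gR(\sH)''$, so $\cL_\gR(\sH)''\subsetneq\gR$. For the converse I would argue by contraposition: if $\cJ_\gR\subseteq\cL_\gR(\sH)''$, then $\cL_\gR(\sH)\cup\cJ_\gR\subseteq\cL_\gR(\sH)''$, whence $(\cL_\gR(\sH)\cup\cJ_\gR)''\subseteq(\cL_\gR(\sH)'')''=\cL_\gR(\sH)''$; combined with (ii) this forces $\gR=\cL_\gR(\sH)''$, i.e. equality rather than strict inclusion. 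The only genuinely nontrivial input is the polar-decomposition construction of $U\in\cJ_\gR$ in the previous paragraph; everything else is bookkeeping with commutants.
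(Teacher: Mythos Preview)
Your proposal is correct and follows essentially the same route as the paper: spectral approximation for (a), (d), and (e)(i); Remark~\ref{remirred}(a) / direct argument for (c); and the polar decomposition of the anti-selfadjoint part via Proposition~\ref{LemmaCOMM}(b),(d) to produce the element of $\cJ_\gR$ for (e)(ii). The only cosmetic difference is in (e)(iii): you deduce it cleanly by contraposition from (ii), whereas the paper proves (ii) and (iii) simultaneously by exhibiting, for any $A\in\gR\setminus\cL_\gR(\sH)''$, the partial isometry $J$ in the polar decomposition of $A-A^*$ as the required element of $\cJ_\gR\setminus\cL_\gR(\sH)''$.
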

\noindent{\bf Example}.  We  show an elementary  example where  $\cL_\gR(\sH)'' \subsetneq \gR$ is valid.
Let $\sH_0$ be either an infinite-dimensional real Hilbert space or a finite-dimensional 
one with even dimension, so that $\sH_0$ admits a complex structure $J_0$.  Next define 
$\sH := \sH_0 \oplus \bR$ and $J = J_0\oplus 0$ and let $P: \sH \to \sH$ denote 
the orthogonal projector onto $\sH_0$. Obviously $P=-J^2$.
Consider the unital $^*$-algebra  $\gR :=   \{aI+bJ+cJ^2\}_{a,b,c \in \bR}\subset \gB(\sH)$  (notice that $J^3= -J$). It is easy to prove  that $\gR$ is weakly closed and thus it is an algebra of von Neumann.
However $\cL_\gR(\sH)= \{0,I, P, I-P\}$, so that $\cL_\gR(\sH)'' = \{aI+ bP\}_{a, b \in \bR} = \{aI+ cJ^2\}_{a, c \in \bR}$ which is strictly included in $\gR$ since 
it does not contains $J$ itself. As stated in the theorem above, however,   $J^*=-J$ and $-J^2=P\in \cL_\gR(\sH)$ so that $J \in \cJ_\gR$ and $(\cL_\gR(\sH) \cup \cJ_\gR)'' =  \{aI +bP + cJ\}_{a,b,c \in \bR} = \gR$.

\begin{remark}\label{unitgeneralg}{\em Another difference between the two cases regards the group of unitary operators of $\gR$, denoted by $\gU_\gR$. Indeed, as proved in Prop.4.3.5 \cite{Li} and in the subsequent Remark, while in the complex case the linear span  $[\gU_\gR]$  of $\gU_\gR$ equals $\gR$, in the real case we have to take its norm closure: $\overline{[\gU_\gR]}^{\null_{\gB(\sH)}}=\gR$.
}
\end{remark}
\section{Unitary Lie-Group representations}\label{represent}
This last technical section is devoted to introduce the main machinery we will exploit to describe  the continuous quantum symmetries of a quantum system and the observables associated with these symmetries. 
 Decisive tools are  the notions of strongly-continuous unitary representations of Lie groups, anti selfadjoint generators and their universal enveloping algebra. 
We assume that the reader is familiar with the basic theory of Lie groups (e.g., see  \cite{NaimarkStern, Warner,V}).

\subsection{Unitary representations  and Lie algebra of  generators in real (and complex) Hilbert spaces}
\begin{definition}{\em  If $G$ is a topological group, a {\bf strongly-continuous unitary representation} of $G$ over the, either real or complex, Hilbert space $\sH$ is a unitary representation  $G \ni g \mapsto U_g \in \gB(\sH)$  (Def.\ref{urep}) which is strongly continuous (Def.\ref{defcontinuity}).}
\end{definition}
\begin{remark} {\em In the rest of the paper we only consider the case of a {\em finite-dimensional real Lie group} $G$ whose Lie algebra is denoted by $\gg$. The adjectives {\em finite-dimensional} and {\em real} 
will be omitted almost always. In the rest of this work $C_0^\infty(G)$ refers to {\em real}-valued functions.} \end{remark}

\begin{definition}\label{defasagen}{\em
Let $G$ be a Lie group with Lie algebra $\gg$ and consider a
 strongly continuous unitary representation $G \ni g \mapsto U_g$ over the, either real or complex,  Hilbert space $\sH$.
If ${\bf A} \in \gg$ let  us indicate by $\bR \ni t \mapsto \exp(t{\bf A}) \in G$  the generated  one-parameter Lie subgroup.
The {\bf anti-selfadjoint generator associated to ${\bf A}$}, $A : D(A) \to \sH$  is the generator of the unitary group $\bR \ni t \mapsto U_{\exp\{t{\bf A}\}}$ in the sense of Def.\ref{defgen}. }
\end{definition}

\noindent To go on, we need some technical definitions. Let $G\ni g\mapsto x_g\in\sH$ a continuous map, $f\in C_0^\infty(G)$ and let $dg$ denotes the {\em left-invariant Haar measure}. Exploiting Riez' Lemma, $$\int_G f(g) x_g \:dg\: \mbox{denotes the unique vector } x_G\in\sH \mbox{ s.t. }\left(y|x_G\right)=\int_G f(g) (y|x_g)\:dg\ \ \forall y\in\sH\:.$$
It is easy to prove that $\|\int_G f(g) x_g\:dg\|\le \int_G|f(g)|\|x_g\|\:dg$ (see Lemma\ref{Rlemma}). This allows us to exploit the results in Ch.10 \cite{S0} where a different notion of integral is used.
\emptyline
Now we can introduce a very important notion: the so called {\em G\r{a}rding space}.

\begin{definition}\label{defgarding}
{\em Let $G$ be a Lie group and consider a
 strongly continuous unitary representation $U$ of $G$ over the,  either real or complex, Hilbert space $\sH$. If  $f \in C_0^\infty(G)$ and $x\in \sH$, define
\beq
x[f] := \int_G f(g) U_g x \:dg\:.
\eeq
The, respectively real or complex, finite span of all vectors $x[f] \in \sH$ with $f \in C_0^\infty(G)$ and $x\in \sH$ is called {\bf G\r{a}rding space} of
 the representation and is denoted by $D^{(U)}_G$.}
\end{definition}
\noindent 
From the definition it is easy to see that, if $x\in D_G^{(U)}$, then the function $g\mapsto U_gx$ is a smooth map if the differentiation is carried out in the topology of $\sH$ with respect to the Lie group structure of $G$.
Actually an, actually more general, remarkable result  due to Dixmier and Malliavin \cite{DM} shows that also the inverse result holds. We assume the validity of the 
theorem in the complex case and extend the proof to the case of a real Hilbert space.

\begin{theorem}\label{DMtheorem}
Let $\sH$ be either real or complex Hilbert space. Then the G\r{a}rding space coincides with the subspace of the {\bf smooth} vectors of the representation, that is  
the vectors  $x\in\sH$ such that the function $U: G \ni g\mapsto U_gx$ is $C^\infty$.\\
Suppose that $\sH$ is real, then the following facts hold:\\
{\bf (a)} $U_\bC : G \ni g \mapsto (U_g)_\bC$ is a unitary strongly-continuous representation of the Lie Group $G$ on the {\em complex} Hilbert space $\sH_\bC$ and $D_G^{(U_\bC)} = (D_G^{(U)})_\bC$.\\
{\bf (b)} If $J$ is a complex structure commuting with every $U_g$, then $G\ni g\mapsto U_g$ is a unitary strongly-continuous representation of $G$ on the complex Hilbert space $\sH_J$ and the definition of $D_G^{(U)}$ does not depend on the field of scalars. 
\end{theorem}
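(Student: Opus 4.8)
The plan is to take the complex-case Dixmier--Malliavin result \cite{DM} as a black box and transport it to the real setting through the external complexification $\sH_\bC$ (and, for part (b), through the internal complexification $\sH_J$). All the structural compatibility I will need is already packaged in Prop.\ref{prop2}, Prop.\ref{prop2i}, Prop.\ref{prop3i} and Remark \ref{remJ}, so the remaining work reduces to checking that (i) the complexified objects are again strongly-continuous unitary representations, and (ii) the vector-valued G\r{a}rding integrals behave additively under the identification $\sH_\bC=\sH+i\sH$ and are unaffected by passing to $\sH_J$. The only nontrivial inclusion of the main statement, namely that every smooth vector lies in the G\r{a}rding space, is exactly where the complex theorem enters; the reverse inclusion is the remark preceding the statement.

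I would prove (a) first. That $U_\bC$ is a unitary representation is immediate from parts (5) and (7) of Prop.\ref{prop2} (each $(U_g)_\bC$ is unitary and $(U_g)_\bC (U_{g'})_\bC=(U_{gg'})_\bC$), and strong continuity of $U_\bC$ follows from strong continuity of $U$ together with the norm identity $\|x+iy\|_\bC^2=\|x\|^2+\|y\|^2$ of (\ref{cs}). For the G\r{a}rding spaces, since the test functions $f\in C_0^\infty(G)$ are real-valued and $(U_g)_\bC(x+iy)=U_g x+iU_g y$, a direct check against the Hermitian product (\ref{sp}) via Riesz' characterization of the integral shows that the G\r{a}rding vector of $U_\bC$ factorizes as $(x+iy)[f]=x[f]+i\,y[f]$. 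Taking complex finite spans then yields $D_G^{(U_\bC)}=D_G^{(U)}+iD_G^{(U)}=(D_G^{(U)})_\bC$.

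With (a) in hand, the main statement in the real case is short. If $x\in\sH$ is a smooth vector of $U$, then composing $g\mapsto U_g x$ with the real-linear isometric embedding $\sH\ni v\mapsto v+i0\in\sH_\bC$ shows that $g\mapsto (U_g)_\bC(x+i0)$ is $C^\infty$, i.e. $x+i0$ is a smooth vector of $U_\bC$. The assumed complex Dixmier--Malliavin theorem gives $x+i0\in D_G^{(U_\bC)}$, which by (a) equals $(D_G^{(U)})_\bC$; writing $x+i0=u+iv$ with $u,v\in D_G^{(U)}$ and using uniqueness of the decomposition $\sH_\bC=\sH+i\sH$ forces $v=0$ and $x=u\in D_G^{(U)}$. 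Together with the easy inclusion this proves the equality for real $\sH$.

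Finally, for (b), Prop.\ref{prop2i} makes each $U_g$ a $\bC$-linear operator on $\sH_J$ (because $U_g J=JU_g$), Prop.\ref{prop3i}(d) keeps each $U_g$ unitary on $\sH_J$, and the isometry $\|\cdot\|_J=\|\cdot\|$ of Remark \ref{remJ}(a) preserves strong continuity; hence $U$ is a strongly-continuous unitary representation on $\sH_J$. The key computation is that the G\r{a}rding integral does not see the change of scalars: using the Hermitian product (\ref{spi}), the commutation $JU_g=U_gJ$ and the identity $(Jx)[f]=J\,x[f]$, one checks that the vector $\int_G f(g)U_g x\,dg$ computed in $\sH_J$ equals the one computed in $\sH$. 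Since $J$ therefore maps G\r{a}rding vectors to G\r{a}rding vectors, the $J$-complex span and the real span of $\{x[f]\}$ coincide as subsets of $\sH$, so $D_G^{(U)}$ is independent of the field of scalars, as claimed. I expect the genuine obstacle to be none of this bookkeeping but the complex Dixmier--Malliavin theorem itself, which is the only place real analytic content is used and which we import rather than reprove; the remaining care is simply to confirm that the vector-valued integrals are consistent across the complexified structures.
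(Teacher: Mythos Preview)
Your proposal is correct and follows essentially the same route as the paper: import the complex Dixmier--Malliavin theorem, pass to the external complexification $\sH_\bC$, use the identity $(x+iy)[f]=x[f]+iy[f]$, and read off the real case. The only organizational difference is that you establish (a) directly first and then deduce the main real-case statement from it, whereas the paper proves the main statement and (a) in one sweep; for (b) you verify directly that the G\r{a}rding integral is unchanged under the passage to $\sH_J$, while the paper instead observes that smoothness depends only on the norm and invokes the already-proved identification with smooth vectors---both are equally valid and amount to the same bookkeeping.
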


\begin{proof}
The proof of the first part for the complex case is part of the content of the original Dixmier-Malliavin paper \cite{DM} where the thesis is even proved for representations on Fr\'echet spaces. Suppose that  $\sH$ is real and consider the map $U_\bC : G \ni g \mapsto (U_g)_\bC$ defined by the complexification of $U$: this is clearly a unitary strongly-continuous representation thanks to Prop.\ref{prop2} and the definition of $\sH_\bC$. Let $x,y\in\sH$, then it easy to see that $g\mapsto U_gx$ and $g\mapsto U_gy$ are smooth iff $g\mapsto Ux+iU_y=U_\bC(x+iy)$ is smooth, proving this way also (a) of the second part. In particular, if $x\in D_G^{(U)}$ then $x+i0\in D_G^{(U_\bC)}$ hence, using the complex part of the theorem there must exist a {\em finite} number of  functions $f_k\in C^\infty_0(G)$,  corresponding scalars $a_k,b_k\in\bR$ and corresponding 
vectors $v_k,u_k\in\sH$ such that $x+i0=\sum_k(a_k+ib_k)(v_k+iu_k)[f_k]\in D_G^{(U_\bC)}$.
Using the definition of G\r{a}rding vectors, a direct calculation shows that $(x+iy)[f]=x[f]+iy[f]$ for any $x,y\in\sH$ and $f\in C_0^\infty(G)$, where the left-hand side is defined with respect 
to $\sH_\bC,U_\bC$ and the right-hand side with respect to $\sH,U$. A straightforward calculation gives $x+i0=\sum_k(a_kv_k-b_ku_k)[f_k]+i\sum_k(b_kv_k+a_ku_k)[f_k]$ which implies  $x=\sum_k(a_kv_k-b_ku_k)[f_k]\in D_G^{(U)}$. 
Now let us prove (b) of the second part. Prop.\ref{prop3i} and the definition of $H_J$ proves immediately that $g\mapsto U_g$ is a unitary strongly-continuous representation also on $H_J$. Again, since the notion of differentiability only looks at the norm and the $\bR$-linearity of $H$ a vector $x$ is smooth for $U$ on $\sH$ if and only if it is smooth for $U$ on $\sH_J$.
\end{proof}
\noindent  $D^{(U)}_G$ enjoys very remarkable properties we state in the next theorem.
In the following $L_g :  C_0^\infty(G) \to C_0^\infty(G)$ denotes the standard left-action of $g\in G$ on smooth compactly supported real-valued functions defined on $G$:
\beq
(L_gf)(h):= f(g^{-1}h) \quad \forall h \in G\:,
\eeq
and, if ${\bf A} \in \gg$, then  $X_{{\bf A}} : C_0^\infty(G) \to C_0^\infty(G)$ is the smooth vector field  over $G$ (a smooth differential operator)  defined as:
\beq
\left(X_{{\bf A}}(f)\right)(g) := \lim_{t\to 0} \frac{f\left(\exp\{-t{\bf A}\}g \right)-f(g)}{t}\quad \forall g \in G\:.
\eeq
so that the map
\beq
\gg \ni {\bf A} \mapsto X_{{\bf A}}
\eeq
defines a faithful Lie-algebra representation of $\gg$ in terms of vector fields on $C_0^\infty(G)$.\\
There is a natural way to see the Lie algebra $\gg$, where only the commutator is defined, as immersed into an associative algebra $E_\gg$, where a complete associative product giving rise to the commutator of $\gg$ exists.
$E_\gg$ is called  the {\em universal enveloping algebra}  of $\gg$  (Def.\ref{UEA}). This algebra is real unital  and admits a natural 
{\em real involution} $E_\gg \ni {\bf M} \mapsto {\bf M}^+$
(Def.\ref{defsim}). The physical relevance of $E_\gg$ is that its symmetric elements are related to the observables 
of a quantum physical system admitting $G$ as symmetry group.
An important technical r\^ole is played by  the {\bf Nelson elements} of  $E_\gg$ which are those of the form \beq{\bf N}:=\sum_{i=1}^n{\bf X}_i\circ{\bf X}_i\:,\label{NelsonE}\eeq
where $\{{\bf X}_1,\dots{\bf X}_n\}$ is a basis of $\gg$.\\
The next theorem states the basic properties of G\r{a}rding domain also in relation with a natural representation of $\gg$ 
constructed out of the generators of the representation $U$ of $G$
and   its universal enveloping algebra.

\begin{theorem}\label{teogarding}
Let $G$ be a Lie group with Lie algebra $\gg$ and Lie bracket $[\:,\:]_\gg$, consider a
 strongly continuous unitary representation $G \ni g \mapsto U_g$ over the, either real or complex,  Hilbert space $\sH$ and let indicate by $A$ the anti selfadjoint generator 
associated to ${\bf A} \in \gg$ as in Def.\ref{defasagen}.
The G\r{a}rding space $D^{(U)}_G$ satisfies the following properties.\\
{\bf (a)} $D^{(U)}_G$ is dense in $\sH$.\\
{\bf (b)} If $g\in G$, then  $U_g(D^{(U)}_G) \subset D^{(U)}_G$. More precisely, if $f \in C_0^\infty(G)$, $x \in \sH$, $g\in G$, it holds
\beq
U_gx[f] = x[L_gf]\:.\label{Ugf}
\eeq
{\bf (c)} If ${\bf A} \in \gg$, then $D^{(U)}_G\subset D(A)$  and furthermore $A(D^{(U)}_G) \subset D^{(U)}_G$. More precisely
\beq
A x[f] = x[X_{{\bf A}}(f)]\:.\label{Af}
\eeq
{\bf (d)} The linear  map 
\beq u: \gg \ni {\bf A} \mapsto  A|_{D^{(U)}_G}\eeq
is a Lie-algebra representation in terms of anti symmetric operators on $\sH$ 
((2) in Def.\ref{defop})
defined on the common dense invariant domain $D^{(U)}_G$ so that, in particular,
$$[u({\bf A}), u({\bf B})] = u([{\bf A}, {\bf B}]_\gg)\quad {\bf A}, {\bf B} \in \gg\:,$$
where $[\;, \:]$ is the standard commutator of operators.\\
{\bf (e)} The map $u$  uniquely extends to a real unital algebra representation of the universal enveloping algebra $E_\gg$: If ${\bf M} \in E_\gg$ is taken as in (\ref{dev}),
\begin{equation}
u({\bf M}):=c_0I|_{D_G^{(U)}}+\sum_{k=1}^N \sum_{j=1}^{N_k}c_{jk}u({\bf A}_{j1})\cdots u({\bf A}_{jk})
\end{equation}
It holds $u({\bf M}^+)\subset u({\bf M})^*$ (see Def.\ref{defsim}), in particular $u({\bf M})$ is a symmetric if ${\bf M}= {\bf M}^+$.\\
{\bf (f)}  $D^{(U)}_G$ is a core for every anti self adjoint  generator $A$ with ${\bf A} \in \gg$, that is
\beq
A = \overline{u({\bf A})}\:,\quad \forall {\bf A} \in \gg\:. 
\eeq
{\bf (g)} Suppose that ${\bf M}\in E_\gg$ satisfies both ${\bf M}={\bf M}^+$ and $[{\bf M},{\bf N}]_\gg=0$ for some Nelson element (\ref{NelsonE}) ${\bf N}\in E_\gg$, then $u({\bf M})$ is essentially selfadjoint.
In particular, $u({\bf N})$ is always essentially selfadjoint. 
\end{theorem}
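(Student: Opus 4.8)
The plan is to reduce to the complex Hilbert space case, where the assertion is Nelson's theorem, and then to carry the conclusion back by complexification using the tools already assembled. The two external inputs I would rely on are Nelson's analytic vector theorem (a densely defined symmetric operator with a dense set of analytic vectors is essentially selfadjoint) and its consequence for Lie-group representations, which in the complex case is exactly the present statement. So the real work splits into recalling why the complex statement holds and then an intertwining argument between $\sH$ and $\sH_\bC$.

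First I would treat the complex case. By (e), $\mathbf{M}=\mathbf{M}^+$ gives $u(\mathbf{M})\subset u(\mathbf{M})^*$, so $u(\mathbf{M})$ is symmetric on the invariant Gårding domain $D^{(U)}_G$ (invariance from (c)). Nelson's theorem first handles the base case $\mathbf{M}=\mathbf{N}$: the contraction semigroup generated by $u(\mathbf{N})$ shows $u(\mathbf{N})$ is essentially selfadjoint, yielding a selfadjoint $N:=\overline{u(\mathbf{N})}$ which, since $u(\mathbf{N})=\sum_i u(\mathbf{X}_i)^2\le 0$ by the anti symmetry of the generators, satisfies $I-N\ge I$; moreover $\bigcap_m D(N^m)$ coincides with the smooth vectors, i.e.\ with $D^{(U)}_G$ by Theorem \ref{DMtheorem}. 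For the general $\mathbf{M}$ one uses that the spectral subspaces $E_{[-\lambda,0]}(N)\sH\subset D^{(U)}_G$ form a dense family of entire vectors for $N$, with the geometric bound $\|(I-N)^mv\|\le(1+\lambda)^m\|v\|$. The hypothesis $[\mathbf{M},\mathbf{N}]_\gg=0$ makes $u(\mathbf{M})$ commute with $N$ and hence with its resolvent and spectral projectors; combined with the elliptic estimate $\|u(\mathbf{M})w\|\le C\|(I-N)^{\lceil r/2\rceil}w\|$ on $D^{(U)}_G$ (with $r$ the degree of $\mathbf{M}$), this upgrades the cutoff vectors to entire analytic vectors for $u(\mathbf{M})$, since $\|u(\mathbf{M})^kv\|\le C^k(1+\lambda)^{k\lceil r/2\rceil}\|v\|$ grows only geometrically in $k$. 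Nelson's analytic vector theorem then gives essential selfadjointness, and the choice $\mathbf{M}=\mathbf{N}$ recovers the final clause.

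For the real case I would complexify and invoke the identity $u_\bC(\mathbf{M})=(u(\mathbf{M}))_\bC$, where $u_\bC$ is the algebra representation attached to $U_\bC$ on $\sH_\bC$. This is assembled from: Stone's theorem (Theorem \ref{ST}(v)) together with Definition \ref{defasagen}, giving $u_\bC(\mathbf{A})=(u(\mathbf{A}))_\bC$ for each generator; Theorem \ref{DMtheorem}(a), matching the domains $D^{(U_\bC)}_G=(D^{(U)}_G)_\bC$; and Proposition \ref{prop2}(7), which propagates complexification through the products and real-linear combinations that build $u(\mathbf{M})$. Because $\mathbf{M}=\mathbf{M}^+$, $[\mathbf{M},\mathbf{N}]_\gg=0$ and the Nelson property of $\mathbf{N}$ are algebraic facts in $E_\gg$ independent of the field, the complex case applies to $U_\bC$ and makes $(u(\mathbf{M}))_\bC$ essentially selfadjoint on $\sH_\bC$; Proposition \ref{prop2}(5) then transfers this back to $u(\mathbf{M})$ on $\sH$.

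I expect the genuine obstacle to live entirely in the complex case, precisely in producing a dense supply of analytic vectors for the higher-degree operator $u(\mathbf{M})$. Analyticity for the representation controls the individual generators but not products of degree $\ge 2$, whose bounds grow factorially; it is the commutation with the Nelson Laplacian that lets one trade powers of $u(\mathbf{M})$ for powers of $(I-N)$ and recover geometric growth on spectral cutoffs. In the write-up this is best cited as the complex Nelson theorem, leaving the real case as essentially bookkeeping around the intertwining $u_\bC(\mathbf{M})=(u(\mathbf{M}))_\bC$.
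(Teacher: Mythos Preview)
Your proposal addresses only part (g) of the theorem; parts (a)--(f) are left untouched. For (g), your approach coincides with the paper's: in the real case you complexify, use $u_\bC(\mathbf{M})=(u(\mathbf{M}))_\bC$ (assembled exactly as you describe from Theorem \ref{DMtheorem}(a), Stone's theorem, and Proposition \ref{prop2}(7)), apply the complex result, and pull essential selfadjointness back via Proposition \ref{prop2}(5). The paper does precisely this, writing out the chain $(u(\mathbf{M})^*)_\bC=(u(\mathbf{M})_\bC)^*=u_\bC(\mathbf{M})^*=\overline{u_\bC(\mathbf{M})}=\overline{u(\mathbf{M})_\bC}=(\overline{u(\mathbf{M})})_\bC$.

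The only substantive difference is in the complex case: the paper simply cites Theorem 10.2.6 of Schm\"udgen \cite{S0}, whereas you sketch the underlying Nelson argument (essential selfadjointness of the Nelson Laplacian, spectral cutoffs lying in the smooth vectors, the elliptic estimate trading powers of $u(\mathbf{M})$ for powers of $I-N$, and geometric growth yielding analytic vectors). Your sketch is correct in outline and identifies the genuine content---the commutation $[\mathbf{M},\mathbf{N}]_\gg=0$ is what converts factorial growth into geometric growth on cutoff subspaces---but in a write-up you would likely end up citing the same source anyway, since the elliptic estimate and the identification $\bigcap_m D(N^m)=D^{(U)}_G$ are themselves nontrivial lemmas.
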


\noindent The proof of this theorem appears in Appendix \ref{appProof}.

\begin{remark}\label{updown}$\null$\\
{\em 
{\bf (a)} Notice that, from $u({\bf M^+})\subset u({\bf M})^*$ it immediately follows that $u({\bf M})$ is closable, its adjoint being densely defined (see Remark \ref{remarkclosure} (d)).\\
{\bf (b)} Referring to the representations $U$ and $U_\bC$, respectively on the real Hilbert space $\sH$ 
and the associated complex one $\sH_\bC$
discussed at the end of Theorem  \ref{DMtheorem}, it is easy to prove that $u({\bf A})_\bC = u_\bC({\bf A})$ for every ${\bf A} \in \gg$, where $\gg \ni {\bf A} \mapsto u_\bC({\bf A})$
is the Lie-algebra representation of $U_\bC$.\\
{\bf (c)} Suppose $g\mapsto U_g$ is defined on the real Hilbert space $\sH$ and commutes with a complex structure $J$. Then it is immediate from their definition that the anti selfadjoint generators of $U_{\exp(t\V{A})}$, for $\V{A}\in\gg$, defined on $\sH$ and $\sH_J$, respectively, coincide. In particular the definition of Lie algebra representation $u$ is independent from the scalar field, thanks to Theorem \ref{DMtheorem}.
}
\end{remark}

\begin{proposition}\label{alggroupcomm}
Let $\sH$ be an either real or complex Hilbert space, $G$ a connected Lie group and $G \ni g \mapsto U_g$ a unitary strongly-continuous representation of $G$ over $\sH$.
If $B\in\gB(\sH)$ the following conditions are equivalent

(i)   $B u({\bf A})\subset u({\bf A})B$ for every ${\bf A}\in\gg$,

(ii)  $B\overline{u({\bf A})}\subset \overline{u({\bf A})}B$ for every ${\bf A}\in\gg$,

(iii)  $BU_g=U_gB$ for every $g\in G$. \\
\noindent If one of these conditions is satisfied, then $B(D_G^{(U)})\subset D_G^{(U)}$.
\end{proposition}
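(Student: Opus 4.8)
The plan is to establish the chain of equivalences (i) $\Leftrightarrow$ (ii) $\Leftrightarrow$ (iii) and then the final invariance claim, using the G\r{a}rding-space machinery of Theorem \ref{teogarding} together with Proposition \ref{LemmaCOMM}(a). The connectedness of $G$ will be essential for the direction that reconstructs group commutation from Lie-algebra commutation.

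First I would treat (ii) $\Leftrightarrow$ (iii). By Theorem \ref{teogarding}(f) we have $\overline{u({\bf A})} = A$, the anti-selfadjoint generator of the one-parameter subgroup $\bR \ni t \mapsto U_{\exp(t{\bf A})}$. Since each $A$ is anti-selfadjoint and $B \in \gB(\sH)$, Proposition \ref{LemmaCOMM}(a) applies directly: $B A \subset A B$ holds if and only if $B\, U_{\exp(t{\bf A})} = U_{\exp(t{\bf A})}\, B$ for all $t \in \bR$. Thus (ii) is equivalent to saying $B$ commutes with every one-parameter subgroup $\exp(t{\bf A})$, ${\bf A} \in \gg$. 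To pass from this to (iii), i.e.\ commutation with every $U_g$, I invoke connectedness of $G$: every element of a connected Lie group is a finite product of exponentials $\exp(t_1 {\bf A}_1)\cdots \exp(t_k {\bf A}_k)$, and commutation with each factor gives commutation with the product. The converse (iii) $\Rightarrow$ (ii) is immediate, since (iii) in particular gives commutation with each $U_{\exp(t{\bf A})}$, whence $BA \subset AB$ by the same Proposition \ref{LemmaCOMM}(a).

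Next I would handle (i) $\Leftrightarrow$ (ii). The implication (ii) $\Rightarrow$ (i) is the easy direction: if $B\overline{u({\bf A})} \subset \overline{u({\bf A})}B$ and $x \in D^{(U)}_G \subset D(\overline{u({\bf A})})$, then $Bx$ lies in the domain and $B\,u({\bf A})x = \overline{u({\bf A})}\,Bx$; but $B$ preserves $D^{(U)}_G$ once we know (iii) holds (by Theorem \ref{teogarding}(b) and equation \nref{Ugf}, commutation with all $U_g$ gives $B\,x[f] = (Bx)[f] \in D^{(U)}_G$), so $\overline{u({\bf A})}Bx = u({\bf A})Bx$ and (i) follows. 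For (i) $\Rightarrow$ (ii), I would argue that $B\,u({\bf A}) \subset u({\bf A})B$ together with $B \in \gB(\sH)$ gives, upon taking closures and using Remark \ref{remarkclosure} (bounded operators pass through closure), $B\,\overline{u({\bf A})} \subset \overline{B\,u({\bf A})} \subset \overline{u({\bf A})B} = \overline{u({\bf A})}\,B$, the last equality because right-multiplication by a bounded operator commutes with closure.

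Finally, the last sentence $B(D^{(U)}_G) \subset D^{(U)}_G$ follows once any of the equivalent conditions holds, since we may then use (iii): for $x[f] \in D^{(U)}_G$ with $f \in C_0^\infty(G)$ and $x \in \sH$, the definition $x[f] = \int_G f(g) U_g x\, dg$ and the commutation $B U_g = U_g B$ give $B\,x[f] = \int_G f(g) U_g (Bx)\, dg = (Bx)[f] \in D^{(U)}_G$, using linearity and continuity of $B$ to move it inside the weak integral defining $x[f]$. The main obstacle I anticipate is the clean handling of closures and domains in the (i) $\Leftrightarrow$ (ii) step — one must be careful that the inclusions of unbounded operators behave correctly under closure and that $B$ genuinely preserves the relevant domains — but these are controlled by the standard facts in Remark \ref{remarkclosure} and by the G\r{a}rding-space invariance already recorded in Theorem \ref{teogarding}.
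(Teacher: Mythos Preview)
Your proposal is correct and follows essentially the same route as the paper: the equivalence (ii) $\Leftrightarrow$ (iii) via Proposition \ref{LemmaCOMM}(a) and connectedness, the invariance $B(D_G^{(U)})\subset D_G^{(U)}$ from (iii) via $B\,x[f]=(Bx)[f]$, and the passage (i) $\Rightarrow$ (ii) by taking closures with $B$ bounded. One small imprecision: the claimed equality $\overline{u({\bf A})B}=\overline{u({\bf A})}\,B$ is not generally true for a merely bounded $B$ (Remark \ref{essentclos}(f) asserts it only for unitaries), but you only need the inclusion $\overline{u({\bf A})B}\subset \overline{u({\bf A})}\,B$, which does hold since $\overline{u({\bf A})}\,B$ is closed; the paper sidesteps this by observing directly that (i) forces $B(D_G^{(U)})\subset D_G^{(U)}$ from the domain condition in $Bu({\bf A})\subset u({\bf A})B$.
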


\noindent The proof of this theorem appears in Appendix \ref{appProof}.

\begin{corollary}\label{CORRalggroupcomm}
If $B$ in Proposition \ref{alggroupcomm}  is either  a complex structure or a unitary selfadjoint operator, then each of (i), (ii), (iii) is equivalent to

(iv) $B\overline{u({\bf A})}=\overline{u({\bf A})}B$ for every ${\bf A}\in\gg$.
 \end{corollary}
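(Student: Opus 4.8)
The plan is to prove only the nontrivial implication. Since (iv) clearly entails (ii), and (i), (ii), (iii) are already mutually equivalent by Proposition \ref{alggroupcomm}, it suffices to show that (ii) implies (iv). So I fix ${\bf A}\in\gg$ and write $A:=\overline{u({\bf A})}$, which is a closed operator. Condition (ii) reads $BA\subset AB$, and the whole task is to upgrade this inclusion to the equality $BA=AB$.

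The key observation, which is exactly what singles out the two admissible types of $B$, is that in both cases $B$ is bounded, invertible, and proportional to its own inverse. Indeed, if $B$ is a complex structure then $B^*=-B$ and $B^2=-I$ (Def.\ref{defCSJ}), while if $B$ is unitary and selfadjoint then $B^*=B$ and $B^2=B^*B=I$. In either case I can write $B^2=\ep I$ and $B^{-1}=\ep B$ with $\ep\in\{-1,+1\}$, and in particular $B,B^{-1}\in\gB(\sH)$.

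First I would record the elementary fact that left- or right-multiplication by a bounded, everywhere-defined operator preserves operator inclusions: if $C\subset D$ and $E\in\gB(\sH)$, then $EC\subset ED$ and $CE\subset DE$, which is a direct domain check. Applying this to $BA\subset AB$, I left-multiply by $B^{-1}$ to obtain $A=B^{-1}BA\subset B^{-1}AB$, and then right-multiply by $B^{-1}$ to obtain
$$AB^{-1}\subset (B^{-1}AB)B^{-1}=B^{-1}A\:.$$
Substituting $B^{-1}=\ep B$ gives $\ep AB\subset\ep BA$, and cancelling the nonzero scalar $\ep$ yields $AB\subset BA$. Together with the hypothesis $BA\subset AB$ this forces $BA=AB$, i.e. $B\overline{u({\bf A})}=\overline{u({\bf A})}B$, which is precisely (iv).

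The only delicate point — where I expect to spend care rather than meet a genuine obstacle — is the domain bookkeeping hidden in the two multiplication steps, in particular the operator identities $B^{-1}BA=A$ and $(B^{-1}AB)B^{-1}=B^{-1}A$ (with their correct domains), which hold precisely because $B$ and $B^{-1}$ are everywhere defined and mutually inverse. I note that anti-selfadjointness of $A$ plays no role in the argument; what is genuinely used is that $B^{-1}$ is a scalar multiple of $B$, a feature that fails for a general invertible (or even a general unitary) operator and that explains why the corollary is stated only for complex structures and for unitary selfadjoint operators.
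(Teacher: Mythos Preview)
Your proof is correct and follows essentially the same approach as the paper: both arguments exploit $B^2=\ep I$ (equivalently $B^{-1}=\ep B$) to flip the inclusion $BA\subset AB$ into $AB\subset BA$, the paper phrasing this as ``applying $B$ to both sides'' while you phrase it via left/right multiplication by $B^{-1}$. Your version is a bit more explicit about the domain bookkeeping, but the underlying idea is identical.
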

\begin{proof}
(iv) implies (ii). Conversely, if (ii) holds, applying $B$ to both sides of   $B\overline{u({\bf A})}\subset \overline{u({\bf A})}B$
we obtain $\overline{u({\bf A})}B\subset B\overline{u({\bf A})}$. Together with (ii) this inclusion proves (iv).
\end{proof}

\subsection{Analytic vectors of unitary representations in real (and complex) Hilbert spaces}  There exists another  subspace of $\sH$ relevant to continuous unitary representations of Lie groups, made of "good" vectors.  A function $f:\bR^n\supset U\rightarrow \sH$ is called {\bf real analytic} at $x_0\in U$ if there
 exists a neighborhood $V\subset U$ of $x_0$ where   the function $f$ can be expanded  in power series as (exploiting the standard multi index notation)
\begin{equation}\label{anal}
f(x)=\sum_{|\alpha| \leq n\:, n =0}^{+\infty}(x-x_0)^\alpha v_\alpha,\ \ x\in V
\end{equation}
with suitable $v_\alpha\in\sH$ for every multi index $\alpha\in\bN^n$.

\begin{definition}\label{defDN}{\em
Let $\sH$ be an, either real or complex, Hilbert space and $G \ni g \mapsto U_g$ a strongly-continuous unitary representation on $\sH$ of the Lie group $G$. A vector $x\in\sH$ is said 
to be \textbf{analytic}  for $U$ if the function $g\mapsto U_gx$ is real analytic at every point  $g \in G$,  referring to the analytic atlas of $G$. The linear subspace of $\sH$ made of by these 
vectors is called the \textbf{Nelson space} of the representation and is denoted by $D_N^{(U)}$.
}
\end{definition}
\begin{remark}\label{remanalyticrepr}{\em
Let $H$ be real. A direct application of the definition shows that $x,y$ are analytic for $U$ if and only if $x+iy$ is analytic for $U_{\bC}$, hence $D_N^{(U_\bC)}=(D_N^{(U)})_\bC$}
\end{remark}
\noindent  To go on, according Nelson \cite{N},  we say that a vector
 $x \in \bigcap_{n=0}^{+\infty} D(A^n)$ --  where $A: D(A) \to \sH$ is an operator in a either real or complex Hilbert space $\sH$ --
is {\bf analytic} for $A$, if there exists $t_{x}>0$ such that 
\beq \sum_{n=0}^{+\infty} \frac{t_x^n}{n!}||A^nx|| <+\infty \label{defanalyticvector}\eeq
From the elementary theory of series of powers, we know that  $t$ above can be replaced for every $z\in \bC$ with $|z|< t_x$  obtaining an absolutely convergent series.

\begin{remark}\label{remanalitic}
{\em It should be evident   that the analytic vectors for $A$ form a subspace of $D(A)$. Moreover, 
if $\sH$ is real,
from  (\ref{cs}) and the very definition of $A_\bC$, it immediately  arises that $x,y \in \sH$ are analytic for $A$ if and only if  $x+ iy$ is analytic for $A_\bC$.}
\end{remark}
\noindent One of remarkable Nelson's results states that
\begin{proposition}\label{propNseries} Consider an operator  $A : D(A) \to \sH$ on an either real or complex, Hilbert space $\sH$.\\
{\bf (a)}  If $A$ is anti selfadjoint  and $x \in D(A)$ is analytic with $t_x >0$ as in (\ref{defanalyticvector}), then
$$e^{tA}x = \sum_{n=0}^{+\infty} \frac{t^n}{n!}A^nx\quad \mbox{if $t\in \bR$ satisfies $|t|\leq t_x$.}$$
{\bf (b)} If $A$ is (anti) symmetric and $D(A)$ includes a set of analytic vectors whose finite span is dense in $\sH$, then $\overline{A}$ is (anti)  selfadjoint and $D(A)$,
\end{proposition}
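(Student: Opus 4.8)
The plan is to treat the complex-Hilbert-space versions of (a) and (b) as the classical theorems of Nelson \cite{N} and to reduce the real case to them through the complexification machinery of Proposition \ref{prop2}, using the identification of analytic vectors recorded in Remark \ref{remanalitic}. For the complex case of (a) the standard argument runs through the spectral theorem: writing $S:=iA$, which is selfadjoint, one has $e^{tA}x=\int_\bR e^{-it\lambda}\,dP^{(S)}(\lambda)x$ and $A^nx=\int_\bR(-i\lambda)^n\,dP^{(S)}(\lambda)x$, so that $\|e^{tA}x-\sum_{n=0}^N\frac{t^n}{n!}A^nx\|^2=\int_\bR\big|e^{-it\lambda}-\sum_{n=0}^N\frac{(-it\lambda)^n}{n!}\big|^2\,d(x|P^{(S)}(\lambda)x)$; the analyticity condition \nref{defanalyticvector} makes $e^{2s|\lambda|}$ integrable against $(x|P^{(S)}(\cdot)x)$ for $s<t_x$, and dominated convergence yields the identity for $|t|<t_x$. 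The boundary value $|t|=t_x$ follows because the series $\sum_n\frac{t^n}{n!}A^nx$ converges uniformly on $[-t_x,t_x]$ (its tail is bounded by the tail of the convergent series \nref{defanalyticvector}), hence is continuous in $t$, and so is $t\mapsto e^{tA}x$.

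For the real case of (a) I would complexify. By Remark \ref{remanalitic}, $x$ is analytic for the anti selfadjoint $A$ with constant $t_x$ if and only if $x+i0$ is analytic for $A_\bC$, which is anti selfadjoint by Proposition \ref{prop2}(5). Applying the complex case to $A_\bC$ and $x+i0$ gives $e^{tA_\bC}(x+i0)=\sum_n\frac{t^n}{n!}A_\bC^n(x+i0)$ in $\sH_\bC$ for $|t|\le t_x$. Since $A_\bC^n=(A^n)_\bC$ (Proposition \ref{prop2}(8) applied to $p(x)=x^n$), one has $A_\bC^n(x+i0)=A^nx+i0$, while $e^{tA_\bC}(x+i0)=e^{tA}x+i0$ by Stone's theorem \ref{ST} together with Definition \ref{defgen}. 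Because the embedding $\sH\ni z\mapsto z+i0\in\sH_\bC$ is isometric onto a closed real subspace, both the convergence and the identity descend to $\sH$, proving (a).

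For (b) I would again pass to $\sH_\bC$. Suppose first that $A$ is symmetric; then $A_\bC$ is symmetric by Proposition \ref{prop2}(5). Let $D$ be the real span of the given analytic vectors of $A$; since the analytic vectors form a subspace (Remark \ref{remanalitic}), every element of $D$ is analytic for $A$, and by hypothesis $\overline{D}=\sH$. Then $D_\bC$ is dense in $\sH_\bC$ (Proposition \ref{propcompleX}(c) and the Remark following it), and by Remark \ref{remanalitic} every vector of $D_\bC$ is analytic for $A_\bC$; thus $A_\bC$ possesses a set of analytic vectors with dense span. The complex version of Nelson's theorem then makes $\overline{A_\bC}$ selfadjoint, and since $\overline{A_\bC}=(\overline{A})_\bC$ (Proposition \ref{prop2}(3)) while complexification preserves selfadjointness (Proposition \ref{prop2}(5)), $\overline{A}$ is selfadjoint. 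That $D(A)$ is a core for $\overline{A}$ is immediate from the definition of closure.

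The step I expect to require the most care is the anti symmetric real case, since Nelson's theorem is stated for symmetric operators and in a real space one cannot multiply $A$ by $i$. The remedy is to postpone the multiplication by $i$ until after complexifying: if $A$ is anti symmetric and real, then $A_\bC$ is anti symmetric, so $iA_\bC$ is symmetric on $\sH_\bC$; and since $\|(iA_\bC)^nz\|=\|A_\bC^nz\|$, the operators $iA_\bC$ and $A_\bC$ have exactly the same analytic vectors. Running the argument of the previous paragraph with $iA_\bC$ in place of $A_\bC$ shows that $\overline{iA_\bC}=i\,\overline{A_\bC}$ is selfadjoint, hence $\overline{A_\bC}$ is anti selfadjoint, and therefore $\overline{A}$ is anti selfadjoint by the same identities of Proposition \ref{prop2}(3) and (5). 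The only points demanding genuine attention are that the density hypothesis is correctly inherited by $A_\bC$ and that the closures $\overline{A}$ and $\overline{A_\bC}$ match up under complexification; both are supplied directly by Proposition \ref{prop2}.
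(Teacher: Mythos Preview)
Your proposal is correct and follows essentially the same approach as the paper: the complex case is treated as the classical Nelson result (the paper simply cites it, while you supply a spectral-theorem sketch), and the real case is reduced to the complex one via the external complexification $\sH_\bC$ together with Proposition~\ref{prop2} and Remark~\ref{remanalitic}. Your handling of the anti symmetric real case---complexify first, then pass to $iA_\bC$---is exactly what the paper's terse ``using $iA$ in place of $A$'' amounts to once translated through $(\overline{A})_\bC=\overline{A_\bC}$ and Proposition~\ref{prop2}(5).
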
 

\begin{proof} If $\sH$ is complex, (a) and (b) are  classic result \cite{N,M} ((b) in the anti selfadjoint case arises from the selfadjoint case by simply 
using $iA$ in place of $A$). If $\sH$ is real and $x\in D(A)$ is analytic for $A$, then 
$x+i0$ is analytic for $A_\bC$. Taking advantage of Prop.\ref{prop2}, we have
$e^{tA}x= e^{tA_\bC}(x+i0) = \sum_{n=0}^{+\infty} \frac{t^n}{n!}A_\bC^n(x+i0) = \sum_{n=0}^{+\infty} \frac{t^n}{n!}A^nx$
 for $t\in \bR$ which satisfies $|t|\leq t_x$. This proves (a) for the real case. Regarding (b) for $\sH$ real, observe that $x+iy$ is analytic for $A_\bC$
when $x,y$ are analytic for $A$. Thus, with the hypotheses in (b),    $\overline{A_\bC}$ is (anti) 
selfadjoint.  Finally (5) in Prop. \ref{prop2} implies the thesis. 
\end{proof}
\begin{theorem}\label{teonelson}
Referring to Def.\ref{defDN}, Def.\ref{defasagen} and Thm \ref{teogarding},  $D_N^{(U)}$ satisfies the following properties.\\
{\bf (a)} $D_N^{(U)}\subset D_G^{(U)}$,\\
{\bf (b)} $U_g(D_N^{(U)})\subset D_N^{(U)}$ for any $g\in G$,\\
{\bf (c)} $D_N^{(U)}$ is dense in $\sH$,\\
{\bf (d)} $D_N^{(U)}$ consists of analytic vectors for every operator $u({\bf A})$ with ${\bf A}\in \gg$,\\
{\bf (e)} $u({\bf A})(D_N^{(U)})\subset D_N^{(U)}$ for any ${\bf A}\in\gg$.\\
{\bf (f)} Let $p : \bR \to \bR$ be a real polynomial such that either
$$ \mbox{$p(-x) = p(x)$ for every $x\in \bR$ or $p(-x) = -p(x)$ for every $x\in \bR$\:.}$$
If  ${\bf A} \in \gg$ then
 $\overline{u(p({\bf A}))}$ is, respectively,   selfadjoint or anti selfadjoint. 
\end{theorem}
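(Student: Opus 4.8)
The plan is to push everything down to the complex case through the external complexification of Section \ref{seccomplex}, since items (a)--(e) are exactly the classical properties of the Nelson space proved by Nelson \cite{N} for complex Hilbert spaces, and the only genuinely new content is the transport to the real setting together with the essential (anti)selfadjointness in (f). The dictionary I would use is $D_N^{(U_\bC)}=(D_N^{(U)})_\bC$ (Remark \ref{remanalyticrepr}), $D_G^{(U_\bC)}=(D_G^{(U)})_\bC$ (Theorem \ref{DMtheorem}(a)), $u({\bf A})_\bC=u_\bC({\bf A})$ (Remark \ref{updown}(b)), and the fact that $x$ is analytic for a real operator $A$ iff $x+i0$ is analytic for $A_\bC$ (Remark \ref{remanalitic}).

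With this dictionary the real versions of (a)--(e) follow from their complex counterparts by restricting to the slice $\{x+i0\}$ and using that $(\cdot)_\bC$ preserves inclusions, closures, density and (anti)selfadjointness (Proposition \ref{propcompleX} and its Remark, Proposition \ref{prop2}). Concretely, for (a) one has $x\in D_N^{(U)}\Rightarrow x+i0\in D_N^{(U_\bC)}\subset D_G^{(U_\bC)}=(D_G^{(U)})_\bC$, hence $x\in D_G^{(U)}$; for (c) density transfers because $(D_N^{(U)})_\bC=D_N^{(U_\bC)}$ is dense in $\sH_\bC$ iff $D_N^{(U)}$ is dense in $\sH$; and (b), (d), (e) follow by the same slice argument applied respectively to $U_g$-invariance, analyticity for $u_\bC({\bf A})$, and $u_\bC({\bf A})$-invariance of $D_N^{(U_\bC)}$, using $(U_g)_\bC(x+i0)=U_gx+i0$ and $u_\bC({\bf A})(x+i0)=u({\bf A})x+i0$. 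The complex case of (a)--(e) I simply cite from \cite{N}.

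The heart of the matter is (f), which I would split into an algebraic part and an analytic part. The algebraic part is the parity of $u(p({\bf A}))$: since ${\bf A}\in\gg$ gives ${\bf A}^+=-{\bf A}$ and $+$ is an anti-automorphism of $E_\gg$, one gets $(p({\bf A}))^+=p(-{\bf A})$, which equals $p({\bf A})$ for $p$ even and $-p({\bf A})$ for $p$ odd; Theorem \ref{teogarding}(e), i.e. $u({\bf M}^+)\subset u({\bf M})^*$, then makes $u(p({\bf A}))$ symmetric, resp. anti-symmetric, on $D_G^{(U)}$. For the analytic part, write $A=\overline{u({\bf A})}$ for the anti-selfadjoint generator (Theorem \ref{teogarding}(f)), and let $p(A)$ be the operator of the functional calculus. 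This $p(A)$ is already (anti)selfadjoint: complexifying, $A_\bC=iB$ with $B:=-iA_\bC$ selfadjoint, so $p(A_\bC)=q(B)$ for a real polynomial $q$ when $p$ is even and $p(A_\bC)=i\,r(B)$ when $p$ is odd, whence $p(A)$ is selfadjoint, resp. anti-selfadjoint, by Proposition \ref{prop2}(5),(8). Because $D_G^{(U)}$ is $A$-invariant (Theorem \ref{teogarding}(c)) one has $u(p({\bf A}))=p(A)|_{D_G^{(U)}}\subset p(A)$, so (f) reduces to proving that $D_G^{(U)}$ is a core for $p(A)$, equivalently for every power $A^n$.

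The step I expect to be the real obstacle is precisely this core property, and it is where the naive approach fails: a vector of $D_G^{(U)}$ (or $D_N^{(U)}$) is analytic for $A$ but in general not for $A^n$ with $n\ge2$ — the growth of $\|A^{kn}x\|$ destroys the series defining analyticity of $p(A)^n$ — so Proposition \ref{propNseries}(b) cannot be applied to $u(p({\bf A}))$ directly. Instead I would invoke the standard fact (Chernoff's core theorem) that a dense subspace lying in the smooth vectors and invariant under a strongly continuous one-parameter unitary group is automatically a core for every power of its generator: here $D_G^{(U)}$ is dense (Theorem \ref{teogarding}(a)), is contained in $\bigcap_n D(A^n)$ (Theorem \ref{teogarding}(c)), and is invariant under $e^{tA}=U_{\exp(t{\bf A})}$ (Theorem \ref{teogarding}(b)); to remain inside the real framework I would apply this to $A_\bC$ on $(D_G^{(U)})_\bC$ and descend through Proposition \ref{prop2}(4). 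This yields $\overline{u(p({\bf A}))}=\overline{p(A)|_{D_G^{(U)}}}=p(A)$, selfadjoint for $p$ even and anti-selfadjoint for $p$ odd, which is (f); both parities are handled uniformly once the core property is secured. An alternative staying closer to the machinery already built is to apply Theorem \ref{teogarding}(g) to the one-parameter subgroup $t\mapsto U_{\exp(t{\bf A})}$, for which ${\bf A}^2$ is a Nelson element commuting with the symmetric element $p({\bf A})$, giving the essential selfadjointness in the even case directly.
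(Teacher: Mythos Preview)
For (a)--(e) your route is exactly the paper's: cite Nelson for the complex case and transport to the real case via the dictionary $D_N^{(U_\bC)}=(D_N^{(U)})_\bC$, $D_G^{(U_\bC)}=(D_G^{(U)})_\bC$, $u_\bC({\bf A})=u({\bf A})_\bC$.

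For (f) you and the paper part ways, and you are the more careful of the two. The paper's one-line argument for (f) is precisely the ``naive approach'' you flag as failing: it asserts, citing (a), (d), (e), that $D_G^{(U)}$ contains a dense set of analytic vectors for $u(p({\bf A}))$ and then invokes Proposition~\ref{propNseries}(b) directly. Your objection is well taken --- the Nelson estimate $\|A^n x\|\le C\rho^n n!$ does \emph{not} make $x$ analytic for $A^k$ when $k\ge 2$, since $\sum_m t^m(km)!/m!$ diverges for every $t>0$ --- so the paper's argument for (f) has a genuine gap. Your core-theorem route closes it; the cleanest execution with the tools already in the paper uses, besides density, smoothness and $e^{tA}$-invariance of $D_G^{(U)}$, also the $A$-invariance $A(D_G^{(U)})\subset D_G^{(U)}$ from Theorem~\ref{teogarding}(c): complexify, set $B=-iA_\bC$, factor $B^n\pm i=\prod_j(B-\omega_j)$ with all $\omega_j\notin\bR$, and observe that each factor maps a dense $e^{itB}$-invariant subspace of $(D_G^{(U)})_\bC$ contained in $D(B)$ to another one (by the standard one-generator core lemma each $(B-\omega_j)$ has dense range on such a subspace), whence $(B^n\pm i)\big((D_G^{(U)})_\bC\big)$ is dense and $(D_G^{(U)})_\bC$ is a core for every $B^n$, hence for $p(A_\bC)$; then descend via Proposition~\ref{prop2}. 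Your second alternative --- applying Theorem~\ref{teogarding}(g) to the one-parameter subgroup generated by ${\bf A}$ --- only yields essential selfadjointness on the larger domain $C^\infty(A)=D_\bR^{(U|_\bR)}$, not on $D_G^{(U)}$, so it would still need the same core step to reach the stated conclusion.
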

\begin{proof}
Let $\sH$ be complex. The proof of (a) and (b) is immediate noticing that an analytic function is in particular smooth and that the multiplication on 
$G$ is analytic with respect to the analytic atlas of $G$.  Properties (c), (d) and (e) straightforwardly arise from  the results in \cite{N}, Sect.2 and Sect.7.  
Now consider $\sH$ real, from Remark \ref{remanalyticrepr} that $D^{(U_\bC)}_N=(D_N^{(U)})_\bC$. This equality, together with $D_G^{(U_\bC)}=(D_G^{(U)})_\bC$ and $u_\bC({\bf A})=(u({\bf A}))_\bC$, gives the thesis.
Regarding (f), from (a), (d) and (e), $D_G^{(U)}$ includes a dense set of analytic vectors for the, respectively, symmetric or anti symmetric operator $u(p({\bf A}))$.  (b) in Prop.\ref{propNseries} immediately proves (f).
\end{proof}
\noindent A final  remarkable consequence of the properties of Nelson's technology  and our version of Schur's lemma  is the following proposition whose proof is in Appendix \ref{appProof}.

\begin{proposition}\label{irrelie}
Let $\sH$ be an, either real or complex, Hilbert space  and $G \ni g \mapsto U_g$ is an irreducible strongly-continuous unitary representation of the  connected Lie-group $G$ on $\sH$.
If ${\bf M}\in E_\gg$ satisfies:

(i)  $[{\bf M},{\bf A}]_\gg=0$ for every ${\bf A}\in\gg$,

(ii) $u({\bf M})$ (defined on $D_G^{(U)}$) is essentially selfadjoint, 

\noindent then it holds  
$$
u({\bf M})=cI|_{D_G^{(U)}} \quad \mbox{for some $c\in\bR$.}$$
\end{proposition}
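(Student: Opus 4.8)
The plan is to show that the centrality of ${\bf M}$ together with its essential selfadjointness forces $\overline{u({\bf M})}$ to commute with the whole representation, so that the irreducible Schur's lemma for essentially selfadjoint operators, Prop.\ref{SL}, applies and yields $\overline{u({\bf M})}=cI$. The only substantial point is to upgrade the obvious commutation of $u({\bf M})$ with the Lie-algebra generators to genuine commutation with each $U_g$; this is where Nelson's analytic vectors enter.

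First I would record the algebraic commutation. Since $u$ is a unital algebra representation of $E_\gg$ on the invariant domain $D_G^{(U)}$ (Thm \ref{teogarding}(e)), hypothesis (i) --- namely ${\bf M}\circ{\bf A}={\bf A}\circ{\bf M}$ in $E_\gg$ for every ${\bf A}\in\gg$ --- gives $u({\bf M})u({\bf A})=u({\bf A})u({\bf M})$ on $D_G^{(U)}$; as both operators preserve $D_G^{(U)}$ (Thm \ref{teogarding}(c),(e)), this iterates to $u({\bf M})u({\bf A})^k=u({\bf A})^k u({\bf M})$ on $D_G^{(U)}$ for all $k$. Set $M:=\overline{u({\bf M})}$, which is selfadjoint by (ii). Because ${\bf M}$ is a polynomial in the generators and each $u({\bf A})$ preserves the Nelson space (Thm \ref{teonelson}(e)), we have $u({\bf M})(D_N^{(U)})\subset D_N^{(U)}$; moreover $D_N^{(U)}$ is dense (Thm \ref{teonelson}(c)), consists of analytic vectors for $M$, and is therefore a core for $M$.

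The main step is then: fixing ${\bf A}\in\gg$ and writing $V_t:=U_{\exp(t{\bf A})}$, prove $V_t M\subset M V_t$. For $x,y\in D_N^{(U)}$ the maps $t\mapsto V_t x$ and $t\mapsto V_t u({\bf M})x$ are real-analytic on all of $\bR$, since $D_N^{(U)}$ is $V_t$-invariant (Thm \ref{teonelson}(b)) and its vectors are analytic for $\overline{u({\bf A})}$, so that $V_t z=\sum_n \frac{t^n}{n!}u({\bf A})^n z$ locally (Prop.\ref{propNseries}(a)). Hence $\Phi(t):=(y|V_t u({\bf M})x)$ and $\Psi(t):=(u({\bf M})y|V_t x)$ are real-analytic; their $k$-th derivatives at $t=0$ are $(y|u({\bf A})^k u({\bf M})x)$ and $(u({\bf M})y|u({\bf A})^k x)$, and using the iterated commutation above plus the symmetry of $u({\bf M})$ on $D_G^{(U)}$ these coincide for every $k$, whence $\Phi\equiv\Psi$. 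Thus $(My|V_t x)=(y|V_t Mx)$ for all $y\in D_N^{(U)}$, and since $D_N^{(U)}$ is a core for the selfadjoint $M$ this extends to all $y\in D(M)$, giving $V_t x\in D(M)$ and $MV_t x=V_t Mx$; as $D_N^{(U)}$ is a core and $V_t$ is bounded, $V_t M\subset MV_t$.

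Finally I would globalize and invoke Schur. Since $G$ is connected, every $g$ is a finite product of one-parameter-subgroup elements $\exp(t_j{\bf A}_j)$, so $U_g M\subset MU_g$ for all $g$; recalling $u({\bf M})\subset M$, that $U_g(D_G^{(U)})\subset D_G^{(U)}$ (Thm \ref{teogarding}(b)), and that $M$ restricts to $u({\bf M})$ on $D_G^{(U)}$, this reads $U_g u({\bf M})\subset u({\bf M})U_g$ for every $g$. The family $\{U_g\}$ being irreducible and $u({\bf M})$ essentially selfadjoint, Prop.\ref{SL} yields $\overline{u({\bf M})}=cI$ for some $c\in\bR$, i.e. $u({\bf M})=cI|_{D_G^{(U)}}$. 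The argument is uniform in the real and complex cases because Prop.\ref{SL} holds in both, so no complexification (which need not preserve irreducibility, Remark \ref{remirred}(b)) is needed. I expect the analytic-vector globalization --- in particular justifying that $D_N^{(U)}$ is a core for $M$ and the passage from the weak identity $\Phi\equiv\Psi$ to the operator inclusion $V_t M\subset MV_t$ --- to be the main obstacle.
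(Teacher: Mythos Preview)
Your overall route---use Nelson vectors to promote the algebraic commutation $[u({\bf M}),u({\bf A})]=0$ to commutation with one-parameter subgroups, globalize by connectedness, then invoke Schur---is exactly the paper's. Your analytic-continuation argument via the global real-analyticity of $t\mapsto V_t z$ for $z\in D_N^{(U)}$ is in fact a bit slicker than the paper's local-expansion-plus-supremum extension.

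There is, however, a genuine gap precisely where you flagged it. The assertion that $D_N^{(U)}$ ``consists of analytic vectors for $M$'' is not supported by Thm~\ref{teonelson}(d), which only guarantees analyticity for $u({\bf A})$ with ${\bf A}\in\gg$, not for an arbitrary ${\bf M}\in E_\gg$. Without this you cannot conclude that $D_N^{(U)}$ is a core for $M=\overline{u({\bf M})}$, and the step ``as $D_N^{(U)}$ is a core and $V_t$ is bounded, $V_tM\subset MV_t$'' is unjustified. (Your preceding weak-to-strong step for fixed $x\in D_N^{(U)}$ is actually unnecessary: since $V_t(D_N^{(U)})\subset D_N^{(U)}$ by Thm~\ref{teonelson}(b) and $u({\bf M})$ is symmetric, you already get $u({\bf M})V_tx=V_tu({\bf M})x$ strongly from $\Phi\equiv\Psi$ and density---but extending this to all of $D(M)$ is where the core claim is genuinely needed, and that is what fails.)

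The paper sidesteps this entirely. Having established $U_g\,u({\bf M})|_{D_N^{(U)}}=u({\bf M})|_{D_N^{(U)}}\,U_g$ on the $U$-invariant domain $D_N^{(U)}$, it takes closures (Remark~\ref{essentclos}(f)) and applies Prop.~\ref{SL2} to the \emph{closed} operator $\overline{u({\bf M})|_{D_N^{(U)}}}$, obtaining $\overline{u({\bf M})|_{D_N^{(U)}}}=aI+bJ$ with domain all of $\sH$. Then the inclusion $\overline{u({\bf M})|_{D_N^{(U)}}}\subset\overline{u({\bf M})}$ and maximality of the left-hand domain force equality, and selfadjointness of $\overline{u({\bf M})}$ (hypothesis (ii)) kills the $bJ$ term. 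Thus the core property of $D_N^{(U)}$ for $M$ falls out \emph{a posteriori}; it is never assumed. Your argument is easily repaired along the same lines: apply Schur to $\overline{u({\bf M})|_{D_N^{(U)}}}$ rather than to $u({\bf M})$.
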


\section{Wigner elementary relativistic systems in real Hilbert spaces: Emergence of the complex structure}\label{secMAIN1}
Within this section we introduce a  first notion of elementary system with respect to the relativistic symmetry adopting the famous framework introduced by Wigner, but 
now formulated  also in a  {\em real} Hilbert space. As a result, both in the real and complex cases, we will find that the mathematical formulation of the theory naturally produces a complex structure  which is the trivial one in the complex case.
In the real case, it 
permits to reformulate all the model into a complex Hilbert space
 fashion. This  final complex model is in agreement with Sol\`er's picture and, differently from the initial real version, does not carry mathematical information without physical meaning because, differently form the initial  real case,
all selfadjoint operators represent observables.

\subsection{Wigner elementary relativistic systems}
As discussed in Sect.\ref{SecI1}, there are  three possible Hilbert-space formulations of QM on a Hilbert space $\sH$, respectively over $\bR, \bC$ or $\bH$,  as proved by Piron-Sol\`er's analysis of abstract lattices of elementary propositions of a quantum system. However as already noticed, several mathematical requirements assumed in Sol\`er's analysis  could be relaxed already for 
 the complex Hilbert space case and some further hypotheses regarding symmetries should be added. 
Observables, states, symmetries  are however described as discussed in points (1)-(8) of Section \ref{seclist} in the general case of a lattice of elementary propositions which does not coincide with the whole $\cL(\sH)$.
We do not require in fact that all selfadjoint operators 
of $\gB(\sH)$ represent observables, but the observables are the self adjoint operators whose spectral measures belong to a certain a von Neumann algebra $\gR$ which may or may not coincide to $\gB(\sH)$ and the
elementary propositions are the orthogonal projectors of $\cL_\gR(\sH)$ which is an orthomodular $\sigma$-complete sublattice of $\cL(\sH)$. As already discussed, unitary operators induce symmetries (not all symmetries in general).
Relying on the original Wigner's  ideas  about the notion of elementary system with respect to a group of symmetry, we assume 
that (1) an elementary relativistic system supports a faithful
 strongly-continuous irreducible unitary representation of the Poincar\'e group and (2) the  von Neumann algebra $\gR$ is generated by the representation itself.\\
We are therefore led to a definition written below, where the {\em proper orthochronous Poincar\'{e} group} actually  means the real simply-connected Lie group given by the semi-direct product $SL(2,\bC) \ltimes \bR^4$ which more properly is the {\em universal covering} of the proper orthochronous Poincar\'{e} group as understood in relativity. 
 This is because $SL(2,\bC) \ltimes \bR^4$ is the group which actually enters the physical constructions and every representation of the  proper orthochronous Poincar\'{e} group
 is also a representation of $SL(2,\bC) \ltimes \bR^4$.
For this reason we require a {\em local-faithfulness} assumption, i.e., the representation is only required to be injective in a neighborhood of the neutral element of $SL(2,\bC) \ltimes \bR^4$, since only in a nighborhood of the neutral elements 
$SL(2,\bC)$ and the proper orthochronous groups are identical. To corroborate our assumption, we observe that all the {\em complex} strongly-continuous unitary irreducible representations of 
$SL(2,\bC) \ltimes \bR^4$ with physical meaning are  locally faithful: (1)
for positive squared mass with integer spin they are faithful,  (2) for positive squared mass with semi-integer spin  they are faithful up to the sign of the $SL(2,\bC)$ element, so they are locally faithfull, and (3) they are againg faithful up to the sign of the $SL(2,\bC)$ element for zero squared mass and non-trivial momentum representation \cite{V2}. 

\begin{definition}\label{defERS}
A  real (complex) {\bf Wigner relativistic elementary system} ({\bf WRES}) is a unitary strongly continuous real (resp. complex) representation of the proper orthochronous Poincar\'{e} group $\cP$,
$$
U:\cP\ni g\mapsto U_g\in\gB(\sH)
$$
over the real (resp. complex) separable Hilbert space $\sH$ which is 
 {\em irreducible} and {\em locally faithful}, i.e., 
$U$ is injective in a neighborhood of the neutral element of $\cP$.\\ 
If $\gR_U$ is the von Neumann algebra generated by $U$ (definition \ref{defvnagen}), the {\bf observables} of the system
 are the selfadjoint operators $A$ whose  PVMs belong to $\cL_{\gR_U}(\sH)$.
\end{definition}

\begin{remark} $\null$\\
{\em {\bf (a)} Evidently, the {\bf bounded observables} are thus the selfadjoint operators of 
 $\gR_U$ and the {\bf elementary observables} are the elements of $\cL_{\gR_U}(\sH)$  itself.}\\
{\em {\bf (b)} In both the real and the complex case,  the PVM of $A=A^*$ belongs to  a von Neumann algebra $\gR$ if and only if $A$  is {\bf affiliated to $\gR$}, that is  $VA\subset AV$ for every unitary operator $V \in \gR'$. This easily arises from Th.\ref{st} (c) (ii) and Remark\ref{unitgeneralg}. Thus, the observables of a  WRES are all of  the selfadjoint operators affiliated to
$\gR_U$.\\ {\bf (c)} It is easy to prove that  $Vu({\bf M})\subset u({\bf M})V$, for every ${\bf M}\in E_\gp$ and every $V \in \gR_U' = \{U_g\}'_{g \in \cP}$.
Therefore $V\overline{u({\bf M})}\subset \overline{u({\bf M})}V$ also holds.
 As a consequence, if $u({\bf M})$ is essentially selfadjoint, then the selfadjoint operator $\overline{u({\bf M})}$ is an observable. We will come back later to this point in Corollary \ref{coraggEgoss} following another way.}
\end{remark}
\subsection{Emergence of an  complex structure (unique up to sign) from Poincar\'e symmetry} A strongly-continuous unitary representation  $U$ of $\cP$ (not necessarily
 irreducible or locally faithful) gives rise to an associated representation $u$ on the G\r arding domain $D_G^{(U)}$  of the corresponding Lie algebra
 $\gp$ of the proper orthochronous Poincar\'e group $\cP$, in accordance with Theorem \ref{teogarding} (d),
$$u:\gp\ni{\bf A}\mapsto u({\bf A}) : D_G^{(U)} \to \sH \:.$$
Let us fix a Minkowskian reference frame in Minkowski spacetime. From now on, for  $i=1,2,3$, $\V{k}_i\in \gp$  are the three generators of the {\em boost} one-parameter subgroups along the three spatial axes, 
$\V{l}_i\in \gp$ are the three generators of the {\em spatial rotation} one-parameter subgroups around the three axes, and 
$\V{p}_\mu \in \gp$, where  $\mu =0, 1,2,3$,   are the four  generators of the {\em spacetime displacements} one-parameter 
subgroups along the four Minkowskian axes. We have the well-known commutation relations for $i,j=1,2,3$,
\begin{equation}\label{COMM}
\begin{aligned}
&[\V{p}_0,\V{p}_i]_\gp=[\V{p}_0,\V{l}_i]_\gp=[\V{p}_i,\V{p}_j]_\gp=0,\ \ [\V{p}_0,\V{k}_i]_\gp= \V{p}_i\:,\\
&[\V{l}_i,\V{l}_j]_\gp= \sum_{k=1}^3\varepsilon_{ijk} \V{l}_k,\ \ [\V{l}_i,\V{p}_j]_\gp=\sum_{k=1}^3\varepsilon_{ijk} \V{p}_k,\ \ [\V{l_i},\V{k}_j]_\gp= \sum_{k=1}^3\varepsilon_{ijk} \V{k}_k\:,\\
&[\V{k}_i,\V{k}_j]_\gp= -\sum_{k=1}^3\varepsilon_{ijk} \V{l}_k,\ \ [\V{k}_i,\V{p}_j]_\gp= -\delta_{ij}\V{p}_0\:.\\
\end{aligned}
\end{equation}
We finally define the associated  basic {\em anti selfadjoint} generators
\begin{equation}\label{basispoinc}
\tilde{K}_i:=\overline{u(\V{k}_i)}\:, \ \ 
 \tilde{L}_i:=\overline{u(\V{l}_i)} \:, \ \
\tilde{P}_0:=\overline{u(\V{p}_0)}\:, \ \ \tilde{P}_i:=\overline{u(\V{p}_i)}\quad i=1,2,3\:,
\end{equation}
which satisfy the same commutation relations as the Lie algebra generators of  $\cP$, in accordance with with Theorem \ref{teogarding} (d) and for $i,j,k = 1,2,3$,
\begin{equation}
\begin{aligned}
&[\tilde{H},\tilde{P}_i]=[\tilde{H},\tilde{L}_i]=[\tilde{P}_i,\tilde{P}_j]=0,\ \ [\tilde{H},\tilde{K}_i]=c  \tilde{P}_i\:,\\
&[\tilde{L}_i,\tilde{L}_j]= \sum_{k=1}^3\varepsilon_{ijk} \tilde{L}_k,\ \ [\tilde{L}_i,\tilde{P}_j]=\sum_{k=1}^3\varepsilon_{ijk} \tilde{P}_k,\ \ [\tilde{L}_i,\tilde{K}_j]= \sum_{k=1}^3\varepsilon_{ijk} \tilde{K}_k\:,\\
&[\tilde{K}_i,\tilde{K}_j]= -\sum_{k=1}^3\varepsilon_{ijk} \tilde{L}_k,\ \  [\tilde{K}_i,\tilde{P}_j]= -c^{-1}\delta_{ij}\tilde{H}\:,\\
\end{aligned}
\end{equation}
where {\em every operator is evaluated on $D_G^{(U)}$} and we have introduced the  anti selfadjoint generator of the temporal  displacements  $\tilde{H}:=c\tilde{P}_0$ and $c$ is the speed of light as usual.\\
If $\sH$ is complex and $U$ represents a WRES,  these operators produce well-known basic observables simply by the addition of a factor $i$.  This is not the case if $\sH$ is real unless 
a common complex structure, commuting with each of them exists.
A symmetric  operator  defined on the invariant domain $D_G^{(U)}$, which will play a crucial role in our discussion, is 
\beq M^2_U := \left.\left(-\tilde{P}_0^2+ \sum_{k=1}^3\tilde{P}_k^2\right)\right|_{D_G^{(U)}}\:. \label{Moperator}\eeq
This operator is  physically associated  with the {\em squared mass} of the system (the apparent overall wrong sign on the right-hand side is due to the fact that the $\tilde{P}_\mu$ are anti selfadjoint  instead of selfadjoint).
Suppose that $\sH$ is real. We intend to prove that, if Definition  \ref{defERS} holds and $M^2_U \geq 0$, then  there exists  a -- unique up to the sign -- complex 
structure which makes the theory complex. A candidate for this complex structure is the  operator $J$ appearing  in the polar decomposition $\tilde{H}=J|\tilde{H}|$ (see Theorem \ref{Polar}). 
Within this picture, the selfadjoint and positive operator $H:=|\tilde{H}|$ may  be interpreted as the {\em energy operator}, the {\em Hamiltonian},  of the system. Especially exploiting   
Theorem \ref{polarCOMM}, we will prove that $J$ is actually a complex structure.
A sketch of an alternative proof of the existence of the complex structure $J$ appears in \cite{NeOl17} relying on the mathematical technology of {\em modular theory}.
 Since 
  it turns out that $J$ commutes with both every operator $U_g$ of the Poincar\'e group  representation and  every anti selfadjoint operator $\overline{u({\bf A})}$ associated with any ${\bf A}\in\gp$, $J$ is the wanted complex 
structure. Our $J$ exists also if $\sH$ is complex, but in this case  it reduces to the much more trivial operator $J = \pm i I$. Let us see everything  in details.

\begin{theorem}\label{poinccomplexstructure}
Consider an either {\em real  or complex Wigner elementary relativistic system}  and adopt definitions (\ref{basispoinc}) and (\ref{Moperator}).
Let $\tilde{H}:=c\tilde{P}_0$ and $\tilde{H}=JH$ its polar decomposition. The following facts hold provided $M^2_U \geq 0$.\\

\noindent {\bf (a)} $J\in \gR_U$ and $J$ is a complex structure on $\sH$. \\

\noindent {\bf (b)} $J \in \gR_U \cap \gR'_U$ because   $JU_g =U_gJ$ for all $g \in \cP$. In particular the complex structure $J$ is Poincar\'e invariant.\\

\noindent {\bf (c)}  $Ju({\bf A})\subset u({\bf A})J$ for every ${\bf A}\in\gp$ so that, in particular,    $J$ leaves $D^{(U)}_G$ invariant.\\

\noindent {\bf (d)}  If  $J_1$ is  a  complex structure on $\sH$ such that either $J_1 \in \gR_U'$
or $J_1 u({\bf A}) \subset u({\bf A}) J_1$ for every ${\bf A}\in\gp$ are valid, then $J_1=\pm J$. \\

\noindent {\bf (e)}   If ${\bf A} \in \gp$, then  $J\overline{u({\bf A})}=\overline{u({\bf A})}J$ and this operator is an observable of the system, i.e.,  it is selfadjoint and its PVM belongs to $\gR_U$. \\

\noindent {\bf (f)}  If $\sH$ is real,  passing to the complexified Hilbert space $\sH_J$, $$\cP \ni g \mapsto U_g : \sH_J \to \sH_J$$ 
defines a {\em complex WRES} whose associated von Neumann algebra is made of  the same operators  as for the initial real WRES, but now
$$\gR_U = \gB(\sH_J)\quad \mbox{and}\quad \cL_{\gR_U}(\sH_J) = \cL(\sH_J)\:,$$
in accordance with the thesis of  Sol\`er theorem {\em Sth} (this is false  if referring to $\sH$).\\

\noindent {\bf (g)}   If $\sH$ is complex, then $J=\pm i I$ and again 
$$\gR_U = \gB(\sH)\quad \mbox{and}\quad \cL_{\gR_U}(\sH) = \cL(\sH)\:,$$
in accordance with {\em Sth}.
\end{theorem}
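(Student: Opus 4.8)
The plan is to identify the candidate complex structure as the partial isometry $J$ in the polar decomposition $\tilde H = JH$ of the time-translation generator, and to reduce all seven items to two facts: that $\tilde H$ is \emph{injective}, and that $J$ \emph{commutes with every generator} $u(\V{A})$, $\V{A}\in\gp$. Once $\tilde H$ is injective, Theorem \ref{Polar} and Proposition \ref{LemmaCOMM}(d),(e) force the phase $J$ to be simultaneously anti selfadjoint ($J^*=-J$) and unitary ($J^*J=I$, i.e. $J^2=-I$), so $J$ is a complex structure; this yields the structural claim in (a). For the membership $J\in\gR_U$: any $B\in\gR_U'=\{U_g\}'$ commutes with the time-translation group $e^{(t/c)\tilde H}$, hence with $\tilde H$ by Proposition \ref{LemmaCOMM}(a), hence with its polar factors $J,H$ by Proposition \ref{LemmaCOMM}(b); thus $J\in(\gR_U')'=\gR_U$, completing (a).

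First I would establish injectivity of $\tilde H$, which is where $M^2_U\ge0$ enters decisively. Passing to $\sH_\bC$ and the joint spectral measure $E$ of the commuting selfadjoint operators $i(\tilde P_\mu)_\bC$ over momentum space $\bR^4$, the hypothesis $M^2_U\ge0$ confines $\mathrm{supp}\,E$ to the region $\{p_0^2\ge|\vec p|^2\}$, on which $p_0=0$ forces $p=0$. Hence $\ker\tilde H$ corresponds to $E(\{0\})$, the joint kernel of all four $\tilde P_\mu$, i.e. the subspace on which translations act trivially. Since $\mathrm{Ad}_g$ fixes the origin of momentum space, this subspace is $\cP$-invariant; irreducibility then forces it to be $\{0\}$ or $\sH$, and the case $\sH$ (all $\tilde P_\mu=0$, translations trivial) is ruled out by local faithfulness (Definition \ref{defERS}). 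Thus $\ker\tilde H=\{0\}$. The care needed with operator domains in making this argument rigorous is one of the two hard points.

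Next I would prove $JU_g=U_gJ$ for all $g$, which gives (b) and, via Corollary \ref{CORRalggroupcomm} (applicable since $J$ is a complex structure), simultaneously gives (c) and the commutation relation $J\overline{u(\V{A})}=\overline{u(\V{A})}J$ of (e). For the spatial translations and rotations, whose one-parameter subgroups commute with time translations ($[\tilde H,\tilde P_i]=[\tilde H,\tilde L_i]=0$), Proposition \ref{polarCOMM}(i),(iii) applied to $\tilde H=JH$ yields $J\tilde P_i\subset\tilde P_i J$ and $J\tilde L_i\subset\tilde L_i J$ directly. The boosts are the genuine obstacle, since $[\tilde H,\tilde K_i]=c\tilde P_i\neq0$ so their subgroups do \emph{not} commute with time translations and Proposition \ref{polarCOMM} does not apply. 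Here I would argue that conjugating the polar decomposition, $U_g J U_g^{-1}$ is the phase of $\tilde H_g:=U_g\tilde H U_g^{-1}=c\sum_\nu\Lambda(g)_0{}^\nu\tilde P_\nu$; since $J$ commutes with every $\tilde P_\nu$ it commutes with $\tilde H_g$, so $-J\tilde H_g$ is selfadjoint with $(-J\tilde H_g)^2=|\tilde H_g|^2$. The point is then that $-J\tilde H_g\ge0$: because $M^2_U\ge0$ gives $H^2=c^2M^2_U+c^2\sum_k|\tilde P_k|^2\ge c^2|\tilde P_i|^2$, the energy dominates each momentum, and a functional-calculus computation shows $-J\tilde H_g$ corresponds to $c\,\mathrm{sign}(p_0)\,(\cosh t\,p_0+\sinh t\,p_i)\ge0$ on the cone (a boost preserves the sign of $p_0$ when $|p_0|\ge|\vec p|$). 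By uniqueness of polar decomposition this forces $U_gJU_g^{-1}=J$ for boosts, and together with translations and rotations generating the connected group $\cP$, it yields (b).

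Finally I would close the remaining items. For uniqueness (d): given another such complex structure $J_1$, Corollary \ref{CORRalggroupcomm} puts it in $\gR_U'$, so $K:=JJ_1=J_1J$ is a selfadjoint involution lying in $\gR_U'$ (both factors commute with all $U_g$); by irreducibility $\cL_{\gR_U'}(\sH)=\{0,I\}$ (Theorem \ref{teopropvnA}(c)), whence $K=\pm I$ and $J_1=\pm J$. For (e): $J\overline{u(\V{A})}$ is selfadjoint as the product of two commuting anti selfadjoint operators, and it is affiliated to $\gR_U$ because every $V\in\gR_U'$ commutes with both $J\in\gR_U$ and the affiliated $\overline{u(\V{A})}$; so its PVM lies in $\gR_U$. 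For (f), since $J\in\gR_U\cap\gR_U'$, every element of $\gR_U$ commutes with $J$ and is $\bC$-linear on $\sH_J$ (Proposition \ref{prop2i}), the representation $U$ becomes a strongly continuous irreducible complex unitary WRES there (Proposition \ref{prop3i}), and the complex commutant of $\gR_U$ in $\gB(\sH_J)$ equals $\{U_g\}'\cap\{J\}'=\bC I$ by complex Schur's lemma; the bicommutant theorem (Theorem \ref{teoDC}) then gives $\gR_U=\gB(\sH_J)$ and $\cL_{\gR_U}(\sH_J)=\cL(\sH_J)$. For (g), in the complex case Schur forces $J\in\gR_U'=\bC I$ with $J^2=-I$, so $J=\pm iI$ and $\gR_U=\gB(\sH)$. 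The decisive difficulties are thus the injectivity of $\tilde H$ and the boost-commutation of $J$, both of which rest on $M^2_U\ge0$.
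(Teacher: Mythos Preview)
Your proposal is correct and follows the same overall architecture as the paper: identify $J$ as the partial isometry in $\tilde H=JH$, establish injectivity of $\tilde H$ to make $J$ a genuine complex structure, then prove $J$ commutes with the whole representation by reducing to the boost case and invoking uniqueness of polar decomposition for the conjugated generator. Items (a), (e), (f), (g) are treated essentially identically.

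There are two places where your execution differs from the paper's. For injectivity of $\tilde H$ you argue via the joint spectral measure of the four momenta on $\sH_\bC$: $M^2_U\ge 0$ forces $\mathrm{supp}\,E\subset\{p_0^2\ge|\vec p|^2\}$, so $\ker\tilde P_0=E(\{0\})$ is the translation-fixed subspace, which is $\cP$-invariant by the semidirect-product structure and hence trivial. The paper instead first uses Proposition~\ref{irrelie} to get $M^2_U=\mu I$ on the G\r{a}rding domain (the Casimir argument), then splits into $\mu>0$ (a direct norm estimate) and $\mu=0$ (showing $\ker\tilde P_0\subset\ker\tilde P_k$ and then boost-invariance via the group identity (\ref{eqKJ2})). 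Your route is more conceptual but implicitly needs essential selfadjointness of $M^2_U$ (Theorem~\ref{teogarding}(g)) to pass from positivity on the G\r{a}rding domain to the spectral support constraint. For the boost step both approaches prove the same thing---that $-J\,e^{z\tilde K_i}\tilde H e^{-z\tilde K_i}$ is positive selfadjoint---but the paper works out the inequality $(v|Hv)\ge c(v|P_iv)$ on the G\r{a}rding domain and then carefully extends and establishes selfadjointness (the ``fifth part''), whereas you sketch the positivity via the spectral function $\mathrm{sign}(p_0)(\cosh z\,p_0+\sinh z\,p_i)$; the domain and selfadjointness issues you flag as ``hard points'' are exactly where the paper spends its effort.

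Your uniqueness argument for (d) is a genuine and clean alternative: since $J\in\gR_U\cap\gR_U'$ and $J_1\in\gR_U'$, the product $K=JJ_1=J_1J$ is a selfadjoint involution in $\gR_U'$, and irreducibility of $\gR_U$ (Theorem~\ref{teopropvnA}(c)) forces its spectral projections $\tfrac12(I\pm K)$ to be $0$ or $I$, whence $K=\pm I$. The paper instead observes that $J_1$ is $\bC$-linear on $\sH_J$ and invokes complex Schur there to get $J_1=\alpha I+\beta J$. Your argument avoids the passage to $\sH_J$ and is slightly more direct.
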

\noindent To prove the theorem we need some intermediate results.

\begin{lemma}\label{lietransform}  Let $U: \cP \ni  g \mapsto U_g\in \gB(\sH)$ be a strongly continuous unitary representation over the, either real or complex, Hilbert space $\sH$.
Using the above defined notations, 
  for  $i=1,2,3$
 we have
\begin{align}
&e^{z\tilde{K}_i}e^{a\tilde{P}_0}e^{-z\tilde{K}_i}=e^{a\cosh z \:\tilde{P}_0}e^{-a \sinh z \:\tilde{P}_i}\quad \mbox{if} \quad a, z \in \bR\:. \label{eqKJ2}\\
&e^{z\tilde{K}_i}\tilde{P}_0e^{-z\tilde{K}_i}x=\cosh z\:\tilde{P}_0x-
\sinh z\: \tilde{P}_ix\quad \mbox{if} \quad   x \in D_G^{(U)}\:, z \in \bR\:. \label{eqKJ}
\end{align} 
\end{lemma}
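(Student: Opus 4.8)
The plan is to establish the group-level identity (\ref{eqKJ2}) first, directly from the semidirect-product structure of $\cP=SL(2,\bC)\ltimes\bR^4$ and the homomorphism property of $U$, and then to deduce the infinitesimal relation (\ref{eqKJ}) from (\ref{eqKJ2}) by differentiating in the parameter $a$ at $a=0$ on the G\r{a}rding domain. This ordering is convenient because it confines all the analytic work to a single elementary differentiation, and it avoids differentiating the conjugated unbounded operator $e^{z\tilde K_i}\tilde P_0e^{-z\tilde K_i}$ directly, which would force us to invoke the smooth-vector (Fr\'echet) machinery attached to $D_G^{(U)}$.

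By Definition \ref{defasagen} and Theorem \ref{teogarding} (f), each basic generator integrates, via Stone's theorem (Theorem \ref{ST}), to the corresponding one-parameter subgroup of $\cP$: $e^{z\tilde K_i}=U_{\exp(z\V{k}_i)}$, $e^{a\tilde P_0}=U_{\exp(a\V{p}_0)}$ and $e^{-a\sinh z\,\tilde P_i}=U_{\exp(-a\sinh z\,\V{p}_i)}$. Hence (\ref{eqKJ2}) is the image under the homomorphism $U$ of the purely group-theoretic identity $\exp(z\V{k}_i)\exp(a\V{p}_0)\exp(-z\V{k}_i)=\exp(a\cosh z\,\V{p}_0)\exp(-a\sinh z\,\V{p}_i)$, where on the right I used that the translations commute, $[\V{p}_0,\V{p}_i]_\gp=0$. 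This identity is in turn the relation $\mathrm{Ad}_{\exp(z\V{k}_i)}=\exp(z\,\mathrm{ad}_{\V{k}_i})$ evaluated on $a\V{p}_0$: reading $\mathrm{ad}_{\V{k}_i}\V{p}_0=[\V{k}_i,\V{p}_0]_\gp=-\V{p}_i$ and $\mathrm{ad}_{\V{k}_i}\V{p}_i=[\V{k}_i,\V{p}_i]_\gp=-\V{p}_0$ off the commutation relations (\ref{COMM}), one gets $\mathrm{ad}_{\V{k}_i}^2\V{p}_0=\V{p}_0$, whence
$$\exp(z\,\mathrm{ad}_{\V{k}_i})\,\V{p}_0=\cosh z\,\V{p}_0-\sinh z\,\V{p}_i\,.$$

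To obtain (\ref{eqKJ}) I differentiate both sides of (\ref{eqKJ2}) at $a=0$ on a vector $x\in D_G^{(U)}$. Since $U_{\exp(-z\V{k}_i)}$ preserves the G\r{a}rding domain (Theorem \ref{teogarding} (b)), the vector $e^{-z\tilde K_i}x$ lies in $D_G^{(U)}\subset D(\tilde P_0)$ (Theorem \ref{teogarding} (c)), so by Stone's theorem $\frac{d}{da}\big|_{a=0}e^{a\tilde P_0}e^{-z\tilde K_i}x=\tilde P_0e^{-z\tilde K_i}x$; applying the bounded operator $e^{z\tilde K_i}$, which commutes with the limit, the left-hand side of (\ref{eqKJ2}) differentiates to $e^{z\tilde K_i}\tilde P_0e^{-z\tilde K_i}x$. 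On the right-hand side $x\in D_G^{(U)}\subset D(\tilde P_0)\cap D(\tilde P_i)$, and the elementary product rule for the two strongly differentiable unitary groups $a\mapsto e^{a\cosh z\,\tilde P_0}$ and $a\mapsto e^{-a\sinh z\,\tilde P_i}$ (using uniform boundedness of the unitaries to pass the first factor through the limit) gives $\cosh z\,\tilde P_0x-\sinh z\,\tilde P_ix$. Equating the two derivatives yields (\ref{eqKJ}).

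The only genuinely analytic point, and thus the main obstacle, is the differentiation in the previous paragraph: one must verify that the difference quotient $a^{-1}(e^{a\tilde P_0}e^{-z\tilde K_i}x-e^{-z\tilde K_i}x)$ converges and that the outer unitary $e^{z\tilde K_i}$ may be interchanged with the $a$-limit, and symmetrically for the product on the right. Both are routine once the domain memberships $e^{-z\tilde K_i}x\in D(\tilde P_0)$ and $x\in D(\tilde P_0)\cap D(\tilde P_i)$ are in place, and the uniform boundedness of the one-parameter unitary groups makes each interchange immediate. All the conceptual content is therefore in the second paragraph, i.e.\ in recognizing (\ref{eqKJ2}) as the representation of the boost action on the time translation in $\cP$; everything else is bookkeeping with Stone's theorem and the invariance of $D_G^{(U)}$.
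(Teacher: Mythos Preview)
Your proof is correct and follows essentially the same approach as the paper: establish the group identity in $\cP$, push it through the homomorphism $U$ to get (\ref{eqKJ2}), then differentiate at $a=0$ on the G\r{a}rding domain to obtain (\ref{eqKJ}). The only minor technical difference is that you differentiate strongly using the product rule and boundedness of the unitaries, whereas the paper pairs against a test vector $u\in D_G^{(U)}$, moves one unitary factor across the inner product, differentiates each side separately, and then invokes density of $D_G^{(U)}$; both are routine and equivalent.
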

\begin{proof}
Take $z,a\in\bR$, then a straightforward calculation with the one-parameter subgroups $\bR \ni s \mapsto \exp(s{\bf A})$ of $\cP$ 
gives $$\exp(z\V{k}_i)\exp(a\V{p}_0)\exp(-z\V{k}_i)=\exp(a\cosh z\V{p}_0)\exp(-a\sinh z\V{p}_i)\:.$$
Applying the representation $U$ to both  sides of this identity
we have (\ref{eqKJ2}).  Now, let $u,v\in D_G^{(U)}$.
Since the G\r arding domain is invariant under  $U$, it is easy to see that
\begin{equation*}
\begin{aligned}
&\left(\left.e^{z\tilde{K}_i}\tilde{P}_0e^{-z\tilde{K}_i}v\right|u\right)=\frac{d}{d a}\Big|_{a=0}\left(\left.e^{z\tilde{K}_i}e^{a\tilde{P}_0}e^{-z\tilde{K}_i}v\right|u\right)=\frac{d}{d a}\Big|_{a=0}\left(\left. 
e^{a(\cosh z)\tilde{P}_0}e^{-a(\sinh z)\tilde{P}_i}v\right|u\right)=\\
&=\frac{d}{d a}\Big|_{a=0}\left(\left. e^{-a(\sinh z)\tilde{P}_i}v\right|e^{-a(\cosh z)\tilde{P}_0}u\right)=\left(-(\sinh z)\left. \tilde{P}_iv\right|u\right)+\left(v\left|-(\cosh z\tilde{P}_0\right.)u\right)=\\
&=\left(\left.((\cosh z)\tilde{P}_0-(\sinh z)\tilde{P}_i)v\right|u\right)\:.
\end{aligned}
\end{equation*}
Since $D_G^{(U)}$ is dense, (\ref{eqKJ}) is true.
\end{proof}

\begin{lemma}\label{Mass} With the hypotheses of Theorem \ref{poinccomplexstructure}, the following facts are valid.\\
{\bf (a)} $M^2_U =\mu I|_{D_G^{(U)}}$ for some  $\mu\ge 0$. \\
{\bf (b)}  $Ker(\tilde{P_0}) = \{0\}$.
\end{lemma}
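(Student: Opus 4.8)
The plan is to prove (a) with the universal-enveloping-algebra machinery of Section~\ref{represent}, and to prove (b) by a core approximation that forces any vector of $\ker\tilde P_0$ into the space of translation-fixed vectors, which is then annihilated by irreducibility together with local faithfulness.

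For part (a), the point is that $M^2_U$ is the image under the Lie-algebra representation $u$ of the mass Casimir $\V{M}^2:=-\V{p}_0\circ\V{p}_0+\sum_{k=1}^{3}\V{p}_k\circ\V{p}_k\in E_\gp$. Since $D_G^{(U)}$ is invariant under each $u(\V{p}_\nu)=\tilde P_\nu|_{D_G^{(U)}}$ (Theorem~\ref{teogarding}(c)), the squares compose on $D_G^{(U)}$ and $M^2_U=u(\V{M}^2)$. I would first note $\V{M}^2=(\V{M}^2)^+$, because $\V{p}_\nu^+=-\V{p}_\nu$ gives $(\V{p}_\nu\circ\V{p}_\nu)^+=\V{p}_\nu\circ\V{p}_\nu$. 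Then I would check from the relations~(\ref{COMM}) that $\V{M}^2$ is central, i.e. $[\V{M}^2,\V{A}]_\gp=0$ for all $\V{A}\in\gp$: commutation with the $\V{p}_\nu$ is trivial, commutation with the rotations $\V{l}_i$ follows from the antisymmetry of $\varepsilon_{ijk}$, and for the boosts $\V{k}_i$ the contributions from $-\V{p}_0\circ\V{p}_0$ and from $\sum_k\V{p}_k\circ\V{p}_k$ cancel exactly. Centrality gives $[\V{M}^2,\V{N}]_\gp=0$ for a Nelson element $\V{N}$, so $u(\V{M}^2)$ is essentially selfadjoint by Theorem~\ref{teogarding}(g); Proposition~\ref{irrelie} (valid since $U$ is irreducible and $\cP$ is connected) then yields $u(\V{M}^2)=\mu I|_{D_G^{(U)}}$ for some $\mu\in\bR$, and the assumption $M^2_U\ge 0$ forces $\mu\ge 0$.

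For part (b), I would exploit the energy identity furnished by (a): since the $\tilde P_\nu$ are antisymmetric on $D_G^{(U)}$, for every $x\in D_G^{(U)}$
\[
\|\tilde P_0 x\|^2-\sum_{k=1}^{3}\|\tilde P_k x\|^2=(x|M^2_U x)=\mu\|x\|^2 .
\]
Let $x\in\ker\tilde P_0$. As $D_G^{(U)}$ is a core for $\tilde P_0$ (Theorem~\ref{teogarding}(f)), pick $x_n\in D_G^{(U)}$ with $x_n\to x$ and $\tilde P_0 x_n\to\tilde P_0 x=0$; the identity gives $\sum_k\|\tilde P_k x_n\|^2=\|\tilde P_0 x_n\|^2-\mu\|x_n\|^2\to-\mu\|x\|^2$. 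Since the left side is nonnegative, $\mu>0$ forces $x=0$ immediately. If $\mu=0$, instead $\tilde P_k x_n\to 0$ for each $k$, and closedness of the anti-selfadjoint $\tilde P_k$ yields $x\in D(\tilde P_k)$ with $\tilde P_k x=0$; hence $x$ lies in $K_0:=\ker\tilde P_0\cap\bigcap_{k=1}^{3}\ker\tilde P_k$, the subspace of vectors fixed by the entire translation subgroup $T\cong\bR^4$ of $\cP$.

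It then remains to prove $K_0=\{0\}$, which I expect to be the genuine obstacle. Because $T$ is normal in $\cP$, for $g\in\cP$ and $x\in K_0$ one has $U_tU_g x=U_gU_{g^{-1}tg}x=U_g x$ for every $t\in T$, so $K_0$ is a closed $\cP$-invariant subspace; irreducibility forces $K_0=\{0\}$ or $K_0=\sH$. The latter would make every translation act as $I$, contradicting local faithfulness of $U$ in a neighbourhood of the identity. Hence $K_0=\{0\}$, so in the massless case too $\ker\tilde P_0\subseteq K_0=\{0\}$, which completes (b). The routine points left are the graph-norm approximation in the core step and the verification of the cancellations in~(\ref{COMM}); the only conceptual ingredient is the normality argument combined with local faithfulness.
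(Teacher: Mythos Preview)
Your argument is correct. Part~(a) and the massive case of part~(b) coincide with the paper's proof essentially line for line: same Casimir, same appeal to Theorem~\ref{teogarding}(g) and Proposition~\ref{irrelie}, same core approximation yielding the energy identity.

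The one genuine difference is your treatment of the massless case. Having shown $\ker\tilde P_0\subset K_0=\bigcap_{\nu}\ker\tilde P_\nu$, you observe that $K_0$ is precisely the fixed-point space of the translation subgroup $T\cong\bR^4$, and invoke normality of $T$ in $\cP$ to conclude $U_g(K_0)\subset K_0$ in one stroke. The paper instead works directly with $\ker\tilde P_0$: invariance under translations and rotations is read off from $[\V{p}_0,\V{p}_j]_\gp=[\V{p}_0,\V{l}_j]_\gp=0$, while invariance under the boosts is obtained by the explicit Lorentz-transformation identity of Lemma~\ref{lietransform}, namely $e^{z\tilde K_i}e^{a\tilde P_0}e^{-z\tilde K_i}=e^{a\cosh z\,\tilde P_0}e^{-a\sinh z\,\tilde P_i}$, together with the already established inclusion $\ker\tilde P_0\subset\ker\tilde P_i$. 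Your normality argument is shorter and more conceptual, and it transports unchanged to any semidirect product $G\ltimes V$; the paper's computation is more hands-on but has the side benefit of isolating the boost identity~(\ref{eqKJ2}), which is reused later in the proof of Theorem~\ref{poinccomplexstructure}.
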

\begin{proof} (a) Let ${\bf M} = {\bf M}^+ = -\V{p}_0^2+\sum_k\V{p}^2_k  \in E_\gp $ be   the  element in the  universal enveloping algebra with  $u({\bf M})=M^2_U$, where $u$ is the associative algebra homomorphism defined in (e) of
 Theorem \ref{teogarding}.   
From (\ref{COMM}) we find $[{\bf M},{\bf A}]_\gp=0$ for every ${\bf A}\in\gp$ and thus
 $[{\bf M}, {\bf N}]_\gp=0$ where ${\bf N}$ is a Nelson element of $E_\gp$. Thus Theorem \ref{teogarding} (g) proves that $M^2_U := u({\bf M})$ is essentially selfadjoint on $D_G^{(U)}$ and  Proposition \ref{irrelie}
implies that $M^2_U= \mu I|_{D_G^{(U)}}$.  In particular $\mu \geq 0$ if and only if $M^2_U\geq 0$. \\
(b) Let us first suppose that $\mu >0$. Let $x\in Ker(\tilde{P}_0)$. Since $D_G^{(U)}$ is a core for  $\tilde{P}_0$ (Theorem \ref{teogarding} (f)), there is a sequence $D_G^{(U)} \ni x_n \to x$ with $\tilde{P}_0x_n \to \tilde{P}_0x =0$. 
As a consequence, taking advantage of  the definition of $M^2_U$,
$$ \sum_{k=1}^3 ( \tilde{P}_k x_n| \tilde{P}_k x_n) = -\mu (x_n|x_n) + ( \tilde{P}_0 x_n| \tilde{P}_0 x_n)\:.$$
For $n \to +\infty$, the right-hand side converges to $-\mu ||x||^2$, so that
\beq \lim_{n \to + \infty}  \sum_{k=1}^3 || \tilde{P}_k x_n||^2 = - \mu ||x||^2\:.\label{LL}\eeq
Since the right-hand side is non-positive whereas the left-hand side is non-negative, we conclude that
$\lim_{n \to + \infty}  \sum_{k=1}^3 || \tilde{P}_k x_n||^2 = - \mu ||x||^2 =0$.
With our hypothesis $\mu >0$, we find $x=0$ and thus $Ker(\tilde{P}_0)= \{0\}$. Let us pass to the remaining case $\mu=0$. Now
(\ref{LL}) implies $\lim_{n \to + \infty}  \sum_{k=1}^3 || \tilde{P}_k x_n||^2=0$ and therefore $\lim_{n \to + \infty} || \tilde{P}_k x_n||^2=0$ for every $k=1,2,3$. 
Since $\tilde{P}_k$ is closed and $x_n \to x \in Ker(\tilde{P}_0)$, we conclude that
$Ker(\tilde{P}_0) \subset D(\tilde{P}_k)$ and, more precisely,  \beq Ker(\tilde{P}_0) \subset Ker(\tilde{P}_k)\:, \quad k=1,2,3\:.\label{KerKer} \eeq
To go on observe that, from Stone theorem,
\beq Ker(\tilde{P}_0) = \{x \in \sH \:|\: e^{t\tilde{P}_0}x =x \quad \forall t \in \bR\} = \{x \in \sH \:|\: U_{\exp (t \V{p_0})}x =x \quad \forall t \in \bR\}\label{KKKP0}\eeq
Since $\exp (t \V{p_0})$ commutes with the one-parameter subgroups generated by $\V{p}_j$
 and $\V{l}_j$, we have from (\ref{KKKP0}) that $Ker(\tilde{P}_0)$ is invariant under the corresponding subgroups unitarily represented through $U$.
However, from (\ref{KerKer}) which immediately implies
\beq  e^{b  \:\tilde{P}_k}x=x  \quad \mbox{if $x \in Ker(\tilde{P}_0)$ and $b \in \bR$}\label{KerKer2}\:,\eeq
we also conclude that $Ker(\tilde{P}_0)$ is invariant under the unitary representation of one-parameter group generated by every $\V{k}_i$.
Indeed, from (\ref{eqKJ2}) and (\ref{KerKer2}), for $x \in Ker(\tilde{P}_0)$,
$$e^{a\tilde{P}_0}e^{-z \tilde{K}_i}x =e^{-z\tilde{K}_i} e^{a\cosh z \:\tilde{P}_0}e^{-a \sinh z  \:\tilde{P}_i}x = e^{-z\tilde{K}_i} e^{a\cosh z \:\tilde{P}_0}x =  e^{-z\tilde{K}_i}x \quad \forall a, z \in \bR\:,$$
hence $e^{z\tilde{K}_i}x \in  Ker(\tilde{P}_0)$ in accordance to (\ref{KKKP0}).
 Since $\cP$ is a connected Lie group, every $g \in \cP$ is the product of a finite number of elements of one-parameter groups generated by the vectors of every
 fixed basis of $\gp$. Lifting this result to the Hilbert space $\sH$ by means of the representation $U$, we conclude that the closed subspace $Ker(\tilde{P}_0)$ is
 invariant under $U$. Since $U$ is irreducible, either $Ker(\tilde{P}_0)= \{0\}$ or $Ker(\tilde{P}_0)= \sH$. In the second case $\tilde{P}_0 =0$ (and more strongly $\tilde{P}_k=0$ for $k=1,2,3$ from (\ref{KerKer})).
In this case $U_{\exp(t\V{p}_0)} =I$ for every $t \in \bR$ and thus $\cP \ni g \mapsto U_g$ is not 
locally faithful
 contrarily to the hypothesis on $U$. We conclude that $Ker(\tilde{P}_0)= \{0\}$ also if $\mu=0$.
\end{proof}

\begin{proof}[Proof of Theorem \ref{poinccomplexstructure}] $\null$

{\bf (a)} First, notice that the polar 
decomposition of $\tilde{P}_0$ is simply given by $J(c^{-1}H)$. Lemma \ref{Mass} (b) says that  $Ker(\tilde{H})=\{0\}$, hence 
Prop.\ref{LemmaCOMM} (d),(e) guarantees that $J$ satisfies $J^*=-J$ and $JJ=-I$.
Let us prove that $J \in \gR_U$. If $B \in \gR'_U$, then $[B, e^{t\tilde{H}}]=0$ for every $t \in \bR$. Prop.\ref{LemmaCOMM} (a) implies that $B\tilde{H}\subset \tilde{H} B$ and thus
$[B,J]=0$ in view of (b) of the same Proposition. In other words $J \in \gR_U''= \gR_U$.
 The proof of  (a) is concluded.

{\bf (b) and (c)}   It should be clear that 
$JU_g =U_gJ$ for all $g \in \cP$ and  $Ju({\bf A})\subset u({\bf A})J$ for every ${\bf A}\in\gp$ are 
 equivalent statements due to Proposition \ref{alggroupcomm} and Corollary \ref{CORRalggroupcomm}. Furthermore $JU_g =U_gJ$ for all $g \in \cP$ is the same as $J \in \gR_U'$ because
   $\gR'_U= \{U_g\}_{g\in \cP}'''=  \{U_g\}_{g\in \cP}'$.  Therefore we will prove  only
   that $JU_g =U_gJ$ for all $g \in \cP$.
   We divide this technical proof into six parts where  we will denote $J$ by $J_H$ for notational convenience.

{\em First part}.  Let ${\bf A}\in \gp$ such that $[\V{p}_0,{\bf A}]=0$, then Baker-Campbell-Hausdorff formula together with  the fact that $U$ is a representation,  give $U_{exp(t\V{p}_0)}U_{exp(s{\bf A})}=U_{exp(s{\bf A})}U_{exp(t\V{p}_0)}$, i.e. 
$e^{t\tilde{P}_0}e^{sA}=e^{sA}e^{t\tilde{P}_0}$ for every $s,t\in\bR$, where $A=\overline{u({\bf A})}$.
Prop. \ref{LemmaCOMM} and \ref{polarCOMM} imply that, for the mentioned  ${\bf A}\in \gp$ commuting with $\V{p}_0$ and referring to their polar decomposition   $A=J_A|A|$, we have (0) $J_He^{tA}=e^{tA}J_H$ (1) $J_HA=AJ_H$, (2) $J_H|A|=|A|J_H$,
 (3) $J_H\sqrt{|A|}=\sqrt{|A|}J_H$ and (4) $J_HJ_A=J_AJ_H$.   Notice that in particular, thanks to point (0), $J_H$ commutes with the one parameter subgroups generated by $\V{p}_0,\V{p}_i,\V{l}_i$. All these identities will be exploited  shortly. 

{\em Second part}. Let us focus attention to the boost generators $\V{k}_i$, the associated (unitary) one parameter subgroups and their  anti selfadjoint generators  $\tilde{K}_i$.
 We want to prove that, exactly as it happens for the already discussed one-parameter subgroups,  $J_He^{z \tilde{K}_i}=e^{z \tilde{K}_i}J_H$ if $z \in \bR$.  To this end, observe that the polar decomposition of the closed
 operator $X:=e^{z\tilde{K}_i}\tilde{H} e^{-z\tilde{K}_i}$ is trivially constructed out of the polar decomposition of $\tilde{H}$ and reads  $X= [e^{z\tilde{K}_i}J_H e^{-z\tilde{K}_i}][e^{z\tilde{K}_i}H e^{-z\tilde{K}_i}]$ since the two factors 
satisfy the requirements listed in  Theorem \ref{Polar} fixing the polar decomposition of $X$. However it  also holds
$X=(J_H)(-J_H e^{z\tilde{K}_i}\tilde{H} e^{-z\tilde{K}_i})$, hence if we succeed in proving that also the couple $U:=J_H, B:=-J_H e^{z\tilde{K}_i}\tilde{H} e^{-z\tilde{K}_i}$ satisfies the 
conditions of Theorem \ref{Polar} and therefore defines another  polar decomposition of $A$, then by uniqueness of the polar decomposition we get in particular that  $J_H=e^{z\tilde{K}_i}J_H e^{-z\tilde{K}_i}$. 
This is our thesis $J_H e^{z\tilde{K}_i}=e^{z\tilde{K}_i}J_H$. 

{\em Third part}. According to the final comment in the second part of the proof, let us prove that the above defined operators  $U,B$ satisfy the  requirements (i)-(iv)  listed in  Theorem \ref{Polar}. Item (i) is true by construction.  Item (iii) is trivial, since $J_H$ is unitary. Item
  (iv) is equivalent to $Ker(B)=\{0\}$ which is immediate since $J_H, e^{\pm z\tilde{K}},\tilde{H}$ are all injective. It remains to prove (ii),  that $B= -J_H e^{z\tilde{K}}\tilde{H} e^{-z\tilde{K}}$ is positive and selfadjoint. 
This third part is  devoted to rephrase the positivity property of  $B$ into a more operative way. 
 It is useful to start  by  considering the generator of 
the space translations $\tilde{P}_i=J_{P_i}P_i$, where $P_i:=|\tilde{P}_i|$, noticing  that $J_{P_i}\sqrt{P_i}\subset \sqrt{P_i} J_{P_i}$ thanks again to Proposition \ref{LemmaCOMM}. 
Furthermore, since $[\V{p}_0,\V{p}_i]=0$, the  identities  established in the first part of this proof  hold  for $A=\tilde{P}_i$.
Thanks to Lemma \ref{lietransform}, we get for $v\in D_G^{(U)}$ that
\begin{equation}
\begin{aligned}
(v|Bv) &=\left(v\left|-J_H e^{z\tilde{K}_i}\tilde{H} e^{-z\tilde{K}_i}\right.v\right)=\left(v
\left|(\cosh z)(-J_H\tilde{H}) v-(c\sinh z )(-J_H\tilde{P}_i) \right. v\right)=\\
&=\cosh z\left((v|H v)-(c\tanh z)(v|(-J_H\tilde{P}_i) v)\right) \:.\label{Bpos}
\end{aligned}
\end{equation}
Since $\cosh z>0$ and $|\tanh  z|<1$ for every $z\in\bR$, in order to prove that $(v|Bv)\ge 0$ for $v \in D_G^{(U)}$  it suffices to prove that $(v|H v)\ge c|(v|(-J_H\tilde{P}_i) v)|$. Define $S:=-J_HJ_{P_i}$, 
which is clearly selfadjoint thanks to the properties of the $J$ operators, then it holds $|(x|Sx)|\le(x|x)$ for every $x\in\sH$ because both operators have norms bounded by $1$.
We have $J_HJ_{P_i}\sqrt{P_i}\subset J_H\sqrt{P}_i J_{P_i}=\sqrt{P_i}J_HJ_{P_i}$, from which it follows that $S\sqrt{P_i}\subset\sqrt{P_i} S$.
Now, let $v\in D_G^{(U)}\subset D(P_i)$. It holds $P_i=\sqrt{P_i}\sqrt{P_i}$, hence $v\in D(\sqrt{P_i})$ and $\sqrt{P_i} v\in D(\sqrt{P_i})$. All this justifies what follows
\begin{equation}
\begin{aligned}
&|(v|-J_H\tilde{P_i} v)|=|(v|-J_HJ_{P_i}P_i v)|=|(v|SP_i v)|=|(v|S\sqrt{P_i} \sqrt{P_i} v)|=\\
&=|(v|\sqrt{P_i} S\sqrt{P_i} v)|=|(\sqrt{P_i} v| S\sqrt{P_i} v)|\le (\sqrt{P_i} v|\sqrt{P_i} v)=(v|P_i v)\:.
\end{aligned}
\end{equation}
Thanks to this inequality, it suffices to prove that $c(v|P_i v)\le (v|H v)$ to conclude from (\ref{Bpos}) that $B\geq 0$
on $D_G^{(U)}$.
 The proof of the positivity property of $B$ ends if establishing  the inequality \beq c(v|P_i v)\le (v|H v)\quad \mbox{if $v \in D_G^{(U)}$}\label{ineq}\eeq and next extending the result to the 
full domain of $B$. This is done within the next part of the proof.

{\em Fourth part}. Thanks to Lemma \ref{Mass}, defining $m^2 = c^{-2}\mu \geq 0$, we have $-\tilde{H}^2 v=(mc^2)^2v-\sum_{k=1}^3c^2\tilde{P}_k^2v$ if $v \in D_G^{(U)}$, from which
\begin{equation}\nonumber
\begin{aligned}
(v|-\tilde{H}^2v)&=m^2c^4-c^2\sum_{i=1}^3(v|\tilde{P}_i^2 v)=m^2c^4+c^2\sum_{i=1}^3(\tilde{P}_i v|\tilde{P}_i v)\ge c^2(\tilde{P}_k v|\tilde{P}_k v)\ \ k=1,2,3
\end{aligned}
\end{equation}
where we supposed, without loss of generality, that $\|v\|=1$.
In other words,
\beq
\|\tilde{H} v\|\ge c\|\tilde{P}_k v\| \quad \mbox{if $v \in D_G^{(U)}$}\label{IneqGard}\:.
\eeq
Our next step consists in proving that (\ref{IneqGard}) extends to the whole $D(\tilde{H})\cap D(\tilde{P}_k)$, which is actually equal to $D(\tilde{H})$. Let us prove this. From 
Theorem \ref{teogarding} (f),  we know  that both $\tilde{H}$ and $\tilde{P}_k$ are the closures of their restriction to $D_G^{(U)}$. So,  if  $v\in D(\tilde{H})$, there exists $\{v_n\}_{n \in \bN}\subset D_G^{(U)}$ such that $v_n\to v$ and $\tilde{H} v_n\to \tilde{H} v$. Thanks to (\ref{IneqGard}), we see that $\{\tilde{P}_k v_n\}_{n \in \bN}$ is a Cauchy 
sequence in $\sH$, thus converging  to some $y\in\sH$. Since  $\tilde{P}_k$ is closed, $v\in D(\tilde{P}_k)$ and $y= \tilde{P}_k v$. This gives $D(\tilde{H})\subset D(\tilde{P}_k)$. Now, we 
have $\|\tilde{P}_k v\|=\lim_{n\to\infty} \|\tilde{P}_k v_n\|\le \lim_{n\to\infty}c^{-1}\|\tilde{H} v_n\|= c^{-1}\|\tilde{H} v\|$, hence (\ref{IneqGard}) is valid on 
$D(\tilde{H})\cap D(\tilde{P}_k) = D(\tilde{H})$.
This result implies (\ref{ineq}) as we go to prove.\\
Everything we have  so far established is valid for both real or complex  $\sH$. Here we make a distinction. First assume that $\sH$ is complex.
 Notice that  the spectral measures of $i\tilde{H}$ and $i\tilde{P}_k$ commute because 
$e^{t\tilde{H}}e^{s\tilde{P}_k}=e^{s\tilde{P}_k}e^{s\tilde{H}}$ for every $s,t\in\bR$
(Thm 9.35 in \cite{M}). As $\sH$ is separable,  this guarantees 
the existence of a {\em joint spectral measure} $E$ on $\bR^2$ (e.g., \cite{M}) such that $f(i\tilde{H})=\int_{\bR^2}f(\lambda_1)dE(\lambda)$ and $f(i\tilde{P}_k)=\int_{\bR^2}f(\lambda_2)dE(\lambda)$ for 
every measurable function $f$ on $\bR^2$ where $\lambda=(\lambda_1,\lambda_2) \in \bR^2$. Moreover $E(\Delta)\tilde{H}\subset \tilde{H}E(\Delta)$ for every Borelian $\Delta\subset\bR^2$, hence in particular it holds $E(\Delta)(D(\tilde{H}))\subset D(\tilde{H})$. Now,
if  $v\in D(i\tilde{H})=D(\tilde{H})$, (\ref{IneqGard}) which is valid on the whole $D(\tilde{H}) = D(\tilde{H})\cap D(\tilde{P}_k)$ as proved above, and (\ref{inin}) yield
\begin{equation}
\begin{aligned}
\int_\Delta|\lambda_1|^2d\mu_v(\vec{\lambda})= 
\int_{\bR^2}|\lambda_1|^2d\mu_{E_\Delta v}(\vec{\lambda}) =
||i\tilde{H} E_\Delta v||^2 \ge c^2||i\tilde{P}_k E_\Delta v||^2 =c^2\int_\Delta|\lambda_2|^2d\mu_v(\vec{\lambda})\:,
\end{aligned}
\end{equation}
for every Borelian 
$\Delta\subset\bR^2$.
So that $\int_\Delta (|\lambda_1|^2-c^2|\lambda_2|^2)d\mu_v\ge 0$ for every Borelian $\Delta$. As a consequence $|\lambda_1|^2-c^2|\lambda_2|^2\ge 0$ almost everywhere $\bR^2$ 
with respect to the measure $\mu_v$. 
This implies that also  $|\lambda_1|\geq c|\lambda_2|$  almost everywhere on $\bR^2$ with respect to $\mu_v$ because 
$$0 \leq (|\lambda_1|^2-c^2|\lambda_2|^2)= (|\lambda_1|-c|\lambda_2|) (|\lambda_1|+c|\lambda_2|)\:.$$
As an immediate consequence, if $v\in D_G^{(U)} \subset D(H)$,
\begin{equation}
(v|H v)=(v||i\tilde{H}|v)=\int_{\bR^2}|\lambda_1|d\mu_v\ge c\int_{\bR^2}|\lambda_2|d\mu_v= c(v||i\tilde{P}_k|v)=c(v|P_k v)\:.
\end{equation}
Since (\ref{ineq}) therefore holds, in accordance with the third part of this proof, this concludes the proof of $(v|Bv)\ge 0$ for every $v\in D_G^{(U)}$ when $\sH$ is complex. \\
If $\sH$ is real, passing to the complexified representation $U_\bC$ (which is not necessarily irreducible differently from $U$ but this fact does not play any role in this part of the proof), 
(\ref{IneqGard})  is valid for $\tilde{H}_\bC$ and $\tilde{P}_{k\bC}$ on $D(\tilde{H}_\bC)$ and the generated one-parameter groups still commute. Thus the proof used for the complex case  is valid for $\tilde{H}_\bC$ and $\tilde{P}_{k\bC}$ giving rise to (\ref{ineq}) on $D_G^{(U_\bC)}$ for the absolute values of these operators, hence for 
 $H_\bC$ and $P_{k\bC}$, because
 $|A_\bC|=|A|_\bC$ (Theorem \ref{Polar} (c)).
By direct inspection, one sees that 
$$c(x+iy|P_{k\bC} (x+iy))_\bC \leq  (x+iy|H_\bC (x+iy))_\bC\quad x+iy \in D_G^{(U_\bC)}$$ 
implies (\ref{ineq}), namely
$$c(v|P_k v)) \leq  (v|Hv)\quad v \in D_G^{(U)}$$
exploiting  $D_G^{(U_\bC)}= (D_G^{(U)})_\bC$   (last statement in Theorem  \ref{DMtheorem})
and the fact that $H= |\tilde{H}|$ and $P_k= |\tilde{P}_k|$ are symmetric on $D_G^{(U)}$.
 Since we have established that (\ref{ineq}) is valid also in the real case, we conclude again from the third part of this proof that $B \geq 0$ on $D_G^{(U)}$ also if $\sH$ is real.\\
Let us finally extend the property $B\geq 0$ to the whole domain of $B$.
$B$ is the composition of $-J_H$, which is unitary, and $e^{z\tilde{K}}\tilde{H} e^{-z\tilde{K}}$ which 
is easily seen to be the closure of its restriction to the G\r arding domain $D_G^{(U)}$
($e^{z\tilde{K}}$ its a bijection from this domain to itself  and this set is a core for $\tilde{H}$, see Theorem \ref{teogarding} (b),(f) and Remark \ref{essentclos} (f)). The same Remark shows that $B$ is closed with $D_G^{(U)}$ as a core. This immediately implies that $B$ is positive on its domain it being positive on a core.\\
Before going on, we observe that the proof to show that (\ref{ineq}) implies $B\geq 0$
on $D_G^{(U)}$  can be rephrased as it stands for the relevant complexified operators of the complexified representation $U_\bC$ if $\sH$ is real,
so that we have also established that $B_\bC = -J_{H_\bC} e^{z\tilde{K}_\bC}\tilde{H}_\bC e^{-z\tilde{K}_\bC}$ is positive on $D_G^{(U_\bC)}$ and where $J_{H_\bC}= (J_{H})_\bC$.
Finally also the proof of the fact that $B$ is positive on its whole domain extends as it stands to $B_\bC$ on its whole domain.

{\em Fifth part}. Now let us prove that $B=B^*$. First assume that $\sH$ is complex. If  $x,y\in D(B)$, comparing the expansions of  $(x+y|B(x+y))$ and
 $(x+iy|B(x+iy))$ taking into account the fact that $(u|Bu)\in \bR$ since $B$ is positive, we easily have $(x|By) = \overline{(y|Bx)}$. In other words, $B\subset B^*$. Now, since $J_H$ is bounded, from $B= -J_H e^{z\tilde{K}}\tilde{H} e^{-z\tilde{K}}$ we have (see Remark \ref{sumselfadj})
$$B^*=e^{z\tilde{K}}\tilde{H}^* e^{-z\tilde{K}}(-J_H)^*=-e^{z\tilde{K}}\tilde{H} e^{-z\tilde{K}}J_H$$
so that $B \subset B^*$ can be rephrased as
$-J_H e^{z\tilde{K}}\tilde{H} e^{-z\tilde{K}} \subset -e^{z\tilde{K}}\tilde{H} e^{-z\tilde{K}}J_H$.
Applying $J_H$ on the left side and $-J_H$ on the right one, we find
$-e^{z\tilde{K}}\tilde{H} e^{-z\tilde{K}}J_H \subset -J_He^{z\tilde{K}}\tilde{H} e^{-z\tilde{K}}$,
that is
$B^*\subset B$ and thus
 $B=B^*$.\\
If $\sH$ is real, as observed at the end of the fourth part of this proof,  we know that $B_\bC \geq 0$ and, exploiting again the proof above for the complexified operators, we have $B_\bC = B^*_\bC$. Proposition \ref{prop2} (2) finally implies $B=B^*$.\\
We have so far established that $U, B$ satisfy the requirement listed in requirements (i)-(iv)  listed in  Theorem \ref{Polar}, so that 
$J_H$ commutes with the unitary representation of the one-parameter groups generated by $\V{k}_i$.

{\em Sixth part}.  Let us conclude our proof by establishing that  $J_H U_g = U_g J_H$ is true for every $g \in \cP$.
From the previous five parts of the proof we know that $J_H$ commutes with the unitary representations of the
one-parameter subgroups generated by each element of the natural basis of the Lie algebra of $\gp$ made of the vectors  $\V{p}_\mu$, $\V{l}_k$, $\V{k}_k$. As is well-known, there is a sufficiently small neighborhood $O$ of the identity element of a Lie group  group ($\cP$ in our case) whose elements are products of a finite number of elements on one-parameter subgroups generated by an arbitrarily fixed basis of the Lie algebra. Also,  if the group is connected (as $\cP$ is), every element of the group can be written as product of finite elements chosen in a, arbitrarily fixed, neighborhood of the identity element of the group. Since $J_H$ commutes with the unitary representations of the
one-parameter subgroups generated by each element of the natural basis of the Lie algebra of $\gp$, it therefore commutes with every element $U_g$ of the representation: $J_HU_g=U_gJ_H$ for every $g\in \cP$ concluding the proof of (b)   and (c).

{\bf (d)} First of all, notice that, in view of Prop.\ref{alggroupcomm}, $J_1 u({\bf A}) \subset u({\bf A}) J_1$ implies $J_1 U_g = U_g J_1$ for every $g\in \cP$. In other words $J_1 \in \gR_U'$. We therefore may assume  $J_1 U_g = U_g J_1$ in every case.
The proof of (d) in the complex case immediately arises from (g) whose proof is independent from this one.  Let $\sH$ be real.  Exploiting Prop.\ref{LemmaCOMM} (a) and (b) for the one-parameter group generated by $\tilde{P}_0$, we get  $J_1J=JJ_1$. This means that $J_1$ is complex linear with respect to the complex structure induced on $\sH$ by $J$. 
Since $U$ is irreducible with respect to $\sH$, it is irreducible also referring to $\sH_J$. Since $J_1$ commutes with every $U_g$, the complex  version of Schur lemma
for complex Hilbert spaces implies  $J_1=(\alpha+\beta i)I = \alpha I+\beta J$ for some $\alpha,\beta \in \bR$. Since both $J,J_1$ are simultaneously anti selfadjoint and unitary it follows that $\alpha=1$ and $\beta=\pm 1$.

{\bf (e)} The fact that $J\overline{u({\bf A})}= \overline{u({\bf A})}J$ is equivalent to the already proved statement in (c) due to Prop.\ref{alggroupcomm} and Corollary \ref{CORRalggroupcomm}. Since $J$ is bounded, it holds $(J\overline{u({\bf A})})^*=\overline{u({\bf A})}^*J^*=\overline{u({\bf A})}J=J\overline{u({\bf A})}$ (see Remark \ref{sumselfadj}): this proves that $J\overline{u({\bf A})}$ is selfadjoint. It remains to prove that its PVM is contained in $\gR_U$. Let $B\in \gR_U'$, then $Be^{t\overline{u({\bf A})}}=BU_{\exp(t{\bf A})}=U_{\exp(t{\bf A})}B=e^{t\overline{u({\bf A})}}B$, from which it follows $B\overline{u({\bf A})}\subset \overline{u({\bf A})}B$ thanks to Prop.\ref{LemmaCOMM} (a). This inclusion, together with $J\in\gR_U$, gives $B[J\overline{u({\bf A})}]\subset [J\overline{u({\bf A})}]B$ and so, Th.\ref{st} (c) (ii) guarantees that the PVM of $J\overline{u({\bf A})}$ commutes with $B$. It being $B\in\gR_U'$ generic and $\gR_U=\gR_U''$, we have the thesis.

{\bf (f)} Everything  is easily established by collecting the previously obtained results.
In particular, since $J \in \gR_U \cap \gR_U'$ both the commutant $\{U_g,\ g\in \cP\}'=\gR_U'$ and the double commutant 
$\{U_g\}''_{g \in \cP}= \gR_U$ computed in
 $\sH$ coincides with the respective commutant and double commutant in $\sH_J$. Since $\gR_U$ is irreducible in $\sH$ it is also irreducible in $\sH_J$. Due to proposition \ref{SL2}, $\gR_U'= \{cI\}_{c \in \bC}$ so that  $\gR_U = \gR_U''= \{cI\}_{c \in \bC}' = \gB(\sH_J)$ and thus 
$\cL_{\gR_U} = \cL_{\gB(\sH_J)}= \cL(\sH_J)$.

{\bf (g)} Item (b)  together with the complex case of   Schur lemma guarantees that $J=cI$, for some $c\in\bC$.  Since $J$ is anti selfadjoint and unitary it must be $c=\pm 1$. The proof of the second part is immediate.
\end{proof}
\begin{remark}$\null$\\
{\em {\bf (1)} A (real or complex) WRES  admits in particular the following observables associated with the generators of $U$ respectively called: {\bf energy}, {\bf three momentum components},
{\bf angular momentum components}, {\bf boost components}:
$$H = -J \tilde{H}\:,\quad  P_k = -J \tilde{P}_k\:, \quad  L_k = -J \tilde{L}_k\:, \quad  K_k = -J \tilde{K}_k\:,  \quad k=1,2,3\:.$$
All these observables but $K_k$ are constants of motion since they commutes (like $J$ does) with the time evolution generated by $H$. (Each $K_k$ define a $t$-parametrized constant of motion e.g., see \cite{M}.)
This system has ``non-negative energy'' $H = -J \tilde{H} = |\tilde{H}|\geq 0$ and ``non-negative squared  mass`` $m^2 \geq 0$, where  $\overline{M^2_U} = m^2 I$, if the positivity hypothesis on $M_U^2$ is assumed.
The case of vanishing mass is encompassed in the proved theorem  in spite of the absence of the position operator. At least in  this case Heisenberg principle cannot be stated and St\"uckelberg's argument cannot be used to rule out real quantum mechanics.\\
{\bf (2)} The found representation is
 an irreducible unitary complex strongly-continuous
 representation of $SL(2,\bC) \ltimes \bR^4$
which is also
 locally faithful and thus, in particular, the representation of its Abelian  subgroup $\bR^4$
admits non-vanishing self-adjoint generators. It therfore  admits the well-known features known from Wigner-Mackey theory (see e.g. \cite{V2}).  For positive mass $m^2$,
it must include an irreducible  finite-dimensional representation of $SU(2)$ and thus a definite value of the {\em spin}, constructed out of the {\em Pauli-Lubanski} vector.
If $m^2=0$, the representation either admits definite {\em helicity} or a non-physical infinite-dimensional spin representation.
We do not enter into the details of these structure because this subject is very well known.\\
{\bf (3)} Physically speaking, (b) has an important implication.
If $J_1$ is associated to  the generator of temporal displacements of a different Minkowskian reference frame, connected with the initial one by means of a transformation of $\cP$, then $J_1=J$. Indeed, observe that, changing Minkowskian reference frame 
by means of $g \in \cP$, the new generator $\tilde{H}'$ of the time displacements is expected to be  related to the initial one by means of the relation $\tilde{H}' = U_g \tilde{H} U_g^*$.
The unitary operator $J'$ appearing in the polar decomposition of $\tilde{H}'$ is therefore $J' = U_g J U_g^* = U_gU_g^*J = J$. In other words, the complex formulation is {\em independent} from the choice of Minkowskian reference frame.}
\end{remark}
\noindent  We have  a final  technical  corollary regarding the interplay of $J$ and the extension of  $u$ over the whole $E_\gp $ also establishing how  unbounded observables are generated by $J$ and elements of $E_\gp$.

\begin{corollary}\label{coraggEgoss}
In the same hypotheses of Theorem \ref{poinccomplexstructure}, it holds 
$$J(D_G^{(U)}) = D_G^{(U)} \quad \mbox{and}\quad J (u({\bf M})+Ju({\bf N})) = (u({\bf M})+Ju({\bf N}))J\quad \mbox{for every ${\bf M}, {\bf N}\in E_\gp$.}$$
More strongly
$$J \overline{u({\bf M}) + Ju({\bf N}) } = \overline{u({\bf M}) + Ju({\bf N}) }J\quad \mbox{for every ${\bf M}, {\bf N}\in E_\gp$.}$$
If $\overline{u({\bf M}) + Ju({\bf N}) }$ is selfadjoint, then it is  an observable of the WRES.
\end{corollary}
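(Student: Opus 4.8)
The plan is to prove every commutation relation first on the invariant Gårding domain $D_G^{(U)}$, then lift it to the closures by exploiting the boundedness and invertibility of $J$, and only at the end deduce that the selfadjoint closures are observables. To begin, I record the domain equality: by Theorem \ref{poinccomplexstructure}(c), $Ju({\bf A})\subset u({\bf A})J$ for every ${\bf A}\in\gp$, so $J$ leaves $D_G^{(U)}$ invariant, $J(D_G^{(U)})\subset D_G^{(U)}$; since $J$ is a complex structure its inverse is $-J$, which enjoys the same property, whence $J(D_G^{(U)})=D_G^{(U)}$. On this domain $J$ commutes with each generator, $Ju({\bf A})x=u({\bf A})Jx$ for $x\in D_G^{(U)}$. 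Because $u$ is the unital algebra homomorphism of Theorem \ref{teogarding}(e), every $u({\bf M})$ with ${\bf M}\in E_\gp$ is on $D_G^{(U)}$ a real linear combination of $c_0 I$ and products $u({\bf A}_{1})\cdots u({\bf A}_{k})$ of generators; using that the generators preserve $D_G^{(U)}$ (Theorem \ref{teogarding}(c)) together with the $J$-invariance of $D_G^{(U)}$, I can slide $J$ across each factor one at a time, obtaining $Ju({\bf M})=u({\bf M})J$ on $D_G^{(U)}$. Since $J$ commutes with itself, $J(Ju({\bf N}))=J^2u({\bf N})=(Ju({\bf N}))J$ there as well, which gives the identity $J(u({\bf M})+Ju({\bf N}))=(u({\bf M})+Ju({\bf N}))J$ as an equality of operators with common domain $D_G^{(U)}$.

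Write $T:=u({\bf M})+Ju({\bf N})$. I first note that $T$ is closable, because its adjoint is densely defined: using $u({\bf M}^+)\subset u({\bf M})^*$ from Theorem \ref{teogarding}(e), the boundedness of $J$ and $J^*=-J$, one gets $T^*\supset u({\bf M}^+)-u({\bf N}^+)J$ on $D_G^{(U)}$. The relation $JT=TJ$ on $D_G^{(U)}$ just obtained means $JTJ^{-1}=T$, so the bounded invertible map $J\oplus J$ of $\sH\oplus\sH$ sends the graph of $T$ onto itself; being a homeomorphism, it then sends $\overline{G(T)}=G(\overline{T})$ onto itself, which is precisely the identity $J\overline{T}=\overline{T}J$. (Equivalently, one applies Remark \ref{remarkclosure} to the bounded operator $J$ to get $J\overline{T}\subset\overline{T}J$ and to $-J$ to recover the reverse inclusion, invertibility converting the inclusion into an equality.) This establishes the stronger statement.

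Finally, assume $\overline{T}$ is selfadjoint and let $V\in\gR_U'$ be unitary. As recorded in the remark following Definition \ref{defERS}, $Vu({\bf M})\subset u({\bf M})V$ and $Vu({\bf N})\subset u({\bf N})V$; moreover $V$ commutes with $J$ because $J\in\gR_U$ by Theorem \ref{poinccomplexstructure}(a), so $VJu({\bf N})=JVu({\bf N})\subset Ju({\bf N})V$. Adding, $VT\subset TV$ on $D_G^{(U)}$, and the boundedness of $V$ promotes this to $V\overline{T}\subset\overline{T}V$. Thus $\overline{T}$ is affiliated to $\gR_U$, and by the affiliation characterization of observables (the remark following Definition \ref{defERS}), equivalently by Theorem \ref{st}(c)(ii) together with $\gR_U=\gR_U''$, the PVM of $\overline{T}$ lies in $\gR_U$; that is, $\overline{T}$ is an observable of the WRES. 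The one genuinely delicate point is the passage from the core $D_G^{(U)}$ to the closure while keeping an honest equality (not a mere inclusion) of the unbounded composites $J\overline{T}$ and $\overline{T}J$: it is exactly the invertibility of the complex structure $J$, with $J^{-1}=-J$ again bounded, that rescues the equality, since a one-sided inclusion produced by a bounded operator would not by itself suffice.
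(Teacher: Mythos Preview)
Your proof is correct and follows essentially the same route as the paper for the domain equality, the commutation on the G\r{a}rding domain, and its extension to the closures. You are in fact more careful than the paper in two places: you explicitly verify that $T=u({\bf M})+Ju({\bf N})$ is closable, and you spell out why the closure identity is a genuine equality rather than a one-sided inclusion (via the invertibility of $J$, equivalently the unitarity argument in Remark~\ref{essentclos}(f)).

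The only noticeable divergence is in the final step. The paper argues that, once $J\overline{T}=\overline{T}J$ is known, Theorem~\ref{st}(c)(ii) forces the PVM of the selfadjoint $\overline{T}$ to commute with $J$, hence to lie in $\gB(\sH_J)$, which equals $\gR_U$ by Theorem~\ref{poinccomplexstructure}(f). You instead verify affiliation directly: for every unitary $V\in\gR_U'$ you show $V\overline{T}\subset\overline{T}V$, using $J\in\gR_U$ and the remark after Definition~\ref{defERS}. Both arguments are valid; the paper's is shorter because it cashes in the structural identification $\gR_U=\gB(\sH_J)$, whereas yours would survive even without that identification.
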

\begin{proof}
We know from (c) that $J(D_G^{(U)}) \subset D_G^{(U)}$ and thus $J(J(D_G^{(U)})) \subset J(D_G^{(U)})$ which implies $D_G^{(U)}\subset J(D_G^{(U)})$ because $JJ=-I$ and $D_G^{(U)}$ is a subspace and thus 
$J(D_G^{(U)}) = D_G^{(U)}$. Next,
from the very definition of $u({\bf M})$
(the properties of the associative unital $^*$-algebra homomorphism $u$ of (e) in Theorem \ref{teogarding})
 and (c) we also have $J u({\bf M})\subset u({\bf M})J$ which can be made more precise into $J u({\bf M})=u({\bf M})J$ because $J^{-1}(D_G^{(U)}) = D_G^{(U)}$ is the domain of $u({\bf M})J$ by definition of composite operators, but  $D_G^{(U)}$ is also the domain of $J u({\bf M})$ since $D(J)= \sH$. 
With a trivial extension of the used argument,  $J (u({\bf M}) + Ju({\bf N})) = (u({\bf M}) + Ju({\bf N}))J$
 Taking the closures of both sides, since $J$ is bounded, we have
$J \overline{u({\bf M}) + Ju({\bf N}) } = \overline{u({\bf M}) + Ju({\bf N}) }J$. 
Regarding the last sentence, if $\overline{u({\bf M}) + Ju({\bf N}) }$ is selfadjoint, due to the second statement above and Th.\ref{st} (c) (ii), its  PVM  commutes with $J$  and thus belongs to $\gB(\sH_J)= \gR_U$ namely to $\cL_{\gR_U}(\sH)$ so that $\overline{u({\bf M}) + Ju({\bf N}) }$ is an observable of the WRES.
\end{proof}

\section{A physically more accurate  approach: Emergence of the complex structure}\label{secMAIN2}
The discussion in the previous section though leading to an interesting final result was based on a not very accurate notion of relativistic elementary physical system as presented in Definition \ref{defERS} in terms of {\em Wigner elementary physical system}. 
The first problem with that definition
 concerns the nature of the Poincar\'e representation. In principle, differently from  the complex case where 
everything is consequence of well-known Wigner's and Bargmann's theorems, in real Hilbert spaces there  is  no  reason to assume that the Poincar\'e symmetry is implemented (a) in terms of a {\em strongly-continuous} representation and (b) without taking possible {\em multiplicators} in front of the unitary operators into account. 
From the most general viewpoint, relying on the content of Sec.\ref{SecI1}, we should instead  assume that the action of Poincar\'e group is given  in terms of automorphisms of the lattice of elementary propositions of the system. Furthermore, the natural  notion of continuity of this sort of representation might  concern the probabilities associated to every state 
on the system, viewed as a $\sigma$-additive probability measure on the afore-mentioned lattice.
Secondly, two notions of irreducibility actually appeared into a mixed form in Definition \ref{defERS}, one concerned the group representation and the other regarding the algebra of observables. In principle one 
may try to  keep these two notions distinct from each other or prove that one is a consequence of the other.  \\
The next discussion will be performed for a  real  Hilbert space $\sH$ only. 
 The quaternionic case will be analyzed elsewhere in its whole generality. 
 The extension of our approach to the complex case would lead again to the result stated in Theorem \ref{poinccomplexstructure} for the complex Hilbert space case as the reader may straightforwardly prove using a proof strictly analogous to that  of Theorem \ref{bargmann}. Therefore the apparently rigid definition of complex WRES (Definition \ref{defERS}) is actually completely appropriate to describe 
{\em complex} elementary relativistic systems.\\
In the rest of this section we will make use of some notions and results of quaternionic Hilbert spaces already exploited in \cite{V2} and summarized in Appendix \ref{QHS}. For recent papers on this subject where the spectral theory is developed in details see  \cite{GMP1} and \cite{GMP2}.  
Here we just point out how a quaternionic Hilbert space can be obtained from a real one.
If $\sH$ is a real Hilbert space admitting two complex structures $J,K$ such that $JK=-KJ$, a quaternionic Hilbert space $\sH_{J,K}$ can be constructed out of $\sH$ in a fashion similar to the procedure to build the complex Hilbert space $\sH_J$
from the real Hilbert space $\sH$.
The elements of $\sH_{J,K}$ are the vectors of $\sH$ viewed as additive group and equipped with the right  product of vectors and quaternions
$$\psi (a1+ bi+cj+dk) := a\psi + bJK \psi + cJ \psi + d K\psi \quad \mbox{ for $a,b,c,d \in \bR$ and $\psi \in \sH$}$$ and the Hermitian scalar product is 
$$( \psi | \phi)_{J,K} := (\psi|\phi) - i (\psi|JK \phi) - j (\psi| J \phi) - k (\psi| K \phi)\:.$$
It turns out that   
$||x||_{J,K} := \sqrt{( x | x)_{J,K}}= ||x||$  for every $x\in \sH_{J,K}$  so that, in particular,  $\sH_{J,K}$ is complete because $\sH$ is. 
$\gB(\sH_{J,K})$ coincides with the subset of $\gB(\sH)$
whose elements commute with $J$ and $K$. 
It is  easy to prove that 
the adjoint of $A\in \gB(\sH)$ commuting with $J$ and $K$ coincides with the adjoint of $A$ viewed as an element of $\gB(\sH_{J,K})$. In particular $\cL(\sH_{J,K})$ coincides with the subset of $\cL(\sH)$
whose elements commute with $J$ and $K$. Moreover if $U\in\gB(\sH)$ commutes with $J,K$ then it is unitary on $\sH$ if and only if it is unitary on $\sH_{J,K}$.

\subsection{General  elementary quantum  systems, states and symmetries}
To describe an elementary physical system in the absence of any group of symmetry, it seems to be reasonable to assume that there is a von Neumann algebra of observables $\gR$, represented over the  Hilbert space $\sH$. The proper  observables of the system are the selfadjoint operators whose spectral measures belong to $\gR$  and the elementary observables are the elements of  the  lattice of projectors $\cL_{\gR}(\sH)$.  An elementary system should not have superselection rules (or we can always restrict ourselves to deal with a single superselection sector) so that it is natural to assume that the center of $\cL_{\gR}(\sH)$ is trivial. This is equivalent to say that there are no orthogonal projectors in the center $\gZ_\gR$ of $\gR$ different form  $I$ and $0$.  A further reasonable  requirement for an elementary system is that $\gR$ is irreducible, that is $\gR'$ does not contain non-trivial orthogonal projectors: these projectors could be interpreted as elementary observables of another external system, 
whereas we would like that our elementary system be the complete system we are dealing with.
 Supposing that the center of $\gR$ is trivial, the request that $\gR$ is irreducible  may also be justified by assuming that $\gR$ includes a so-called {\bf  maximal set of compatible  observables} $\cA$ \cite{BeCa} as it happens in several concrete examples for systems described in complex Hilbert spaces. A maximal set of compatible  observables, by definition, is a subset of self-adjoint operators $\cA \subset \gR$ such that (a) the  elements  of $\cA$ are pairwise compatible and (b) if $T \in \gB(\sH)$ is 
selfadjoint and commutes with each element of $\cA$,  then $T$ is a {\em function} of the observables of $\cA$ in the sense that $T \in \cA''$. Under these hypotheses  $\gR$ is irreducible. Indeed, consider a closed subspace invariant under $\gR$. Its orthogonal projector $P$ therefore satisfies $P\in \gR'$ and thus $P \in \cA'$ in particular.  The hypothesis on $\cA$ yields  $P \in \cA'' \subset \gR'' =  \gR$.  We have proved that $P \in \gR \cap \gR'$ which we have assumed to  be  trivial,  so that $P=0$ or $P=I$and thus $\gR$ is irreducible.

\noindent   All that gives rise to the following definition.
\begin{definition}\label{generalelementarysystem}
{\em A real  {\bf  elementary system} is an irreducible von Neumann algebra $\gR$  over the real separable Hilbert space $\sH$.}
\end{definition}

\begin{remark}$\null$\\ {\em
If $\sH$ were complex in the definition above, we would find the  trivial result $\gR' = \{cI\:|\: c \in \bC\}$ as we know from Proposition \ref{SL}, so that $\gR= \gB(\sH)$ necessarily.}\end{remark}
\noindent Things dramatically change when  $\sH$ is real, for two main reasons. The first difference regards the role of the  lattice $\cL_\gR(\sH)$ represented  by the elementary observables of the system.
Focussing on the  von Neumann algebra $\gO:=\cL_\gR(\sH)''$, we know from Th.\ref{teopropvnA} (e) that, unlike the complex case, $\gO$ is a proper subalgebra of $\gR$ in general.  Nevertheless,  a direct check shows that $\cL_\gO(\sH)=\cL_\gR(\sH)$ so that  the same  lattice of propositions  is shared by two different von Neumann algebras, one properly contained in the other. However, differently from $\gR$, its subalgebra  $\gO$ is not necessarily irreducible and it 
may not represent an elementary system according to our definition.  We will not address this issue any further here sticking to Def. \ref{generalelementarysystem} but  leaving open the possibility  that $\gR\setminus\gO$ contains some relevant operators for the description of the system. 
The second important difference from the complex case  concerns the commutant of the irreducible algebra $\gR$ which, in the real case, may have  three different  forms, as the following result clarifies.
\begin{theorem}\label{threecommutant} Let $\gR$ be a von Neumann algebra on the real Hilbert space $\sH$. If $\gR$ is irreducible, then the following facts hold.\\

\noindent {\bf (a)} $\gR'$ is of three possible mutually exclusive types listed below.

(i)  $\gR'= \{a I\:|\: a\in \bR\}$ ({\bf real-real type}).\\

 (ii)   $\gR'= \{aI + bJ\:|\:a,b \in \bR\}$  where $J$ is a complex structure determined up to its sign. Furthermore $J \in \gR$  ({\bf real-complex type}).\\

 (iii)   $\gR' = \{aI + bJK + cJ + dK\:|\:a,b,c, d \in \bR\}$ 
where $J,K$ and  $JK=-KJ$ are complex structures. Furthermore $J,K, JK \not \in \gR$ ({\bf real-quaternionic type}).\\

\noindent {\bf (b)} Correspondingly, $\gR$, $\gZ_\gR$, and $\cL_\gR(\sH)$ are of three possible mutually exclusive types:

(i)  $\gR = \gB(\sH)$,  $\gZ_\gR = \{aI\:|\: a \in \bR\}$ and $\cL_\gR(\sH)= \cL(\sH)$ ({\bf real-real type}).\\

(ii) $\gR = \gB(\sH_J)$, $\gZ_\gR = \gR'=\{a I+BJ\:|\: a,b\in \bR\}$ and $\cL_\gR(\sH)= \cL(\sH_J)$ ({\bf real-complex type}). \\

(iii) $\gR = \gB(\sH_{J,K})$,  $\gZ_\gR = \{aI\:|\: a \in \bR\}$  and $\cL_\gR(\sH)= \cL(\sH_{J,K})$ ({\bf real-quaternionic type}).
\end{theorem}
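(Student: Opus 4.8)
My plan is to determine the real von Neumann algebra $\gR'$ completely and then read off the structure of $\gR=\gR''$, of $\gZ_\gR=\gR\cap\gR'$ and of $\cL_\gR(\sH)$ by double–commutant arguments. First I would use that, by Theorem \ref{teopropvnA}(c), irreducibility of $\gR$ means exactly $\cL_{\gR'}(\sH)=\{0,I\}$. If $A=A^*\in\gR'$, then by Theorem \ref{teopropvnA}(a) all spectral projectors of $A$ lie in $\gR'$, hence equal $0$ or $I$, and the Schur–type argument of Proposition \ref{SL} (the spectral measure must be a point mass) forces $A=aI$. Decomposing a generic $B\in\gR'$ into its self–adjoint and anti–self–adjoint parts, the symmetric summand is a real multiple of $I$, so $\gR'=\bR I\oplus V$ with $V$ the real subspace of anti–self–adjoint elements of $\gR'$. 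For $0\neq A\in V$ one has $(x|A^2x)=-\|Ax\|^2\le 0$, and $A^2\in\gR'$ is self–adjoint, so $A^2=-cI$ with $c>0$ and $A/\sqrt c$ is a complex structure: every nonzero element of $V$ is a positive multiple of a complex structure. (Alternatively this local form follows from Proposition \ref{SL2} applied to the irreducible unitary family $\gU_\gR$ of Remark \ref{unitgeneralg}.)

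Next I would record the algebraic rigidity of $\gR'$. Writing $B=aI+A$ with $A\in V$, one computes $BB^*=a^2I-A^2=(a^2+c)I>0$ when $B\neq0$, so every nonzero element is invertible and $\gR'$ is a real division algebra. On $V$ I would introduce the symmetric bilinear form $\beta$ defined by $AB+BA=\beta(A,B)I$; the left–hand side is self–adjoint and lies in $\gR'$, hence is a real multiple of $I$ by the previous paragraph. Since $\beta(A,A)I=2A^2=-2cI$, the form $-\frac{1}{2}\beta$ is positive definite, making $V$ a real Euclidean space whose unit vectors are precisely the complex structures in $V$, any two orthogonal ones anticommuting.

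The crux — and the step I expect to be the main obstacle — is the dimension count $\dim_\bR V\in\{0,1,3\}$, which is a hands–on version of Frobenius' theorem tailored to $\gR'$. If $J_1,J_2\in V$ are orthonormal, then $J_1J_2$ is anti–self–adjoint with $(J_1J_2)^2=-I$, so $J_1J_2\in V$, and a direct computation using the anticommutation relations gives $\beta(J_1J_2,J_1)=\beta(J_1J_2,J_2)=0$; hence $\{J_1,J_2,J_1J_2\}$ is orthonormal and $\dim V\ge 3$ as soon as $\dim V\ge 2$. If a fourth orthonormal $J_4\in V$ existed, then $J_1J_2J_4$ would be self–adjoint, hence $=\gamma I$; invertibility forbids $\gamma=0$, while $\gamma\neq 0$ yields $J_4=-\gamma J_1J_2$, contradicting $J_4\perp J_1J_2$. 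Thus $\dim V\le 3$, so $\dim V\in\{0,1,3\}$, and these values are mutually exclusive invariants. They give respectively $\gR'=\bR I$; $\gR'=\{aI+bJ\}_{a,b\in\bR}\cong\bC$ (with $J$ the unit vector of the line $V$, unique up to sign); and $\gR'=\{aI+bJK+cJ+dK\}_{a,b,c,d\in\bR}\cong\bH$, after choosing an orthonormal basis $J,K$ of $V$ and normalizing signs so that $JK=-KJ$.

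Finally I would transfer this to $\gR$, $\gZ_\gR$ and $\cL_\gR(\sH)$ by taking commutants. In the real–real case $\gR=(\bR I)'=\gB(\sH)$, $\gZ_\gR=\bR I$, $\cL_\gR(\sH)=\cL(\sH)$. In the real–complex case $\gR'$ is abelian, so $\gR'\subseteq\gR''=\gR$; hence $J\in\gR$ and $\gZ_\gR=\gR\cap\gR'=\gR'=\{aI+bJ\}$. Since $\gR$ commutes with $J$ it is a complex von Neumann algebra on $\sH_J$ whose commutant there is $\gR'\cap\{J\}'=\bC I$, so $\gR$ is irreducible on $\sH_J$ and $\gR=\gB(\sH_J)$ by the complex double commutant (Theorem \ref{teopropvnA}(d)), whence $\cL_\gR(\sH)=\cL(\sH_J)$. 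In the real–quaternionic case $\gZ_\gR=\gR\cap\gR'$ equals the centre of $\gR'\cong\bH$, namely $\bR I$; in particular $J,K,JK\notin\gR$ (otherwise they would lie in $\gR\cap\gR'=\bR I$). As $\gR$ commutes with $J$ and $K$, it is a quaternionic von Neumann algebra on $\sH_{J,K}$ with commutant $\gR'\cap\{J,K\}'=\bR I$, hence irreducible, so $\gR=\gB(\sH_{J,K})$ and $\cL_\gR(\sH)=\cL(\sH_{J,K})$. The mutual exclusivity in (b) then mirrors that of (a).
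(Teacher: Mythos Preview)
Your proof is correct and the overall architecture matches the paper's, but the key classification step for $\gR'$ is handled differently. The paper first shows (via Proposition \ref{SL2}) that every $A\in\gR'$ has the form $aI+bL$ with $L$ a complex structure depending on $A$, then verifies that $\gR'$ is a real unital normed algebra with multiplicative norm $\|aI+bL\|^2=a^2+b^2$, and finally invokes the Urbanik--Wright theorem on absolute-valued algebras to conclude that $\gR'$ is isomorphic to $\bR$, $\bC$ or $\bH$. Having only the abstract isomorphism $h$, the paper must then separately prove that $J:=h^{-1}(i)$ (and $K:=h^{-1}(j)$) are anti-selfadjoint by an ad hoc argument using $J^*J=aI$.

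Your route is more elementary and self-contained: you decompose $\gR'=\bR I\oplus V$ from the outset with $V$ the anti-selfadjoint part, introduce the Euclidean form $-\tfrac12\beta$ on $V$, and run a bare-hands Frobenius argument to force $\dim_\bR V\in\{0,1,3\}$. This bypasses the external reference \cite{UW} entirely, and since you work inside $V$ your complex structures are anti-selfadjoint by construction, so the paper's extra verification is unnecessary. For part (b) the paper is slightly more direct (it observes immediately that $\gR=\gR''=\{J\}'$, resp.\ $\{J,K\}'$, is literally $\gB(\sH_J)$, resp.\ $\gB(\sH_{J,K})$, as a set), whereas you pass through an irreducibility argument on $\sH_J$ or $\sH_{J,K}$; both reach the same conclusion, and your identification $\gZ_\gR=$ center of $\gR'$ is a clean way to read off the centres in all three cases.
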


\begin{proof} (a) Let $A \in \gR'$.
Dealing with as in the proof of (i) in Proposition \ref{SL2}, we have that 
 $A= aI+bL$ for some $a,b \in \bR$ and some complex structure $L$ depending on the element $A$. $\gR'$ is an  real associative unital normed algebra with the further property that $||AB||= ||A||\:||B||$. Indeed, by direct computation $||(aI+ bL)x||^2= (a^2+b^2) ||x||^2$ so that $||aI+bL||^2= a^2+b^2$. Furthermore, iterating the procedure, where $L'$ is another complex structure,  $||(aI+bL)(a'I+b'L')x||^2 = (a^2+b^2)(a'^2+b'^2) ||x||^2 = ||aI+bL||^2\:||a'I+b'L'||^2||x||^2$ and thus
$||(aI+bL)(a'I+b'L')||= ||aI+bL||\:||a'I+b'L'||$. Next, a known result \cite{UW} establishes that, as $\gR'$ is a real associative unital normed algebra 
where $||AB||= ||A||\:||B||$, there must exist a  real associative unital  normed algebra isomorphism $h$ from  $\gR'$ to $\bR$, $\bC$ or the algebra of quaternions $\bH$. 
In the first case it simply holds $\gR'=h^{-1}(\bR)=\{aI\:|\:a\in\bR\}$.
In the second case $\gR' =h^{-1}(\bC) = \{aI + bJ\:|\:a,b \in \bR\}$ where 
$J := h^{-1}(i)$. As $h^{-1}$ is an isomorphism $JJ= h^{-1}(jj) = h^{-1}(-1) =-I$. In the third case $\gR' = h^{-1}(\bH) = \{aI + bJ + cK + dJK\:|\:a,b,c, d \in \bR\}$ with
$J := h^{-1}(j)$,  $K := h^{-1}(k)$,  $JK := h^{-1}(i)$ where  $i,j,k \in \bH$ (with $i=jk=-kj$) are the three imaginary quaternionic units. Again, as in the real-complex case, we get $JJ= h^{-1}(jj) = h^{-1}(-1) =-I$
and $KK= h^{-1}(kk)= h^{-1}(-1) =-I$. 
Let us prove that $J$ in the real-complex case and $J,K$ in the real-quaternionic one are anti selfadjoint concluding that they are complex structures. The proof is the same for all of them, so take $J$. Since $\gR'$ is a $\null^*$-algebra, it holds $J^*\in\gR'$, in particular $J^*J\in\gR$ which is clearly self-adjoint and positive. Being $\gR$ irreducible, Lemma \ref{SL} guarantees that $J^*J=aI$ for some $a\ge 0$. Multiplying both sides on the right by $-J$, keeping in mind that $JJ=-I$, we get $J^*=-aJ$. Now, if we take the adjoint on both sides we obtain $J=-aJ^*$ which in particular assures that $a\neq 0$, $J$ being a unitary operator. So, $J^*=-\frac{1}{a}J$. Putting all together we have $0=J^*-J^*=\left(a-\frac{1}{a}\right)J$. Again, since $J$ is unitary it must be $a=\frac{1}{a}$, hence $a=1$, concluding the proof of the anti-selfadjointness of $J$.
$JK$ turns out to be a complex structure since $J$ and $K$ are complex structures and $JK=-KJ$.
In the complex case, since $J$ commutes with $\gR'$, it must  belong to $\gR''=\gR$. 
In the quaternionic case, if $J \in \gR$ we would have $J K=KJ$ which is impossible since we know that $JK = -KJ$ and $JK\neq 0$.
The same arguments applies for $K$ and $JK$.
 If, in the real-complex case, $J'$ is another complex structure in $\gR'$, 
it commutes with $J$ (as it belongs to $\gR$). Therefore $JJ' \in \gR'$  is self adjoint and thus $JJ' = aI$, namely $J'= -aJ$, because $\gR$ is irreducible. Since $JJ=J'J' =-1$ we must have $a=\pm 1$.\\
(b) In the first case 
$\gR' = h^{-1}(\bR) = \{a I\:|\: a\in \bR\}$ and thus $\gR=\gR''=  \{a I\:|\: a\in \bR\}' = \gB(\sH)$. 
$\cL_\gR(\sH)= \cL(\sH)$ follows trivially.
In the second case, $A \in \gR$ if and only if $[A, J]=0$, which is the same as saying that $A: \sH \to \sH$ is $\bC$-linear 
that is an operator on  $\sH_J$. Since $||x||_J = ||x||$ for $x\in \sH$ ( $=\sH_J$ as a set) we have $||A||_{\sH}=||A||_{\sH_J}$. Therefore $\gR = \gB(\sH_J)$. 
If $P \in \gB(\sH)$ commutes with $J$, it holds $P=P^*$ with respect to $(\cdot|\cdot)$ if and only if it happens with respect to $( \cdot|\cdot)_J$. Therefore $P$ is an orthogonal projector of $\gB(\sH)$ commuting with $J$
if and only if it is an orthogonal projector of $\gB(\sH_J)$. Thus $\cL_\gR(\sH)= \cL(\sH_J)$. 
The proof of  $\gR = \gB(\sH_{J,K})$ for the third case is very similar to that for the real-complex case. The remaining statements  are then trivial.
\end{proof}

\begin{remark}\label{remcQ}$\null$ \\
{\em  {\bf (a)}  Consider  a {\em complex} Hilbert space $\sH$ with scalar product $\langle\cdot|\cdot\rangle$.
$\gB(\sH)$  can always be viewed as a {\em real elementary system} in the {\em real-complex case} over a suitable real Hilbert space $\sH_\bR$.  This is obtained  by defining 
 $\sH_\bR=\sH$ with $\langle x|y \rangle_\bR := Re \langle x|y\rangle$ for all $x,y \in \sH$ equipped with the complex structure $J :  \sH_\bR \ni x \mapsto  ix \in \sH_\bR$ viewed as $\bR$-linear operator. With these choices  $(\sH_\bR)_J=\sH$,
and $\gB(H) = \{aI + bJ\:|\: a,b \in \bR\}'$ the commutant being that defined in $\gB(\sH_\bR)$.   $\gB(H)$ is irreducible in $\sH_\bR$ because $J\in \gB(\sH)$ and thus every  orthogonal projector $P\in \gB(\sH_\bR)$ commuting with $\gB(H)$ commutes with $J$
and thus $P$ is a {\em complex} orthogonal projector  in $(\sH_\bR)_J =\sH$ commuting with $\gB(H)$ so that $P=0,I$.
Finally it  $\gB(H)' = \{aI + bJ\:|\: a,b \in \bR\}''= \{aI + bJ\:|\: a,b \in \bR\}$.
We conclude that  $\gB(\sH)$ is a real elementary system over $\sH_\bR$ in the real-complex case. } \\
{\em  {\bf (b)}  Consider  a {\em quaternionic} Hilbert space $\sK$ with scalar product $\langle\cdot|\cdot\rangle$. $\gB(\sK)$ can always be viewed as a {\em real elementary system} in the {\em real-quaternionic case}  over a suitable real Hilbert space $\sK_\bR$.  This is obtained  by defining 
$\sK_\bR=\sK$ with $\langle x|y \rangle_\bR := Re \langle x|y\rangle$ for all $x,y \in \sK$ equipped with the three  complex structures   $J :  \sK_\bR \ni x \mapsto  xj \in \sK_\bR$ and 
 $K :  \sK_\bR \ni x \mapsto  xk \in \sH_\bR$,
viewed as $\bR$-linear operator.   With these choices  $(\sK_\bR)_{JK}=\sK$ 
and one finds by direct inspection that  $\gB(\sK) = \{aI + bJK + cJ + dK\:|\: a,b,c,d \in \bR\}'$ the commutant being that defined in $\gB(\sK_\bR)$. 
Since  $\gB(\sK)$ is the commutant of a set of operators in $\gB(\sK_\bR)$ which is $^*$-closed, it is automatically  a real von Neumann algebra on $\sK_\bR$.
It is also irreducible since every orthogonal projector $P \in \gB(\sK_\bR)$ commuting with $\gB(\sK)$
is of the form $P= aI + bJK + cJ + dK$ so that $P=P^*$ implies   $aI + bJK + cJ + dK =aI - bJK - cJ - dK$. Thus $P=aI$
from $P=\frac{1}{2}(P+P^*)$,
 with $a=0,1$ since $PP=P$.
We conclude that  $\gB(\sK)$ is a real elementary system over $\sK_\bR$ in the real-quaternionic case because   $\gB(\sK)' = \{aI + bJK + cJ + dK\:|\: a,b,c,d \in \bR\}'' \supset  \{aI + bJK + cJ + dK\:|\: a,b,c,d \in \bR\}$ and  thus 
$\gB(\sK)'= \{aI + bJK + cJ + dK\:|\: a,b,c,d \in \bR\}$ by Theorem \ref{threecommutant}. } 
\end{remark}
\noindent Theorem \ref{threecommutant} with the help of important achievements in \cite{V2},  permits us to characterize  {\em states} and the  {\em symmetries} of an elementary system adopting  the general version of these notions as presented in Sect.\ref{SecI1}. The definition and some properties of trace-class operators are listed in Appendix \ref{tracelass}.

\begin{proposition}\label{statandsym} Consider an elementary system described by the irreducible von Neumann algebra $\gR$ over the separable  real Hilbert space $\sH$. The following assertions are true.\\

\noindent {\bf (a)} Assuming that  $\dim(\sH) \neq  2$, if  $\mu : \cL_\gR(\sH) \to [0,1]$ is a $\sigma$-additive probability measure -- i.e., a state -- there is a unique selfadjoint positive unit-trace trace-class operator $T\in \gR$
 such that  $$tr(T P)_\sH= \mu(P)\quad  \mbox{for every}\quad  P \in \cL_\gR(\sH)\:.$$
Every selfadjoint positive unit-trace trace-class operator $T\in \gR$  defines a state by means of the same relation.\\

\noindent {\bf (b)} If $h : \cL_\gR(\sH) \to \cL_\gR(\sH)$ is a lattice automorphism
-- i.e., a symmetry -- there is a unitary operator $U: \sH \to \sH$ 
 such that \beq h(P) = UPU^{-1}\quad  \mbox{for every}\quad P \in \cL_\gR(\sH)
\:,\label{hU}\eeq
and the following facts are true.

(i) In both the real-real and real-quaternionic case, $U \in \gR$.

(ii) In the real-complex case, $U$ may either commute with $J$ (thus $U \in \gR$) or anticommute with $J$ (thus $U \not \in \gR$ and  $U^2 \in \gR$)

(iii) Every  unitary operator $U$ that satisfies (i) or (ii), depending on the case, defines a  symmetry by means of (\ref{hU}) and  another  unitary operator $U'$ of the same kind satisfies (\ref{hU}) in place of $U$ for the same $h$ if and only if $U'U^{-1}\in \gZ_\gR$. 
\end{proposition}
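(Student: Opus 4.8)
The plan is to reduce each of the three cases of Theorem \ref{threecommutant}(b) to the classical Gleason and Wigner--Kadison--Varadarajan theorems, applied not on $\sH$ but on the Hilbert space over $\bF\in\{\bR,\bC,\bH\}$ for which $\cL_\gR(\sH)$ is the \emph{full} projection lattice. Indeed, Theorem \ref{threecommutant}(b) identifies $\cL_\gR(\sH)$ with $\cL(\sH)$, $\cL(\sH_J)$ or $\cL(\sH_{J,K})$, and in each case $\gR=\gB(\sH')$ with $\sH'\in\{\sH,\sH_J,\sH_{J,K}\}$ a separable $\bF$-Hilbert space. Thus a $\sigma$-additive probability measure $\mu$ on $\cL_\gR(\sH)$ is exactly such a measure on the whole projection lattice of $\sH'$, and a lattice automorphism $h$ of $\cL_\gR(\sH)$ is exactly one of $\cL(\sH')$. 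Everything then follows by transporting the field-level theorems back through this dictionary.

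For (a) I would first invoke Gleason's theorem over $\bF$ (Gleason \cite{G} for $\bR,\bC$, Varadarajan \cite{V2} for $\bH$): for separable $\sH'$ with $\dim_\bF\sH'\neq 2$ it gives a unique positive selfadjoint unit-trace trace-class $\bF$-linear operator $T'$ on $\sH'$ with $\mu(P)=\mathrm{tr}_{\sH'}(T'P)$ for $P\in\cL(\sH')$. Being $\bF$-linear, $T'$ commutes with $J$ (and $K$), hence $T'\in\gB(\sH')=\gR$. Passing from the $\bF$-trace to the real trace introduces the factor $d:=\dim_\bR\bF\in\{1,2,4\}$: evaluating $\mathrm{tr}_\sH$ on a real basis of the form $\{u,Ju,\dots\}$ built from an $\bF$-basis of $\sH'$ (Proposition \ref{Pintcomp}(b)) gives $\mathrm{tr}_\sH(AP)=d\,\mathrm{tr}_{\sH'}(AP)$ for $\bF$-linear $A,P$. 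Setting $T:=d^{-1}T'$ yields $T\in\gR$, positive, selfadjoint, of real unit trace, with $\mathrm{tr}_\sH(TP)=\mu(P)$. Uniqueness transfers directly from Gleason's uniqueness through the same dictionary, and the converse is the routine check that $P\mapsto\mathrm{tr}_\sH(TP)$ is normalized and $\sigma$-additive in the strong sense.

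For (b) I would apply the Wigner--Kadison--Varadarajan theorem \cite{Simon,V2} on $\sH'$ to $h$, obtaining an $\bF$-unitary, or (only when $\bF=\bC$) an $\bF$-antiunitary, operator $V$ on $\sH'$ with $h(P)=VPV^{-1}$. Every such $V$ preserves $(\cdot|\cdot)=\mathrm{Re}(\cdot|\cdot)_\bF$, so $U:=V$ is a genuine unitary of $\sH$, establishing (\ref{hU}). The classification is read off from the interaction with the complex structures: an $\bF$-unitary commutes with $J$ (and $K$), hence $U\in\gB(\sH')=\gR$, covering the real-real and real-quaternionic cases of (i); in the real-complex case a $\bC$-antiunitary satisfies $VJ=-JV$, so $U\notin\gR$ while $U^2J=JU^2$ gives $U^2\in\gR$, yielding (ii). For (iii), the converse is immediate since conjugation by a unitary commuting or anticommuting with $J$ preserves $\cL_\gR(\sH)=\cL(\sH')$ and the lattice operations; for the uniqueness clause, if $U,U'$ are of the same kind and induce the same $h$, then $W:=U'U^{-1}$ is $\bF$-linear (sameness cancels the anticommutation with $J$) and commutes with every $P\in\cL(\sH')$, so, working inside $\gB(\sH')$ where $\cL(\sH')$ generates the whole algebra over $\bF$ (Theorem \ref{teopropvnA}(d) when $\bF=\bC$, and the elementary ``common eigenvector'' computation for $\bF=\bR,\bH$), $W$ lies in the centre of $\gB(\sH')$, which coincides with $\gZ_\gR$.

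The main obstacle I anticipate is the faithful transfer of the classical theorems between $\sH$ and the field Hilbert space $\sH'$: in (a) the trace normalization and the verification that the Gleason operator genuinely lands in $\gR$, and in (b) the precise commute/anticommute dichotomy with $J$ in the real-complex case together with the reliance on the quaternionic Wigner and Gleason theorems of \cite{V2}, whose hypotheses (separability, dimension different from $2$, full projection lattice) must be checked to persist under each reduction.
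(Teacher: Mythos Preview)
Your approach is essentially the same as the paper's: both proofs use Theorem \ref{threecommutant}(b) to identify $\cL_\gR(\sH)$ with the full projection lattice of $\sH'\in\{\sH,\sH_J,\sH_{J,K}\}$, invoke the Gleason--Varadarajan and Wigner--Kadison--Varadarajan theorems over the appropriate field, and transport everything back via the trace factor $d=\dim_\bR\bF$ and the commute/anticommute dichotomy with $J$. The only point where the paper is slightly more explicit is in the quaternionic Wigner theorem (Theorem 4.27 in \cite{V2}), which a priori yields a map $V$ satisfying $V(\psi p)=V(\psi)q^{-1}pq$ for some fixed unit quaternion $q$, and one must use Corollary 4.28 in \cite{V2} to normalize it to a genuinely $\bH$-linear unitary; you flagged this as something to check, so there is no real gap.
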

\begin{proof} 
We leave to the reader the proof of the following  elementary result relying on the content of Appendix \ref{tracelass}. $A \in \gB(\sH)$ is  a selfadjoint positive  trace-class operator commuting with $J$ in the real-complex case or with $J$ and $K$ in the real-quaternionic case,
  if and only if $A$ is a selfadjoint positive  trace-class operator, respectively, in  $\gB(\sH_J)$ or $\gB(\sH_{J,K})$.  Under these hypotheses and
with obvious notation,
\begin{align}
tr(A)_{\sH_J} &= \frac{1}{2} tr(A)_{\sH} \quad \mbox{in the real-complex case,}\label{tr1}\\
tr(A)_{\sH_{J,K}} &= \frac{1}{4} tr(A)_{\sH} \quad \mbox{in the real-quaternionic case.}\label{tr2}
\end{align}
(a)
Due to Theorem \ref{threecommutant} (b), a $\sigma$-additive probability measure over $\cL_\gR(\sH)$ is a   $\sigma$-additive measure over, respectively,  $\cL(\sH)$,  $\cL(\sH_J)$ or $\cL(\sH_{J,K})$. Here we can apply the Gleason-Varadarajan 
theorem (Theorem 4.23 in \cite{V2}) proving that there is a unique unit-trace trace-class selfadjoint positive operator $T_0$ in, respectively, $\gB(\sH)$,  $\gB(\sH_J)$ or $\gB(\sH_{J,K})$ such that 
$\mu(P) = tr(T_0P)$ with $P$ in, respectively,  $\cL(\sH)$,  $\cL(\sH_J)$ or $\cL(\sH_{J,K})$. As seen above such $T_0$ can be viewed as  a   trace-class selfadjoint positive operator  in $\gB(\sH)$. This operator has trace $1$ in the first case,
has trace $2$ and commutes with $J$ in the second case, and has trace $4$ and commutes with  $J$, $K$ and $JK$ in the third case. 
In the first case $T:=T_0$ fulfills all requirements. Let us consider the second case.
 Since $tr(TP)_{\sH_J}= tr(TPP)_{\sH_J}= tr(PTP)_{\sH_J}$, where $PTP$ is trace class, positive and selfadjoint if $T$ is, we can use (\ref{tr1}) and $T:= \frac{1}{2}T_0$ fulfills all requirements. In the third case  we can similarly 
 use (\ref{tr2}) obtaining that $T:= \frac{1}{4}T_0$ fulfills all requirements.  
By direct inspection one sees that a positive selfadjoint trace class unit-trace operator $T$ which, in the first case commutes with $J$ and in the second case commutes with $J$, $K$ (and $JK$), defines 
a $\sigma$-additive probability measure over $\cL_\gR(\sH)$. That $T$ is uniquely fixed by the afore-mentioned properties as a consequence of Gleason-Varadarajan 
theorem  because it fulfills the  requirements when $T$ (multiplied by $2$ or $4$, depending on the case) is viewed as an operator in, respectively, $\gB(\sH)$,  $\gB(\sH_J)$ or $\gB(\sH_{J,K})$.\\
(b) With the same strategy adopted to prove (a),  i.e., sticking to $\sH$ in the real-real case, or passing to describe everything from $\sH$ to $\sH_J$ or $\sH_{J,K}$ for, respectively, the real-complex or the real-quaternionic case,  points (i),(22) easily arise from Theorems 4.27, 4.28 in \cite{V2}.  
Notice that the quaternionic generalization of Wigner Theorem presented in Theorem 4.27 in \cite{V2} (see Remark \ref{remquatV2} for the conventions used in \cite{V2}) gives an apparently more general result: in the real-quaternionic case the symmetry $h$ is represented by $V\cdot V$ where $V:\sH_{J,K}\rightarrow \sH_{J,K}$ is an additive and bijective function such that $V(\psi p)=V(\psi)q^{-1}pq$ and $(V\phi|V\psi)_{J,K}=q^{-1}(\phi|\psi)q$ for some $q\in\bH$ with $|q|=1$ depending only on the function. Moreover another map $W$ like $V$ generates $h$ if and only if $W\psi=(V\psi)a$ for every vector $\psi$ and a constant $a\in \bH$ with $|a|=1$. Corollary 4.28 in \cite{V2} shows how, by taking $a :=q^{-1}$ we can always find a \textit{linear and unitary} representative of $h$. In point (iii) of \textbf{(b)} we deal only with quaternionic linear and unitary operators representatives $U,U'$, hence they must be equal up to a \textit{real} number $a$ with $|a|=1$ since $a=U'U^{-1}$ must be quaternionic linear as well.
\end{proof}

\begin{remark}
{\em Let us focus on the case of $\sH$  complex instead of real and equipped with a complex irreducible von Neumann algebra $\gR$. The statement (a) above holds true as it stands for $\dim(\sH) \neq 2$.
In fact, one has $\gR=\gB(\sH)$ and $\cL_\gR(\sH)= \cL(\sH)$ and the statement is nothing but the statement of Gleason's theorem (Theorem 4.23 in \cite{V2} specialized to complex Hilbert spaces). 
The statement concerning (\ref{hU}) in (b) is now nothing but a version of the standard Wigner-Kadison theorem (e.g., see \cite{M}) for complex Hilbert spaces and it is true with $U$ either unitary or anti unitary depending on $h$. Every  unitary (anti unitary) operator $U$ defines a  symmetry by means of (\ref{hU}) and  another  unitary (respectively anti unitary) operator $U'$ satisfies (\ref{hU}) in place of $U$ for the same $h$ if and only if $U'U^{-1}= e^{ia}I$ for some $a \in \bR$.}
\end{remark}

\subsection{A more physical notion of  elementary relativistic system} We are in a position to state a physically more precise version of the notion of a real  elementary {\em relativistic} system, making use of the general notions introduced in Sec.\ref{sec1} and relying upon two ideas. First, an elementary {\em relativistic} system must be elementary according   to Def.\ref{generalelementarysystem} so it includes an irreducible von Neumann algebra $\gR$. Secondly, it has to support a representation of  Poincar\'e group $\cP$ viewed as   \textit{maximal symmetry group} of the system. In line with (6) of Sec.\ref{seclist},  this representation  is realized in terms of automorphisms of the lattice of projectors. We therefore assume the existence of  a 
locally faithful
 group representation $h:\cP\ni g\mapsto h_g\in \mbox{Aut}(\cL_\gR(\sH))$.
The demand of {\em elementariness} of the system is completed by  further specific requirements  on $h$.  On the one side, since  the system  has to be regarded as a \textit{realisation} of the physical symmetries, $h$ must contain all information about observables of the system. Since observables are described by operators, this idea can be implemented by picturing $h$ in terms of unitary operators in accordance with Proposition \ref{statandsym} and exploiting their products, linear combinations and weak limits to get the PVMs of the self-adjoint operators of $\gR$. 
On the other side, the demand of elementariness must also involve some  irreducibility property of $h$. We therefore assume that no non-trivial sublattices of $\cL_\gR(\sH)$ are left fixed under  $h$. If it  were the case, the observables constructed out of the elements of the sublattice could be viewed  as describing a Poincar\'e invariant  subpart of the overall  system, against the idea of elementariness.
Finally, it is  reasonable to lift the continuous nature of $\cP$ to its representation in a weak operational way, using the natural 
(seminormed) topology induced by the   probability measures representing quantum states.

\begin{definition}\label{RRES}
A real  {\bf  relativistic elementary system} (real {\bf RES}) is a real elementary system  $\gR$  over the real  separable Hilbert space $\sH$ equipped with a 
 representation of  Poincar\'e group $h: \cP \ni g \mapsto h_g \in \mbox{\em Aut}(\cL_\gR(\sH))$ which is
 locally faithfull ($\cP$ is injectively represented by $h$ in a neighborood of the neutral element) and satisfies  the following requirements.\\
\noindent {\bf (a)}  $h$ is irreducible, in the sense that $h_g(P)=P$ for all $g\in \cP$ implies either $P=0$ or $P=I$.\\

\noindent {\bf (b)}  $h$ is continuous, in the sense  that the map $\cP \ni g \mapsto \mu(h_g(P))$ is continuous for every fixed $P \in \cL_\gR(\sH)$ and every fixed quantum state $\mu$.\\

\noindent {\bf (c)} $h$ defines the observables  of the system. That is,  according to Proposition \ref{statandsym} (b) and representing $h$ in terms of unitary operators $U_g\in \gR$ (defined up to unitary factors in $\gZ_\gR$),
$h_g(P) = U_gPU_g^{-1}$ for $g \in \cP$ and $P \in \cL_\gR(\sH)$,
it must be
 $$\left(\{U_g\:| g \in \cP\:\} \cup \gZ_\gR \right)'' \supset \cL_\gR(\sH)\:.$$
\end{definition}
\begin{remark}\label{remarkRRES}$\null$\\
{\em {\bf (a)} Suppose that such a $P\neq 0,I$ exists and take the set $\cL_P:=\{Q\in\cL_\gR(\sH),\:|\: Q\le P\}$. It is easy to see that this is a lattice ($\le_P:=\le$) which is still complete, orthomodular ($Q^{\perp_P}:=Q^\perp\wedge P$) and $\sigma$-complete. Since $h_g$ is a lattice automorphism it preserves the order and so, if $Q\in\cL_P$, we have $h_g(Q)\le h_g(P)=P$, i.e. $h_g(Q)\in\cL_P$. $\cL_P$ is then a proper sublattice of $\cL_\gR(\sH)$ which is left invariant under the action of $h$, but we have assumed 
that  such sublattices do not exist.}\\
{\em {\bf (b)} We have explicitly 
assumed that $U_g \in \gR$ is always valid,
excluding the case of $U_g$ anticommuting with $J$  in the real-complex case (corresponding to an {\em anti unitary} operator in the complex Hilbert space $\sH_J$). The reason is the following. From the polar decomposition of $\cP = SL(2,\bC) \ltimes \bR^4$ one sees that every $g \in \cP$ can be always decomposed into a product of this kind $g= rrbb$ where $r$ is a spatial rotation and $b$ a boost. Using Proposition \ref{statandsym}(b)(iii), we have that $U_g = e^{cJ} U_r^2 U_b^2$ for some $c\in \bR$. It is now clear that, even if it were either $U_rJ=-JU_r$ or $U_bJ=-JU_b$ or both, then $U_g$ would commute with $J$ in any case.}\\
{\em {\bf (c)} The center $\gZ_\gR$ plays a  role in defining the observables as is evident in the requirement
$\left(\{U_g\:| g \in \cP\:\} \cup \gZ_\gR \right)'' \supset \cL_\gR(\sH)$
and  this is relevant only in the real-complex case. In the other two cases $\gZ_\gR$ is trivial and can be omitted. The utmost reason for the appearance of $\gZ_\gR $ in the formula above is that any particular representative $U_g$ 
of $g \in \cP$ has no meaning in its own right as the physical content is owned by $h_g$, that is by "$U_g$ up to phases", i.e., elements of $\gZ_\gR$ in the meaning of Prop. \ref{statandsym} (b) (iii). 
Of course it holds $\left(\{U_g\:| g \in \cP\:\} \cup \gZ_\gR \right)''=\left(\{V_g\:| g \in \cP\:\} \cup \gZ_\gR \right)''$ for  two different choices of representatives $U_g,V_g$ of $h_g$, when $g \in \cP$. }\\
{\em {\bf (d)} If $\sH$ were complex, the definition above could  be restated as it stands and it would reduce to the already given definition of complex WRES.  Indeed, since the only complex irreducible von Neumann algebra is  $\gB(\sH)$ itself,
we would find $\gR= \gB(\sH)$. Moreover, the representation  $h$ would be implemented by a
 locally-faithful
  irreducible unitary  representation $\cP \ni g \mapsto U_g \in \gB(\sH)$  as a consequence of the famous theorem by Wigner on continuous symmetries and a celebrated result by Bargmann we shall exploit to prove Proposition \ref{bargmann} below. More strongly, on a complex Hilbert space, WRES and RES are \textit{equivalent} definitions since every WRES gives rise to a RES in a trivial way.}\\
{\em {\bf (e)} If $\sH$ is real only one direction of the equivalence of (d) is trivially true: every real WRES gives rise to a real RES. The converse is far from  obvious. However it is true if an additional physical requirement is assumed. The proof of this fact (See proposition \ref{lastProposizione}) is the last result of this paper.}
\end{remark}

\noindent The  map $\cP\ni g\mapsto U_g$ introduced in  Definition \ref{RRES} (c) is not, in general, a group representation since we may have  $U_g U_h=\Omega(g,h)U_{gh}$ for  operators $\Omega(g,h)\in\gU(\gZ_\gR)$ where $\gU(\gZ_\gR)$ henceforth denotes  the set of unitary operators in the center of $\gR$. In particular $U_e= \Omega(e,e)$ putting $g=h=e$
in the identity above.
Such a map  $\cP\ni g\mapsto U_g$ is known as a {\bf projective unitary representation} of $\cP$ while the function $\Omega:\cP\times\cP\rightarrow \gU(\gZ_\gR)$ is said to be the {\bf multiplier function} of the representation. 
\begin{remark}
{\em The structure of $\gZ_\gR$ implies  the following algebraic identifications for a real relativistic system:   $\gU(\gZ_\gR)=\bZ_2 I$ -- the multiplicators are signs -- in the real-real and real-quaternionic cases and  $\gU(\gZ_\gR)=U(1) I$ -- the multiplicators are complex phases --  in the real-complex case.}
\end{remark}
\noindent The associativity property of the operator multiplication easily gives the {\bf cocycle}-property,
\begin{equation}\label{multprop}
\Omega(r,s)\Omega(rs,t)=\Omega(r,st)\Omega(s,t)\quad \mbox{for all $r,s,t\in\cP$}\:.
\end{equation} 
For  any function $\chi:\cP\rightarrow \gU(\gZ_\gR)$ the map $\cP \ni g\mapsto \chi(g)U_g$ is still a projective representation associated with the same representation $h$ of $\cP \ni g\mapsto U_g$, whose multiplier is now given by $$\Omega_\chi(g,h)=\chi(g)\chi(h)\chi(gh)^{-1}\Omega(g,h)\quad \mbox{for all $g,h \in \cP$.}$$
A natural question then concerns the possibility of getting rid of the multipliers by finding a function $\chi$ such that $\Omega_\chi=I$ in order to end up with a proper unitary representation from a given projective unitary representation.
A positive answer can be given for all of the three cases. 

\begin{proposition}\label{bargmann}
Let $\gR$ and $h$ respectively  be the von Neumann algebra and  the Poincar\'{e} representation  of a real RES
as in Definition \ref{RRES}. The following facts hold.\\

\noindent {\bf (a)} There exists a 
 locally-faithful 
 strongly-continuous unitary representation $\mathcal{P}\ni g\mapsto U_g\in\gR$  on $\sH$ such that $h_g(P)=U_g P U_g^{-1}$ for every $g\in\mathcal{P}$ and every $P \in \cL_\gR(\sH)$.\\

\noindent{\bf (b)} $\mathcal{P}\ni g\mapsto U_g\in\gR$   is irreducible respectively  on $\sH$, $\sH_J$ or $\sH_{J,K}$ according to the three cases of Proposition\ref{statandsym},
\end{proposition}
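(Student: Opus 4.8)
The plan is to realise the lattice representation $h$ by genuine unitaries by first implementing it projectively and then trivialising the resulting multiplier through a Bargmann-type argument, carried out in the scalar field dictated by each of the three cases of Theorem \ref{threecommutant}. First I would apply Proposition \ref{statandsym}(b), together with Remark \ref{remarkRRES}(b) (which lets us always select a representative in $\gR$ rather than an $\gR$-external one anticommuting with $J$), to attach to each $g\in\cP$ a unitary $U_g\in\gR$ with $h_g(P)=U_gPU_g^{-1}$, the remaining freedom being a factor in $\gU(\gZ_\gR)$. Since $U_gU_h$ and $U_{gh}$ implement the same automorphism $h_{gh}$, one gets $U_gU_h=\Omega(g,h)U_{gh}$ with $\Omega(g,h)\in\gU(\gZ_\gR)$ obeying the cocycle identity (\ref{multprop}); thus $g\mapsto U_g$ is a projective unitary representation. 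I would then read this data over the field of Theorem \ref{threecommutant}(b): regarding $\sH$ as $\sH$, $\sH_J$ or $\sH_{J,K}$ in the real-real, real-complex and real-quaternionic cases, we have $\gR=\gB(\sH_\bK)$, $\cL_\gR(\sH)=\cL(\sH_\bK)$, the $U_g$ become $\bK$-linear unitaries, and $\gU(\gZ_\gR)$ is identified with the unit-modulus central scalars, namely $\{\pm I\}\cong\bZ_2$ for $\bK=\bR,\bH$ and $\{e^{\theta J}\}\cong U(1)$ for $\bK=\bC$. Because $\sH$, $\sH_J$ and $\sH_{J,K}$ all carry the norm of $\sH$, continuity statements are insensitive to this change of field.

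Next I would promote the weak (probabilistic) continuity of Definition \ref{RRES}(b) into continuity of the projective representation: since by Proposition \ref{statandsym}(a) every state is $P\mapsto tr(TP)$ for a positive unit-trace $T\in\gR$ and these separate the relevant matrix elements, $g\mapsto h_g$ is continuous into $\mbox{Aut}(\cL(\sH_\bK))$, hence $g\mapsto U_g$ descends to a continuous map into the projective unitary group $\gU(\sH_\bK)/\gU(\gZ_\gR)$. Choosing a Borel section yields a Borel multiplier $\Omega$, and the heart of the argument is to trivialise it. For $\bK=\bC$ this is precisely Bargmann's theorem on $\sH_J$: as $\cP=SL(2,\bC)\ltimes\bR^4$ is connected, simply connected and satisfies $H^2(\gp,\bR)=0$, the $U(1)$-cocycle is a coboundary, so there is $\chi:\cP\to\gU(\gZ_\gR)$ with $\Omega_\chi=I$. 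For $\bK=\bR$ and $\bK=\bH$ the multiplier is $\bZ_2$-valued, and a continuous (Borel) $\bZ_2$-cocycle on the connected, simply connected $\cP$ is again a coboundary, the associated central $\bZ_2$-extension admitting no nontrivial connected cover and therefore splitting; the quaternionic bookkeeping needed to produce the unitaries and to handle the splitting is supplied by the Wigner-type results of \cite{V2}.

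Setting $V_g:=\chi(g)U_g$ then gives a genuine unitary representation still implementing $h$, and the established weak continuity upgrades to strong continuity by the standard identity $||V_gx-V_{g_0}x||^2=2||x||^2-2(V_{g_0}x|V_gx)\to 0$. Local faithfulness is immediate: $V_g=V_{g'}$ forces $h_g=h_{g'}$ (the central $\chi$ cancelling), whence $g=g'$ for $g,g'$ near the identity by the local faithfulness of $h$; this proves (a). For part (b) I would use Definition \ref{RRES}(c): any $\bK$-projection $P$ commuting with every $V_g$ commutes with every $U_g$ and with $\gZ_\gR$, so $P\in(\{U_g\}\cup\gZ_\gR)'\subseteq\cL_\gR(\sH)'$; in the field picture $\cL_\gR(\sH)'=\gZ_\gR$, whose only projections are $0,I$ since $\gR$ is irreducible (Theorem \ref{threecommutant}), and by Remark \ref{remirred}(a) this is exactly irreducibility of $\{V_g\}$ on $\sH$, $\sH_J$ or $\sH_{J,K}$ respectively.

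The main obstacle will be the field-uniform removal of the multiplier: the real-complex case is the classical Bargmann lifting on $\sH_J$, but the two $\bZ_2$-cases, and especially the quaternionic one, require care both in extracting a Borel section of the projective representation from the merely probabilistic continuity of $h$ and in invoking the quaternionic Wigner-Bargmann machinery of \cite{V2} to guarantee the unitary implementers and the splitting of the $\bZ_2$-extension. Once the multiplier is killed in each field, strong continuity, local faithfulness and irreducibility follow by the routine arguments indicated above.
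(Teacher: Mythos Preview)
Your overall strategy matches the paper's: represent $h$ projectively by unitaries in $\gR$, work over the field $\bK$ dictated by Theorem~\ref{threecommutant}, kill the multiplier by a Bargmann-type argument, and then read off local faithfulness and irreducibility. Two points of departure are worth noting.

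\emph{Trivialising the $\bZ_2$-multiplier.} For the real-complex case both you and the paper invoke Bargmann on $\sH_J$, and for the real-quaternionic case the paper cites Emch \cite{Em} where you gesture at \cite{V2}. The real difference is the real-real case. You argue abstractly that a Borel $\bZ_2$-cocycle on the simply connected $\cP$ is a coboundary because the associated central $\bZ_2$-extension is a covering and hence splits. The paper instead proceeds concretely: it shows (via Proposition~12.38 in \cite{M}) that one may choose representatives making $\Omega$ \emph{continuous} on a neighbourhood of $(e,e)$ with $\Omega(e,e)=I$; since $\Omega$ is $\{\pm I\}$-valued this forces $\Omega\equiv I$ there, so $g\mapsto U_g$ is a \emph{local} continuous homomorphism into $\gU(\sH)$. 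Then Pontrjagin's extension theorem for local homomorphisms (Theorem~63 in \cite{P}), together with the connectedness of $\gU(\sH)$ established in \cite{connect} (or of $SO(n)$ in finite dimension), globalises it to a genuine strongly-continuous unitary representation. Your covering-space argument is correct in principle but hides nontrivial input (that the Borel extension is a Lie group, Mackey--Weil); the paper's route is more elementary and self-contained, and delivers strong continuity directly rather than via a separate weak-to-strong upgrade.

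\emph{Irreducibility.} Here the paper takes the shorter path through hypothesis~(a) of Definition~\ref{RRES}: if $P\in\cL(\sH_\bK)$ commutes with every $U_g$ then $h_g(P)=U_gPU_g^{-1}=P$ for all $g$, whence $P\in\{0,I\}$. Your route through hypothesis~(c) works but contains a slip: the equality $\cL_\gR(\sH)'=\gZ_\gR$ is false in the real-quaternionic case, since $\cL_\gR(\sH)'\supset\gR'=\{aI+bJK+cJ+dK\}$ while $\gZ_\gR=\bR I$. The argument is salvaged by also using $P\in\gR=\gB(\sH_\bK)$, so that $P$ lies in the commutant of $\cL(\sH_\bK)$ \emph{inside} $\gB(\sH_\bK)$, which is indeed scalar; but at that point you are essentially reproving Lemmas~\ref{lemmaIR} and~\ref{commutanquaternionic}, whereas invoking~(a) needs none of this.
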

\begin{proof}
\noindent We simultaneously  prove (a) and (b). We already know that $h_g(\cdot) =V_g\cdot V_g^*$ for some unitary operator $V_g\in\gR$. By the continuity hypothesis on $h_g$ and (a) of Proposition\ref{statandsym}, we see that the maps $\cP \ni g\mapsto tr(P_\psi h_g(P_\phi))_\sH=|(\psi|V_g \phi)|^2$, $\cP \ni g\mapsto tr(P_\psi h_g(P_\phi))_{\sH_J}=|(\psi|V_g \phi)_J|^2$ and $\cP \ni g\mapsto tr(P_\psi h_g(P_\phi))_{\sH_{J,K}}=|(\psi|V_g \phi)_{J,K}|^2$ respectively, are  continuous for every $\psi,\phi\in \sH,\sH_J,\sH_{JK}$.
Let us focus on the real-complex case first. Thanks to the above remark, following  the analysis contained in the well-known paper \cite{Ba}, we  get a strongly-continuous unitary representation $\cP \ni g\mapsto U_g$ on $\sH_J$ such that $U_g=\chi_g V_g$ for some $\chi_g\in U(1)$, hence generating $h$. 
This unitary representation is 
locally faithful because $h$ is locally faithful. (If $U_g = U_f$ with $g,f$ in the neighborhood of the neutral element where $h$ is injecive, we have $h_g = U_g\cdot U_g^*=
 U_f\cdot U_f^* =h_f$, so that  $f=g$.) 
Notice that since  $\sH= \sH_J$ as a set and $||\cdot||_J = ||\cdot||$, $\cP \ni g\mapsto U_g$ is also strongly continuous on $\sH$.
  Irreducibility  on $\sH_J$ follows from the following argument. 
Since the family $\gU:=\{U_g,\:|\:g\in\cP\}$ is closed under the adjoint operation, thanks to Remark \ref{remirred} (a) we need only to prove that $\gU'\cap\cL(\sH_J)=\{0,I\}$, but this is a direct consequence of the irreducibility of $h$. Indeed, if $P$ is a complex projector commuting with every $U_g$ then $h_g(P)=U_gPU_g^*=P$ for every $g\in\cP$ and thus  $P=0$ or $P=I$.
Let us next focus on the real-quaternionic  case. Thanks to the analysis of \cite{Em} we can always find a strongly-continuous unitary representation $\cP \ni g\mapsto U_g$ on $\sH_{J,K}$  such that $U_g=\chi_g V_g$ for some $\chi_g\in \bZ_2$, hence generating $h$. The same kind of arguments used in the real-complex case prove irreducibility and 
local faithfulness 
of the found unitary representation.\\
Let us conclude the proof discussing the real-real case.
We affirm that we may always choose an equivalent representative $\cP \ni g\mapsto U_g$ such that $U_e=I$, it is strongly continuous over an open neighborhood of the identity $A_e$ and its multiplier $(g,h)\mapsto \Omega(g,h)$ is continuous over $A'_e\times A'_e$ with $A'_e\subset A_e$,  a  smaller open neighborhood of $e$ which can always be assumed to be connected ($\cP$ is a Lie group and as such it is locally connected).
The proof of this fact can be found within the proof of Proposition 12.38 in \cite{M} which is valid both for complex and real Hilbert spaces since there is no distinctive role played by the imaginary unit. 
 Since $\Omega(g,h) \in \{\pm I\}$ which is not connected if equipped with the topology induced by $\bR$ and $\Omega(e,e)=I$, the continuity of $\Omega$ guarantees that $\Omega(g,h)=I$ for every $g,h\in A'_e$. In other words  $U_gU_h=U_{gh}$ for every $g,h\in A'_e$. 
As the  group $\gU(\sH)$ of all unitary operators over $\sH$ is a topological group with respect to the strong operator topology, the continuous function $\cP \ni g\mapsto U_g$ is then a local topological-group homomorphism as in Definition B, Chapter 8, Par.47 of \cite{P}. Since, as established in \cite{connect}, $\gU(\sH)$  is connected if $\dim \sH$ is not finite and $\cP$
is a simply connected Lie group, we can apply  Theorem 63 \cite{P} proving that there exists a strongly-continuous unitary representation $\cP \ni g\mapsto W_g \in \gU(\sH)$ such that $W_g=U_g$ on some open neighborhood of the identity $A_e''\subset A_e'$. 
If $\dim(\sH) = n < +\infty$, then $\gU(\sH)$ can be identified to the topological group $O(n)$. Its open subgroup $SO(n)$ is the connected component including the identity element $I$. In this situation, we can restrict ourselves to deal with a smaller initial open set $A'_e \cap B$ where $B$ is the pre-image through the map $U$ (which is continuous on $A'_e$) of an open set including $I$ and completely included in $SO(n)$. As $SO(n)$ is connected, we can finally exploit the same procedure as in the infinite dimensional case,  proving that there exists a strongly-continuous unitary representation $\cP \ni g\mapsto W_g \in \gU(\sH)$ such that $W_g=U_g$ on some open neighborhood of the identity element $A_e''\subset A_e'\cap B$.
To conclude, we observe that since  the Lie group $\cP$ is connected, a standard result guarantees that every $g\in\cP$ can be written as $g=g_1\cdots g_n$ for some $g_1,\dots,g_n\in A_e''$. So, $W_g=W_{g_1}\cdots W_{g_n}=U_{g_1}\cdots U_{g_n}$ and 
$h_g = h_{g_1}\circ \cdots \circ h_{g_n}$,
from which it easily follows $h_g=W_g\cdot W_g^*$ for every $g\in\cP$. Local faithfulness and irreducibility of the representation  $\cP \ni g\mapsto W_g$ arises form the same properties of $h$ as in the other two cases.
\end{proof}

\subsection{Emergence of an (up to sign unique) complex structure from Poincar\'e symmetry}
We are in a position to state and prove our second main result of this work, establishing that, even relying on the more accurate definition of relativistic elementary system as in Definition \ref{RRES}, when assuming a standard hypothesis that physically means that the squared mass of the particle is non-negative, 
one finally achieves a {\em complex} Wigner elementary relativistic  system 
which is equivalent to our relativistic elementary system.  Again the initial real theory can be naturally rephrased into a better {\em complex} theory. In particular, now the lattice of elementary observables 
coincides with the whole lattice of orthogonal projectors of the complex Hilbert space in agreement with the picture of the thesis of Sol\`er's theorem, even if we started from different hypotheses. Our second result is however more refined than our first achievement   because it studies the interplay of the final complex structure due to relativistic invariance arising  in  Theorem \ref{poinccomplexstructure}
 with the complex structures of the classification in theorem \ref{threecommutant}.

\begin{theorem}\label{main2} Consider a real relativistic elementary system defined by a real von Neumann algebra $\gR$ over  the separable real Hilbert space $\sH$ and a representation 
$\cP \ni g \mapsto h_g \in Aut(\cL_\gR(\sH))$. Let  $U: \cP \ni g \mapsto U_g \in \gR$ be a corresponding 
 locally-faithful
 strongly-continuous unitary representation of $\cP$ on $\sH$ as in Proposition \ref{bargmann}(a). If the associated operator $M^2_U$ (\ref{Moperator}) satisfies $M^2_U \geq 0$, the following facts hold.\\

\noindent {\bf (a)} $\gR$ is of  real-complex type with preferred complex structure $J \in \gR'$ defined up to sign.\\

\noindent  {\bf (b)} $U: \cP \ni g \mapsto U_g \in \gB(\sH_J)$ is irreducible over $\sH_J$ and defines a complex  WRES  
which is equivalent to the  real RES:

(i) $h_g(P) = U_gPU^{-1}_g$ for every $P\in \cL_\gR(\sH)$ and  $g \in \cP$

(ii) $\gR =\gR_U(\sH_J)$,\\
In particular $\gR = \gB(\sH_J)$ and $\cL_\gR(\sH) = \cL(\sH_J)$ in agreement with the thesis {\em Sth} of Sol\`er's theorem.\\

\noindent  {\bf (c)}   $J$  in (a)  is  Poincar\'e invariant and  coincides up to the sign with the unitary factor  of the polar decomposition of the anti self adjoint generator of the  temporal translations $\bR \ni t \mapsto U_{\exp(t {\bf p}_0)}$.
\end{theorem}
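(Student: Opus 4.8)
The plan is to run the three-way dichotomy of Theorem \ref{threecommutant} on the irreducible von Neumann algebra $\gR$: it is of real-real, real-complex, or real-quaternionic type, and I want to show that the hypothesis $M^2_U\ge 0$ forces the real-complex case. The guiding observation is that in the two types I mean to exclude the center is trivial, $\gZ_\gR=\gR\cap\gR'=\{aI\}_{a\in\bR}$, whereas Theorem \ref{poinccomplexstructure} manufactures, out of the polar decomposition $\tilde{H}=J_0H$ of the time-translation generator $\tilde{H}=c\tilde{P}_0$, a Poincar\'e-invariant complex structure $J_0$ lying simultaneously in $\gR_U$ and in $\gR_U'$. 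If I can arrange $\gR_U=\gR$, this $J_0$ is forced into $\gZ_\gR$, which is impossible for a genuine complex structure. So the whole elimination hinges on reducing each excluded type to an honest real Wigner system on $\sH$ and on identifying $\gR_U$ with $\gR$.

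Concretely, I would first record that $\gR_U=\gR$ in both excluded types. Requirement (c) of Definition \ref{RRES}, together with $\gZ_\gR=\{aI\}$, gives $\gR_U=\{U_g\}''\supseteq\cL_\gR(\sH)$ (recall $U_g\in\gR$ by Proposition \ref{bargmann}(a), so $\gR_U\subseteq\gR$); taking commutants and using $\cL(\sH)'=\{aI\}$ in the real-real case, respectively the quaternionic projector-generation identity $\cL(\sH_{J,K})'=\gR'$ in the real-quaternionic case, I obtain $\gR_U'=\gR'$ and hence $\gR_U=\gR$. Because $\gR'$ (either $\{aI\}$ or the quaternion algebra $\{aI+bJK+cJ+dK\}$) contains no nontrivial orthogonal projector, the commutant $\{U_g\}'=\gR'$ has trivial projector lattice, so $U$ is irreducible on the real space $\sH$ by Remark \ref{remirred}(a); in the real-real case this irreducibility is already furnished by Proposition \ref{bargmann}(b). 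Thus $(\sH,U)$ is a real WRES in the sense of Definition \ref{defERS}, and since $M^2_U\ge 0$ I may invoke Theorem \ref{poinccomplexstructure}: its parts (a),(b) place the complex structure $J_0$ in $\gR_U\cap\gR_U'=\gR\cap\gR'=\gZ_\gR=\{aI\}$, a contradiction. Hence both excluded types are ruled out and $\gR$ is of real-complex type, with $J\in\gR'=\gZ_\gR$ determined up to sign; this is (a).

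With $\gR=\gB(\sH_J)$ and $\gR'=\{aI+bJ\}$ in hand, (b) and (c) are comparatively soft. Each $U_g\in\gR$ is $\bC$-linear and unitary on $\sH_J$, $U$ is irreducible on $\sH_J$ by Proposition \ref{bargmann}(b), and local faithfulness is inherited; so $(\sH_J,U)$ is a complex WRES, and complex Schur's lemma (Proposition \ref{SL2}(ii)) gives $\{U_g\}'=\{cI\}_{c\in\bC}$, whence $\gR_U(\sH_J)=\{U_g\}''=\gB(\sH_J)=\gR$, which is (b)(ii); (b)(i) is just Proposition \ref{bargmann}(a), and $\cL_\gR(\sH)=\cL(\sH_J)$ comes from Theorem \ref{threecommutant}(b)(ii), yielding the Sol\`er picture. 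For (c) I apply the complex case of Theorem \ref{poinccomplexstructure} to this complex WRES: its part (g) says the polar factor of $\tilde{H}$ in $\sH_J$ equals $\pm iI_{\sH_J}=\pm J$. Since every $U_{\exp(t\V{p}_0)}$ commutes with $J$, so does $\tilde{H}$ (Proposition \ref{LemmaCOMM}(a)), and by Proposition \ref{prop3i} together with Theorem \ref{Polar} its polar decomposition on $\sH$ coincides with the one on $\sH_J$; hence the polar factor $J_0$ on $\sH$ satisfies $J_0=\pm J$, and $J$ is Poincar\'e invariant because $J\in\gZ_\gR$.

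I expect the genuine obstacle to be the real-quaternionic case, and within it the identity $\cL(\sH_{J,K})'=\gR'$ (equivalently $\cL(\sH_{J,K})''=\gB(\sH_{J,K})$), which is exactly what lets me conclude $\gR_U=\gR$ and therefore irreducibility of $U$ on the real $\sH$ without developing a quaternionic version of Theorem \ref{poinccomplexstructure}. This is the quaternionic analogue of $\cL(\sH)'=\{aI\}$ and of Theorem \ref{teopropvnA}(d); it is not among the statements recorded in the excerpt and has to be supplied from quaternionic spectral theory (the projectors of the quaternionic spectral measures generating $\gB(\sH_{J,K})$). Once it is available, the reduction to a real WRES and the clean center contradiction via Theorem \ref{poinccomplexstructure} go through verbatim, and everything else---the two soft parts (b),(c) and the real-real elimination---follows directly from results already established above.
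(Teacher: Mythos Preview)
Your elimination of the real-real case and your treatment of parts (b) and (c) are essentially the paper's own arguments.

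Your real-quaternionic elimination is genuinely different. The paper passes to the \emph{complex} space $\sH_J$: via the decomposition $\gB(\sH_J)=\gR+J\gR$ (Lemma \ref{quatimp1}) and Lemma \ref{commutanquaternionic} it proves that $U$ is irreducible on $\sH_J$, applies the complex case of Theorem \ref{poinccomplexstructure}(g) to get the polar factor $V=\pm J$, and then reaches a contradiction from $KV=VK$ versus $KJ=-JK$. You instead remain on the \emph{real} space $\sH$: from $\gR_U=\gR$ and the observation that $\gR'=\{aI+bJK+cJ+dK\}$ contains only the trivial projectors you obtain real irreducibility of $U$, invoke the real case of Theorem \ref{poinccomplexstructure}, and force the resulting complex structure $J_0$ into $\gZ_\gR=\{aI\}_{a\in\bR}$, which is absurd. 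This is cleaner---it avoids Lemma \ref{quatimp1} and the detour through $\sH_J$ altogether, and the contradiction is uniform with the real-real case.

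Two points deserve care. First, the identity you flag, $\cL(\sH_{J,K})'=\gR'$ with the commutant taken in $\gB(\sH)$, is slightly stronger than the paper's Lemma \ref{commutanquaternionic} (where the commutant lives in $\gB(\sH_{J,K})$), but it needs no spectral theory: the same rank-one-projector argument shows that any real-linear $A$ commuting with every $P_\psi$ satisfies $A\psi=\psi q$ for a single $q\in\bH$ (using a quaternionically independent $\phi$ to compare $\lambda_{\psi p}$ with $\lambda_\psi$), hence $A\in\gR'$. You can simply drop the final step of the paper's proof, which was only there to force $q\in\bR$ under the extra hypothesis of quaternionic linearity. Second, that argument requires $\dim_\bH\sH_{J,K}\ge 2$; when $\dim_\bH\sH_{J,K}=1$ one has $\cL(\sH_{J,K})=\{0,I\}$ and your chain $\gR_U'\subseteq\cL(\sH_{J,K})'$ gives nothing. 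The paper handles this boundary case separately by observing that a locally faithful unitary representation of $SL(2,\bC)$ on $\bR^4$ would yield a nontrivial finite-dimensional unitary representation of $SL(2,\bC)$, which does not exist; you should add that sentence.
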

\begin{proof} Since $\gR$ is irreducible we have  three mutually exclusive cases for $\gR'$, as discussed in Theorem \ref{threecommutant}. Let us start by supposing that $\gR'$ is of real-complex type so that, up to sign, there is a preferred complex structure $J \in \gR'$. \\
(b) Using $J$ to construct $\sH_J$, As given by Proposition \ref{bargmann} $\cP\ni g\mapsto U_g\in\gR$ is a complex 
locally-faithful
 irreducible strongly-continuous unitary representation, i.e. a complex WRES. By construction if satisfies item (i). Let us prove (ii). The complex von Neumann algebra generated by $U$, $\gR_U(\sH_J)$ must coincide with the whole $\gB(\sH_J)$ because $U$ is (complex-) irreducible so that $\gR_U(\sH_J) = \gR_U(\sH_J)'' = \{U_g \:|\: g \in \cP\}'' =  \{cI\:|\: c \in \bC\}' = \gB(\sH_J)$. On the other hand, we already know that $\gR=\gB(\sH_J)$, in particular $\cL_\gR(\sH)=\cL(\sH_J)$. The two descriptions are clearly equivalent.\\
(c) First, notice that the anti selfadjoint generators of $U_{\exp(t\V{p}_\mu)}$, with $\mu=0,1,2,3$ defined on $\sH$ and $\sH_J$ as well as the definition of $D_G^{(U)}$ do not depend on the scalar field (see Theorem \ref{DMtheorem} and Remark \ref{updown}), hence the same holds true for the symmetric operator $M_U^2$. In particular Prop.\ref{prop2} (9) guarantees that $M_U^2$ is positive also on $H_J$.
Now, let $J_1,|P_0|$ be the polar decomposition of the anti self adjoint generator of $U_{\exp(t\V{p}_0)}$, defined on $\sH$. It is easy to see that this couple satisfies (i)-(iv) of (b) Th.\ref{Polar} also with respect to $\sH_J$. 
Since all the hypotheses of Th.\ref{poinccomplexstructure} are satisfied for $\cP\ni g\mapsto U_g$ on $\sH_J$, point (g) gives $J_1=\pm iI=\pm J$.
The fact that $J$ is Poincar\'e invariant is evident since $U_gJU_g^{-1}=JU_gU_g^{-1}=J$.\\
The proof concludes by proving that in a real RES, $\gR'$ can be neither of  real-real type nor real-quaternionic type.
\begin{proposition} $\gR$ defining  a real RES cannot be of real-real type if $M^2_U \geq 0$.
\end{proposition}
\begin{proof}
Let us start by assuming that $\gR'$ is of real-real type  so that, by Theorem \ref{threecommutant}, $\gR= \gB(\sH)$ and $\cL_\gR(\sH)= \cL(\sH)$.
Thanks to Proposition \ref{bargmann} the RES $\gR$, $h$ defines a real a 
locally-faithful
 irreducible strongly-continuous unitary representation $\cP \ni g \mapsto U_g \in \gR$ over the real space $\sH$. Theorem \ref{poinccomplexstructure}(b) implies that there is a complex structure $J$ which commutes with the representation  $U$. On the other hand we have from the  definition of RES, using in particular the fact that $\gZ_\gR$ is trivial in the real-real case,
$
\cL(\sH) = \cL_\gR(\sH)\subset(\{U_g| g\in\cP\}\cup\gZ_\gR)''=\{U_g| g\in\cP\}''
$
from which $J\in \{U_g| g\in\cP\}'= \{U_g| g\in\cP\}''' \subset \cL(\sH)''' = \cL(\sH)'$. This is impossible because  $J\neq 0$ is  anti selfadjoint
whereas $\cL(\sH)'$ is made of selfadjoint operators due to the following lemma.
\begin{lemma}\label{lemmaIR}
Let $\sH$ be a real Hilbert space, then $\cL(\sH)'=\{aI \:|\: a \in \bR\}$\:.
\end{lemma}
\noindent The proof of the lemma above appears in Appendix \ref{appProof}.
\end{proof}
\begin{proposition}\label{quatimp2} $\gR$ defining a real RES cannot be of real-quaternionic type  if $M^2_U \geq 0$.
\end{proposition}
\begin{proof} Assume that $\gR'$ is of real-quaternionic type. 
 $\sH$ cannot have dimension $1$ as quaternionic Hilbert space. If it were the case,  the representation $U$ could be seen as a 
locally-faithful 
 unitary  representation on a $4$-dimensional {\em real} Hilbert space. Thus  $U$ would include a
 locally-faithful 
 unitary  representation $V$ of $SL(2,\bC)$ on $\bR^4$ and $V_\bC$ whould be a 
locally-faithful 
 unitary  representation of $SL(2,\bC)$ on the $2$-dimensional complex Hilbert space $(\bR^4)_\bC$. This is not possible since the continuous finite-dimensional  unitary  representations of $SL(2,\bC)$ are completely reducible and the irreducible ones are the trivial representation only \cite{knapp}. In other words, the initial representation $U$ would be the trivial representaion against the 
local faithfulness hypothesis.
To deal with the case of quaternionic dimension $> 1$
we need some technical results whose proofs appear in Appendix \ref{appProof}.
\begin{lemma}\label{quatimp1}
Suppose that $\gR$ is an irreducible  real von Neumann algebra over the real Hilbert space with $\gR'$ of real-quaternionic type and let $J\in\gR'$ be a complex structure as in Theorem \ref{threecommutant} (a)(iii).
If $A,B\in\gR$, then $A+JB=0$ if and only if $A=B=0$. Moreover $$\gR+J\gR:=\{A+JB\:|\: A,B\in\gR\}=\gB(\sH_J)\:.$$
\end{lemma}
\noindent Now, consider the 
locally-faithful
 strongly-continuous irreducible unitary representation $\cP\ni g\mapsto U_g\in\gB(\sH_{J,K})$ given by Proposition \ref{bargmann}. This is still 
 locally-faithful, 
strongly-continuous and  unitary  if viewed over  $\gB(\sH_J)$ instead of $\gB(\sH_{J,K})$. We affirm that it is also irreducible on $\sH_J$, let us prove it.
Let $P\in\cL(\sH_J)$ such that $[P,U_g]=0$ for every $g\in\cP$. Thanks to Lemma \ref{quatimp1} it must be $P=A+JB$ for some $A,B\in\gR$. Since $P$ is selfadjoint (adjoints of $P,A,B,J$ defined on $\sH$ or $\sH_J$ coincide) $0=P-P^*=(A-A^*)+J(B+B^*)$. Thanks again to Lemma \ref{quatimp1} this implies $A^*=A$ and $B^*=-B$. Next $PP=P$ gives $A^2-B^2=A$ and $AB+BA=B$. Now, notice that $0=[P,U_g]=[A,U_g]+J[B,U_g]$, hence $[A,U_g]=[B,U_g]=0$. 
Now, notice that it simultaneously hold that $A,B\in\{U_g,\:|\:g\in\cP\}'\subset\cL_\gR(\sH)'$ (thanks to (c) in the definition of RES) and that $A,B$ are quaternionic-linear. This means that $A,B$ belong to the set $\cL(\sH_{J,K})'\subset \gB(\sH_{J,K})$ which is trivial thanks to the following Lemma whose proof appears in Appendix \ref{appProof}. 
\begin{lemma}\label{commutanquaternionic}
Let $\sH$ be a quaternionic Hilbert space of dimension strictly greater than $1$, then $\cL(\sH)'=\{aI \:|\: a\in\bR\}$.
\end{lemma}
 
\noindent Hence $A=aI$ and $B=bI$ for some $a,b\in\bR$. Since $B$ turns out to be both anti selfadjoint and selfadjoint it must vanish and so, from $A^2-B^2=A$ it follows $a^2=a$, i.e. $a=0,1$, concluding the proof of  irreducibility
of $\cP \ni g \mapsto U_g \in \gB(\sH_J)$.\\

\noindent The found results implies that $\gR$ satifying the hypotheses of Theorem \ref{main2} cannot be of real-quaternionic type as we go to prove.
 Consider the one-parameter subgroup $\bR \ni t\mapsto \exp(t\textbf{p}_0) \in \cP$ of time-translations. By the complex version of Stone Theorem
we have  $U_{\exp(t\textbf{p}_0)}=e^{tP_0}$ for a unique anti-selfadjoint operator $P_0$ on $\sH_J$. Thanks to Theorem \ref{Polar}, there exists a unique pair of operators $V,P$
defining the polar decomposition of $P_0$
 on $\sH_J$. They are completely defined by the requirements that $P_0=VP$, $P$ is positive and selfadjoint and $V\in\gB(\sH_J)$ is isometric on $Ran(P)$ and vanishes on $Ker(P)$. 
The anti-selfadjoint generators of $U$ do not change if we consider $U$ over $\sH$ or $\sH_J$. The hypothesis $M^2_U \geq 0$ is therefore valid also when thinking of $U$ as a representation over  $\sH_J$. Since this strongly-continuous representation is 
locally-faithful, 
unitary and irreducible over the complex Hilbert space $\sH_J$, invoking Theorem \ref{poinccomplexstructure}(g) we conclude that $V=\pm iI=\pm J$. $P_0$ is of course also real linear and anti selfadjoint, as $V,P$ are. Moreover,  selfadjointness and positivity of $P$ still hold in $\sH$, and $V$ is still isometric on $Ran(P)$ and vanishing on $Ker(P)$ if understood as operators on $\sH$. Thanks again to Theorem\ref{Polar}, this implies that $P_0=VP$ is also the polar decomposition of $P_0$ in the  real Hilbert space $\sH$, and as already noticed in general, $U_{\exp(t\textbf{p}_0)}=e^{tP_0}$ is valid in $\sH$. Here the contradiction comes.
As $\gR$ is of real-quaternionic type, the complex structure $K\in \gR'$ commutes with every element $U_g$, in particular with $U_{\exp(t\textbf{p}_0)}=e^{tP_0}$ and thus Lemma \ref{LemmaCOMM}(a) yields  $KP_0\subset P_0K$ and so (b) dives $KV=VK$.
Since $V=\pm J$ we therefore have $KJ=JK$ in contradiction with $KJ=-KJ$.
\end{proof}
\noindent The proof of Theorem \ref{main2} is concluded.
\end{proof}

\noindent Once established that every real  relativistic elementary system can always be pictured in a complex Hilbert space in terms of a complex Wigner relativistic elementary system and this description is better than the real one for the reasons discussed above, it remains open  the theoretical question whether or not there exist {\em intrisically} complex Wigner relativistic elementary system. In other words, given a complex Wigner relativistic elementary system (remind that in the complex case WRES and RES are equivalent concepts) is it  always possible to interpret it as arising from a real relativistic elementary system?  The answer is positive and immediately proved
Indeed, suppose we have a complex Wigner relativistic elementary system $\cP\ni g\mapsto U_g\in\gB(\sH)$ on a complex Hilbert space $(\sH,\langle\cdot|\cdot\rangle)$. As discussed in Remark \ref{remcQ}(a), referring to the Hilbert space structure defined by $\sH_\bR:=\sH$ and $(\cdot|\cdot):=Re\langle\cdot|\cdot\rangle$, the set of complex linear operators $\gB(\sH)$ gives rise to an irreducible von Neumann algebra $\gR$ on $\sH_\bR$ with $\cL_\gR(\sH_\bR)=\cL(\sH)$. Defining $h:=U_g\cdot U_g^{-1}$, we finally get a real relativistic elementary system. Remembering that $J:=iI$ is a complex structure on $\sH_\bR$ such that $(\sH_\bR)_J=\sH$, trivially reversing the reasoning we have that $(\gR,h)$ is equivalent to the initial complex Wigner relativistic elementary system. \\
As already announced in Remark \ref{remarkRRES}(e) we establish another relevant result, showing that a real RES can actually be derived from an equivalent real WRES, if the usual positive-squared-mass condition holds true.
\begin{proposition}\label{lastProposizione}
With the hypotheses of Theorem \ref{main2}, $M_U^2 \geq 0$ in particular, the representation $U$ is also irreducible and $\gR=\gR_U$, so that  $U$ determines a real WRES equivalent to the real RES defined by $\gR$ and $U$.
\end{proposition}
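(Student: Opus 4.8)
The plan is to reduce everything to the determination of the \emph{real} commutant $\gR_U'=\{U_g\mid g\in\cP\}'$, and to show that it collapses to $\gR'=\{aI+bJ\mid a,b\in\bR\}$. Theorem \ref{main2} already provides the scaffolding: since $M^2_U\geq 0$, the algebra $\gR$ is of real-complex type, $\gR=\gB(\sH_J)$, the representation $U$ is irreducible over the \emph{complex} space $\sH_J$, and the distinguished complex structure $J\in\gR'$ is (up to sign) the unitary factor in the polar decomposition $\tilde{H}=JH$ of the time-translation generator $\tilde H=c\tilde P_0$. If I can prove $\gR_U'=\gR'$, then taking commutants gives $\gR_U=\gR_U''=\gR''=\gR$, while the fact that $\{aI+bJ\}$ contains no orthogonal projectors other than $0,I$ (anti-selfadjointness of $J$ forces $b=0$) yields, via Remark \ref{remirred}(a), the irreducibility of $U$ on the \emph{real} Hilbert space $\sH$. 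These are exactly the two facts needed to recognise $U$ as a real WRES in the sense of Definition \ref{defERS}, equivalent to the given RES.

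The central step is to show that every $T\in\gR_U'$ commutes with $J$, and here I would exploit that $J$ is tied to the dynamics rather than being an abstract complex structure. Any $T\in\gR_U'$ commutes with each $U_g$, in particular with the unitary one-parameter group $\bR\ni t\mapsto e^{t\tilde H}=U_{\exp(tc\V{p}_0)}$. Applying Proposition \ref{LemmaCOMM}(a) to the anti-selfadjoint operator $A=\tilde H$ converts the commutation $Te^{t\tilde H}=e^{t\tilde H}T$ into the inclusion $T\tilde H\subset\tilde H T$; Proposition \ref{LemmaCOMM}(b) then propagates this to the polar factor, giving $TJ=JT$. This uses $Ker(\tilde H)=\{0\}$, guaranteed by Lemma \ref{Mass}(b) under $M^2_U\geq 0$, so that $J$ is a genuine complex structure. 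Consequently $\gR_U'\subset\{J\}'=\gB(\sH_J)=\gR$: every element of the real commutant is automatically $\bC$-linear on $\sH_J$.

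The conclusion is then immediate from complex Schur's lemma. A bounded $\bC$-linear operator $T$ commuting with the complex-irreducible representation $U$ on $\sH_J$ must be a complex multiple of the identity by Proposition \ref{SL2}(ii), i.e.\ $T=aI+bJ$. Hence $\gR_U'=\{aI+bJ\}=\gR'$, so $\gR_U=\gR$. Since $\gR_U=\gR$, the observables of the WRES determined by $U$ coincide with those of the RES, while $h_g(\cdot)=U_g\cdot U_g^{-1}$ recovers the lattice automorphisms; together with the strong continuity and local faithfulness of $U$ furnished by Proposition \ref{bargmann}, this exhibits $U$ as a real WRES equivalent to the original RES.

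I expect the only genuine obstacle to be the step isolating $\gR_U'\subset\gR$. A priori the real commutant can be strictly larger than the complex one: it might contain operators \emph{anticommuting} with $J$, over which complex Schur has no grip, and any such nonzero operator would produce (through its spectral or polar data) a nontrivial $U$-invariant real subspace, that is, reducibility over $\sH$ coexisting with irreducibility over $\sH_J$. The whole argument hinges on ruling this out, and the mechanism that does so is precisely the identification of $J$ with the polar part of the \emph{physical Hamiltonian} $\tilde H$ together with Proposition \ref{LemmaCOMM}(b); this is where the positivity of the squared mass, and not merely abstract irreducibility, is genuinely used.
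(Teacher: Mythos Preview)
Your argument is correct and considerably more economical than the paper's. Both proofs use Theorem~\ref{main2} to place themselves in the real-complex case with $J\in\gR'$ identified (up to sign) with the polar factor of $\tilde H$, and both end up with $\gR_U=\gR$; the difference lies entirely in how real irreducibility of $U$ is obtained.

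The paper proves irreducibility directly and somewhat laboriously: it takes a real orthogonal projector $P$ commuting with $U$, forms $JP+PJ$, and uses complex irreducibility to force $JP+PJ=aJ$ for some $a\in\bR$. The case $a\neq 1$ gives $[J,P]=0$ and one is done by complex Schur; the delicate case $a=1$ yields $JP=(I-P)J$, and the paper then shows that the induced subrepresentations $U_P$, $U_{P^\perp}$ on $\sH_P$, $\sH_P^\perp$ are themselves real WRES's satisfying $M^2\geq 0$, applies Theorem~\ref{poinccomplexstructure} to each to produce complex structures $J_P$, $J_{P^\perp}$, and finally derives a contradiction from $J_P\oplus J_{P^\perp}=\pm J$. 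Only afterwards does it establish $\gR_U=\gR$ by a separate (short) argument.

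You instead compute the entire real commutant $\gR_U'$ in one stroke: any $T\in\gR_U'$ commutes with $e^{t\tilde H}$, hence with the polar factor $J$ by Proposition~\ref{LemmaCOMM}(a),(b), so $T$ is $\bC$-linear on $\sH_J$ and complex Schur forces $T=aI+bJ$. Both conclusions---$\gR_U=\gR$ and real irreducibility---then drop out of $\gR_U'=\gR'$ immediately. What this buys is that the ``dangerous'' case of a projector anticommuting with $J$ (the paper's $a=1$ case) is ruled out \emph{before} it arises, precisely because $J$ is not an abstract complex structure but the polar part of the dynamical generator, and Proposition~\ref{LemmaCOMM}(b) transfers commutation with the group to commutation with $J$ automatically. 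The paper's approach, by contrast, confronts that case head-on and dismantles it with the structure theory of WRES's; it is more self-contained in that it does not rely on the identification of $J$ with the polar factor until the very end, but at the cost of a substantially longer argument.
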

\begin{proof}
As demostrated in Theorem \ref{main2}, $\gR$ is of real-complex type with preferred complex structure $J$ and the representation $U$ is complex irreducible on $\sH_J$. We intend to prove that $U$ is irreducible also on $\sH$. To this end, suppose that $P \in \gB(\sH)$  
is an orthogonal projector in $\sH$ and $U_gP=PU_g$ for every $g\in \cP$. We have to prove that $P=0$ or $P=I$. Consider the operator $P':= JP+PJ$, it is anti selfadjoint and commutes with $J$ so that $P' \in \gB(\sH_J)$. So, since $U$ is (complex) irreducible, $P' = \lambda I$ with $\lambda$ imaginary because $P'^*=-P'$. In other words, going back to the real Hilbert space $\sH$, it holds (1) $JP+PJ=aJ$ for some $a \in \bR$. 
We derive (2) $JP = -PJP + a JP$ and, taking the adjoint,
(3) $-PJ = PJP -aPJ$. (2) and (3) yield  $[J,P]= a[J,P]$. If $a\neq 1$ we must have 
$[J,P]=0$ and thus $P$ is complex linear and coincides with either $0$
or $I$ and the proof ends. Instead, if $a=1$, (1) reduces to  $JP=(I-P)J$ where necessarily $P \neq  0, I$ and therefore we have the  orthogonal decomposition of $\sH$ into proper real closed subspaces $\sH = \sH_P \oplus \sH_P^\perp$, where $\sH_P := P(\sH)$. Finally, $A: \sH_P \to \sH_P^\perp$, where $A:= J|_{\sH_P},$ turns out to be a bijective isometry.
Referring to the decomposition $\sH = \sH_P \oplus \sH_P^\perp$ we have, $J= A \oplus_I (-A^{-1})$ \footnote{Since $A$ and $A^{-1}$ swap the subspaces $\sH_P,\sH_P^\perp$, we define their $\oplus_I$ sum as $A\oplus_I (-A^{-1})(u,v):= (-A^{-1}v,Au)$. The symbol $\oplus$ denotes the standard direct sum of operators.}.
As $P$ commutes with $U$, both real subspaces $\sH_P$ and $\sH_P^\perp$ are invariant under $U$. It is easy to see that the $U_g$s are also surjective if restricted as operators on $\sH_P$ or $\sH_P^\perp$. This, together with the fact that they are isometric, gives rise to a couple of unitary representations $U_P,U_{P^\perp}$ of $\cP$ on, respectively, $\sH_P$ and $\sH_P^\perp$ such that $U = U_P\oplus U_{P^\perp}$. Moreover these representations are also irreducible, indeed if there is an orthogonal projector $Q\le P$ with  $Q\neq 0,P$ and commuting with $U$, we can repeat the construction obtainig $\sH = \sH_Q \oplus \sH_Q^\perp$ and 
 $J(\sH_Q)= \sH_Q^\perp$ bijectively and this is impossible because it would give $\sH_Q^\perp=J(\sH_Q)\subset J(\sH_P)=\sH_P^\perp$, hence $Q^\perp\le P^\perp$, i.e. $P\le Q$ which is impossible.
By construction, $AU_{P} = U_{P^\perp}A$ which implies that both representations are
 locally faithful since $U = U_P\oplus U_{P^\perp}$ is locally faithful and $A$ is a vector space isomorphism.
It is possible to prove, taking into account all the corresponding definitions, that $M^2_{U_P} \oplus M^2_{U_{P^\perp}} = M^2_U$. This gives $M^2_{U_P} \oplus M^2_{U_{P^\perp}}  \geq 0$ and therefore  both $M^2_{U_P}\geq 0$ and $M^2_{U_{P^\perp}}\geq 0$. We are in the hypotheses of  Theorem \ref{poinccomplexstructure} (a) for both representations $U_P$  and  $U_{P^\perp}$  which imples  that, up to sign, there are two complex structures   
$J_P$ and $J_{P^\perp}$ on the  real Hilbert spaces $\sH_P$ and $\sH_P^\perp$ commuting with $U_P$ and $U_{P^\perp}$ respectively, and  $AJ_PA^{-1}=J_{P^\perp}$. The last identity implies
$J  (J_P \oplus J_{P^\perp}) = (A \oplus_I (-A^{-1})) (J_P \oplus J_{P^\perp}) =  (J_P \oplus J_{P^\perp})  (A \oplus_I (-A^{-1}))  = (J_P \oplus J_{P^\perp}) J$.
As a consequence $J_P \oplus J_{P^\perp}$ is complex linear. Furthermore, 
$(J_P \oplus J_{P^\perp})^2=-I$ and since $J_P \oplus J_{P^\perp}$ is isometric, we conclude that it is also unitary and thus it is a complex structure over (the complex Hilbert space) $\sH_J$.
By construction $J_P \oplus J_{P^\perp}$ commutes with $U$ and thus Theorem \ref{poinccomplexstructure}(d),(g) imply that $J_P \oplus J_{P^\perp} = \pm J$. This is impossible because the left-hand side leaves $\sH_P$ invariant while the right-hand side transforms it into $\sH_P^\perp$. Now, it remains to prove that $\gR_U=\{U_g\:|\:g\in\cP\}''=\gR$. Applying Theorem \ref{poinccomplexstructure} to the real WRES $g\mapsto U_g$ we get a complex structure $J_1$ on (the real Hilbert space) $\sH$ commuting with $U$ which is unique up to the sign. Moreover it holds $\gR_U=\gB(\sH_{J_1})$. Since $J$ is also a complex structure commuting with $U$ it must be $J_1=\pm J$ and so $\gR=\gB(\sH_J)=\gB(\sH_{J_1})=\gR_U$.
\end{proof}

\section{Conclusions}
This work has produced some, in our view interesting,  results (Theorems \ref{poinccomplexstructure} and \ref{main2}) regarding the formulation of quantum theories for elementary relativistic systems. We have in particular established that it is  not physically justified  to formulate the theory on a real Hilbert  space because some physical natural requirements give rise to an essentially unique and Poincar\'e invariant complex structure which commutes with all observables of the theory.  This structure  permits us to reformulate the whole theory in a complex Hilbert space. This formulation is less redundant than  the initial real one,  since differently from the real case, all selfadjoint operators represent observables. The final result is in agreement with the final picture of Sol\`er's theorem which however relies on different physical hypotheses.  This complex structure permits also to associate conserved quantities to the anti selfadjoint generators of the Poincar\'e group allowing the formulation of the quantum version of Noether theorem. Our results are valid also for massless particles where the position observable cannot be defined and the physical analysis by St\"uckelberg, leading to similar conclusions, cannot by applied.  
 The description of a relativistic elementary system has been discussed within two different frameworks. The former is closely related to Wigner's idea of elementary particle (Definition \ref{defERS}), the second (Definition \ref{RRES}) is based on a finer analysis and takes several technical subtleties into account like the fact that representations of continuous symmetries are generally projective unitary and not unitary. Both frameworks lead to the identical  final result.\\
It is however necessary to stress that our notion of elementary system does not encompass relevant physical situations where the commutant of the algebra of observables is not Abelian as it happens in the description of quarks, since the commutant includes a representation of $SU(3)$. However this situation is neither considered by the Wigner notion of elementary particle in complex Hilbert spaces. \\
A final remark about intrinsically quaternionic formulations will conclude our paper. Referring to the three possibilities arising from the thesis {\em Sth} of Sol\`er's  theorem a possibility remains open. This is the formulation of a quantum theory regarding an elementary relativistic theorem on a {\em quaternionic} Hilbert space.  Presumably the algebra of observables cannot coincide with the whole class of selfadjoint operators of $\gB(\sK)$
and the irreducibility of $U$ should be valid only referring to a sublattice of  projectors $\cL\subsetneq \cL(\sK)$, the true elementary observables of the quantum system, 
 similarly to what happens in Definition \ref{RRES}.  Indeed if this were not the case, we would presumably fall into a situation similar to the one discussed in the proof of Theorem \ref{main2} when we demonstrated that the real-quaternionic case leads to a contradiction. 

\section*{Acknowledgements} The authors are grateful to C. Dappiaggi and  S. Mazzucchi for useful comments and discussions.

\appendix 

\section*{Appendix}
\section{Elementary lattice types}\label{Alattices} 
If $\cL$ is a bounded lattice,  $a\in \cL\setminus\{{\bf 0}\}$ is said to be an {\bf atom}, if ${\bf 0} \leq p \leq  a$ implies either $p={\bf 0}$ or $p=a$. 
 Furthermore  $a\in \cL$ is said to {\bf cover} $b\in \cL$ if $a\geq b$, $a\neq b$, and $a\geq c \geq b$ implies either $c=a$ or $c=b$.
A bounded orthocomplemented lattice $\cL$  is said to be {\bf distributive} or {\bf Boolean} if, for all $p,q,r \in \cL$,  we have $p\wedge (q \vee r)= (p \wedge q) \vee (p \wedge r)$ and $p\vee (q \wedge r)= (p \vee q) \wedge (p \vee r)$.\\
The following definitions are valid for  a bounded orthocomplemented lattice $\cL$.

(i) $\cL$  is said to be {\bf orthomodular}, if  $q \geq p$ implies $q= p\vee((p^\perp )\wedge q)$,  $\forall p,q \in \cL$ (if $\cL$ is distributive it is always  orthomodular, the converse is false).

(ii) $\cL$  is said to be  {\bf complete}, resp., {\bf  $\sigma$-complete},   if every  set, resp. countable set, 
 $A \subset \cL$ admits least upper bound ($\vee_{a\in A} a:= \sup A$)
 and  greatest lower  bound ($\wedge_{a\in A} a:= \inf A$) in $\cL$. (In this case {\bf De Morgan's rules} turn out to be valid  also for the case of $A$ infinite, resp., countably infinite:  
$\left(\vee_{a\in A} a \right)^\perp = \wedge_{a\in A} a^\perp$ and 
$\left(\wedge_{a\in A} a \right)^\perp = \vee_{a\in A} a^\perp$.)

(iii)  $\cL$  is said to be {\bf atomic},  if  for any $r\in \cL \setminus\{{\bf 0}\}$ there exists an atom $a$ with  $a\leq r$.

(iii)'  $\cL$  is said to be {\bf atomistic},  if it is atomic and  for every $r \in \cL \setminus\{{\bf 0}\}$,  $r$ is the $\sup$  of the set of atoms $a\leq r$.

(iv)  $\cL$  is said  satisfy the {\bf  covering property},  if $a,p\in \cL$ with  $a$ atom, satisfy $a\wedge p = {\bf 0}$, then  
$a\vee p$ covers $p$.

(v)  $\cL$  is said to be {\bf  separable}, if $\{r_j\}_{j\in A} \subset \cL$ satisfies 
$r_i \perp r_j$, $i\neq j$, then $A$ is finite or countable.

(vi)  $\cL$  is said to be {\bf irreducible}, if the only elements of $\cL$ commuting with every elements of $\cL$ are ${\bf 0}$ and ${\bf 1}$.\\
The bounded orthocomplemented lattice of orthogonal projectors $\cL(\sH)$ in a real, complex or quaternionic  Hilbert space $\sH$ satisfies all properties (i)-(vi) above.
In particular $\cL(\sH)$ is $\sigma$-complete if $\sH$ is separable.
 $P \in \cL(\sH)$
is an atom if and only if $\dim(P(\sH))=1$.  $\cL(\sH)$  is not boolean if $\dim(\sH)>2$.

\section{Definitions and technical results for real and complex Hilbert spaces}\label{secstatic}
\begin{definition}\label{defSP} {\em If $\sH$ is a, respectively real or complex vector space, a respectively {\bf real} or {\bf Hermitian scalar product} is a map 
$(\cdot|\cdot) : \sH \times \sH \to \bR$ resp. $\bC$, which is 

(i)   $\bR$-linear, resp., $\bC$-linear  in the right entry; 

(ii) symmetric ($(x|y)= (y|x)$), resp., Hermitian ($(x|y)= \overline{(y|x))}$); 

(iii) positively defined ($(x|x) \geq 0$ and $(x|x)=0$ implies $x=0$).}
\end{definition}
\noindent Under these conditions the {\bf Cauchy-Schwartz inequality} is valid
$$|(x|y)| \leq \sqrt{(x|x)}\sqrt{(y|y)}\quad x,y, \in \sH\:.$$
and the map $\sH \ni x \mapsto ||x||:= \sqrt{(x|x)}$ turns out to be a norm over $\sH$.\\
The {\bf polarization identity} holds for the respectively real and complex case if $x,y \in \sH$:
$$(x|y) = \frac{1}{4}\left(||x+y||^2 -  ||x-y||^2\right)\:,$$
$$(x|y) =  \frac{1}{4}\left(||x+y||^2 -  ||x-y||^2- i||x+iy||^2 +i ||x-iy||^2\right)$$
\begin{remark} {\em These identities immediately imply that a real or complex linear map between two, respectively, both real or both complex vector spaces, equipped with respectively, real or Hermitian,  
 scalar products, preserves the scalar products if and only if it preserves the associated norms. }
\end{remark}

\begin{definition}\label{defHRC} {\em A {\bf real} or {\bf complex  Hilbert space}  is a respectively  real  or complex vector space $\sH$ equipped 
with a, respectively  real  or  Hermitian, scalar product $(\cdot|\cdot)$ and such that $\sH$ is complete with respect to the norm  $||x|| := \sqrt{( x|x )}$, $x \in \sH$. \\
If $\sH_1$, $\sH_2$ are  both real or both complex  Hilbert spaces,  $f : \sH_1 \to \sH_2$ is a {\bf Hilbert space isomorphism} if it is, respectively, real or complex linear, surjective and preserves the norm 
(thus it is also injective). In this case $\sH_1$ and $\sH_2$ are said to be {\bf isomorphic}.\\ If $\sH_1=\sH_2$, said $f$ is called {\bf Hilbert space automorphism}.}
\end{definition}
\begin{definition}\label{deforth}  {\em If $M\subset \sH$, the closed subspace $M^\perp := \{x\in  \sH\:|\: (x|y)=0\:, \forall y \in M\}$ is the  (respectively real or complex) {\bf orthogonal} of  $M$.}
\end{definition}
 \noindent  Properties of $^\perp$ are identical
 for the real and complex case (e.g. see \cite{R}), in particular, $$\overline{\mbox{span}(M)} = (M^\perp)^\perp \mbox{  and  }\sH = \overline{\mbox{span}(M)} \oplus M^\perp$$
where the bar denotes the topological closure and  $\oplus$ denotes the orthogonal sum of subspaces. The {\bf Riesz lemma} holds for both real and complex Hilbert spaces (the proof being the same (e.g. see \cite{R})):
\begin{theorem}\label{Rlemma}  Let  $\sH$ be a   real or complex  Hilbert space.  $\phi: \sH \to \bR$, respectively $\bC$, is linear and continuous if and only if has the form $\phi = ( x_\phi|\:\:  )$, where $x_\phi \in \sH$ is uniquely determined by $\phi$. Moreover $\|\phi\|:=\sup_{\|x\|=1}|\phi(x)|=\|x_\phi\|$.
\end{theorem}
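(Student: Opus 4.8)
The plan is to prove the two implications separately and then the norm identity, handling the real and complex cases uniformly while isolating the single place where the field actually enters (a complex conjugation).

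First I would dispatch the easy direction. Suppose $\phi = ( x_\phi|\:\cdot\:)$ for some $x_\phi\in\sH$. Then $\phi$ is linear because the scalar product is linear in the right entry (Definition \ref{defSP}(i)), and the Cauchy-Schwartz inequality gives $|\phi(x)|=|( x_\phi|x)|\le \|x_\phi\|\,\|x\|$, so $\phi$ is continuous with $\|\phi\|\le\|x_\phi\|$. Evaluating on $x=x_\phi/\|x_\phi\|$ (the case $x_\phi=0$ being trivial) yields $\|\phi\|\ge \|x_\phi\|$, hence $\|\phi\|=\|x_\phi\|$. This settles both sufficiency and the norm equality for any functional once we know it admits such a representation.

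For the necessity, let $\phi$ be linear and continuous. If $\phi=0$ I would take $x_\phi=0$. Otherwise $N:=\{x\in\sH\:|\:\phi(x)=0\}$ is a proper subspace, closed by continuity of $\phi$, so the orthogonal decomposition $\sH=\overline{N}\oplus N^\perp = N\oplus N^\perp$ recalled just above the statement provides some $z\in N^\perp$ with $z\neq 0$. The key algebraic step is to observe that, for every $x\in\sH$, the vector $u:=\phi(x)z-\phi(z)x$ lies in $N$, since $\phi(u)=\phi(x)\phi(z)-\phi(z)\phi(x)=0$. Taking the scalar product with $z\in N^\perp$ annihilates $u$ and leaves $\phi(x)( z|z)-\phi(z)( z|x)=0$, whence $\phi(x)=\frac{\phi(z)}{\|z\|^2}( z|x)$. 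To recast this as $( x_\phi|x)$ I must move the scalar $\frac{\phi(z)}{\|z\|^2}$ into the left entry, where it is conjugated; this is the only point where the real/complex distinction is visible, the conjugation being the identity in the real case. Thus $x_\phi:=\frac{\overline{\phi(z)}}{\|z\|^2}\,z$ represents $\phi$.

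Finally, uniqueness follows from positive-definiteness: if $( x_\phi|x)=( x'_\phi|x)$ for all $x$, then $( x_\phi-x'_\phi|x)=0$ for every $x$, and choosing $x=x_\phi-x'_\phi$ forces $\|x_\phi-x'_\phi\|^2=0$, i.e. $x_\phi=x'_\phi$. I do not expect a genuine obstacle anywhere: the single substantive ingredient is the existence of a nonzero vector in $N^\perp$ for a proper closed subspace $N$, which rests on completeness of $\sH$ and is precisely the orthogonal decomposition quoted in the excerpt; all remaining steps are elementary and manifestly identical in the real and complex cases apart from the harmless conjugation noted above.
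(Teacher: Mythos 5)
Your proof is correct, and it is precisely the standard kernel-plus-orthogonal-complement argument that the paper itself does not spell out but defers to the literature (the paper only remarks that the proof is the same in the real and complex cases and cites \cite{R}, where this very argument appears). You have also correctly respected the paper's convention of linearity in the \emph{right} entry of the scalar product, so that the conjugation lands on the scalar $\overline{\phi(z)}$ when it is moved into the left slot --- the one genuinely field-dependent step, exactly as you flagged.
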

\begin{definition}\label{defHB} {\em A {\bf Hilbert basis} of the, either real or complex, Hilbert space $\sH$ is a maximal set $N$ of unit-norm pairwise orthogonal vectors.}
\end{definition}
\noindent Zorn's lemma implies the existence of a Hilbert basis for every real or complex Hilbert space. A real or complex  Hilbert space $\sH$  is {\bf separable} (i.e., it admits a dense numerable subset) if either it is finite dimensional or admits a countable Hilbert basis.  If $N\subset \sH$ is a Hilbert basis, the 
standard orthogonal decompositions hold 
$$x= \sum_{z\in N} (z|x)z\;, \quad ||x||^2= \sum_{z\in N} |(z|x)|^2\:, \quad (x|y) = \sum_{z\in N} (z|x)(z|y)\quad \mbox{for every $x, y\in \sH$,}$$ 
the first series  converges with respect to the norm topology of $\sH$,   at most a countable set of summed elements do not vanish  in each series,  and each  series can be rearranged arbitrarily. The proofs of
 these fact can be found, e.g., in \cite{R,S,M}  and they are essentially identical for the real and complex case. 
 
\begin{definition}\label{defopant} {\em Let  $\sH$ be a  real or complex  Hilbert space and $D(A)\subset \sH$ a, respectively real or complex,  subspace. 
An {\bf operator} in $\sH$ is a, respectively $\bR$-linear or $\bC$-linear,  map $A : D(A) \to \sH$.\\
If $\sH$ is complex,  an {\bf anti linear operator} in $\sH$  is a map $A : D(A) \to \sH$, $D(A)$ complex subspace of $\sH$, such that
 $$A(a x + b y)= \overline{a}A(x)
+ \overline{b}A(y)\quad\mbox{ if $a,b \in \bC$ and $x,y\in D(A)$.}$$
In both cases $D(A)$ is called the {\bf domain} of $A$.\\
If $A: \sH \to \sH$ is an operator  $$||A|| := \sup_{||x||=1} ||Ax||\:.$$ If $||A||<+\infty$, $A$ is said to be {\bf bounded} and   $\gB(\sH)$ denotes the set of the bounded operators in $\sH$ with domain coinciding with $\sH$. These are all of continuous operators $A : \sH \to \sH$.\\
The symbol $I$ always  denotes the {\bf identity map}  $I : \sH \ni x \mapsto x \in \sH$.}
\end{definition}
\noindent It turns out that: 

(i)   $\gB(\sH)$ is an {\bf unital (associative) algebra} over, respectively, $\bR$ or $\bC$.
(The  vector space  structure  is $(aA+bB)x := aAx+bBx$ for $x\in \sH$ and $a,b \in \bR$, resp. $\bC$, the associative product 
is  the  composition of functions $(AB)x= A(Bx)$ if $x\in \sH$, and $I$ is the multiplicative unit.)

(ii)  $\gB(\sH) \ni A \mapsto ||A||$  is a norm  satisfying $||AB||\leq ||A||\:||B||$ if $A,B \in \gB(\sH)$ and  $||I||=1$.

(iii) $\gB(\sH)$ is complete with respect to the said norm. \\
Properties (i),(ii),(iii)  make $\gB(\sH)$ a, respectively, real or complex  {\bf  (associative)  unital  Banach algebra} (see, e.g.,\cite{M}).

\begin{definition}\label{defdomain} {\em If $A: D(A) \to \sH$ and $B: D(B) \to \sH$ are, respectively $\bR$ or  $\bC$, linear operators with domains  $D(A),D(B) \subset \sH$,  $$A\subset B \mbox{  means that $D(A) \subset D(B)$  and $B|_{D(A)}=A$.}$$
We adopt standard conventions regarding domains  of combinations of operators $A,B$,

 (i) $D(AB) := \{x \in D(B) \:|\: Bx \in D(A)\}$,

(ii) $D(A+B) := D(A) \cap D(B)$,

(iii)  $D(\alpha A)= D(A)$ for $\alpha \neq 0$.}
\end{definition}
\begin{remark}\label{remassociativity} {\em With these standard definition of domains, we adopt everywhere in the work, the sum and the product turn out to be {\em associative} referring to three operators $A,B,C$ with arbitrary  domains in the same Hilbert space $\sH$: $(A+B)+C= A+(B+C)$ and $(AB)C=A(BC)$. Furthermore $A \subset B$ implies both
$AC \subset BC$ and $CA \subset CB$. Finally $A\subset B$ and $B\subset A$ imply $A=B$.}
\end{remark}  

\begin{definition}\label{defagg} {\em Let $\sH$ be an either real or complex Hilbert space and consider the, respectively  $\bR$-linear
or $\bC$-linear operator
$A: D(A) \to \sH$ where $D(A)\subset \sH$ is dense.\\
$A^* : D(A^*)\to  \sH$  is the {\bf adjoint} operator of $A$ if 
$$D(A^*) := \left\{ y \in \sH \:|\: \exists z_y \in \sH \mbox{ s.t. }
( y|Ax)  = ( z_y|x)\: \forall x \in D(A)  \right\}\:,\quad  A^* y := z_y\:,\:\forall y \in D(A^*)\:.$$}
\end{definition}
\noindent The fact that $D(A)$ is dense immediately implies that $A^*$ is well-defined.

\begin{remark}\label{sumselfadj}$\null$\\
{\em By direct application of the given definitions we get well-known technical results.\\
{\bf (a)} If $A$ is densely defined and $A\subset B$ then $B^* \subset A^*$. \\
{\bf (b)}  $A^* \in \gB(\sH)$  whenever $A \in \gB(\sH)$.
In this case $(A^*)^*=A$. \\
{\bf (c)} For densely defined operators $A,B$ in $\sH$ and $a\in \bR$
or $a \in \bC$ depending on the nature of $\sH$, if $D(A^*B)$ and $D(AB)$ are dense, then
$$(\alpha A)^* = \overline{\alpha}A^*\quad  \mbox{and}   \quad A^*+B^* \subset (A+B)^*\quad \mbox{and} \quad A^*B^* \subset (BA)^*\:,$$
where obviously $\overline{\alpha}=\alpha$ if $\alpha\in \bR$.
The above relations hold in a stronger version on $\gB(\sH)$, making $\gB(\sH) \ni A \mapsto A^* \in \gB(\sH)$ a, respectively real or complex,  {\bf involution}  over the algebra $\gB(\sH)$:
$$(\alpha A)^* = \overline{\alpha}A^*\quad  \mbox{and} \quad   A^*+B^*= (A+B)^* \quad \mbox{and} \quad A^*B^* = (BA)^* \quad A,B \in \gB(\sH)$$
{\bf (d)} The above identities are also valid if  $B\in \gB(\sH)$ and $A$ is densely defined.\\
{\bf (e)} Let $D(A)$ be dense and $U$ be unitary, then it is easy to show that $(UAU^*)^*=UA^*U^*$
}
\end{remark}

\begin{remark} {\em The unital algebra $\gB(\sH)$ is closed with respect to the involution 
$\gB(\sH) \ni A \mapsto A^* \in \gB(\sH)$ so that it is a, respectively  real  or complex, {\bf unital  $^*$-algebra}.
Since  it also satisfies the {\bf $C^*$-property}:  $||A^*A||= ||A||^2$ for $A \in \gB(\sH)$, $\gB(\sH)$ is a, respectively  real  or complex, {\bf unital $C^*$-algebra} (see, e.g.,\cite{M}).}
\end{remark}

\begin{definition}\label{defclosop} {\em Let $\sH$ be an either real or complex Hilbert space and consider the, respectively real or complex, linear operator
$A: D(A) \to \sH$.\\
{\bf (1)} $A$  is said to be {\bf closed} if  the {\bf graph}   of $A$, that is the set pairs $(x, Ax) \subset \sH \times \sH$ with $x\in D(A)$,  is closed in the product topology of $\sH\times \sH$. \\
{\bf (2)} $A$ is {\bf closable}
if the closure of its graph is the graph of an operator, denoted by $\overline{A}$, and called the {\bf closure} of $A$.\\
{\bf (3)} If $A$ is closable, a respectively real or complex, subspace $S \subset D(A)$ is called {\bf core} for $A$ if $\overline{A|_S} = \overline{A}$.}
\end{definition}

\begin{remark}\label{remarkclosure}$\null$\\
{\em {\bf (a)} Directly from the definition,  $A$ is closable if and only if  there are no sequences of elements $x_n \in D(A)$ such that $x_n \to 0$ and $Ax_n \to y\neq 0$ as $n \to +\infty$. In this case $D(\overline{A})$ is made of the elements $x\in \sH$ such that $x_n \to x$ and $Tx_n \to y_x$ for some sequences $\{x_n\}_{n \in \bN} \subset D(A)$
and some $y_x \in \sH$. In this case $\overline{A}x = y_x$. \\
{\bf (b)} As a consequence of (a) one has that,
if $A$ is closable, then $aA+bI$ is closable and   $\overline{aA+bI} = a \overline{A} + b I$ for every $a,b$ real or complex numbers in accordance with  $\sH$.\\
{\bf (c)}  Directly from the definition,  $A$ is closed if and only if 
$D(A) \ni x_n \to x \in \sH$ and $Tx_n \to y \in \sH$ imply both $x \in D(A)$ and $y=Ax$.\\
{\bf (d)} If $A$ is densely defined,
$A^*$ is always closed from the definition of adjoint operator and (c) above. Moreover, a densely defined operator $A$ is closable if and only if $D(A^*)$ is dense. In this case $\overline{A}= (A^*)^*$. The proof is the same in real and complex case see, e.g., \cite{M}.\\
{\bf (e)} If $A$ is densely defined we have $$Ker(A^*)=Ran(A)^\perp\:,  \quad Ker(A^*)^\perp = \overline{Ran(A)}\:, \quad Ker(A)\subset Ran(A^*)^\perp\:.$$ The last inclusion becomes an identity when $A \in \gB(\sH)$. The proofs of these relations are 
elementary and  identical in real and complex Hilbert spaces \cite{M}.}
\end{remark}
\noindent The {\bf closed graph theorem} holds for both the real and the complex Hilbert space case  since the well known proof is valid in real or complex Banach spaces.
\begin{theorem}\label{cgT}
Let $\sH$ be a real or complex Hilbert space. A, respectively $\bR$-linear  or $\bC$-linear, operator $A: \sH \to \sH$ is closed if and only if $A \in \gB(\sH)$.
\end{theorem}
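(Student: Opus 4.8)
The statement is the closed graph theorem, and it splits into two implications. The easy direction, that $A \in \gB(\sH)$ implies $A$ is closed, follows immediately from continuity: if $A$ is bounded and everywhere defined it is continuous, so whenever $x_n \to x$ and $Ax_n \to y$ one has $Ax_n \to Ax$, forcing $y = Ax$; by the sequential characterization of closedness recorded in Remark \ref{remarkclosure} (c) this makes $A$ closed. Nothing here depends on the scalar field. The genuine content is the converse, and the plan is to reduce the boundedness of an everywhere-defined closed operator to the open mapping (equivalently, bounded inverse) theorem for Banach spaces.

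First I equip the product $\sH \times \sH$ with the norm $||(x,y)||^2 := ||x||^2 + ||y||^2$; since $\sH$ is complete this makes $\sH \times \sH$ a Banach space (in fact a Hilbert space) over the same field $\bR$ or $\bC$, and its topology is the product topology used in Definition \ref{defclosop}. The graph $\Gamma(A) := \{(x,Ax)\:|\: x \in \sH\}$ is a linear subspace which, by the hypothesis that $A$ is closed, is closed in $\sH \times \sH$; being a closed subspace of a complete space, $\Gamma(A)$ is itself a Banach space. I then introduce the two coordinate projections restricted to the graph, $\pi_1 : \Gamma(A) \ni (x,Ax) \mapsto x \in \sH$ and $\pi_2 : \Gamma(A) \ni (x,Ax) \mapsto Ax \in \sH$. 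Both are linear and bounded, since $||\pi_i(x,Ax)|| \leq ||(x,Ax)||$. Because $A$ is everywhere defined, $\pi_1$ is a bijection of $\Gamma(A)$ onto $\sH$: it is injective (if the first coordinate vanishes the pair is $(0,0)$) and surjective (every $x$ is the first coordinate of $(x,Ax)$). Thus $\pi_1$ is a continuous linear bijection between two Banach spaces over the common field.

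The key step is to invoke the bounded inverse theorem, a standard consequence of the open mapping theorem, which asserts that a continuous linear bijection between Banach spaces has continuous inverse; hence $\pi_1^{-1} : \sH \to \Gamma(A)$ is bounded and $A = \pi_2 \circ \pi_1^{-1}$ is a composition of bounded operators, so $A \in \gB(\sH)$. The only nontrivial ingredient is this open mapping / bounded inverse theorem, which rests on the Baire category theorem; everything else is purely formal packaging of the graph and its projections. The main point to stress — and the reason the result holds verbatim in both settings — is that the Baire category argument and the open mapping theorem are proved identically for Banach spaces over $\bR$ and over $\bC$, the scalar field playing no distinctive role. Since all the ambient spaces $\sH$, $\sH \times \sH$ and $\Gamma(A)$ are Banach spaces over the same field, the classical theorem (see, e.g., \cite{R,M}) applies directly, and no field-specific obstruction arises.
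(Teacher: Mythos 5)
Your proof is correct and is precisely the classical argument the paper has in mind: the paper states Theorem \ref{cgT} without proof, remarking only that the standard closed graph theorem for Banach spaces (via the graph $\Gamma(A)$, the bounded inverse theorem, and Baire category) works identically over $\bR$ and $\bC$, which is exactly the reconstruction you give. No discrepancy to report.
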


\begin{definition}\label{defop} {\em Let $\sH$ be an either real or complex Hilbert space and consider the,  respectively $\bR$-linear  or $\bC$-linear,  operator $A: D(A) \to \sH$. $A$ is said to be

(1) {\bf  symmetric} if it is densely defined and $A \subset A^*$,

(2) {\bf anti symmetric} if it is densely defined and $-A \subset A^*$

(3) {\bf selfadjoint}  if it is densely defined and  $A=A^*$, 

(4) {\bf anti selfadjoint} if it is densely defined  $-A=A^* $, 

(5) {\bf essentially selfadjoint} if it is symmetric and  $(A^*)^* = A^*$. 

(6) {\bf unitary} if  $A^{*}A= AA^* = I$. 

(7) {\bf normal} if it is densely defined and $AA^*=A^*A$.

(8) {\bf positive}, written $A \geq 0$, if $(x|Ax)\geq 0$ for every $x \in D(A)$.

(9) an {\bf isometry} if $D(A)= \sH$ and $A$ is norm preserving.

(10) a {\bf partial isometry} if $A \in \gB(\sH)$ and $A$ is norm preserving on $Ker(A)^\perp$.}
\end{definition}

\begin{remark}\label{essentclos}$\null$\\
{\em {\bf (a)} If $A$ is symmetric and $D(A)= \sH$ (so that $A=A^*$), then it is bounded as an immediate consequence of the
 closed graph theorem.\\
{\bf (b)} If $A$ is unitary then $A, A^* \in \gB(\sH)$, the proof is elementary. Notice also that the following facts are equivalent 
for the operator  $A: \sH \to \sH$: (i) $A$ is unitary, (ii) $A$ is a surjective isometry, (iii) $A$ is surjective and preserves the scalar 
product, (iv) $A$ is a Hilbert space automorphism.  Finally $A$ is an isometry if and only if $A^*A=I$.
If $A$ is a partial isometry, it is easy to prove that
$A^*A$ is the orthogonal projector (see Def.\ref{defproJ}) onto $Ker(A)^\perp$ and $AA^*$ is the orthogonal projector onto $Ran(A)= \overline{Ran(A)}$.\
\\
{\bf (c)} It is easy to show that a symmetric operator is always closable, moreover for such an operator the following conditions 
are equivalent :
(i) $(A^*)^* = A^*$ ($A$ is essentially selfadjoint), (ii) $\overline{A}= A^*$, (iii) $\overline{A}= (\overline{A})^*$.
 If these conditions are valid, $\overline{A}= (A^*)^* =A^*$ is the unique selfadjoint extension of $A$. 
The proof is the same in the real and the complex case (e.g., see \cite{M}).\\
{\bf (d)} If $A\subset B$ are symmetric operators and $A$ is essentially selfadjoint, then also $B$ is essentially selfadjoint and $\overline{A}= \overline{B}$. The proof is elementary.\\
{\bf (e)}  In the complex Hilbert space  case $A$, is anti selfadjoint if and only if $iA$ is selfadjoint.\\
{\bf (f)} Let $U$ be unitary, then $A$ is closable iff $UA$ is closable iff $AU$ is closable. In this case $\overline{UA}=U\overline{A}$ and $\overline{AU}=\overline{A}U$. As a consequence $\overline{UAU^*}=U\overline{A}U^*$.}
\end{remark}

\begin{definition}\label{defproJ} {\em Let $\sH$ be an either real or complex Hilbert space.
 $P \in \gB(\sH)$ is called {\bf orthogonal projector} when $PP=P$ and $P^*=P$.  ${\cal L}(\sH)$ denotes the set  of orthogonal projectors of $\sH$.}
\end{definition}

\begin{remark} {\em  Let $\sH$ an either  real or complex Hilbert space.
If $P \in {\cal L}(\sH)$, then $P(\sH)$ is a closed, respectively real or complex subspace. If $\sH_0 \subset \sH$ is a closed, respectively real or complex subspace, there exists exactly one  $P \in  {\cal L}(\sH)$
such that $P(\sH) = \sH_0$. Finally, $I-P \in  {\cal L}(\sH)$ and it projects onto $\sH_0^\perp$. The proofs are  identical in real and complex Hilbert spaces (e.g., see \cite{M}). }
\end{remark}
\noindent The definition of spectrum of the operator $A: D(A) \to \sH$ is the same for both real and  complex Hilbert spaces.
\begin{definition}\label{defspec} {\em Let $\sH$ be an either real or complex Hilbert space 
and let  $\bK$ denote the field of $\sH$. Consider a $\bK$-linear operator $A: D(A) \to \sH$, with 
$D(A)\subset \sH$. The {\bf resolvent set} of $A$ is the subset of $\bK$,
 $$\rho(A) := \{\lambda \in \bK \:|\:  (A-\lambda I) \mbox{ is injective on $D(A)$, } \overline{Ran(A-\lambda I)}=\sH\:, (A- \lambda I)^{-1} \mbox{is bounded} \}$$
The {\bf spectrum} of $A$ is the set $\sigma(A):=\mathbb{K}\setminus\rho(A)$ and it is given by the union of the following pairwise disjoint three parts:

(i) the  {\bf point-spectrum}, $\sigma_p(A)$,  where $A-\lambda I$ not injective ($\sigma_p(A)$ is the set of {\bf eigenvalues} of $A$),

(ii) the  {\bf continuous spectrum},  $\sigma_c(A)$, where $A-\lambda I$  injective, $\overline{Ran(A-\lambda I)} = \sH$ and $(A-\lambda I)^{-1}$ not bounded, 

(iii) the {\bf residual spectrum}, $\sigma_r(A)$, where $A-\lambda I$ injective and $\overline{Ran(A-\lambda I)} \neq \sH$.}
\end{definition}

\begin{remark}$\null$\\
{\em {\bf (a)} If $A=\pm A^*$ or if $A$  is unitary,  the residual spectrum is absent  (e.g., see \cite{M}). \\
{\bf (b)} If $\bK=\bC$, then  $A=A^*$ implies $\sigma(A)\subset \bR$, $A=-A^*$ implies $\sigma(A)\subset i \mathbb R$, and $AA^*=A^*A=I$ implies $\sigma(A)\subset \{ e^{ia}\:|\: a \in \bR\}$. (e.g., see \cite{M}).\\
{\bf (c)} If $\bK=\bR$, it turns out that $A=A^*$ implies $\sigma(A)\subset \bR$, $A=-A^*$ implies $\sigma(A) = \emptyset$, $AA^*=A^*A=I$  implies $\sigma(A)\subset \{\pm 1\}$. The proof is similar to the one for the complex case.}
\end{remark}

\begin{definition}\label{defPVM} {\em Let $\sH$ be an either real or complex Hilbert space and $\Sigma(X)$ a $\sigma$-algebra over $X$. A {\bf projector-valued measure (PVM)} over $X$ is a map $\Sigma(X) \ni E \mapsto P_E \in {\cal L}(\sH)$ such that

 (i)  $P_X=I$, 

(ii) $P_EP_F = P_{E\cap F}$,

 (iii) 
$\sum_{j \in N} P_{E_j}x= P_{\cup_{j\in N}E_j}x$ for $x\in \sH$, $N$  finite or countable, $E_j \cap E_k = \emptyset$ if $k\neq j$. }
\end{definition}

\begin{remark}
{\em If $x,y \in \sH$, 
$\Sigma(X) \ni E \mapsto ( x|P_Ey) =: \mu^{(P)}_{xy}(E)$ is a {\em signed measure} if $\sH$ is real  or, respectively, a {\em complex measure} if $\sH$ is complex. In both cases  the {\em finite  total variation} is denoted by $|\mu^{(P)}_{xy}|$. It holds  
 $\mu^{(P)}_{xy}(X)= ( x| y )$ and  $\mu^{(P)}_{xx}$ is always positive and finite. The proof are elementary and  identical  in the real and the complex case (e.g., see \cite{R,M}).}
\end{remark}
\noindent We have a fundamental notion defined in the following proposition which can be demonstrated with  an essentially identical proof for real and complex Hilbert spaces \cite{R,S,M}.
\begin{proposition}\label{propint}
Let $\sH$ be an either real or complex Hilbert space  and $P: \Sigma(X) \to {\cal L}(\sH)$ a PVM. 
If $f: X \to \bK$ is measurable  where $\bK$ is the field of $\sH$, define 
$$\Delta_f := \left\{x \in \sH \:\left|\:  \int_{X} |f(\lambda)|^2 \mu^{(P)}_{xx}(\lambda)\right.\right\}\:.$$
{\bf (a)} $\Delta_f$ is a, respectively, real or complex  subspace of $\sH$ and  there is a unique operator  
\beq \int_X f(\lambda) dP(\lambda) : \Delta_f \to \sH \label{intop}\eeq
such that 
\beq \left( x  \left| \int_X f(\lambda) dP(\lambda) y \right.\right) = 
\int_{X} f(\lambda) \mu^{(P)}_{xy}(\lambda)\quad \forall x \in \sH \:, \forall y \in \Delta_f \label{intop2}\eeq
{\bf (b)} $\Delta_f$ is dense in $\sH$ and  the operator in (\ref{intop})   is closed and normal. \\
{\bf (c)} The operator in (\ref{intop})   is bounded if and only if $\Delta_f = \sH$ and this is equivalent to say that $f$ is essentially bounded with respect to $P$.\\
{\bf (d)} It holds
\beq \left( \int_X  f(\lambda)\:  d P(\lambda)\right)^* = \int_X \overline{f(\lambda)}  d P(\lambda)\:, \eeq 
where $ \overline{f(\lambda)}$ is replaced by  $f(\lambda)$ in the real-Hilbert space case,
and 
\beq\label{inin} \left|\left| \int_X  f(\lambda)\:  d P(\lambda) x\right|\right|^2 = \int_X |f(\lambda)|^2 d \mu^{(P)}_{xx}(\lambda) \quad \forall x \in \Delta_f \:.\eeq
\end{proposition}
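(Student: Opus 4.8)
The plan is to reproduce the classical construction of the functional calculus attached to a PVM, observing at the outset that every step is insensitive to the scalar field: complex conjugation enters only through $\overline{f}$, which collapses to $f$ and to real-valued measures $\mu^{(P)}_{xy}$ when $\sH$ is real. I would first settle the \textbf{subspace property} of $\Delta_f$. Writing $\mu^{(P)}_{xx}(E)=\|P_E x\|^2$ and using $\|P_E(x+y)\|^2\le 2\|P_E x\|^2+2\|P_E y\|^2$, one obtains the inequality of positive measures $\mu^{(P)}_{(x+y)(x+y)}\le 2\mu^{(P)}_{xx}+2\mu^{(P)}_{yy}$, whence $\int_X|f|^2\,d\mu^{(P)}_{(x+y)(x+y)}<+\infty$ for $x,y\in\Delta_f$; closure under scalars is immediate since $\mu^{(P)}_{(\alpha x)(\alpha x)}=|\alpha|^2\mu^{(P)}_{xx}$.

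Next, the operator. I would handle \textbf{simple functions} first: for $s=\sum_{k=1}^n c_k\chi_{E_k}$ with pairwise disjoint $E_k$, set $T_s:=\sum_k c_k P_{E_k}\in\gB(\sH)$; using $P_{E_j}P_{E_k}=P_{E_j\cap E_k}$ and $P_E^*=P_E$ one checks directly that $(x|T_s y)=\int_X s\,d\mu^{(P)}_{xy}$, that $\|T_s y\|^2=\int_X|s|^2\,d\mu^{(P)}_{yy}$, and the relations $T_s^*=T_{\bar s}$, $T_sT_{s'}=T_{ss'}$. For general $f$ and fixed $y\in\Delta_f$, i.e. $f\in L^2(X,\mu^{(P)}_{yy})$, I would choose simple $s_m\to f$ in $L^2(\mu^{(P)}_{yy})$; the isometry identity makes $\{T_{s_m}y\}$ Cauchy, so I define $\int_X f\,dP\,y$ as its limit, checking independence of the approximating sequence. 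Passing to the limit yields both the defining relation (\ref{intop2}) and the norm formula (\ref{inin}), while uniqueness is forced by non-degeneracy of the scalar product, two solutions of (\ref{intop2}) differing by an operator $D$ with $(x|Dy)=0$ for all $x$.

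For \textbf{density} in part (b) I would use the cutoffs $X_n:=\{\,|f|\le n\,\}$: since $X_n\uparrow X$, the $\sigma$-additivity of the PVM gives $P_{X_n}\to I$ strongly, and $\mu^{(P)}_{P_{X_n}y,\,P_{X_n}y}(E)=\mu^{(P)}_{yy}(E\cap X_n)$ shows $P_{X_n}y\in\Delta_f$, so the dense set $\bigcup_n P_{X_n}(\sH)$ sits inside $\Delta_f$. The \textbf{adjoint formula} in part (d) is the technical heart. The inclusion $T_{\bar f}\subset(\int_X f\,dP)^*$ is a direct computation from (\ref{intop2}) together with $\overline{\mu^{(P)}_{xy}}=\mu^{(P)}_{yx}$, which itself follows from $P_E^*=P_E$. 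The reverse inclusion, that $y\in D((\int_X f\,dP)^*)$ forces $y\in\Delta_{\bar f}=\Delta_f$, is where I expect the real work: for such $y$ and $f_n:=f\chi_{X_n}$ (bounded, hence $T_{f_n}=(\int_X f\,dP)\,P_{X_n}$ everywhere defined) one identifies $T_{\bar f_n}y=P_{X_n}(\int_X f\,dP)^*y$, so that $\int_X|f_n|^2\,d\mu^{(P)}_{yy}=\|P_{X_n}(\int_X f\,dP)^*y\|^2\le\|(\int_X f\,dP)^*y\|^2$, and monotone convergence gives $\int_X|f|^2\,d\mu^{(P)}_{yy}\le\|(\int_X f\,dP)^*y\|^2<+\infty$. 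This yields $(\int_X f\,dP)^*=T_{\bar f}$; \emph{closedness} of $\int_X f\,dP$ then comes for free, adjoints being closed (Remark \ref{remarkclosure}(d)), and \emph{normality} from the multiplicative identity $(\int_X f\,dP)^*(\int_X f\,dP)=T_{|f|^2}=(\int_X f\,dP)(\int_X f\,dP)^*$ obtained by the same approximation.

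Finally, part (c): if $f$ is essentially bounded with $P_{\{|f|>M\}}=0$, then $\int_X|f|^2\,d\mu^{(P)}_{xx}\le M^2\|x\|^2$ gives $\Delta_f=\sH$ and $\|\int_X f\,dP\|\le M$; conversely $\Delta_f=\sH$ makes the closed operator everywhere defined, hence bounded by the closed graph theorem (Theorem \ref{cgT}), and if $f$ were not essentially bounded one builds unit vectors $y_n\in P_{\{n\le|f|<n+1\}}(\sH)$ with $\|\int_X f\,dP\,y_n\|^2=\int_X|f|^2\,d\mu^{(P)}_{y_ny_n}\ge n^2$, contradicting boundedness. The single genuinely delicate point in the whole argument is the cutoff estimate pinning $D((\int_X f\,dP)^*)$ down to $\Delta_f$; everything else is bookkeeping with the orthogonality $P_EP_F=P_{E\cap F}$ and $L^2$-approximation, and none of it distinguishes the real from the complex case.
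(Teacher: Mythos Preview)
Your proposal is correct and follows precisely the standard textbook construction (simple functions, $L^2$-approximation, cutoff sets $X_n=\{|f|\le n\}$ for density and for pinning down the adjoint domain) that the paper itself defers to by citing \cite{R,S,M} rather than giving its own proof. The one point worth tightening is the normality claim: the identity $T_{\bar f}T_f=T_{|f|^2}=T_fT_{\bar f}$ requires checking equality of domains, not just the algebraic relation on a core, but this is routine via the same cutoff argument and the formula $\mu^{(P)}_{T_fy,\,T_fy}(E)=\int_E|f|^2\,d\mu^{(P)}_{yy}$.
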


\begin{remark}
{\em The integral in the right-hand side of (\ref{intop2}) is well defined for $y \in \Delta_f$ since it turns out that 
$f$ is $L^2(X,\Sigma(X), \mu^{(P)}_{yy}) \subset L^1(X, \Sigma(X),|\mu^{(P)}_{xy}|)$. In particular, the estimate holds 
\beq  \int_X |f(\lambda)|\:  d |\mu^{(P)}_{xy}|(\lambda) \leq ||x|| \sqrt{\int_X |f(\lambda)|^2 d \mu^{(P)}_{yy}(\lambda)}\qquad  \forall y \in \Delta_f \:,  \forall x \in \sH\:. \eeq 
The proof is essentially the same in the real and the complex case (e.g., see \cite{R,M}).} 
\end{remark}
\noindent We are in a position to state  the fundamental tool of the spectral theory.
\begin{theorem}[Spectral Theorem]\label{st}
Let $\sH$ be a Hilbert space over the field $\bK= \bR$ or $\bC$  and consider a $\bK$-linear operator  $A: D(A) \to \sH$ with $D(A)\subset \sH$ a dense $\bK$-linear subspace. Denote by ${\cal B}(\bK)$ the Borel $\sigma$-algebra on $\bK$. \\
{\bf (a)}  If $\bK= \bC$ and $A$ is normal (in particular selfadjoint, anti-selfadjoint, unitary), then there is a unique PVM, 
$P^{(A)} : {\cal B}(\bC) \to {\cal L}(\sH)$, such that
$$A = \int_{\bC} \lambda dP^{(A)}(\lambda)\:.$$
{\bf (b)} If $\bK= \bR$ and $A$ is selfadjoint, then  there is a unique PVM, 
$P^{(A)} : {\cal B}(\bR) \to {\cal L}(\sH)$, such that
$$A = \int_{\bR} \lambda dP^{(A)}(\lambda)\:.$$\\
{\bf (c)}
In both cases the following facts hold.

(i)  $supp(P^{(A)}) = \sigma(A)$, where the {\bf support} $supp(P^{(A)})$ of $P^{(A)}$ is the complement in $\bK$ of  the union of all open sets  $O \subset \bK$ with $P_O^{(A)}=0$. As $P^{(A)}$ is supported in $\sigma(A)$, the integrals in (a) and (b) can be restricted to this set.

(ii) $B \in \gB(\sH)$ satisfies $P^{(A)}(E)B=BP^{(A)}(E)$ for every $E \in \cB(\bK)$ 
 iff $BA\subset AB$.\\
{\bf (d)} Finally, if $A$ is selfadjoint in both cases:

(i) $\lambda \in \sigma_p(A)$  $\;\Leftrightarrow\;$  $P^{(A)}(\{\lambda\}) \neq 0$.  In this case $P^{(A)}(\{\lambda \})$ is the orthogonal projector onto the eingenspace of $A$ with eigenvalue $\lambda$;

(ii) $\lambda \in \sigma_c(A)$ $\;\Leftrightarrow\;$  $P^{(A)}(\{\lambda\}) =0$ and  $E_\lambda \subset \bR$ open with $E_\lambda \ni\lambda$ gives $P^{(A)}(E_\lambda) \neq 0$;

(iii) if $\lambda \in \sigma(T)$ is isolated, then $\lambda \in \sigma_p(A)$;

(iv) if $\lambda \in \sigma_c(A)$, then for any $\epsilon >0$ there exists 
$\phi_\epsilon\in D(A)$, $||\phi_\epsilon ||=1$ with 
$$0<||A\phi_\epsilon - \lambda \phi_\epsilon|| \leq \epsilon\:.$$
\end{theorem}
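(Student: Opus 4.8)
The plan is to treat the complex case (a) as classical and to obtain the real case (b) and the companion statements (c), (d) from it by complexification. For (a), when $\bK=\bC$ and $A$ is normal, the existence and uniqueness of $P^{(A)}$ together with $A=\int_\bC\lambda\,dP^{(A)}(\lambda)$ is the standard spectral theorem for (possibly unbounded) normal operators, which I would simply quote from \cite{R,S,M}; the assertions in (c) and (d) for complex $\sH$ are their usual companions and I would quote them as well.

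The real case (b) is the substantive part. Given a selfadjoint $A$ on the real $\sH$, Prop.\ref{prop2}(2),(5) show that $A_\bC$ is selfadjoint on $\sH_\bC$, so by (a) there is a unique PVM $\hat P:=P^{(A_\bC)}$ on $\cB(\bC)$ with $A_\bC=\int\lambda\,d\hat P(\lambda)$; since $A_\bC=A_\bC^*$ its support lies in $\bR$, so $\hat P$ may be regarded as a PVM on $\cB(\bR)$. The key step is to descend $\hat P$ to $\sH$. Because $A_\bC$ is a complexification, Prop.\ref{prop2}(1) gives $CA_\bC=A_\bC C$ for the natural conjugation $C$ of (\ref{conj}). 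Using that $C$ is an antilinear isometric involution with $(x|Cw)_\bC=\overline{(Cx|w)_\bC}$, a direct computation shows that $E\mapsto C\hat P(E)C$ is again a PVM on $\cB(\bR)$ and that $\int_\bR\lambda\,d(C\hat P(\cdot)C)(\lambda)=CA_\bC C=A_\bC$, the reality of $\lambda$ being what makes the antilinearity harmless. By the uniqueness in (a) this forces $C\hat P(E)C=\hat P(E)$, i.e. each $\hat P(E)$ commutes with $C$, and Prop.\ref{prop2}(1) then yields a unique bounded real operator $P^{(A)}(E)$ with $\hat P(E)=(P^{(A)}(E))_\bC$, which is an orthogonal projector by Prop.\ref{prop2}(5). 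The axioms of Def.\ref{defPVM} transport from those of $\hat P$ through the dictionary of Prop.\ref{prop2}(1),(7) — complexification being multiplicative and injective on operators — while $\sigma$-additivity passes to $\sH$ because $\|x\|_\bC=\|x\|$ for $x\in\sH\subset\sH_\bC$, so strong convergence in $\sH_\bC$ restricts to strong convergence in $\sH$. Matching the domains $\Delta_\lambda$ under complexification and comparing spectral integrals gives $\left(\int_\bR\lambda\,dP^{(A)}\right)_\bC=\int_\bR\lambda\,d\hat P=A_\bC$, whence $A=\int_\bR\lambda\,dP^{(A)}(\lambda)$ by Prop.\ref{prop2}(1); uniqueness of $P^{(A)}$ follows since any competitor $P'$ would make $(P')_\bC$ solve the complex problem, forcing $(P')_\bC=\hat P$ and hence $P'=P^{(A)}$.

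For (c) and (d) I would transport the complex statements along the correspondence $P^{(A_\bC)}=(P^{(A)})_\bC$ together with $\sigma(A)=\sigma(A_\bC)$, $\sigma_p(A)=\sigma_p(A_\bC)$, $\sigma_c(A)=\sigma_c(A_\bC)$ from Prop.\ref{prop2}(6). For (c)(i), $P^{(A)}(O)=0\Leftrightarrow\hat P(O)=0$ (a complexification vanishes iff the underlying operator does), so $supp(P^{(A)})=supp(\hat P)=\sigma(A_\bC)=\sigma(A)$. For (c)(ii), $BA\subset AB\Leftrightarrow B_\bC A_\bC\subset A_\bC B_\bC$ by Prop.\ref{prop2}(7), which by the complex (c)(ii) is equivalent to $B_\bC$ commuting with every $\hat P(E)=(P^{(A)}(E))_\bC$, i.e. to $B$ commuting with every $P^{(A)}(E)$. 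The eigenvalue and eigenspace equivalences (d)(i)–(iii) follow the same way, noting that a point of $\sigma(A)\subset\bR$ is isolated in the $\bC$-topology iff it is isolated in $\bR$. For the approximate eigenvectors in (d)(iv) I would argue directly in $\sH$: since $P^{(A)}(\{\lambda\})=0$ while $P^{(A)}((\lambda-\tfrac1n,\lambda+\tfrac1n))\neq0$, choosing a real unit vector $\phi$ in the range of the latter and using (\ref{inin}) gives $\|A\phi-\lambda\phi\|^2=\int|t-\lambda|^2\,d\mu^{(P^{(A)})}_{\phi\phi}(t)\leq n^{-2}$, with strict positivity because $\lambda$ is not an eigenvalue.

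The main obstacle I anticipate is the antilinearity of $C$ in the descent step: one must verify carefully that conjugating the spectral resolution of the real-spectrum operator $A_\bC$ by $C$ leaves it invariant, so that the uniqueness of (a) can be invoked. A secondary point is to avoid circularity with Prop.\ref{prop2}(6), whose relation $P^{(A_\bC)}=(P^{(A)})_\bC$ is really a by-product of the present construction; I would therefore establish that identity inside the proof of (b) and only afterwards use the spectral equalities of Prop.\ref{prop2}(6) to streamline (c) and (d).
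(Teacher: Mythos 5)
Your proposal is correct, but it follows a genuinely different route from the paper's. The paper does not derive the real case from the complex one at all: for (a) it cites the standard literature, and for (b) it observes that the proof in \cite{S} for the complex case can simply be \emph{re-run verbatim over the real field}, the crucial point being that Schm\"udgen's argument avoids the Cayley transform (which would require the imaginary unit) and uses only real functions; parts (c) and (d) are likewise taken over from \cite{M} with essentially identical proofs. You instead bootstrap the real theorem from the complex one via the external complexification: apply (a) to $A_\bC$, note the support lies in $\bR$, show $E\mapsto C\hat P(E)C$ is again a PVM integrating to $CA_\bC C = A_\bC$ (the reality of the spectral variable neutralizing the antilinearity of $C$, exactly as you computed), invoke uniqueness to get $C$-invariance, and descend through Prop.\ref{prop2}(1),(5),(7). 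Both arguments are sound; what each buys is different. The paper's route is economical given the literature but asks the reader to verify that a long external proof is field-agnostic. Your route is self-contained modulo the complex theorem and the already-established Prop.\ref{prop2}, and it delivers the identity $P^{(A_\bC)}=(P^{(A)})_\bC$ as a by-product of the construction, whereas the paper must prove that identity separately in Appendix \ref{appProof} (in the proof of Prop.\ref{prop2}(6)) \emph{assuming} Theorem \ref{st}(b) and its uniqueness clause. Your handling of the potential circularity is exactly right: the spectrum equalities $\sigma(A_\bC)=\sigma(A)$, $\sigma_p(A_\bC)=\sigma_p(A)$, $\sigma_c(A_\bC)=\sigma_c(A)$ in Prop.\ref{prop2}(6) are proved in the appendix directly from the definitions, independently of any spectral theorem, so they may be used safely in your transport of (c) and (d), while the PVM identity must be (and in your scheme is) established internally. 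Your direct argument for (d)(iv) in $\sH$ via (\ref{inin}) is also correct, with strict positivity following from $P^{(A)}(\{\lambda\})=0$.
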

\noindent The proof of (a) can be found, e.g., in \cite{R,MV,S,M}.  The proof presented in \cite{S} for the complex case can be re-adapted to the real case (b)  since it does not use the Cayley transform but real functions only. The proof of (c) and (i),(ii) of (b) when $\sH$ is complex  can be found in \cite{M} and it is essentially identical in  the real case.\\
A useful technical result arising from the spectral theorem is the following whose proof is identical in the real and complex case \cite{M}.
\begin{proposition}\label{propopos} If $A$ is a selfadjoint operator in a, either real or complex,  Hilbert space, $A \geq 0$ if and only if $\sigma(A) \subset [0,+\infty)$.
\end{proposition}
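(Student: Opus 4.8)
The plan is to reduce the whole statement to the spectral representation of $A$ provided by Theorem \ref{st}, which in both the real and the complex selfadjoint cases furnishes a unique PVM $P^{(A)}$ with $A = \int_{\sigma(A)} \lambda \, dP^{(A)}(\lambda)$ and, decisively, $\text{supp}(P^{(A)}) = \sigma(A)$ by part (c)(i). The single identity I would lean on is the one coming from Proposition \ref{propint}: for $x \in D(A)$ one has
$$(x|Ax) = \int_{\sigma(A)} \lambda \, d\mu^{(P)}_{xx}(\lambda),$$
together with the fact recorded after Definition \ref{defPVM} that each $\mu^{(P)}_{xx}$ is a positive finite measure with total mass $\mu^{(P)}_{xx}(\sigma(A)) = \|x\|^2$. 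Since this identity and these properties hold verbatim in both settings, no case distinction between $\bR$ and $\bC$ is needed, which is exactly why the proof is ``essentially identical''.

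For the easy implication I would assume $\sigma(A) \subset [0,+\infty)$. Then the integrand $\lambda$ is nonnegative on $\sigma(A)$, which is where $\mu^{(P)}_{xx}$ is supported, so the displayed integral is $\geq 0$ for every $x \in D(A)$; that is precisely $A \geq 0$ in the sense of Definition \ref{defop}(8).

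The reverse implication is the substantive one, and I would argue by contradiction. Suppose $A \geq 0$ but there exists $\lambda_0 \in \sigma(A)$ with $\lambda_0 < 0$. Because $\sigma(A) = \text{supp}(P^{(A)})$, every open neighbourhood of $\lambda_0$ has nonzero spectral measure; in particular the interval $O := (\lambda_0 - \varepsilon, \lambda_0 + \varepsilon)$ with $\varepsilon := |\lambda_0|/2$ satisfies $P^{(A)}(O) \neq 0$, while by construction every point of $O$ is strictly negative (indeed $< \lambda_0/2 < 0$). Picking a unit vector $x$ in the range of the nonzero orthogonal projector $P^{(A)}(O)$, the relations $P^{(A)}(E)P^{(A)}(O) = P^{(A)}(E \cap O)$ force $\mu^{(P)}_{xx}$ to be concentrated on $O$, so $\mu^{(P)}_{xx}(O) = \|x\|^2 = 1$. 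Then
$$(x|Ax) = \int_{O} \lambda \, d\mu^{(P)}_{xx}(\lambda) \leq \frac{\lambda_0}{2}\,\mu^{(P)}_{xx}(O) = \frac{\lambda_0}{2} < 0,$$
contradicting $A \geq 0$, and hence $\sigma(A) \subset [0,+\infty)$.

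The main obstacle — really the only point demanding care — is checking that the test vector $x$ actually lies in $D(A)$ before I write $(x|Ax)$. Here I would use that $\mu^{(P)}_{xx}$ is supported on the bounded set $O$, whence $\int_{\sigma(A)} |\lambda|^2 \, d\mu^{(P)}_{xx}(\lambda) \leq (|\lambda_0|+\varepsilon)^2 \, \mu^{(P)}_{xx}(O) < +\infty$; by the characterization of the domain $\Delta_{\mathrm{id}} = D(A)$ in Proposition \ref{propint} this is exactly the membership condition $x \in D(A)$. Everything else is a direct manipulation of the scalar positive measure $\mu^{(P)}_{xx}$, carried out identically over $\bR$ and $\bC$.
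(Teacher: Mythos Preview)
Your argument is correct and is precisely the standard spectral-theoretic proof one expects; the paper itself does not spell out a proof but simply cites \cite{M} and remarks that the real and complex cases are handled identically, which is exactly what your uniform treatment via $\mu^{(P)}_{xx}$ achieves. Your care in verifying $x\in D(A)$ through the $\Delta_{\mathrm{id}}$ characterization is the only nontrivial point, and you handle it cleanly.
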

\noindent In view of theorem \ref{st}, if $f:\bR  \to \bK$ is measurable and $A$ selfadjoint, we use the notation
\beq
f(A)  : =  \int_\bR f(\lambda) dP^{(A)}(\lambda)   \:.\label{fA}
\eeq
As $P^{(A)}$ is supported in $\sigma(A)$ the definition above can equivalently be stated 
restricting the integral (and the domain of $f$) to $\sigma(A)$. 
An important example of "operator function" is the following:
\begin{proposition}\label{sqrt}
Let $A$ as in Prop.\ref{propopos}, then $\sqrt{A}$ defined through (\ref{fA}) is the unique selfadjoint positive operator such that $\sqrt{A}\sqrt{A}=A$ 
\end{proposition}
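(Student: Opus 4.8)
The plan is to treat existence and uniqueness separately, in both cases reducing everything to the functional calculus furnished by Theorem \ref{st} and Proposition \ref{propint}. For existence, since $A\geq 0$ Proposition \ref{propopos} gives $\sigma(A)\subset[0,+\infty)$, so by (i) of Theorem \ref{st} (c) the PVM $P^{(A)}$ is supported in $[0,+\infty)$ and the function $f(\lambda):=\sqrt{\lambda}$ is a well-defined real-valued measurable function there. I would then set $\sqrt{A}:=f(A)=\int_{\sigma(A)}\sqrt{\lambda}\,dP^{(A)}(\lambda)$ as in (\ref{fA}). By Proposition \ref{propint} (b) this operator is closed and normal, and since $f$ is real-valued, Proposition \ref{propint} (d) yields $(\sqrt{A})^*=\overline{f}(A)=f(A)=\sqrt{A}$, so $\sqrt{A}$ is selfadjoint. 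Positivity is then immediate from (\ref{intop2}): for $x\in D(\sqrt{A})$ one has $(x|\sqrt{A}x)=\int_{\sigma(A)}\sqrt{\lambda}\,d\mu^{(P)}_{xx}(\lambda)\geq 0$, because $\mu^{(P)}_{xx}$ is a positive measure and $\sqrt{\lambda}\geq 0$ on the support.

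It remains, for the existence part, to check $\sqrt{A}\sqrt{A}=A$. This is the multiplicativity of the spectral calculus: for measurable $f,g$ one has $f(A)g(A)\subset(fg)(A)$, with the domain of the left-hand side equal to $\Delta_{fg}\cap\Delta_g$ (notation of Proposition \ref{propint}). Taking $f=g=\sqrt{\cdot}$ gives $fg=\mathrm{id}$, hence $(fg)(A)=A$ with $\Delta_{fg}=D(A)$, while $\Delta_g=D(\sqrt{A})$; since $\lambda\geq 0$ entails $D(A)\subset D(\sqrt{A})$ (finiteness of $\int\lambda^2\,d\mu^{(P)}_{xx}$ forces finiteness of $\int\lambda\,d\mu^{(P)}_{xx}$ because $\mu^{(P)}_{xx}$ is finite), the two domains intersect exactly in $D(A)$ and one obtains $\sqrt{A}\sqrt{A}=A$ on the nose.

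For uniqueness, suppose $B$ is any selfadjoint positive operator with $BB=A$. The cleanest route is to observe that $A=BB$ is the image of $B$ under the Borel function $\lambda\mapsto\lambda^2$, so by the composition rule of the spectral calculus $\sqrt{A}=\sqrt{B^2}=(\sqrt{\lambda^2})(B)=|B|$, the absolute value of $B$. Since $B\geq 0$, Proposition \ref{propopos} gives $\sigma(B)\subset[0,+\infty)$, on which $|\lambda|=\lambda$, whence $|B|=B$ and therefore $B=\sqrt{A}$. (Alternatively one may avoid the composition rule through the standard quadratic-form argument: $B$ commutes with $A=B^2$ and hence with $\sqrt{A}$, the identity $(\sqrt{A}+B)(\sqrt{A}-B)=0$ holds on $D(A)$, and positivity of both $B$ and $\sqrt{A}$ then forces $(\sqrt{A}-B)x=0$ on the dense domain $D(A)$.)

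The main obstacle I anticipate is purely the domain bookkeeping in the possibly unbounded case: one must justify the multiplicativity identity $f(A)g(A)=(fg)(A)$ and the composition identity $\sqrt{B^2}=|B|$ with \emph{exact} equality of domains rather than mere inclusions. Both are standard homomorphism properties of the spectral integral and follow from Theorem \ref{st} together with Proposition \ref{propint}; and, as the excerpt repeatedly notes, their proofs are insensitive to whether $\sH$ is real or complex, since they involve only real measurable functions of a selfadjoint operator.
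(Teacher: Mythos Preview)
Your argument is correct and is precisely the standard textbook route. Note, however, that the paper does \emph{not} supply a proof of Proposition~\ref{sqrt}: it is stated without proof in the technical appendix as a well-known fact, in the same spirit as the surrounding results whose proofs are declared ``identical in the real and complex Hilbert space case'' and deferred to \cite{M,R,S}. So there is no ``paper's own proof'' to compare against; your write-up is exactly the kind of argument those references contain.

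One small remark on bookkeeping: the multiplicativity identity $f(A)g(A)=(fg)(A)$ with domain $\Delta_{fg}\cap\Delta_g$, and the composition rule $f(g(B))=(f\circ g)(B)$ (equivalently $P^{(g(B))}(E)=P^{(B)}(g^{-1}(E))$) that you invoke for uniqueness, are indeed standard consequences of the spectral calculus but are not itemized in Proposition~\ref{propint} of this paper. You correctly flag this, and the paper's own blanket remark that these proofs carry over verbatim from the complex to the real case (since only real-valued measurable functions are involved) legitimizes the appeal. Your parenthetical ``quadratic-form'' alternative for uniqueness would need more care with domains in the unbounded setting (commutation of the unbounded $B$ with $\sqrt{A}$, and the factorization $(\sqrt{A}+B)(\sqrt{A}-B)$, are delicate), so your primary route via the composition rule is the cleaner choice.
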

\noindent An elementary but important result is the following whose proof is identical in the real and complex Hilbert space case (see, e.g., \cite{M})
\begin{proposition}\label{defpolynomial}
Let $\sH$ be a, respectively, real or complex Hilbert space and  let $A: D(A) \to \sH$ be a selfadjoint operator in $\sH$. If $p (x) = \sum_{k=0}^N a_kx^k$ is a real polynomial, then it holds
$$ \sum_{k=0}^N a_kA^k= \int_\bR p(\lambda) dP^{(A)}(\lambda)$$
where the left-hand side is the operator defined on its natural domain in accordance to Def.\ref{defdomain} with $A^0:=I$ and $A^k: = A\cdots \mbox{($k$ times)}\cdots A$.
\end{proposition}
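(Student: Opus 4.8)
The plan is to reduce the identity to the single--power statements $A^k=\int_\bR\lambda^k\,dP^{(A)}(\lambda)$ and then to assemble the polynomial by the linearity of the spectral integral, the only delicate point being the bookkeeping of the (generally unbounded) domains. Throughout I write $P:=P^{(A)}$, $\mu_z:=\mu^{(P)}_{zz}$ (a finite positive measure with $\mu_z(\bR)=\|z\|^2$) and, for measurable $f$, $B_f:=\int_\bR f\,dP$, defined on $\Delta_f$ as in Prop.\ref{propint}. The base cases are immediate: $B_1=P(\bR)=I=A^0$ on all of $\sH$, while $A^1=A=B_\lambda$ with $D(A)=\Delta_\lambda$ by the Spectral Theorem (Theorem \ref{st}(b)) and Prop.\ref{propint}(a). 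Both sides of the claimed identity are meaningful on their natural domains in the sense of Def.\ref{defdomain}, and I shall prove that these domains coincide as part of the argument.

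First I would establish, by induction on $k$, that $A^k=B_{\lambda^k}$ with the \emph{exact} domain $D(A^k)=\Delta_{\lambda^k}$. The key computational input is the measure identity $d\mu_{B_fx}=|f|^2\,d\mu_x$, which follows at once from the norm identity (\ref{inin}) applied to $P(F)B_f=B_{\mathbb 1_F f}$: indeed $\mu_{B_fx}(F)=\|P(F)B_fx\|^2=\int_F|f|^2\,d\mu_x$. Assuming $A^k=B_{\lambda^k}$ on $\Delta_{\lambda^k}$, take $x\in\Delta_{\lambda^{k+1}}\subset\Delta_{\lambda^k}$; then $B_{\lambda^k}x\in\Delta_\lambda=D(A)$ because $\int|\lambda|^2\,d\mu_{B_{\lambda^k}x}=\int|\lambda|^{2(k+1)}\,d\mu_x<\infty$, and the analogous computation with the bilinear measures $\mu^{(P)}_{yx}$ shows $A\,B_{\lambda^k}x=B_\lambda B_{\lambda^k}x=B_{\lambda^{k+1}}x$. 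This yields the inclusion $B_{\lambda^{k+1}}\subset A^{k+1}$. Since $A^{k+1}$ (the composite of Def.\ref{defdomain}) is symmetric while $B_{\lambda^{k+1}}$ is selfadjoint (it is $f(A)$ for the real function $f=\lambda^{k+1}$), maximality of selfadjoint operators among symmetric ones upgrades the inclusion to an equality: by Remark \ref{sumselfadj}(a) the inclusion $B_{\lambda^{k+1}}\subset A^{k+1}$ gives $(A^{k+1})^*\subset B_{\lambda^{k+1}}$, while $A^{k+1}\subset(A^{k+1})^*$ by symmetry, so $A^{k+1}=B_{\lambda^{k+1}}$ and in particular $D(A^{k+1})=\Delta_{\lambda^{k+1}}$.

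It then remains to assemble the polynomial. By Def.\ref{defdomain}(ii) the natural domain of $\sum_{k=0}^N a_kA^k$ is $\bigcap_{k:\,a_k\neq0}D(A^k)=D(A^N)=\Delta_{\lambda^N}$, where $N$ denotes the degree ($a_N\neq0$) and the domains decrease in $k$ since $\Delta_{\lambda^k}\supset\Delta_{\lambda^{k+1}}$. On $\Delta_{\lambda^N}$, which is contained in every $\Delta_{\lambda^k}$, the power identities together with the linearity of the spectral integral give $\sum_k a_kA^kx=\sum_k a_kB_{\lambda^k}x=B_px$. Finally I would identify $\Delta_{\lambda^N}=\Delta_p$: since $a_N\neq0$ there is $R>0$ with $|p(\lambda)|^2\geq\tfrac12|a_N|^2|\lambda|^{2N}$ for $|\lambda|>R$, whereas both $|p|^2$ and $|\lambda|^{2N}$ are bounded on $[-R,R]$, so $\int|p|^2\,d\mu_x<+\infty$ if and only if $\int|\lambda|^{2N}\,d\mu_x<+\infty$. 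Hence $B_p$ and $\sum_k a_kA^k$ share the same domain and agree on it, which is exactly the assertion.

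I expect the main obstacle to be precisely these domain identifications --- the reverse inclusions $D(A^k)\subset\Delta_{\lambda^k}$ and $\Delta_p\subset D(A^N)$ --- rather than the pointwise agreement of the operators, which is essentially algebraic. As an alternative route in the real case one may sidestep the estimates by complexifying: by Prop.\ref{prop2}(5) the operator $A_\bC$ is selfadjoint, by Prop.\ref{prop2}(7)--(8) one has $\sum_k a_k(A_\bC)^k=\left(\sum_k a_kA^k\right)_\bC$ with matching domains, and by Prop.\ref{prop2}(6) $\int_\bR p\,dP^{(A_\bC)}=\left(\int_\bR p\,dP^{(A)}\right)_\bC$; invoking the standard complex case and the injectivity of complexification in Prop.\ref{prop2}(1) then delivers the real identity at once.
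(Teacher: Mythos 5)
Your proof is correct, and it is worth noting at the outset that the paper itself contains no argument for this proposition: it is stated with the remark that the proof is identical in the real and complex cases and is deferred to \cite{M}. What you wrote is in substance the standard argument that citation points to --- the induction $A^k=\int_\bR\lambda^k\,dP^{(A)}(\lambda)$ with exact domains $D(A^k)=\Delta_{\lambda^k}$, obtained from the measure identity $d\mu_{B_fx}=|f|^2d\mu_x$ and the maximality of selfadjoint operators among symmetric ones to upgrade $B_{\lambda^{k+1}}\subset A^{k+1}$ to equality, followed by the growth comparison $|p(\lambda)|^2\geq\tfrac12|a_N|^2|\lambda|^{2N}$ for large $|\lambda|$ to identify $\Delta_p=\Delta_{\lambda^N}$ --- so there is no genuine divergence of route, only a complete rendition of a proof the paper omits. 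Two glossed points deserve a line each if this were written out in full. First, the symmetry of $A^{k+1}$ on its natural domain is asserted but not verified; it does hold, but one needs the associativity convention of Remark \ref{remassociativity}, which gives $A^{k+1}=A\circ A^k=A^k\circ A$ and hence guarantees that for $x,y\in D(A^{k+1})$ the vectors $A^jy$ remain in the domains needed to move the factors of $A$ across the inner product one at a time. Second, your standing assumption $a_N\neq 0$ is not in the statement but is exactly what makes the domain identification true: if the leading coefficients vanish, the left-hand side must be read with the zero terms dropped (or with $0\cdot A^k$ everywhere defined), since otherwise $D(A^N)\subsetneq\Delta_p$ in general; your reading is clearly the intended one. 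Finally, the complexification shortcut you sketch is legitimate and non-circular within the paper's framework (items (6)--(8) of Prop.\ref{prop2} are proved independently in the appendix), but it is also unnecessary: your direct argument never uses the scalar field, which is precisely the point of the paper's remark that the proof is the same in both cases.
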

\noindent To conclude we list the three most common operator continuity notions among the seven appearing in the literature  (these can be induced form suitable seminorm topologies, e.g., see \cite{R,M}, as is well known).
\begin{definition}\label{defcontinuity} Let $\sH$ be a, respectively, real or complex Hilbert space and $\sT$ a topological space. A map $\sT \ni x \mapsto V_x \in \gB(\sH)$ is said to be:\\
{\bf (a)} {\bf uniformly continuous at} $x_0$, if $||V_x-V_{x_0}||\to 0$ for $x \to x_0$;\\
{\bf (b)} {\bf strongly continuous at} $x_0$, if $||V_xz-V_{x_0}z||\to 0$ for $x \to x_0$
and every $z \in \sH$;\\
{\bf (c)} {\bf weakly continuous at} $x_0$, if $(u|V_xz)\to (u|V_{x_0}z)$ for $x \to x_0$
and every $u,z \in \sH$;\\
{\bf (d)} {\bf uniformly continuous},  {\bf strongly continuous},  {\bf weakly continuous} if, respectively, (a), (b) or (c) is valid for every $x_0\in \sT$. 
\end{definition}
\noindent Evidently (a) implies (b) which, in turn, implies (c).

\section{Quaternionic Hilbert spaces}\label{QHS}
 $\bH := \{a1 + bi + cj + dk\:|\: a,b,c,d \in \bR\}$ denotes the real unital associative  algebra of quaternions. $i,j,k$ are the standard 
{\bf imaginary units} satisfying $i^2=j^2=k^2 = -1$ and $ij= -ji = k$, $jk=-kj = i$, $ki=-ik=j$ which give rise to the notion of associative,  distributive and non-commutative product in $\bH$ with $1$ as neutral element. $\bH$ is a division algebra, i.e., every non zero element admits a multiplicative inverse. The center of $\bH$ is $\bR$.  
 $\bH$ is assumed to be equipped with the {\bf quaternionic conjugation} $\overline{a1 + bi + cj + dk}= a1 - bi - cj -dk$. Notice that the conjugation satisfies $\overline{qq'}= \overline{q'} \overline{q}$ and $\overline{\overline{q}}=q$ for all $q,q'\in \bH$.
If $q \in \bH$, its {\bf real part} is defined as $Re \: q := \frac{1}{2}(q+\overline{q}) \in \bR$.
The quaternionic conjugation together with the Euclidean {\bf norm} $|q|:=\sqrt{q\overline{q}}$ for $q\in \bH$, makes $\bH$ a real unital $C^*$-algebra which also satisfies the {\bf composition algebra} property $|qq'|=|q|\:|q'|$. 

\begin{definition} {\em A {\bf quaternionic vector  space} is  
an additive Abelian group $(\sH, +)$ denoting the sum operation, equipped with a right-multiplication $\sH \times \bH \ni (x, q) \mapsto xq\in \sH$ such that (a) the right-multiplication  is distributive with respect to $+$,  (b) the sum of quaternions  is distributive with respect to the right-multiplication,  (c)  $(xq) q' = x(qq')$ and (d) $x1=x$  for all $x\in \sH$ and $q,q' \in \bH$.}
\end{definition}
\begin{definition}  
{\em A {\bf quaternionic Hilbert space}  is a quaternionic vector space $\sH$ equipped with a {\bf Hermitian quaternionic scalar product}, i.e., a map $\sH \times \sH \ni (x,y) \mapsto \langle x|y \rangle \in \bH$ such that  (a) $\langle x|yq+z\rangle = \langle x|y\rangle q+\langle x|z\rangle$ for every $x,y,z\in \bH$ and $q \in \bH$, (b)  $\langle x|y\rangle = \overline{\langle y|x\rangle}$ for every $x,y \in \sH$ and (c) $\langle x| x\rangle \in [0,+\infty)$ where (d) $\langle x| x\rangle=0$ implies $x=0$,    and  $\sH$ is complete with respect to the norm $||x|| = \sqrt{\langle x| x \rangle}$. }
\end{definition}
\noindent 
The standard {\bf Cauchy-Schwartz} inequality holds, $|\langle x|y \rangle| \leq ||x||\: ||y||$ for every $x,y \in \sH$ for the above defined quaternionic Hermitian scalar product.
The notion of {\bf Hilbert basis} (Def. \ref{defHB}) is the same as for real and complex Hilbert spaces and properties are the same with obvious changes. A quaternionic Hilbert space turns out to be separable as a metrical space if and only if it admits a finite or countable Hilbert basis. The notion of {\bf orthogonal subspace}  $S^\perp$ of a set $S \subset \sH$ is defined with respect to $\langle \cdot | \cdot \rangle$ (Def. \ref{deforth}) and enjoys the same standard properties as for the analog in real and complex Hilbert spaces.
 The notion of operator norm and  bounded operator are the same as for real and complex Hilbert spaces. Since the {\bf Riesz lemma} (Theorem \ref{Rlemma})  holds true also for quaternionic Hilbert spaces, the {\bf adjoint operator}
 $A^* : \sH \to \sH$ 
of a bounded quaternionic linear operator $A : \sH \to \sH$ can be defined as the unique quaternionic linear operator such that $\langle A^*y|x\rangle = \langle y|Ax\rangle$ for every pair $x,y \in \sH$. Notice that if $A : \sH \to \sH$ is quaternionic linear
and $r\in \bR$, we can define the quaternionic linear operator $rA: \sH \to \sH$ such that $rA x := (Ax)r$ for all $x \in \sH$. Replacing $r$ for $q\in \bH$ produces a non-linear map in view of non-commutativity of $\bH$. Therefore only real linear combinations of 
quaternionic linear operators are well defined.
$\gB(\sH)$ denotes the real unital $C^*$-algebra of bounded operators over $\sH$. The notion of {\bf orthogonal projector} $P: \sH \to \sH$ is defined exactly as in the real or complex Hilbert space case,
$P$ is bounded,  $PP=P$ and $P^*=P$. Orthogonal projectors $P$ are one-to-one with the class of 
closed subspaces $P(\sH)$ of $\sH$. $\cL(\sH)$ denotes the orthocomplemented complete lattice of orthogonal projectors of $\sH$. This lattice also satisfies properties  (i)-(vi) listed in Appendix \ref{Alattices}. 
Another important notion is the one of {\bf square root} of positive bounded operators. As for the real and complex case (see Prop.\ref{sqrt}), also for quaternionic Hilbert spaces, if $A$ is bounded and positive, then there exists a unique bounded positive operator $\sqrt{A}$ such that $\sqrt{A}\sqrt{A}=A$. In particular, if $A : \sH \to \sH$ is a bounded quaternionic-linear operator $|A| := \sqrt{A^*A}$ is well defined positive and self-adjoint.
For the proofs of the afore-mentioned properties and for more advanced issues, especially concerning spectral theory,  we address the reader to \cite{GMP1} and \cite{GMP2}.

\begin{remark}\label{remquatV2}{\em
In \cite{V2} and \cite{Em} the Quaternionic Hilbert space is defined through a \textit{left}-multiplication $\bH\times\sH\ni (q,u)\mapsto qu\in\sH$ and a Hermitian quaternionic scalar product $\sH\times\sH\ni (u,v)\mapsto \langle u|v\rangle\in\bH$ whose only difference resides in point (a): $\langle qx|y\rangle=q\langle x|y\rangle$ for all $x,y\in\sH$ and $q\in\bH$. To define a left-multiplication on a space with right-multiplication it suffices to define $qu:=u\overline{q}$ for all $q\in\bH$ and $u\in\sH$, while the scalar product does not need to be modified. It is immediate to see that a map $A:\sH\rightarrow \sH$ is linear, bounded, self-adjoint, idempotent and unitary with respect to the right-multiplication if and only if it has the same properties with respect to the left-multiplication. This allows us to use indifferently the results in \cite{GMP1,GMP2} and \cite{V2},\cite{Em}.}
\end{remark}

\section{Trace class operators}\label{tracelass}
We present here  some basic notions about trace-class operators for real, complex, and quaternionic Hilbert spaces.
\begin{definition}\label{DEFTCO} Let $\sH$ be a separable real, complex or quaternionic Hilbert space.  An operator $T \in \gB(\sH)$ is said to be of {\bf trace class} if $\sum_{k \in K} (e_k||T| e_k)< +\infty$ for a Hilbert basis $\{e_k\}_{k \in K} \subset \sH$. 
\end{definition}
\noindent It is possible to prove that, in view of the given definition,  
 the {\bf trace} of $T$ computed with respect to every Hilbert basis $\{v_k\}_{k \in K}$, i.e.,  
$$tr(T) = \sum_{k \in K} (v_k|Tv_k)$$
absolutely converges\footnote{This property is used as {\em definition} of trace-class operator in \cite{V2}. Unfortunately this property of trace-class operators is {\em equivalent} to the definition \ref{DEFTCO} in complex Hilbert spaces, but is a simple consequence in real Hilbert spaces. Complex structures in real Hilbert spaces are easy counterexamples. However, all theorems preved in \cite{V2} exploit the theory arising from definition \ref{DEFTCO}.}  in, respectively,  $\bR$, $\bC$ or $\bH$.  Furthermore $tr(A)$
does not depend on the chosen Hilbert basis. 
All the theory of trace-class operators (see Sect.4.4  of \cite{M}) is essentially identical in real, complex and quaternionic Hilbert spaces as is based on the theorem 
of spectral decomposition of self-adjoint compact operators (Theorems 4.17 and 4.18  in \cite{M} valid also for the real Hilbert space case and see \cite{GMP3} for the quaternionic case), the polar decomposition theorem of bounded operators and the notion of absolute-value operator $|A|$ of a bounded operator $A$ (see \cite{GMP1} and \cite{GMP2} for the quaternionic case). 
Rephrasing the proof of these statements appearing in Sect.4.4 of \cite{M}, we have in particular that the set of trace class operators  (a) is closed with respect to the $^*$-operation and (b) is a $\bR$-linear subspace of $\gB(\sH)$ in the real and quaternionic case and is a $\bC$-linear subspace of $\gB(\sH)$ if $\bC$ is complex. The map $T \mapsto tr(T)$ 
defined over the linear space of trace class operators is a respectively $\bR$, $\bC$ or $\bH$ linear functional and the following proposition is true.
\begin{proposition}
If $A, T \in \gB(\sH)$ and $T$ is trace class, then $TA$ and $AT$ are  of trace class and
$tr(AT)=tr(TA)$.
\end{proposition}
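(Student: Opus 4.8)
The plan is to reduce everything to the factorization of a trace-class operator as a product of two Hilbert--Schmidt operators, together with the behaviour of the Hilbert--Schmidt class under multiplication by bounded operators. First I would record the canonical description of $T$. Since $T$ is trace class, $|T|:=\sqrt{T^*T}$ (Prop.\ref{sqrt}, Theorem \ref{Polar}) is positive and trace class, hence compact, so the spectral theorem for compact self-adjoint operators gives $|T|=\sum_n \lambda_n P_n$ with $\lambda_n>0$, the $P_n$ mutually orthogonal finite-rank projectors, and $\sum_n \lambda_n\dim(P_n(\sH))=tr(|T|)<+\infty$. Taking the polar decomposition $T=W|T|$ (Theorem \ref{Polar}, where $W\in\gB(\sH)$ is a partial isometry) and setting $S_1:=W\sqrt{|T|}$ and $S_2:=\sqrt{|T|}$, I obtain $T=S_1S_2$.

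Next I would introduce the Hilbert--Schmidt operators: $S\in\gB(\sH)$ is Hilbert--Schmidt if $\|S\|_2^2:=\sum_k\|Se_k\|^2<+\infty$ for one (equivalently every, by the same basis-independence argument already used for the trace) Hilbert basis $\{e_k\}$; one has $\|S\|_2=\|S^*\|_2$. Both $S_1$ and $S_2$ are Hilbert--Schmidt, because $\|\sqrt{|T|}\|_2^2=\sum_k(e_k|\,|T|\,e_k)=tr(|T|)<+\infty$ and $\|W\sqrt{|T|}\|_2\le\|W\|\,\|\sqrt{|T|}\|_2$. The key estimates are $\|XS\|_2\le\|X\|\,\|S\|_2$ and $\|SX\|_2=\|X^*S^*\|_2\le\|X\|\,\|S\|_2$ for $X\in\gB(\sH)$ and $S$ Hilbert--Schmidt, so the Hilbert--Schmidt class is a two-sided ideal; moreover the product of two Hilbert--Schmidt operators is trace class, with $|tr(SS')|\le\|S\|_2\,\|S'\|_2$ by a Cauchy--Schwarz estimate. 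Consequently $AT=(AS_1)S_2$ and $TA=S_1(S_2A)$ are each products of two Hilbert--Schmidt operators, hence trace class. All of these statements rest only on the compact spectral theorem, the polar decomposition and the absolute value, which the appendix has already noted to be available verbatim in the real, complex and quaternionic settings.

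Finally, for $tr(AT)=tr(TA)$ I would establish the cyclicity $tr(SS')=tr(S'S)$ for Hilbert--Schmidt $S,S'$ and apply it with $S=AS_1$, $S'=S_2$, giving $tr(AT)=tr((AS_1)S_2)=tr(S_2(AS_1))=tr((S_2A)S_1)=tr(S_1(S_2A))=tr(TA)$. The cyclicity itself I would prove by expanding $tr(SS')=\sum_k(e_k|SS'e_k)=\sum_k(S^*e_k|S'e_k)$ into a doubly-indexed series over a fixed Hilbert basis and exchanging the order of summation, which is legitimate by the absolute convergence $\sum_k\|S^*e_k\|\,\|S'e_k\|\le\|S^*\|_2\,\|S'\|_2<+\infty$. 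The main obstacle is precisely this rearrangement: it is immediate in the real and complex cases, whereas in the quaternionic case the non-commutativity of the scalars must be tracked carefully, and the argument relies on the conventions fixed in Remark \ref{remquatV2} and on the quaternionic trace theory of \cite{V2}. A convenient intermediate reduction, robust in all three cases, is the invariance $tr(U^*SU)=tr(S)$ under a unitary $U$, which follows at once from the basis-independence of the trace applied to the basis $\{Ue_k\}$.
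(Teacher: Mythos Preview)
Your proof is correct and follows the standard textbook route via Hilbert--Schmidt factorization. Note, however, that the paper does not actually supply a proof of this proposition: the appendix simply records it as a known fact, stating that ``all the theory of trace-class operators (see Sect.~4.4 of \cite{M}) is essentially identical in real, complex and quaternionic Hilbert spaces'' and referring the reader to that source. Your argument is precisely the kind of proof one finds there: polar decomposition $T=W|T|$, factorization $T=S_1S_2$ with $S_1=W\sqrt{|T|}$ and $S_2=\sqrt{|T|}$ Hilbert--Schmidt, the two-sided ideal property of the Hilbert--Schmidt class, and cyclicity $tr(SS')=tr(S'S)$ for Hilbert--Schmidt $S,S'$ via Fubini on the doubly-indexed series. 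Your absolute-convergence estimate $\sum_k\|S^*e_k\|\,\|S'e_k\|\le\|S^*\|_2\|S'\|_2$ does indeed dominate the double sum (apply Cauchy--Schwarz in the inner index first), so the rearrangement is justified. You are also right to flag the quaternionic case as the only place requiring extra care with non-commuting scalars; the paper handles this in the same way you do, by deferring to \cite{V2} and \cite{GMP1,GMP2}.
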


\section{Universal enveloping algebra}\label{secUEA}
\noindent A notion, very useful in  quantum physical applications,  is the  {\em universal enveloping algebra} of a Lie algebra, discussed   Ch.3 Section 2 of \cite{V}. To introduce this notion we observe that a
 real unital associative algebra can be turned 
into a (real)  Lie algebra simply by taking the natural commutator $[a,b]:=ab-ba$. However there exists also an inverse procedure allowing one
  to canonically embed a given real Lie algebra $\gg$ into a suitable real  associative unital algebra $E_\gg$ with product $\circ$ such that  $[{\bf A},{\bf B}]_\gg$ identifies with ${\bf A}\circ{\bf B}-{\bf B}\circ{\bf A}$ for any ${\bf A},{\bf B}\in\gg$. 

\begin{definition}[{\bf Universal enveloping algebra}]\label{UEA}{\em
Let $\gg$ be a real Lie algebra. The {\bf universal enveloping algebra} $E_\gg$ of $\gg$ is the quotient real  associative unital algebra,
$$
E_\gg:=T_\gg/K_\gg
$$
of  the (real associative unital) tensor algebra $T_\gg$ generated by $\gg$  
and the two-sided ideal $K_\gg$ of $T_\gg$ generated by the elements $${\bf A}\otimes{\bf B}-{\bf B}\otimes{\bf A}-[{\bf A},{\bf B}]_\gg \quad \mbox{with ${\bf A},{\bf B}\in\gg$.}$$  The  product of $E_\gg$ will be denoted by $\circ$.
}
\end{definition}
\begin{remark} {\em \noindent The enveloping algebra is clearly real  associative and unital. From now on, dealing with Lie algebras and universal enveloping algebra we will almost always omit the adjectives {\em real} and {\em associative}.}
\end{remark}
 \noindent The quotient map $\pi_\gg :T_\gg \to E_\gg$ is a unital algebra homomorphism and $E_\gg$ admits a unit obviously given by $\pi(1)$ itself. Here $1\in \bR$ where $\bR$ is viewed as a trivial subspace of $T_\gg$.  Moreover,
$$\iota_\gg := \pi|_{\gg} : \gg \to E_\gg$$ is a Lie-algebra homomorphism because
$$
\pi({\bf A})\circ \pi({\bf B})-\pi({\bf B})\circ \pi({\bf A})=\pi({\bf A}\otimes{\bf B}-{\bf B}\otimes{\bf A})=\pi([{\bf A},{\bf B}]_\gg)\quad \mbox{if ${\bf A},{\bf B}\in\gg$\:.}
$$
$E_\gg$ is a {\em natural} object because the following universality result is valid as a consequence of the universal property 
of the tensor product.
\begin{theorem}[{\bf Universal property}]\label{UP}
Let $V$ be any  (unital associative) algebra and $f:\gg\to  V$ a Lie-algebra homomorphism. Then there exists a unique unital algebra homomorphism $\tilde{f}: E_\gg \to V$ of such that $f=\tilde{f}\circ i_\gg$.
\end{theorem}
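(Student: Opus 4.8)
The plan is to reduce the whole statement to the universal property of the tensor algebra $T_\gg$ and the universal property of the quotient by a two-sided ideal, so that the only genuine computation is a bracket check. First I would invoke the standard universal property of $T_\gg$: for any unital associative algebra $V$ and any $\bR$-linear map $\ell:\gg\to V$, there is a unique unital algebra homomorphism $L:T_\gg\to V$ with $L|_\gg=\ell$, obtained by setting $L({\bf A}_1\otimes\cdots\otimes{\bf A}_n):=\ell({\bf A}_1)\cdots\ell({\bf A}_n)$ and sending the unit to the unit of $V$. Applying this to the given $f$, viewed merely as an $\bR$-linear map, produces a unital algebra homomorphism $F:T_\gg\to V$ extending $f$.

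The core step is to show $K_\gg\subset\ker F$. Since $F$ is a unital algebra homomorphism, $\ker F$ is a two-sided ideal of $T_\gg$; and $K_\gg$ is by construction the two-sided ideal generated by the elements ${\bf A}\otimes{\bf B}-{\bf B}\otimes{\bf A}-[{\bf A},{\bf B}]_\gg$. Hence it suffices to verify that $F$ annihilates each such generator, which is exactly where the \emph{Lie-algebra} homomorphism hypothesis on $f$ enters. One computes
\[
F\big({\bf A}\otimes{\bf B}-{\bf B}\otimes{\bf A}-[{\bf A},{\bf B}]_\gg\big)=f({\bf A})f({\bf B})-f({\bf B})f({\bf A})-f([{\bf A},{\bf B}]_\gg),
\]
and since $f$ intertwines $[\cdot,\cdot]_\gg$ with the commutator of $V$, i.e. $f([{\bf A},{\bf B}]_\gg)=f({\bf A})f({\bf B})-f({\bf B})f({\bf A})$, the right-hand side vanishes. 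Therefore $K_\gg\subset\ker F$.

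With this inclusion in hand, the universal property of the quotient algebra $E_\gg=T_\gg/K_\gg$ yields a unique unital algebra homomorphism $\tilde f:E_\gg\to V$ with $\tilde f\circ\pi_\gg=F$. Restricting to $\gg$ gives $\tilde f\circ\iota_\gg=\tilde f\circ\pi_\gg|_\gg=F|_\gg=f$, which is the asserted factorization. For the uniqueness of $\tilde f$ among unital algebra homomorphisms satisfying $\tilde f\circ\iota_\gg=f$, I would note that $T_\gg$ is generated as a unital algebra by $\gg$, so its image $E_\gg=\pi_\gg(T_\gg)$ is generated as a unital algebra by $\iota_\gg(\gg)$; two unital algebra homomorphisms that agree on $\iota_\gg(\gg)$ (and on the unit) must coincide. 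I expect no serious obstacle here: the argument is essentially formal, and the only nontrivial content is the vanishing of $F$ on the generators of $K_\gg$, which is precisely the translation of ``$f$ preserves brackets'' into ``$F$ kills the defining relations.''
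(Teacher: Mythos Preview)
Your proof is correct and follows exactly the approach the paper indicates: the paper does not spell out a proof but states just before the theorem that ``the following universality result is valid as a consequence of the universal property of the tensor product,'' which is precisely what you carry out in detail (universal property of $T_\gg$, then passage to the quotient by $K_\gg$ via the bracket check). Nothing is missing.
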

\noindent It is easy to see that $(E_\gg,\iota_g)$ is the only couple of unital associative algebra and Lie algebra homomorphism from $\gg$ to $E_g$ satisfying this property up to isomorphisms.  
To go on, suppose that $\gg$ is finite dimensional and consider a vector basis $\{{\bf X}_1,\dots,{\bf X}_n\}$ of $\gg$. The set containing the elements $1$ and all of possible finite products ${\bf X}_{i_1}\otimes\dots\otimes{\bf X}_{i_k}$ is a basis of  $T_\gg$. The objects $\pi(1),\pi({\bf X}_{i_1}\otimes\dots\otimes{\bf X}_{i_k})$ therefore span the quotient $E_\gg$, but they are not linearly independent. In order to get a basis we invoke the following (see \cite{V})

\begin{theorem}[{\bf Poincar\'{e}-Birkhoff-Witt Theorem}]\label{PBW}
Let $\gg$ be a Lie algebra of finite dimension $n$ and  $\{{\bf X}_1,\dots,{\bf X}_n\}$ a vector basis of $\gg$.
A vector basis of $E_\gg$ is made of $\pi(1)$ and all possible products
$$\pi({\bf X}_{i_1})\circ \cdots \circ \pi({\bf X}_{i_k})$$
where $k=1,2,\ldots$ and  $i_j \in \{1,\dots,n\}$ with the constraints  $i_1\le\cdots\le i_k$.
\end{theorem}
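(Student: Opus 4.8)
The plan is to prove the two assertions separately: that the indicated ordered monomials \emph{span} $E_\gg$, which is elementary, and that they are \emph{linearly independent}, which is the substantial part. First I would establish spanning. Since $\pi$ is a surjective unital algebra homomorphism and $T_\gg$ is spanned by $1$ together with the products $\V{X}_{i_1}\otimes\cdots\otimes\V{X}_{i_k}$ with \emph{arbitrary} indices, $E_\gg$ is spanned by $\pi(1)$ and the products $\pi(\V{X}_{i_1})\circ\cdots\circ\pi(\V{X}_{i_k})$ with arbitrary indices. To reduce to ordered products I would perform a double induction on the pair $(k,\nu)$, where $\nu$ counts the inversions of $(i_1,\dots,i_k)$. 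If a product is not ordered, there is an adjacent pair with $i_a>i_{a+1}$, and the relation $\pi(\V{X}_{i_a})\circ\pi(\V{X}_{i_{a+1}})=\pi(\V{X}_{i_{a+1}})\circ\pi(\V{X}_{i_a})+\pi([\V{X}_{i_a},\V{X}_{i_{a+1}}]_\gg)$, valid in $E_\gg$ because $\iota_\gg$ is a Lie homomorphism, rewrites it as a product with strictly fewer inversions plus a product of length $k-1$; the induction then terminates, showing that the ordered monomials of degree $\le k$ span.

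The hard part is linear independence, and for this I would construct an auxiliary representation. Let $S:=\bR[z_1,\dots,z_n]$ be the commutative polynomial algebra, whose ordered monomials $z_I:=z_{i_1}\cdots z_{i_k}$ (with $i_1\le\cdots\le i_k$, including $z_\emptyset=1$) form a vector basis. The idea is to define linear operators $\rho(\V{X}_i)$ on $S$ that ``left-multiply by $z_i$ and reorder''. Concretely, extending $\rho$ linearly to all of $\gg$, I would set $\rho(\V{X}_i)z_I:=z_iz_I$ when $i\le i_1$ (already ordered), and when $i>i_1$, writing $I=(i_1,J)$,
$$\rho(\V{X}_i)z_I:=z_{i_1}\,\rho(\V{X}_i)z_J+\rho([\V{X}_i,\V{X}_{i_1}]_\gg)z_J\:,$$
a recursion that terminates because in the correction term $z_J$ has strictly smaller degree while in the first term the index $i$ has been pushed past a strictly smaller entry.

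Two things must then be checked: that the recursion is well defined, and, crucially, that $\rho$ satisfies the Lie-algebra relations
$$\rho(\V{X}_i)\rho(\V{X}_j)-\rho(\V{X}_j)\rho(\V{X}_i)=\rho([\V{X}_i,\V{X}_j]_\gg)$$
as operators on $S$. Verifying this identity on a monomial $z_I$, after peeling off the leading entry of $I$, ultimately reduces to the Jacobi identity in $\gg$. I expect this to be the main obstacle: it is a delicate induction in which the Jacobi identity is exactly the ingredient needed to close the argument in the fully disordered case where $i$, $j$, and the leading index of $I$ are all out of order.

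Once the bracket relation holds, $\gg\ni\V{A}\mapsto\rho(\V{A})\in\mathrm{End}(S)$ is a Lie-algebra homomorphism into $\mathrm{End}(S)$ equipped with the commutator bracket, so by the universal property (Theorem \ref{UP}) it extends uniquely to a unital algebra homomorphism $\tilde\rho:E_\gg\to\mathrm{End}(S)$ with $\tilde\rho\circ\iota_\gg=\rho$. Applying $\tilde\rho$ to an ordered monomial and evaluating at $1\in S$, a short induction using the defining recursion of $\rho$ gives, for $i_1\le\cdots\le i_k$,
$$\tilde\rho\big(\pi(\V{X}_{i_1})\circ\cdots\circ\pi(\V{X}_{i_k})\big)\,1=z_{i_1}\cdots z_{i_k}+(\text{terms of lower degree})\:.$$
Since the leading terms $z_{i_1}\cdots z_{i_k}$ are distinct basis elements of $S$, any nontrivial linear combination of ordered monomials in $E_\gg$ has nonzero image under $\tilde\rho(\,\cdot\,)1$ and therefore cannot vanish. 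This yields linear independence and, together with the spanning statement, completes the proof.
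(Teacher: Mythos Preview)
The paper does not prove this theorem: it simply states it and refers the reader to Varadarajan's book \cite{V}. Your proposal is essentially the classical proof that appears there (and in most standard references): establish spanning by the straightening induction on $(k,\text{inversions})$, then prove linear independence by building a $\gg$-module structure on the polynomial algebra $S=\bR[z_1,\dots,z_n]$ via the ``multiply and reorder'' recursion, extend to $E_\gg$ by the universal property, and read off independence by evaluating at $1\in S$. So your approach is correct and matches the reference the paper cites.

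One small sharpening: for an \emph{ordered} monomial $i_1\le\cdots\le i_k$, the evaluation $\tilde\rho\big(\pi(\V{X}_{i_1})\circ\cdots\circ\pi(\V{X}_{i_k})\big)\,1$ is exactly $z_{i_1}\cdots z_{i_k}$, with no lower-degree correction. This is immediate from your recursion: at each step the new index is $\le$ the leading index already present, so only the ``already ordered'' branch fires. Knowing this makes the linear-independence step cleaner (distinct ordered monomials map to distinct basis vectors of $S$, not merely to elements with distinct leading terms). Also, when you write up the verification that $\rho$ is a Lie-algebra homomorphism, be explicit about the induction scheme: one typically inducts on the degree of $z_I$, and within a fixed degree on a suitable measure of disorder among $i$, $j$, and the leading index of $I$; the Jacobi identity is invoked precisely when all three are mutually out of order, exactly as you anticipate.
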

\noindent
 As a corollary, the Lie-algebra homomorphism  $i_\gg :\gg\to E_\gg$ is injective since, evidently,  the kernel of $\pi_\gg$, $K_\gg$, does not contain  
elements of $\gg \setminus \{0\}$.  Thus $\gg$ turns out to be naturally isomorphic to the  Lie subalgebra of $E_\gg$ given by $i_\gg(\gg)$.
\begin{remark} {\em  Due to the afore-mentioned canonical isomorphism, we will simply denote:

(i)  $\pi(1)$ by $1$,

(ii)  $\pi({\bf A})$ by ${\bf A}$,

(iii) $\pi({\bf A})\circ \pi({\bf B})$ by ${\bf A}\circ{\bf B}$,\\
when ${\bf A},{\bf B}\in\gg$.
\noindent In particular, the generic element ${\bf M}\in E_\gg$ can be written as
\begin{equation}\label{dev}
{\bf M}=\pi\left(c_01+\sum_{k=1}^N \sum_{j=1}^{N_k}c_{jk}{\bf A}_{j1}\otimes...\otimes{\bf A}_{jk}\right)=c_01+\sum_{k=1}^N \sum_{j=1}^{N_k}c_{jk}{\bf A}_{j1}\circ...\circ{\bf A}_{jk}
\end{equation}
for some $N, N_k\in\bN$, $c_0,c_{jk}\in\bR$ and ${\bf A}_{jm}\in\gg$.}
\end{remark}
\noindent The last notion we intend to define is the notion of {\em symmetric} element of $E_\gg$.
We start by considering the unique linear map $p : T_\gg \to T_\gg$ such that 
 $$p(1) = 1\:, \quad p({\bf A}_1\otimes\cdots\otimes{\bf A}_n) =(-1)^n{\bf A}_n\otimes\cdots\otimes {\bf A}_1 \quad \mbox{for $n=1,2,\ldots$ and ${\bf A}_j \in \gg$.}$$
Notice that $p$ is {\em involutive}, i.e., $pp=I_{T_\gg}$, so that it is a vector space automorphism,  and  also fulfills the crucial property
$$
p({\bf A}\otimes{\bf B}-{\bf B}\otimes{\bf A}-[{\bf A},{\bf B}]_\gg)={\bf B}\otimes{\bf A}-{\bf A}\otimes{\bf B}-[{\bf B},{\bf A}]_\gg\quad \mbox{for ${\bf A},{\bf B}\in\gg$.}
$$
Hence the ideal $K$ is invariant under the action of $p$ and thus there exists a unique vector space automorphism 
$$E_\gg \ni {\bf M} \mapsto  {\bf M}^+ \in  E_\gg$$ such that $\pi({\bf M})^+=\pi( p({\bf M}))$ for every ${\bf M} \in E_\gg$.
Referring to (\ref{dev}), the action of $^+$ on ${\bf M}$ is completely defined by
\begin{equation}\label{Ptilde}
\left(c_01+\sum_{k=1}^N \sum_{j=1}^{N_k}c_{jk}{\bf A}_{j1}\circ...\circ{\bf A}_{jk}\right)^+=c_01+\sum_{k=1}^N \sum_{j=1}^{N_k}c_{jk}(-1)^k{\bf A}_{jk}\circ...\circ {\bf A}_{j1}
\end{equation}
This map satisfies the following properties making it a {\em real involution} on the real unital algebra  $E_\gg$,
$$(c {\bf M})^+ = c {\bf M}\quad \mbox{and}\quad  ({\bf M} + {\bf N})^+ = {\bf M}^+ + {\bf N}^+ \quad \mbox{and}\quad  ({\bf M} \circ {\bf N})^+ = {\bf N}^+ \circ  {\bf M}^+$$
for $c \in \bR$, ${\bf M},{\bf N} \in E_\gg$.
 Summing up, $E_\gg$, equipped with the involution  $^+$, is therefore  a {\em real  unital $^*$-algebra}.
\begin{definition}\label{defsim}
{\em Let $\gg$ be a Lie algebra.
The  real involution  $E_\gg \ni {\bf M} \mapsto  {\bf M}^+ \in  E_\gg$  defined by  (\ref{Ptilde}) is called the {\bf involution} of $E_\gg$.
An element ${\bf M} \in E_\gg$ is said to be {\bf symmetric} if ${\bf M}={\bf M}^+$.}
\end{definition}

\section{Proofs of some propositions}\label{appProof}

\noindent {\bf Proof of Proposition \ref{prop2}}
\begin{proof}
All the proof is based on the theory developed in  Appendix \ref{secstatic}. Point (7) is straightforward. Let us prove items (1) and (2). (1) If $B=A+iA$, in particular $D(B)=D(A)+iD(A)$, then $C(D(B))\subset D(B)$ and $CBy=BCy$ for every $y\in D(B)$, in other words  $CB\subset BC$.  Let us prove the converse implication. Suppose that
$B : D(B) \to \sH_\bC$ is a $\bC$-linear operator. 
First consider $\sH_\bC$ as a {\em real} vector space, define the non-empty real subspace $D(A):= \{x \in \sH \:|\: x+i0 \in D(B)\}$. 
Since $CB\subset BC$, in particular $C(D(B))\subset D(B)$. By direct inspection it follows that $x+iy\in D(B)$ if and only if $x,y\in D(A)$, i.e. $D(B)=D(A)+iD(A)$. So, we can see $B$ a $\bR$-linear operator from $D(A)\times D(A)$ to $H\times H$ and, as such, it can be represented as
$$B= \left[\begin{matrix}E & F \\ G & H\end{matrix}\right]\:,$$
where $E,F,G,H : D(A) \to \sH$ are $\bR$-linear operators. Since $B$ is actually $\bC$-linear, it 
must commute with 
$$J:=\left[\begin{matrix}0 & -I\\ I &0\end{matrix}\right]\:,$$
which corresponds to $iI$ on $\sH_\bC$ viewed a proper complex vector space: $JB \subset BJ$. This inclusion, by direct inspection implies  $G=-F$ and $E=H$. If we finally impose also the constraint $CB\subset BC$, where $C=\left[\begin{matrix}I & 0\\ 0 & -I\end{matrix}\right]$, we easily obtain $F=0$ so that
$$B= \left[\begin{matrix}A & 0\\ 0 & A\end{matrix}\right]\:,$$
where $A:= E$. This is the same as saying $B= A_\bC$.\\ 
To conclude, observe that $CB\subset BC$ implies $BC^{-1}\subset C^{-1}B$. However $C=C^{-1}$ so that $CB\subset BC \subset CB$
and thus $CB=BC$.\\
(2) First recall that a subspace $ \sM \subset \sH$ is dense if and only if the subspace $\sM_\bC = \sM + i\sM$ of $\sH_\bC$ is dense. Therefore $D(A)$ is dense iff $D(A_\bC)$ is dense and thus $A^*$ and $(A_\bC)^*$ are simultaneously well defined.
Applying the definition of the domain of the adjoint we find,
$$D((A_\bC)^*)=\{x+iy\in \sH_\bC\ |\ \exists s+it\in \sH_\bC\ |\ (s+it|u+iv)=(x+iy |Au + i Av)_\bC,\ u,v \in D(A)\}\:.$$
Restricting ourselves to the case $v=0$, decomposing the inner product into real and imaginary parts we find $x,y \in D(A^*)$ and $(A_\bC)^*(x+iy)=A^*x+iA^*y$, hence $(A_\bC)^*\subset (A^*)_\bC$.  The converse inclusion
 is immediate, concluding the proof of  $(A_\bC)^*=(A^*)_\bC$.\\
The proof of (5) is an immediate consequence of (2) and the definition of $A_\bC$.
The proofs of items (3),(4),(6),(8) are direct applications of the given definitions  and the theory developed in  Appendix \ref{secstatic}. Regarding (6), the statement about the spectrum immediately arises from the definitions of 
$A_\bC$ and the definitions of  the various parts of the spectrum. The first statement in (6) 
can be established as follows. 
First, notice that $E\mapsto P^{(A)}_E$ is a PVM on $\sH$ if and only if $E\mapsto (P^{(A)}_E)_\Delta$ is a PVM on $\sH_\bC$. Moreover, with obvious notation, $u,v\in \Delta_f^{(P)}$ if and only if $u+iv\in \Delta_f^{(P_\bC)}$ for any measurable function $f:\bR\rightarrow\bR$. This easily follows from $\mu_{u+iv}^{(P_\bC)}=\mu_u^{(P)}+\mu_v^{(P)}$. So, take $f(\lambda)=\lambda$, $u,v \in D(A)$ and $x,y\in \sH$. It holds
$$(x+iy|A_\bC (u+iv))_\bC = (x|A u) - (y|A v) + i \left((y|A u) - (x|A v)  \right)\:.$$
Using the very definition of complexified operator, Thm \ref{st}, identity (\ref{intop2}), and elementary properties of the measures $\mu^{(P)}_{r,s}$,
$$(x|A u) - (y|A v) + i \left((y|A u) - (x|A v)  \right)
 = \int_{\bR} \lambda d\mu^{(P^{(A)})}_{x,u}(\lambda)
- \int_{\bR} \lambda d\mu^{(P^{(A)})}_{y,v}(\lambda)$$
$$ + i \left(\int_{\bR} \lambda d\mu^{(P^{(A)})}_{y,u}(\lambda)
- \int_{\bR} \lambda d\mu^{(P^{(A)})}_{x,v}(\lambda) \right)= 
 \int_{\bR} \lambda d\mu^{(P^{(A)}_\bC)}_{x+iy,u+iv}(\lambda)\:.$$
Summing up, we have found for $u+iv \in D(A_\bC)$ and $x+iy \in \sH_\bC$
$$(x+iy|A_\bC (u+iv))_\bC =  \int_{\bR} \lambda d\mu^{(P^{(A)}_\bC)}_{x+iy,u+iv}(\lambda)\:,$$
and thus 
$$A_\bC = \int_{\bR} \lambda d(P^{(A)})_\bC(\lambda)$$
which implies $P^{(A_\bC)}= (P^{(A)})_\bC$ by the uniqueness statement in Thm \ref{st} (a). A similar argument applies to generic measurable functions, proving the last statement.  Point (10) has a direct proof using the definition of $A_\bC$ and $\sH_\bC$.
\end{proof}

\noindent {\bf Proof of Theorem \ref{Polar}}
\begin{proof}
The proof of (a) and (b) for the complex case can be found in \cite{M}. In the rest of the proof it is useful to notice that  the bounded operator $U$ is isometric  on  $\overline{Ran(P)}$ by continuity and that, since ((e) Remark \ref{remarkclosure}) $[Ran(P)]^\perp=Ker(P)$,  $U$ also vanishes on $[Ran(P)]^\perp$.\\ Now, suppose that $\sH$ is real and let us demonstrate (a) and (b) with this hypothesis. Consider the complexifications $\sH_\bC$ and $A_\bC$. $D(A_\bC)=D(A)_\bC$ is clearly dense since $D(A)$ is dense, furthermore  $A_\bC$ is closed thanks to Proposition \ref{prop2} (3). Hence we can apply (a) and (b) for the complex case, obtaining  that $(A_\bC)^* A_\bC$ is densely defined and selfadjoint. 
Furthermore the polar decomposition $A_\bC=U'P'$, with $P'=|A_\bC|$ where  $P', U'$ satisfy all properties listed in (b). 
Notice that $(A^*A)_\bC = (A_\bC)^*A_\bC$ from (2) and (7) of Proposition \ref{prop2}, and (5) implies that $A^*A$ is densely defined and selfadjoint since $A_\bC^*A_\bC$ is densely defined and selfadjoint. $A^*A$ and $(A_\bC)^*A_\bC$ are evidently positive and so item (6) of Proposition \ref{prop2} and Prop.\ref{sqrt} guarantee that $|A_\bC| = \sqrt{A_\bC^*A_\bC}=  \sqrt{(A^*A)_\bC} =  (\sqrt{(A^*A)})_\bC  = |A|_\bC$. Define $P:=|A|$.
Of course $D(P)=D(P')|_\sH=D(A_\bC)|_\sH=D(A)$. Now we need to prove that $U'$ is the complexification of some $\bR$-linear operator on $\sH$. 
This is equivalent to demonstrate that $U'C=CU'$ as stated by Prop.\ref{prop2} (1), where $C$ is the conjugation defined in (\ref{conj}). 
Let $x+iy\in Ran(P')$, then $x+iy=P'(u+iv)$ for some $u,v\in D(P)$. So we have $U'C(x+iy)=U'CP'(u+iv)=U'P'C(u+iv)=A_\bC C(u+iv)=CA_\bC(u+iv)=CU'P'(u+iv)=CU'(x+iy)$. So, by continuity of $CU',U'C$ we get $U'C=CU'$ on $\overline{Ran(P')}$. Now, it is easy to see that $C(Ker(P'))\subset Ker(P')$ and so $U'C=CU'$ on $[Ran(P')]^\perp$ trivially. Hence there must exist $U \in \gB(\sH)$ such that $U'=U_\bC$. Putting all together we find $A_\bC=U_\bC P_\bC=(UP)_\bC$, so that 
$A=UP$ where $P\geq 0$ is selfadjoint as $P'$ is (see Prop.\ref{prop2}). This way we have proved items (i) and (ii) together with (v). The properties (iii),(iv),(vi) and (vii) can be trivially obtained from the corresponding properties of $U'$ and $P'$ exploiting Prop.\ref{propcompleX} (c) and Def.\ref{defcomplx0}.
\end{proof}
%
%

\noindent{\bf Proof of Proposition \ref{LemmaCOMM}}
\begin{proof}
(a) Remember that $\psi\in D(A)$ if and only if exists $\frac{d}{dt}\big|_0 e^{tA}\psi$ exists. The equality $Be^{tA}\psi=e^{tA}B\psi$ and the continuity of $B$ guarantees that $B\psi\in D(A)$ and $AB\psi=BA\psi$, i.e. $BA\subset AB$. Let us prove the opposite inclusion. If $\sH$ is complex the proof can be found in \cite{M} using the self adojint operator $iA$. So, suppose $\sH$ is real and take $A_\bC$ on the complexified space $\sH_\bC$. Applying the complex case to $B_\bC A_\bC\subset A_\bC B_\bC$ we get $B_\bC e^{tA_\bC}=e^{tA_\bC}B_\bC$, hence $Be^{tA}=e^{tA}B$.\\
(b) First suppose that $\sH$ is complex. The operator $A^*A$ is densely defined, positive and selfadjoint as we know.  Since $P= |A|=\sqrt{A^*A}$ is selfadjoint,  we have ((e) in Remark \ref{remarkclosure})
$Ker(|A|)^\perp=\overline{Ran(|A|)}$ and  $H=Ker(|A|)\oplus \overline{Ran(|A|)}$.
$BA\subset AB$ implies  $BAA\subset ABA$ and thus  $BAA\subset AAB$. This inclusion can be rewritten as 
 $BA^*A\subset A^*AB$ because $A^*=\pm A$.
As $A^*A$ is selfadjoint and $B$ bounded, the found inclusion extends to all measurable functions of $A^*A$:  $Bf(A^*A)\subset f(A^*A)B$ (Thm 9.35 in \cite{M}). In particular, we have
 $B|A|=B\sqrt{A^*A}\subset \sqrt{A^*A}B=|A|B$ which is the second of the pair of relations we wanted to establish.
Now, let $u\in D(|A|)=D(A)$, from the proved inclusion  we immediately  have $UB|A|u= U|A|Bu=ABu=BAu=BU|A|u$, from which we see that $UB=BU$ on $Ran(|A|)$ and thus
on $\overline{Ran(|A|)}$ by continuity. If we manage to prove that this equality holds also on $Ker(|A|)$, the proof is complete for the complex Hilbert space case because $H=Ker(|A|)\oplus \overline{Ran(|A|)}$.
Let $u\in Ker(|A|)$, then $|A|Bu=B|A|u=0$, that is $Bu\in Ker(|A|)$. Since $Ker(|A|)= Ker(U)$ (Thm \ref{Polar} (b)) and $Ker(|A|)$ is invariant under the action of $B$, it immediately follows that $UBx=BUx$ trivially for  $x\in Ker(|A|)$ as wanted, concluding the proof for the complex Hilbert space case.\\
 Now, suppose that $\sH$ is real and let $A=UP$ the polar decomposition of $A=\pm A^*$. Take $B$ as in the hypotheses and complexify everything, then we have $B_\bC A_\bC\subset A_\bC B_\bC$ on the natural domains. As we know from (c) in   Theorem \ref{Polar} $A_\bC= U_\bC P_\bC$ is  the polar decomposition of $A_\bC$, hence, using the first part of the proof we get $B_\bC U_\bC =U_\bC B_\bC$ and  $B_\bC P_\bC\subset U_\bC B_\bC$, which respectively means $(BU)_\bC=(UB)_\bC$ and
$(BP)_\bC \subset (PB)_\bC$ so that $BU=UB$ and
$BP \subset PB$ by restriction to $\sH$.\\
(c) Let first suppose that $\sH$ is complex and $A=-A^*$. In this case Theorem \ref{ST} (ii) implies that 
$e^{-tA}$ (which belongs to $\gB(\sH)$) commutes with $A$ and thus, exploiting (b),  we have $Ue^{-tA} =  e^{-tA}U$ and 
$U^*e^{tA} =  e^{tA}U^*$ taking the adjoint.  Due to Thm \ref{ST}, the limit for $t\to 0$ in both cases yields $UA \subset AU$ and $U^*A\subset AU^*$.
Remaining in the complex case, if $A^*=A$, replacing $A$ for $iA$ everywhere in our reasoning,  we again reach the same 
final result $UA \subset AU$ and $U^*A\subset AU^*$.  Now assume $A=A^*$ (otherwise everywhere replace $A$ for $iA$). As $U$ and $U^*$ are bounded, we conclude (Thm 9.35 in \cite{M}) that  $U$ and $U^*$ commute with every measurable  function of  $A$. In particular  $U|A|\subset |A|U$ and  $U^*|A|\subset |A|U^*$. Exploiting (iii) Thm 9.35 in \cite{M} once again, 
we prove that $Uf(|A|)\subset f(|A|)U$ and  $U^*f(|A|)\subset f(|A|)U^*$ for every measurable function $f: [0,+\infty) \to \bR$.
We have so far established  (c) for a complex Hilbert space $\sH$. If $\sH$ is real and $A$ (anti)-selfadjoint, $A_\bC$ fulfills 
(c) in the complex Hilbert space $\sH_\bC$. (c) Thm \ref{Polar}  and (2),(6),(7) Prop.\ref{prop2}  easily extend the result to $A$.\\
(d) We  prove that $U^*= \pm U$  if, respectively,  $A^*=\pm A$. Since $U$ is bounded, we have $(\pm U)|A|=\pm A=A^*=|A|U^*$. Take $u\in D(|A|)$, then $U^*|A|u=|A|U^*u=(\pm U)|A|u$, hence $U^*x=\pm Ux$ for $x \in Ran(|A|)$ and, by continuity, $x \in \overline{Ran(|A|)}$. Since $H=\overline{Ran(|A|)}\oplus Ker(|A|)$ we have to prove that $U^*x=\pm Ux$  holds also for
  $x \in Ker(|A|)$. Since $Ker(|A|)= Ker(U)$ by Thm \ref{Polar}, we have $Ux=0$ if $x \in Ker(|A|)$. By proving  $Ker(|A|)\subset Ker(U^*)$ we would have $U^*x=0$, establishing $U^*x=\pm Ux$ also for $x\in Ker(|A|)$ as required.  To this end, let $x\in Ker(|A|)$ and $y\in\sH$.  We have  $y=u+v$, with $u\in \overline{Ran(|A|)}$ and $v\in Ker(|A|)$. Let $|A|x_n \in  Ran(|A|)$ such that $u=\lim_{n\to\infty}|A|x_n$, then we have $(U^*x|y)=(x|Uy)=(x|Uu)=\lim_{n\to\infty}(x|U|A|x_n)=\lim_{n\to\infty}(x||A|Ux_n)=\lim_{n\to\infty}(|A|x|Ux_n)=\lim_{n\to\infty}(0|Ux_n)=0$. Since $y$ is arbitrary, we have $U^*x=0$ if $x \in Ker(|A|)$ as required, proving our thesis $U^*x= \pm Ux$ for all $x\in \sH$.\\
(e) We exploit here  Thm \ref{Polar}  several times.
Since $\sH = Ker(|A|) \oplus \overline{Ran(|A|)}$ and $U$ is isometric on $\overline{Ran(|A|)}$, if $Ker(|A|)$  (which coincides with $Ker(|A|)= Ker(U)$)   is trivial, then 
$U$ is isometric on $\sH$. It is therefore  enough proving that $Ran(U)= \sH$ to end the proof of the fact that $U$ is unitary.
We know from Thm \ref{Polar} that $Ran(U) = \overline{Ran(U)}$, but since $U=\pm U^*$, we also have  $\overline{Ran(U)} = \overline{Ran(U^*)} = Ker(U)^\perp  = Ker(|A|)^\perp = \{0\}^\perp = \sH$.\\
To conclude demonstrating the last statement of (d), observe that if $U$ is unitary and $US\subset SU$, $U^*S\subset SU^*$ simultaneously hold (in particular $U(D(S)) \subset D(S)$ and $U^*(D(S)) \subset D(S)$), we also have
 $U^*USU^*\subset U^*SUU^*$, that is   $SU^*\subset U^*S$.  The found inclusion  together with $U^*S\subset SU^*$ implies
$U^*S=SU^*$. Interchanging the r\^ole of $U$ and $U^*$, we also achieve $US=SU$. 
\end{proof}

\noindent {\bf Proof of Proposition \ref{polarCOMM}}
\begin{proof}
From $e^{sB}e^{tA}=e^{tA}e^{sB}$ and Stone's theorem, we have
$e^{sB}A \subset Ae^{sB}$. Thus (a) in Prop.\ref{LemmaCOMM} implies that both $e^{sB}|A| \subset |A|e^{sB}$ 
and $Ue^{sB}=e^{sB}U$.  Applying Stone's theorem again to the second result we have   $UB \subset BU$ and also
$U^*B\subset BU^*$ since $U=-U^*$ ((d) of Prop.\ref{LemmaCOMM}). We have so far established (i). Regarding (ii), observe that 
$UB \subset BU$ and (b) of Prop.\ref{LemmaCOMM} yield both $U|B| \subset |B|U$, which gives (ii) with the same reasoning carried out in proving (c) of Prop.\ref{LemmaCOMM}, and $UV=VU$ which gives (iii) immediately.
The last statement is a trivial consequence of the fact that $U$ is unitary if  any of $A$, $|A|$,  $U$ is injective as stated in
(d) of Prop.\ref{LemmaCOMM}.
\end{proof}

\noindent{\bf Proof of Proposition \ref{prop3i}}
\begin{proof} First of all, notice that the considered $\bR$-linear operator,  $A$, is also a $\bC$-linear operator and $D(A)$ is also a complex subspace of $\sH_J$
in view of Proposition \ref{prop2i}. (a) easily  arises by applying the definition of adjoint operator. (b) is immediate consequence of the fact that the identity map is an  isometry of metric spaces from $\sH$ to $\sH_J$.
(c) straightforwardly arises from (b). (d) is consequence of (a) and (b) and the relevant definitions. Let us prove (e).  
Let $\gB(\bR)\ni E\mapsto P^{(A)}_E$ be the PVM of $A$ on $\sH$, then, since $JA=AJ$, Theorem \ref{st} (c) (ii) guarantees that $JP^{(A)}_E=P^{(A)}_EJ$, hence $P^{(A)}$ is made of complex linear projectors and it immediately arises that it is a PVM also with respect to $\sH_J$. Moreover $\mu_x^{(P)}(E)$, and so also $\Delta_\lambda^{(P)}$ (which equals $D(A)$ on $\sH$), turns out to be equal if defined on $\sH$ or $\sH_J$ and $\mu_{x,y}^{(P),\sH_J}(E)=\mu_{x,y}^{(P)}(E)-i\mu_{x,Jy}^{(P)}(E)$, with obvious notation, if $x\in\sH$ and $y\in\Delta_\lambda^{(P)}$. So, let $x,y$ as above, then, noticing that $Jy\in \Delta_\lambda^{(P)}$, we have 
$$
(x|Ay)_J=(x|Ay)-i(x|AJy)=\int_\bR\lambda\, d\mu_{x,y}^{(P)}-i\int_\bR\lambda\, d\mu_{x,Jy}^{(P)}=\int_\bR\lambda\, d\mu_{x,y}^{(P),\sH_J}
$$ 
From Prop.\ref{propint} (a) and Th.\ref{st} $P^{(A)}$ must be also the PVM of $A$ with respect to $\sH_J$.
Since the support of $P^{(A)}$ is the spectrum of $A$ both in the real and complex Hilbert space case (Theorem \ref{st}), the two notion of spectrum coincide. Since the point spectrum is the set of eigenvalues, which are the same considering $A$ as a complex-linear or real linear operator, the two notions of point spectrum coincide as well.
Since, for selfadjoint (either real or complex) operators $\sigma_c(A) = \sigma(A)\setminus \sigma_p(A)$, the result extends to continuous spectra.
\end{proof}

\noindent {\bf Proof of the Theorem \ref{teopropvnA}}
\begin{proof} (a) If $A^*=A \in \gR$ its PVM commutes with every bounded operator commuting with $A$ for theorem \ref{st} (c)(ii), so that the PVM is in $(\gR')'= \gR$.
If the PVM $P^{(A)}$ of $A^*=A$ belongs to $\gR$, the operators of the form $\int_\bR s dP^{(A)}$ belongs to $\gR$ for every simple function $s$. Since there exists a non-decreasing sequence of simple functions $s_n$ converging pointwise to $id : \bR \ni x \mapsto \bR$ (see, e.g., \cite{R}), from the second identity of (d) of proposition \ref{propint}
and theorem \ref{st} (a), the monotone convergence theorem implies that $A \in \gR$, since the latter is closed with respect to the strong operator topology.
(b)  The proof is identical in the real and complex case see, e.g. \cite{Redei}. \\
 (c) If $\gR$ is reducible there is a non-trivial subspace invariant under the action of every element of $\gR$. The orthogonal projector $P$ onto that space is therefore an element of $\gR'$, and thus $\cL_{\gR'}(\sH)$, different form $0$ and $I$.
If there is such an element in $\cL_{\gR'}(\sH)$, the (proper) projection subspace is invariant under $\gR$ which is not irreducible consequently.\\
(d) The proof arises form the fact that $T\in \gR$ can be decomposed as $T=S+A$ where both $S$ and $iA$ are selfadjoint  elements of $\gR$
and so, thanks to (a), they are strong limit of elements belonging to the *-algebra generated by $\cL_R(\sH)$ as seen in the proof of (a) itself. \\
(e) If $T \in \gR$ is selfadjoint, its PVM belongs to $\cL_\gR(\sH)$ and thus also to
 the von Neumann algebra $\cL_\gR(\sH)''$ proving (i). Regarding  (iii),
first observe that if $J$ exists in $\gR \setminus \cL_{\gR}(\sH)''$ 
then $\cL_{\gR}(\sH)'' \subsetneq \gR$. Let us prove the converse implication and (ii) simultaneously. If $A \in \gR \setminus \cL_{\gR}(\sH)''$, then $A-A^*$ does, otherwise $A \in  \cL_{\gR}(\sH)''$ because 
the selfadjoint operator  $A+A^*$ does. Due to proposition \ref{LemmaCOMM} (b) and $\gR=\gR''$, both the factors
 of the polar decomposition of $A-A^*= J|A-A^*|$ belong to $\gR$. Since  $|A-A^*|^*= |A-A^*| \in  \cL_{\gR}(\sH)''$, it must be 
$J \not \in  \cL_{\gR}(\sH)''$.  Proposition \ref{LemmaCOMM} (d) yields $J^*=-J$. Finally, from the general properties 
of the polar decomposition and Remark \ref{essentclos} (b), we know that $-JJ=JJ^*$ is the orthogonal projector onto $Ran J$, hence belongs to $\cL_{\gR}(\sH)$ 
because is selfadjoint and a product of elements of $\gR$.  This discussion also proves   that (ii) is true because,
if $A \in \gR$, then $A \in (\cL_{\gR}(\sH) \cup \cJ_\gR)''$ since $A =
 \frac{1}{2}(A+A^*) +  \frac{1}{2}J|A-A^*|$ and we know that $\frac{1}{2}(A+A^*),\frac{1}{2}|A-A^*|\in\cL_\gR(\sH)''$ and $J\in\cJ_\gR$.
\end{proof}

\noindent{\bf Proof of Theorem \ref{teogarding}}
\begin{proof} The proof of (a) is the same for the real and the complex case and appears in Ch.10 of \cite{S0}. Equation (\ref{Ugf}) can be derived directly from the definitions, proving (b) for both the real and the complex case. The same holds for (\ref{Af}), noticing that every G\'arding vector $x$ is smooth for $U$, hence in particular every function $t\mapsto U_{\exp(t\V{A})}x$ is differentiable. This proves (c). As for (a), the proof of (d) is the same for both the two cases and can be found in Ch.10 of \cite{S0}, keeping in mind that $D_G^{(U)}$ equals the set of smooth vectors for $U$.
Now, let us prove (e). It follows directly from the universal property in Theorem \ref{UP} taking  $V$ as the {\em real} associative algebra of (either $\bR$-linear or $\bC$-linear depending on the nature of $\sH$) operators on $D_G^{(U)}$. Using item (d), a direct calculation shows that $u({\bf M})$ is symmetric 
whenever ${\bf M}= {\bf M}^+$.
Let us pass to (f). If $\sH$ is complex, Corollary 10.2.11 in Ch.10 of \cite{S0} establishes that $iu({\bf A})$ is essentially self adjoint. Since $iu({\bf A})\subset iA$ and both operators are symmetric and the second is selfadjoint, we have that 
$\overline{iu(A)} = \overline{iA} =((iA)^*)^*= iA$, which is equivalent to say that $\overline{u(A)}  = A$. If $\sH$ is real, it is convenient to pass to consider the strongly-continuous unitary representation  $U_\bC$  of $G$ on $\sH_\bC$ obtained by complexification $(U_g)_\bC$ of the operators $U_g$.  Exploiting Theorem \ref{ST} one immediately proves that, if $A$ is the anti-selfadjoint generator associated to ${\bf A} \in \gg$ by the real representation $U$, the complexified operator $A_\bC$ is the 
anti selfadjoint generator associated to ${\bf A} \in \gg$ by the complex representation $U_\bC$. 
The G\r{a}rding space $D^{(U_\bC)}_G$ of $U_\bC$ is nothing but the complexified one $(D^{(U)}_G)_\bC$ as arises from Thm \ref{DMtheorem}. 
Therefore, restricting  the operators to the G\r{a}rding domains one has 
$(A_\bC)|_{D^{(U_\bC)}_G} = (A|_{D^{(U)}_G})_\bC$, that is $u_\bC({\bf A})=u({\bf A})_\bC$, where we denoted by $u_\bC$ the 
Lie-algebra representation associated with $U_\bC$. Applying the (already proved) complex case of  (f) to $A_\bC$, we have
$A_\bC = \overline{u_\bC({\bf A})} = \overline{u({\bf A})_\bC} = (\overline{u({\bf A})})_\bC$, where the last identity arises from (3) in Proposition \ref{prop2}. We have obtained that
$A_\bC =  (\overline{u({\bf A})})_\bC$ which is equivalent to  our thesis $A = \overline{u({\bf A})}$. Finally, let us pass to the proof of (g). If $\sH$ is complex the thesis follows immediately from Theorem 10.2.6 in Ch.10 of \cite{S0}. If $\sH$ is real, taking into account (e) and the fact that $u_\bC(\V{A})=u(\V{A})_\bC$ for every $\V{A}\in\gg$, we easily get $u_\bC(\V{M})=u(\V{M})_\bC$. From this equation and Prop.\ref{prop2}, it follows $(u(\V{M})^*)_\bC=(u(\V{M})_\bC)^*=u_\bC(\V{M})^*=\overline{u_\bC(\V{M})}=\overline{u(\V{M})_\bC}=(\overline{u(\V{M})})_\bC$ which concludes the proof.
\end{proof}

\noindent{\bf Proof of Proposition \ref{alggroupcomm}}
\begin{proof}
If (i) holds, by definition of the involved domains $B(D_G^{(U)})\subset D_G^{(U)}$.  Since $B$ is bounded and $u({\bf A})$
closable, and exploiting Remark \ref{defclosop} (a), we immediately achieve (ii).
Suppose now that (ii) holds, that is  $B\overline{u({\bf A})}\subset\overline{u({\bf A})}B$, then Prop.\ref{LemmaCOMM} (a) gives $Be^{t\overline{u(\V{A})}}=e^{t\overline{u(\V{A})}}B$.
This is true both for the real and the complex Hilbert space cases.
 Since the group $G$ is connected, every element $g \in G$ can be written as the product of a finite number of one-parameter subgroup elements of $G$, hence the thesis (iii) holds true. 
Regarding the fact that  (iii) entails (i),  assume that (iii) is valid, i.e.,  $BU_g=U_gB$ for every $g\in G$. In particular, we therefore have
 $Be^{t\overline{u({\bf A})}}=BU_{\exp(t{\bf A})}=U_{\exp(t{\bf A})}B=e^{t\overline{u({\bf A})}}B$ for every ${\bf A}\in\gg$ and $t\in\bR$. Exploiting again Prop.\ref{LemmaCOMM} (a) we get (ii), namely $B\overline{u({\bf A})}\subset \overline{u({\bf A})}B$. However also (i) is valid
 because, if (iii) is satisfied,  $B\left(\int_G f(g)U_gx d\mu \right)=\int_G f(g)U_g(Bx) d\mu$, hence the G\r{a}rding domain is invariant under the action of $B$ and thus from (ii) we pass to (i).
\end{proof}

\noindent{\bf Proof of Proposition \ref{irrelie}}

\begin{proof}
Let $x\in D_N^{(U)}$ and ${\bf A}\in\gg$. Thanks to Theorem \ref{teonelson} it holds $x\in D_G^{(U)}$ and $x$ is analytic for $u({\bf A})$. Exploiting Prop.\ref{propNseries}, we have that  there exists $t_{{\bf A},x}>0$ such that
$$
U_{\exp(t{\bf A})}x=e^{t\overline{u({\bf A})}}x=\sum_{n=0}^\infty\frac{t^n}{n!}u({\bf A})^nx,\ \ |t|\le t_{{\bf A},x}\:.
$$
Moreover $D_N^{(U)}$ is invariant under the action of $u$, hence $u({\bf M})x\in D_N^{(U)}$. Then there exits $t_{{\bf A},u({\bf M})x}>0$ such that
$$
U_{\exp(t{\bf A})}u({\bf M})x=e^{t\overline{u({\bf A})}}u({\bf M})x=\sum_{n=0}^\infty\frac{t^n}{n!}u({\bf A})^nu({\bf M})x,\ \ |t|\le t_{{\bf A},u({\bf M})x}\:.
$$
Now take a positive  real  $t_x<\min\{t_{{\bf A},x},t_{{\bf A},u({\bf M})x}\}$. Using $[u({\bf M}),u({\bf A})]=0$ we have
$$
U_{\exp(t{\bf A})}u({\bf M})x=\sum_{n=0}^\infty\frac{t^n}{n!}u({\bf M})u({\bf A})^nx,\ \ |t|\le t_{x}\:.
$$
Since   $u({\bf M})$ is closable, it follows directly from the equations above and the invariance of $D_G^{(U)}$ under the action of
 $U$ that $$U_{\exp(t{\bf A})}u({\bf M})x = \sum_{n=0}^\infty\frac{t^n}{n!}u({\bf M})u({\bf A})^nx = u({\bf M})U_{\exp(t{\bf A})}x$$ for every $|t|\le 
t_x$. Actually this equality holds for every $t\in\bR$. Indeed define 
$\cZ:=\{z>0|u({\bf M})U_{\exp(t{\bf A})}x=U_{\exp(t{\bf A})}u({\bf M})x,\ |t|\le z\}$ and let $t_0:=\sup\cZ$. Suppose that $t_0<\infty$, then it is
 easy to see that the fact that  $u({\bf M})$ is closable ensures that $u({\bf M})U_{\exp(t_0{\bf A})}x=U_{\exp(t_0{\bf A})}u({\bf M})x$, hence
 $t_0\in\cZ$. We know that $y:=U_{\exp(t_0{\bf A})}x\in D_N^{(U)}$,  we can therefore  repeat the above reasoning  finding a real
 $t_y>0$ such that $u({\bf M})U_{\exp(t{\bf A})}y=U_{\exp(t{\bf A})}u({\bf M})y$ for every $|t|\le t_y$.  Noticing that  
 $\exp((t+t_0){\bf A})=\exp(t{\bf A})\exp(t_0{\bf A})$, it straightforwardly follows  that
 $u({\bf M})U_{exp(t+t_0){\bf A}}x=U_{exp(t+t_0){\bf A}}u({\bf M})x$ for $|t|\le t_y$, hence $t_0+t_y\in\cZ$, which is in contradiction with  the
 definition of $t_0$. This proves that $t_0=\infty$. As is well known from the elementary theory of Lie-group theory, since the
 $G$ is connected, every element is the product of a finite number of elements belonging to one parameter subgroups generated
 by $\gg$, so that  we have actually demonstrated  that $u({\bf M})U_g=U_gu({\bf M})$ on $D_N^{(U)}$ for every $g\in G$. This identity  
implies  $U_gu({\bf M})|_{D_N^{(U)}}=u({\bf M})|_{D_N^{(U)}}U_g$ on the natural domains thanks to the invariance of the Nelson space under the action of the 
group representation. In our hypotheses, $u({\bf M})|_{D_N^{(U)}}$ is  the restriction of a closable operators and thus  it is closable as 
well and so Remark \ref{essentclos} (f) gives $U_g\overline{u({\bf M})|_{D_N^{(U)}}}=\overline{u({\bf M})|_{D_N^{(U)}}}U_g$ for 
every $g$. Using Proposition \ref{SL2} we find $D(\overline{u({\bf M})|_{D_N^{(U)}}})=\sH$ and 
$\overline{u({\bf M})|_{D_N^{(U)}}}\in\gB(\sH)$, more precisely $\overline{u({\bf M})|_{D_N^{(U)}}}=aI+bJ$ for some $a,b\in\bR$, where 
$J=iI$ if $\sH$ is complex, and $J$ is a generic complex structure if $\sH$ is real.
Since $\overline{u({\bf M})|_{D_N^{(U)}}}\subset \overline{u({\bf M})}$, the maximality of the domain gives 
$\overline{u({\bf M})|_{D_N^{(U)}}}=\overline{u({\bf M})}$. As the  latter is selfadjoint, it follows that $b=0$ and $\overline{u({\bf M})}=aI$ 
with $a\in\bR$ ending the proof.
\end{proof}

\noindent{\bf Proof of Lemma  \ref{lemmaIR}}
\begin{proof}
Let $A\in\cL(\sH)'$, then in particular $AP_\psi=P_\psi A$ where $P_\psi$ is the orthogonal projector onto the subspace generated by  $\psi\in\sH\setminus\{0\}$. This gives $A\psi=AP_\psi\psi=P_\psi A\psi$ which means that for every $\psi\in\sH\setminus\{0\}$ it holds $A\psi=\lambda_\psi \psi$  for some (and unique) $\lambda_\psi\in\bR$.  If $\dim H =1$ the proof ends, otherwise  if $\phi\in\sH\setminus\{0\}$ is linearly independent from a given $\psi\in \sH\setminus\{0\}$, with the same argument it holds $\lambda_{\phi+\psi}(\phi+\psi)=A(\phi+\psi)=\lambda_\phi\phi+\lambda_\psi\psi$, that is $(\lambda_{\phi+\psi}-\lambda_\phi)\phi=(\lambda_\psi-\lambda_{\phi+\psi})\psi $.  By linear independence it must be $\lambda_{\phi+\psi}-\lambda_\phi=\lambda_\psi-\lambda_{\phi+\psi}=0$, that is $\lambda_\phi=\lambda_\psi$. Finally, complete the initial $\psi$ -- assumed to have unit norm -- to a Hilbert  basis $\{e_i\}_{i\in\cI}$  of $\sH$ so that $Ae_i=\lambda_\psi e_i$ for every $i\in\cI$. By linearity and continuity  of $A$ we have $A=\lambda_\psi I\in\bR I$.
\end{proof}

\noindent{\bf Proof of Lemma \ref{quatimp1}}
\begin{proof}
If $A,B\in\gR$ and $A+JB=0$, then $0=-K(A+JB)K=A-JB$. The first part of the thesis follows immediately. To conclude we  prove that $\gR_J:=\gR+J\gR$ is a complex von Neumann algebra whose commutant is trivial, this implies the second part of the thesis because $\gR_J=\gR_J''= \{cI \:|\: c \in\bC\}' = \gB(\sH_J)$.  $\gR_J$ is evidently a unital $^*$-subalgebra of $\gB(\sH_J)$, hence we only need to prove that it is closed with respect to the strong operator topology and that its commutant is made up of complex scalars. Suppose that $A_n+JB_n\to T\in\gB(\sH_J)$ strongly. Since all the operators considered are also real linear and the norms in $H$ and $H_J$ coincide, the same strong limit holds in $\gB(\sH)$. By continuity of $K$ we have that $A_nK+JB_nK\rightarrow TK$ and $A_nK-JB_nK=KA_n+KJB_n\rightarrow KT$ strongly. From this it easily follows that $A_n\rightarrow A$ and $B_n\rightarrow B$ for some real linear operators $A,B\in \gB(\sH)$. Since $\gR$ is strongly closed we have $A,B\in\gR$ so that  $A_n+JB_n\rightarrow A+JB\in\gR_J$. 
We have established that $\gR+J\gR$ is a complex von Neumann algebra.
Finally, take $T\in\gR_J'$, in particular we have $[T,A]=0$ for every $A\in\gR\subset\gR_J$.  Since $T$ is also real linear, it must be $T=aI+bJ+cK+dJK$ for some $a,b,c,d\in\bR$. Since it also holds $[T,J]=0$, it must be $T=aI+bJ$. In other words 
$T = (a+ib)I$ which is equivalent to say that $\gR_J$ has trivial commutant concluding the proof.
\end{proof}

\noindent{\bf Proof of Lemma \ref{commutanquaternionic}}
\begin{proof}
Let $A\in\cL(\sH)'$, then, reasoning as in the proof of Lemma
\ref{lemmaIR} we conclude that for every $\psi\in\sH\setminus\{0\}$ it holds $A\psi=\psi\lambda_\psi$ for some and unique $\lambda_\psi\in\bH$. Again, if $\phi\in\sH\{0\}$ is linearly independent from a given $\psi\in\sH\setminus\{0\}$, we find $\lambda_\phi=\lambda_\psi$. Next step consists in proving that $\lambda_\psi\in\bR$. Let $p\in\bH\setminus\{0\}$. Clearly, if $\psi,\phi$ are linearly independent then so are $\psi,\phi p$, hence $\lambda_\phi=\lambda_\psi=\lambda_{\phi p}.$ Now, we have $(\phi p)\lambda_\phi=(\phi p)\lambda_{\phi p}=A(\phi p)=(A\phi)p=(\phi\lambda_\phi)p$ from which it immediately follows $p\lambda_\phi=\lambda_\phi p$. Being $p$ generic, $\lambda_\phi$ must be real. The conclusion follows as in the proof of Lemma \ref{lemmaIR}.

\end{proof}


\begin{thebibliography}{999}


\bibitem[Ad95]{Adler} S. L. Adler: {\em Quaternionic Quantum Mechanics and Quantum Fields}. International
Series of Monographs on Physics, Vol. 88 The Clarendon Press Oxford University
Press, New York, (1995)


\bibitem[AeSt00]{XY} D. Aerts, B. van Steirteghem: {\em Quantum Axiomatics and a theorem of M.P. Sol\`er}.  Int. J. Theor. Phys. {\bf 39}, 497-502, (2000).

\bibitem[Ba54]{Ba} V. Bargmann: {\em On Unitary Ray Representations of
 Continuous groups}. Ann. Math. {\bf 59}, 1-46 (1954)


\bibitem[BeCa81]{BeCa}E.G., Beltrametti, G. Cassinelli: {\em The logic of quantum mechanics}. Encyclopedia of Mathematics and its Applications, vol. 15, Addison-Wesley, Reading, Mass., (1981)

\bibitem[DiMa78]{DM} J. Diximier, J. and P. Malliavin, P.: {\em Factorisations de fonctions et de vecteurs ind\'efiniment diff\'erentiables}, Bull. Sciences Mathematiques {\bf 102} 305-330 (1978)

\bibitem[Dv93]{libroGleason}A.V. Dvurechenskij: {\em  Gleason's Theorem and Its Applications}, Kluwer Academic Publishers (1993)


\bibitem[EGL09]{librone} K. Engesser, D.M. Gabbay, D. Lehmann  (editors): {\em Handbook of Quantum Logic and Quantum Structures}. Elsevier, Amsterdam (2009) 


\bibitem[Em63]{Em} G. Emch, {\em M\'{e}canique quantique quaternionienne et
relativit\'{e} restreinte}, Helvetica Physica Acta, 36 (1963)


\bibitem[FJSS62]{foudationofquaternionicmechanics}D. Finkelstein, J. M. Jauch, S. Schiminovich, and D. Speiser: {\em Foundations of Quaternion Quantum Mechanics}. J. Math. Phys. {\bf 3}, 207 (1962)

\bibitem[Gl57]{G}  A.M. Gleason, {\em Measures on the closed subspaces of a Hilbert space}. J. Math. Mech. {\bf 6}(6), 885-893 (1957)

\bibitem[GMP13]{GMP1} R. Ghiloni, V. Moretti  and A. Perotti: {\em Continuous slice functional calculus in quaternionic Hilbert spaces}
Rev. Math. Phys. {\bf 25}, (2013) 1350006, 


\bibitem[GMP14]{GMP3} R. Ghiloni, V. Moretti  and A. Perotti: {\em Spectral properties of compact normal quaternionic operators}, in {\em Hypercomplex Analysis: New perspectives and applications}, (Eds S. Bernstein, U. Kaehler, I. Sabadini, F. Sommen),
Trends in Mathematics, Birkhauser, Basel (2014) 



\bibitem[GMP16]{GMP2} R. Ghiloni, V. Moretti  and A. Perotti: {\em Spectral representations of normal operators via Intertwining Quaternionic Projection Valued Measures}
Submitted for publication. arXiv:1602.02661


\bibitem[Ho95]{Holland}S.S. Holland: {\em Orthomodularity in infinite dimensions; a theorem of M. Sol\`er}.
Bulletin of the American Mathematical Society, 32, 205-234, (1995)


\bibitem[KaRi97]{KR} R. Kadison, J.R. Ringrose: {\em Fundamentals of the Theory of Operator Algebras}, (Vol. I, II, III, IV)
Graduate Studies in Mathematics, AMS (1997)


\bibitem[Li03]{Li}B. Li: {\em Real Operator Algebras}. World Scientific (2003)

\bibitem[Ja68]{JauchGroupBookQuaternions} J.M. Jauch. {\em Projective Representations of the Poincar\'e Group in a Quaternionic Hilbert Space} in {\em Group Theory and Applications}, E.M Loeb editor, Academic Press (1968)


\bibitem[Kn01]{knapp} A. W. Knapp: {\em Representation Theory of Semisimple Groups}. 3rd Printing, Princeton University Press (2001)

\bibitem[Ma63]{Mackey} G. Mackey: {\em The Mathematical Foundations of Quantum Mechanics}. Benjamin, New York (1963)

\bibitem[MaMa70]{MM}F. Maeda, S. Maeda:  {\em Theory of symmetric lattices}. Springer (1970)

\bibitem[MeVo97]{MV} R. Meise, D. Vogt: {\em Introduction to Functional Analysis}, The Clarendon Press Oxford University
Press (1997)

\bibitem[Mo13]{M} V. Moretti: {\em Spectral Theory and Quantum Mechanics, With an Introduction to the Algebraic Formulation}. Springer, 2013
 
\bibitem[NaSt82]{NaimarkStern}M.A. Najmark, A.I. Stern: {\em Theory of Group Representations}. Springer (1982)

\bibitem[Ne59]{N} E. Nelson: {\em Analytic Vectors}. Ann. Math. {\bf 70}, 572-614 (1959)



\bibitem[NeOl17]{NeOl17} K.-H. Neeb, G. \'Olafsson: {\em Antiunitary Representations and Modular Theory}. ArXiv:1704.01336v1  [math.RT] (2017)

\bibitem[Pi64]{Piron} C. Piron: {\em Axiomatique Quantique}. Helv. Phys. Acta {\bf 37} 439-468 (1964) 

\bibitem[Po46]{P} L. Pontrjagin: {\em Topological Groups. Princeton University Press}, Princeton University Press (1946)

\bibitem[PuWi51]{connect}
C.R. Putnam and A. Wintner, {\em The connectedness of the orthogonal group in Hilbert space}.
Proc.Nat.Acad.Sci. U.S.A. {\bf 37}, (1951), 110-122


\bibitem[Re98]{Redei} M. R\'edei, {Quantum Logic in Algebraic Approach}, Kluver, (1998) 

\bibitem[Ru91]{R}  W. Rudin, {\em Functional Analysis} 2nd edition, Mc Graw Hill, (1991)


\bibitem[Sc90]{S0} K. Schm\"udgen, {\em Unbounded Operator Algebras and Representation Theory}, Birkh\"auser Basel,  (1990)

\bibitem[Sc12]{S} K. Schm\"udgen, {\em Unbounded Self-adjoint Operators on Hilbert Space}, Springer,  2012

\bibitem[Si76]{Simon} B. Simon: {\em Quantum dynamics: From automorphism to Hamiltonian}.
Studies in Mathematical Physics, Essays in Honor of Valentine Bargmann (ed. E.H. Lieb, B. Simon and A.S. Wightman), Princeton University Press, Princeton,  327-349 (1976)

\bibitem[So95]{Soler} M.P. Sol\`er:  {\em Characterization of Hilbert spaces by orthomodular spaces}. Communications in Algebra, {\bf 23}, 219-243 (1995)


\bibitem[St60]{S1}   E.C.G. St\"uckelberg: {\em Quantum Theory in Real Hilbert Space}. Helv. Phys. Acta,
{\bf 33}, 727-752, (1960)

\bibitem[StGu61]{S2} E.C.G. St\"uckelberg and M. Guenin: {\em Quantum Theory in Real Hilbert Space II} (Addenda and Errats). Helv. Phys. Acta, {\bf 34} :621-628, (1961).

\bibitem[UW60]{UW} K. Urbanik, F.B.  Wright, {\em Absolute-valued algebras}. Proc. Amer. Math. Soc. 11 (1960), 861-866


\bibitem[Va84]{V} V.S. Varadarajan, {\em Lie Groups, Lie Algebras and their Representations}, Springer (1984)

\bibitem[Va07]{V2}  V.S. Varadarajan, {\em The Geometry of Quantum Mechanics}. 2nd Edition, Springer (2007)


\bibitem[Wa83]{Warner} F.W. Warner, {\em Foundations of differentiable manifolds and Lie groups. Springer},
Berlin (1983)


\end{thebibliography}
\end{document}